\documentclass[10pt,a4paper,twocolumn]{article}
\usepackage[margin=1.8cm]{geometry}
\usepackage{iftex}
\usepackage[T1]{fontenc}
\ifPDFTeX
  \usepackage[utf8]{inputenc}
\fi
\usepackage{lmodern,microtype}
\usepackage{amsmath,amssymb,amsthm,mathtools,mathrsfs}
\usepackage{booktabs,enumitem}
\usepackage{float,algpseudocode}
\usepackage{cuted}
\floatstyle{ruled}
\newfloat{algorithm}{htbp}{loa}
\floatname{algorithm}{Algorithm}
\usepackage{graphicx}
\usepackage[numbers,sort&compress]{natbib}
\usepackage{xcolor}
\usepackage{tikz-cd}
\usepackage{titlesec}
\usepackage[colorlinks=true,allcolors=blue!55!black]{hyperref}
\usepackage{orcidlink}
\newcommand{\Eprint}[2]{\unskip\ifnum\spacefactor=1250 \else,\fi\ \href{#1}{#2}\ignorespaces}
\makeatletter
\providecommand{\bibinfo}{}
\renewcommand{\bibinfo}[2]{\ifcsname qdm@bibinfo@#1\endcsname
  \csname qdm@bibinfo@#1\endcsname{#2}\else#2\fi}
\newcommand{\qdm@bibinfo@journal}[1]{\emph{#1}}
\makeatother

\renewcommand{\thesection}{\Roman{section}}
\renewcommand{\thesubsection}{\Alph{subsection}}
\renewcommand{\thesubsubsection}{\thesubsection.\arabic{subsubsection}}
\titleformat{\section}
  {\normalfont\normalsize\centering}{\thesection.}{0.6em}{}
\titleformat{\subsection}
  {\normalfont\normalsize\raggedright}{\thesubsection.}{0.6em}{}
\titleformat{\subsubsection}
  {\normalfont\normalsize\itshape\raggedright}{\thesubsubsection.}{0.6em}{}
\titlespacing*{\section}{0pt}{2.3ex plus .6ex minus .3ex}{.6ex}
\titlespacing*{\subsection}{0pt}{1.7ex plus .4ex minus .2ex}{.6ex}
\titlespacing*{\subsubsection}{0pt}{1.3ex plus .3ex}{.6ex}

\newtheoremstyle{qdmplain}
  {6pt}{6pt}{\itshape}{}{\normalfont}{.}{0.5em}{}
\theoremstyle{qdmplain}

\makeatletter
\newcommand{\qdmfrontmatter}{  {\Large\normalfont\@title\par}  \vspace{.8em}  {\normalsize\begin{tabular}[t]{c}\@author\end{tabular}\par}  \vspace{1em}}
\makeatother
\setlist{leftmargin=*,itemsep=3pt,topsep=4pt}
\newtheorem{theorem}{Theorem}

\newtheorem{alem}{Lemma}[section]

\newtheorem{acor}[alem]{Corollary}
\makeatletter
\providecommand{\theHALG@line}{\thealgorithm.\arabic{ALG@line}}
\makeatother
\DeclareMathOperator{\Tr}{Tr}
\DeclareMathOperator{\Var}{Var}

\DeclareMathOperator{\sgn}{sgn}
\DeclareMathOperator{\diag}{diag}
\DeclareMathOperator{\spn}{span}
\newcommand{\E}{\mathbb E}
\newcommand{\Prob}{\mathbb P}
\newcommand{\R}{\mathbb R}
\newcommand{\cO}{\mathcal O}

\newcommand{\cH}{\mathcal H}

\newcommand{\op}{\mathrm{op}}
\newcommand{\trd}{D_{\mathrm{tr}}}
\newcommand{\obs}{\mathcal B}
\newcommand{\ket}[1]{|#1\rangle}
\newcommand{\bra}[1]{\langle#1|}
\newcommand{\braket}[2]{\langle#1|#2\rangle}
\newcommand{\ketbra}[2]{|#1\rangle\langle#2|}
\title{Efficient Learning of Structured Fermionic States under\\ General Quadratic Evolution}
\author{Erfan Amidi$^1$\,\orcidlink{0009-0007-2612-6313}\quad
Ali Asadian$^1$\,\orcidlink{0000-0002-5081-0162}\quad
Ali Hamed Moosavian$^2$\,\orcidlink{0000-0002-1818-5037}\\[.4em]
\small $^1$Institute for Advanced Studies in Basic Sciences\\
\small $^2$Ferdowsi University of Mashhad}
\date{}
\begin{document}
\makeatletter\if@filesw\immediate\write\@auxout{\string\citation{apsrev42Control}}\fi\makeatother
\flushbottom
\twocolumn[{
\begin{center}
\qdmfrontmatter
\begin{minipage}{.94\textwidth}
We prove copy-optimal learning of the full Gaussian orbit of paired
magic states on $n$ fermionic modes. The unknown evolution may be any
parity-preserving quadratic circuit, including pair creation and annihilation,
and each copy is measured once, in a fixed basis, on a single quartic Majorana
observable. A nonadaptive learner returns a Gaussian preparation within trace
distance $\epsilon$ from $\Theta(n^6/\epsilon^2)$ copies at fixed confidence,
and no learner in this measurement model, adaptive or not, can do better.
Measured quartic correlations certify the trace distance of the result.
Our second result extends learning to a broader family: pure states whose
unknown input blocks are superpositions of disjoint occupation patterns with
bounded particle number, under the same evolution. Using single-copy Gaussian
rotations and occupation readout, a batch-adaptive learner returns an input and
a Gaussian circuit with infidelity $f$ from $\widetilde O(n^7/f)$ copies and
polynomial classical processing. Connected quartic correlations reveal the
hidden mode groups, and fresh local measurements restore the branches and their
coherences. No particle-number conservation, minimum branch weight or spectral
gap is assumed.
\end{minipage}
\end{center}
\vspace{1em}
}]

\section{Introduction}
Learning an unknown quantum state from measurements is a central task in
characterizing quantum devices~\cite{gebhart2023learning,anshu2024survey,eisert2020certification}.
A learned classical description allows a prepared state to be verified
against its intended target, compared across devices, and reused. Since each fermionic mode is either empty or occupied, an arbitrary state of $n$
modes lives in a Fock space of dimension $2^n$ and has $2^n$ occupation-basis
amplitudes, so learning it is prohibitively expensive without additional
structure.

Fermionic Gaussian circuits supply such a structure. They are generated by
quadratic Hamiltonians, Eq.~\eqref{eq:active-hamiltonian}, built from
number-conserving hopping terms $a_i^\dagger a_j$ and pairing terms
$a_i^\dagger a_j^\dagger$. A circuit is passive if it conserves the particle
number and active otherwise. Every Gaussian circuit preserves fermion parity and
acts on the $2n$ Majorana operators as a rotation in $SO(2n)$. Applied to an
occupation state, it produces a Gaussian state, which is fully determined by its
two-point correlations. This compact description fails for non-Gaussian states,
such as superpositions of occupation patterns that no Gaussian circuit can
reduce to a single pattern.
We allow the initial state to be non-Gaussian. A Gaussian circuit cannot remove
non-Gaussianity, so the evolved state is no longer determined by its two-point
correlations. The learning task is therefore genuinely harder: from copies of
the evolved state alone, one must find a non-Gaussian input and a Gaussian
circuit that together reproduce it, at a cost polynomial in $n$.
States of this form arise in several applications. In fermion-sampling
proposals for quantum computational advantage, non-Gaussian ``magic''
inputs are evolved by Gaussian circuits~\cite{oszmaniec2022}. These
experiments are designed so that their output distributions are hard to
sample classically; learning the prepared state therefore offers a direct
route to verifying them. Recent quantum machine-learning architectures
likewise evolve non-Gaussian inputs with passive fermionic
circuits~\cite{kerenidis2026scalable}. Their states fall within our
setting whenever the inputs have the block structure assumed here.

\begin{figure*}[!t]
 \centering
 \includegraphics[width=\textwidth]{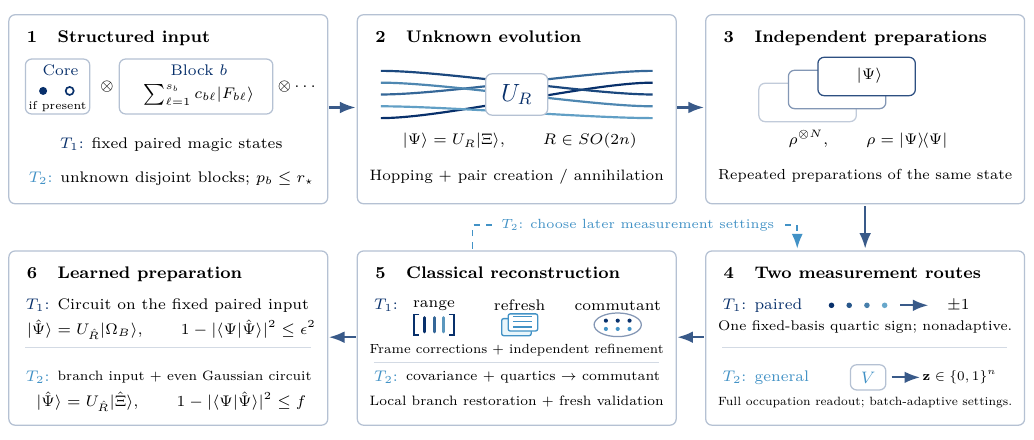}
\caption{Learning routes under unknown quadratic evolution, including pair
creation and annihilation. Inputs comprise disjoint branches and optional
core modes (filled/open dots: occupied/vacant).
Theorem~\ref{thm:reconstruction} ($T_1$) learns the fixed paired-input orbit
using nonadaptive quartic readout, range refresh and commutant reconstruction.
Theorem~\ref{thm:active-main} ($T_2$) learns unknown structured inputs using
batch-adaptive Gaussian readout, local branch restoration and validation.
Both return approximate preparations, not necessarily the originals,
to the displayed accuracy with probability at least $1-\delta$ under their
respective conditions. $\ket\Xi$ includes the core and $\mathbf z$
is an occupation string.}
 \label{fig:learning-setting}
\end{figure*}

\subsection{Learning objectives and related work}
We consider two learning objectives, one for each main result.
The first concerns the paired-magic orbit (Theorem~\ref{thm:reconstruction}).
Every block holds the same known
paired-magic state, an equal superposition of two disjoint occupied pairs, so
the input $\ket{\Omega_B}$ is fixed and only the Gaussian circuit is unknown.
From independent copies of a state in this orbit, the learner returns a Gaussian
circuit $U_{\hat R}$ whose action on the known reference approximates the
unknown state in trace distance, to prescribed accuracy and confidence. The
measurements are calibrated projective measurements of quartic observables, so
the reconstruction relies on quartic correlations alone; quadratic correlations
carry no information about these states.

The second concerns the general family (Theorem~\ref{thm:active-main}).
From independent copies of one fixed,
unknown pure state $\ket\Psi=U_R\ket\Xi$, where $\ket\Xi$ comprises the input
blocks and the core, the learner finds an input $\ket{\hat\Xi}$ of the same form
and a Gaussian circuit $U_{\hat R}$ such that the prepared state approximates
$\ket\Psi$, to prescribed accuracy and confidence. Together they form a compact
preparation description, obtained without reconstructing the exponentially large
occupation-basis vector. The block partition, the circuit and the input
amplitudes are all unknown in advance, and the returned pair need not coincide
with the original one; only the prepared state must agree. Only the
measurements, Gaussian rotations followed by occupation-number readout, act on
the copies; all reconstruction is classical. The returned circuit is
number-conserving whenever the target admits such a preparation within the
stated input family, and otherwise may create and annihilate pairs; no prior
promise of a number-conserving preparation is needed.

Several existing tools address parts of this second task. Gaussian
compression~\cite{mele2025} concentrates the non-Gaussianity of states prepared
with few non-Gaussian gates onto a few modes. Fermionic partial
tomography~\cite{zhao2021,wan2023} efficiently estimates low-degree Majorana
correlations. Simultaneous block diagonalization~\cite{maehara2011} separates a
set of matrices into common invariant blocks. Our setting goes beyond these: the
number of non-Gaussian blocks can grow with the system size, so no compression
onto a fixed number of modes is promised, and the evolution may be active.
Learning SYK Hamiltonians~\cite{anshu2026syk} pursues a related goal. It
recovers dense random quartic couplings to inverse-polynomial accuracy from
polynomially many Gibbs-state copies at fixed positive temperature, with high
probability over the couplings. Like our task, it extracts a compact description
of a strongly non-Gaussian fermionic system from repeated preparations with
polynomial resources. The objectives, however, differ. SYK learning is
Hamiltonian learning: it estimates the couplings of an interacting Hamiltonian
from thermal states, and its guarantees hold for typical random couplings. We
instead reconstruct a preparation of a single unknown pure state, with
guarantees for every state in the family.

\section{Setting and access}
Supporting algebraic and statistical lemmas are formalized in Lean
in the accompanying code; the new copy-optimality proof is provided
in Appendix~\ref{app:reconstruction}.
Compact summaries of the two learning procedures appear in Results;
Appendix~\ref{app:learning-algorithms} gives their detailed algorithms,
correctness arguments and resource accounting.
There are $n$ fermionic modes. In the relevant mode space, the occupation
ket $\ket F$ is the basis state with one fermion in each mode listed in
$F$ and none in the other modes. The vacuum $\ket0$ has every mode empty.
Creation, annihilation and Majorana conventions are fixed at the beginning of
Appendix~\ref{app:fermions}.
The input consists of mutually disjoint blocks and core modes.
Core modes lie outside the blocks and are fixed either occupied or empty
before the unknown evolution. 

Each block has the form
\begin{equation}
 \ket{\omega_b}=\sum_{l=1}^{s_b}c_{bl}\ket{F_{bl}},\qquad
 \sum_{l=1}^{s_b}|c_{bl}|^2=1.
 \label{eq:active-family}
\end{equation}
Here $b=1,\ldots,B$ labels the blocks, and $s_b$ is the number of
alternative occupation patterns, called branches, in block $b$. Every branch
of block $b$ contains the same number of particles,
$|F_{bl}|=p_b\in\{2,\ldots,r_\star\}$.
All branch mode sets are disjoint, both within and across blocks.
The particle cap $r_\star$ bounds the number of particles in each input
block and is fixed independently of the total mode number $n$.
Each branch in block $b$ contains $p_b\le r_\star$ particles,
but its $s_b$ alternative branches use disjoint mode sets. Thus the block
involves $p_bs_b$ modes in total, although only $p_b$ are occupied in each
branch. Both the branch count $s_b$ and the block count $B$ may grow with
$n$; different blocks may have different particle and branch counts.
A term with $c_{bl}=0$ contributes nothing and is omitted. If only one
nonzero branch remains, the block is a single occupation state: its
occupied modes are included in the occupied core and its unused modes in
the vacant core. Likewise, a one-particle superposition is a single
occupied orbital; a passive change of modes converts it to one occupied
mode and otherwise vacant modes. It can therefore be included in the core,
with that basis change incorporated into $U_R$. These are exact ways of
writing the same state. They do not discard small nonzero coefficients
or impose a minimum branch weight.
The unknown target is
\begin{equation}
 \ket{\Psi}
 =U_R\left(
 \underbrace{\ket{1}_{\rm core}}_{\text{occupied core}}
 \otimes
 \underbrace{\bigotimes_{b=1}^{B}\ket{\omega_b}}_{\text{input blocks}}
 \otimes
 \underbrace{\ket{0}_{\rm core}}_{\text{empty core}}
 \right).
 \label{eq:active-target}
\end{equation}
Here $R\in SO(2n)$ is the real rotation induced by the even Gaussian
unitary $U_R$ on the Majorana operators. Tensor products are interpreted
in a fixed occupation ordering, and every input factor has definite
fermion parity.
These unitaries are generated by
\begin{equation}
 \begin{aligned}
 H(t)&=c(t)I+\sum_{i,j=1}^n h_{ij}(t)a_i^\dagger a_j\\
 &\quad+\frac12\sum_{i,j=1}^n
       \bigl[\Delta_{ij}(t)a_i^\dagger a_j^\dagger
       +\overline{\Delta_{ij}(t)}a_j a_i\bigr],
 \end{aligned}
 \label{eq:active-hamiltonian}
\end{equation}
with a Hermitian hopping matrix $h(t)$, an antisymmetric pairing matrix
$\Delta(t)$ and a real energy offset $c(t)$. The hopping terms conserve the
particle number, so evolution with $\Delta=0$ is passive. The pairing terms
create and annihilate pairs of fermions, which changes the particle number but
preserves fermion parity. Neither the Hamiltonian nor its time dependence is
known; only the resulting rotation $R$ matters.

Figure~\ref{fig:learning-setting} summarizes the setting and acquisition routes.
The general-family learner receives independent copies of this same pure state and can
apply calibrated Gaussian rotations followed by occupation measurements.
Within each measurement batch, all settings are chosen before its outcomes
are inspected; later batches may depend on earlier data. Full occupation
readout acts on one copy at a time, with no joint-copy operations or noise
records. We count independent copies and establish polynomial classical
processing for ideal calibrated Gaussian measurements. Physical gate depth,
bit complexity and additional device noise are not included in the
stated resource bounds. Output preparations have finite-precision angle
and coefficient descriptions. The paired specialization below restricts each
copy to one fixed-basis quartic sign. Theorem~\ref{thm:active-main} uses
infidelity tolerance $f$; Theorem~\ref{thm:reconstruction} uses
trace-distance tolerance $\epsilon$. For pure states these conventions
are equivalent under $f=\epsilon^2$; the exact definitions are recorded
in Appendix~\ref{app:fermions}.
\section{Results}
\subsection{Quartic learning of the paired-magic orbit}
For the paired-magic specialization, take the fixed input of $B$ identical
four-mode blocks on $n=4B$ modes, each block in an equal superposition of two
disjoint occupied pairs~\cite{oszmaniec2022}:
\begin{equation}
 \ket{\Omega_B}=\bigotimes_{b=1}^B
   \frac{\ket{1100}_b+\ket{0011}_b}{\sqrt2},\quad
 P_0=\ketbra{\Omega_B}{\Omega_B}.
 \label{eq:paired-reference}
\end{equation}
Let $\cO$ denote its full even Gaussian orbit. Define the quartic
observable-coordinate map by
\[
 x:\cO\longrightarrow\mathbb R^{\binom{2n}{4}},
 \qquad
 \rho\longmapsto
 \bigl(x_I(\rho)\bigr)_{I\in\mathcal I_4(2n)}.
\]
Its components are the expectation values
\begin{equation}
 B_I=\gamma_{i_1}\gamma_{i_2}\gamma_{i_3}\gamma_{i_4},
 \qquad
 x_I(\rho)=\Tr(\rho B_I),
 \label{eq:main-quartic-coordinates}
\end{equation}
where $\mathcal I_4(2n)$ consists of all tuples
$I=(i_1,i_2,i_3,i_4)$ with $1\le i_1<i_2<i_3<i_4\le 2n$.
These observables are Hermitian with outcomes
$\pm1$; their qubit encodings need not have support on only four qubits.
The unit-gap parent Hamiltonian of Corollary~\ref{cor:global-parent} gives the
certificate derived in Appendix~\ref{app:fermions},
Eq.~\eqref{eq:paired-certificate-proof}:
\begin{equation}
 \trd(\rho,\sigma)\le\frac{\|x(\rho)-x(\sigma)\|_2}{\sqrt{28}}.
 \label{eq:main-paired-certificate}
\end{equation}
This holds for any two states in $\cO$. Thus quartic data are faithful
and quantitatively certify an orbit candidate,
even though all quadratic means vanish.

\noindent\emph{Fixed-basis measurement model.}
Fix the laboratory Majorana basis before acquiring data.
One independent copy yields one sign from one projective measurement
$\{(I+B_I)/2,(I-B_I)/2\}$. No other outcome from that copy is retained.
Adaptive access permits choosing the next quartet from earlier outcomes;
nonadaptive access fixes every quartet and batch size beforehand.
The lower bound below does not cover state-dependent Gaussian
prerotations, joint occupation readout or collective measurements.

\begin{theorem}[Copy-optimal learning of the full paired-magic orbit]
\label{thm:reconstruction}
Fix $0<\epsilon_{\max}<1/\sqrt2$. For $n=4B$, $B\ge1$,
$0<\epsilon\le \epsilon_{\max}$ and $0<\delta\le1/8$, independent copies of any
unknown $\rho\in\mathcal O$ admit a nonadaptive learner in the
fixed-basis measurement model above. It returns a finite-precision
Gaussian circuit $U_{\hat R}$ such that
\begin{equation}
 \Prob\!\left\{\trd\bigl(\rho,U_{\hat R}P_0U_{\hat R}^\dagger\bigr)
                    \le\epsilon\right\}\ge1-\delta.
 \label{eq:reconstruction-guarantee}
\end{equation}
Thus the pure output has infidelity at most $\epsilon^2$. Its sufficient
worst-case copy bound is
\begin{equation}
 N_{\rm copies}
 =O\!\left(\frac{n^4[n^2+\log(1/\delta)]}{\epsilon^2}\right).
 \label{eq:reconstruction-resources}
\end{equation}
Every learner in the same measurement model attaining infidelity
at most $\epsilon^2$ requires
$\Omega_{\epsilon_{\max}}(n^4[n^2+\log(1/\delta)]/\epsilon^2)$ copies in the worst case,
even with adaptive quartet choices and unlimited classical computation.
Thus the optimal adaptive and nonadaptive copy budgets have the same order.

The constructed learner uses polynomial classical processing in
$n$, $\epsilon^{-2}$ and $\log(1/\delta)$, and $O(n^2)$ prescribed Gaussian
plane rotations. No starting approximation, hidden frame or conditioning
promise is supplied. The guarantee assumes ideal measurements and gates;
optimal classical runtime and resistance to additional device noise
are not asserted.
\end{theorem}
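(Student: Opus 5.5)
The plan has three parts: an upper bound built from coarse estimation, orbit projection and local refinement, and a matching lower bound from a packing argument. The guiding heuristic is that $\mathcal O$ is a smooth manifold of dimension $d=\Theta(n^2)$, the quotient of $SO(2n)$ by the stabilizer of $P_0$, sitting inside the $M=\binom{2n}{4}=\Theta(n^4)$-dimensional quartic coordinate space. Only noise along the $d$ tangent directions matters. Spreading $N$ copies uniformly over the $M$ quartets gives each $x_I$ variance about $M/N$. The squared error projected onto a $d$-dimensional tangent space is then about $dM/N$. Setting this to $\epsilon^2$, via the certificate \eqref{eq:main-paired-certificate}, gives $N\sim n^6/\epsilon^2$. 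A $\chi^2$-type tail bound in dimension $d$ adds the $\log(1/\delta)$ term.

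\emph{Upper bound.} The learner is nonadaptive and proceeds as follows.
\begin{enumerate}
\item \emph{Coarse stage.} Measure every quartet equally often with a budget $O(n^4[n^2+\log(1/\delta)])$. By Hoeffding and a union bound over quartets, this yields a constant-accuracy $\ell_2$ estimate of $x(\rho)$, up to a normalization we track.
\item \emph{Range refresh.} From the coarse estimate, form the quartic correlation tensor. Read off the block structure through a commutant reconstruction. Each paired block contributes a rank-structured quartic term whose range is a four-dimensional Majorana subspace, and these ranges are the common invariant subspaces of a family of antisymmetric contractions of the tensor. Diagonalizing these contractions gives a candidate $\hat R_0$ from $O(n^2)$ Givens rotations, with $\|x(\rho)-x(\hat R_0)\|_2\le c$ for a small absolute constant $c$.
\item \emph{Refinement.} Measure a fresh independent batch of size $O(n^4[n^2+\log(1/\delta)]/\epsilon^2)$, uniformly over quartets. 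Perform one or a few Gauss--Newton (least-squares) steps in the Lie-algebra chart around $\hat R_0$.
\end{enumerate}

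The refinement rests on two analytic facts about the Jacobian $J=Dx$ restricted to the complement of the stabilizer algebra.
\begin{itemize}
\item \emph{Lower singular bound.} We need $\sigma_{\min}(J)\ge c'$ uniformly over the orbit. This is the infinitesimal form of \eqref{eq:main-paired-certificate}, obtained from the unit-gap parent Hamiltonian of Corollary~\ref{cor:global-parent}.
\item \emph{Lipschitz control.} We need a Lipschitz bound on $J$ so that the linearization error is quadratic.
\end{itemize}
With these, the projected noise $\Pi_J\xi$ has $\|\Pi_J\xi\|_2^2\lesssim (M/N)(d+\log(1/\delta))$ with probability $1-\delta$, by sub-Gaussian concentration of a projection onto a $d$-dimensional space. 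Combining this with \eqref{eq:main-paired-certificate} gives \eqref{eq:reconstruction-guarantee} at the budget \eqref{eq:reconstruction-resources}. Polynomial processing follows because every step is an eigendecomposition or a linear least-squares problem of size $\mathrm{poly}(n)$.

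\emph{Lower bound.} The argument is of Fano type.
\begin{enumerate}
\item \emph{Local packing.} Around $P_0$, take a local $\epsilon$-packing of the orbit by $R=e^{\epsilon A}$, with $A$ ranging over a $2^{\Omega(n^2)}$-size packing of the unit sphere in the tangent space. By the reverse (upper) comparison between trace distance and the $\ell_2$ tangent norm, these states are pairwise $\Omega(\epsilon)$-separated, with constants depending on $\epsilon_{\max}$.
\item \emph{Information per copy.} A single $\pm1$ measurement of quartet $I$ on two hypotheses has KL divergence $O((x_I-x_I')^2)$, because the means stay bounded away from $\pm1$ near the reference. Here adaptivity may choose $I$ arbitrarily.
\item \emph{Averaging over directions.} Since the learner does not know $A$, bound the mutual information by averaging over a random tangent direction. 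The key estimate is $\max_I \E_A[(J_I A)^2]=O(\epsilon^2/n^4)$, which is a spreading statement: no single quartet is sensitive to a typical direction. It should follow from an $SO$-invariance and symmetrization argument applied to the quadratic form $A\mapsto (J_IA)^2$.
\item \emph{Conclusion.} Then $I(\text{hypothesis};\text{data})\le N\,O(\epsilon^2/n^4)$, and Fano requires $\Omega(n^2)$ bits, giving $N=\Omega(n^6/\epsilon^2)$. The $\log(1/\delta)$ term comes from a two-point Le Cam argument with KL per copy $O(\epsilon^2/n^4)$, again using the spreading bound in a worst-case fixed direction averaged against the learner's quartet choices.
\end{enumerate}

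\emph{Main obstacle.} I expect two steps to be hardest. The first is the coarse commutant stage, with no starting approximation or conditioning promise. Block ranges can be degenerate under the stabilizer, so the reconstruction must recover only the orbit point and not a frame, and it must be stable to constant-size noise. I would attempt this by exhibiting a fixed polynomial of the tensor whose kernel is exactly the stabilizer, for example the Hessian of the parent Hamiltonian energy. The second is proving the uniform $n^{-4}$ spreading bound in the lower bound for all quartets simultaneously.
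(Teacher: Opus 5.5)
\paragraph{The lower bound.} The decisive gap is that your spreading estimate is false at the point where you place the packing. At $P_0$ the Jacobian is concentrated. Take a unit crossing generator $F_{pq}$ with $p$ in block $b$ and $q$ outside it, and a triple $T\subset b$ such that $T\cup\{p\}$ is a seed quartet. Then $(JF_{pq})_{T\cup\{q\}}=\pm1/\sqrt2$, so $\|J_I\|^2\ge1/2$ for these $(3,1)$ quartets. Consequently $\max_I\E_A[(J_IA)^2]\ge\epsilon^2/(2d)\asymp\epsilon^2/n^2$, where $d\asymp n^2$ is the orbit dimension, and your Fano computation only gives $N\gtrsim n^4/\epsilon^2$. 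This is a real effect: near $P_0$ a learner can read the tangent coordinates directly off these $O(n^2)$ quartets. No symmetrization at $P_0$ repairs it, because the fixed laboratory basis breaks the $SO(2n)$ symmetry and an adaptive learner simply uses the best quartet.

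The missing idea is to center the hard family at a \emph{flat} orbit point. The paper conjugates by a fixed $H\in SO(2n)$, the real Hartley matrix with one row sign-corrected. By Hadamard's inequality all its $4\times4$ minors satisfy $|\det H[I,J]|\le16/n^2$. Hence every quartet-mean difference is at most $16/n^2$ times the $\ell^1$ mass of the difference in the seed frame. The same flatness, not averaging over the learner's quartet choices, also drives the $\log(1/\delta)$ term: the phase pair differs on sixteen seed quartets with $\ell^1$ mass $O(\epsilon)$.

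Note too that $\max_I\E_A$ only controls nonadaptive learners. Under adaptive choices the chain rule produces $\sum_t\E[(J_{I_t}A)^2]$ with $I_t$ correlated with $A$, and the naive bound by $\E_A\max_I$ loses a logarithm. The paper instead applies Assouad to a sign hypercube of generators acting on one selected mode per block, with $a^2=\kappa\epsilon^2/\binom{2B}{2}$. A single sign flip moves every quartet mean by $O(a/n^2)$ deterministically, which gives $D_{\rm KL}\le C_MNa^2/n^4$ for every policy. Because the selected modes have a maximally mixed marginal, the overlap is exactly $\prod_j\cos^2(\theta_j/2)$. This yields Hamming separation for every $\epsilon\le\epsilon_{\max}<1/\sqrt2$, which a local linearization does not give at constant $\epsilon$.

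\paragraph{The upper bound.} Your coarse-stage claim is false. With $K=O(n^2+\log(1/\delta))$ copies per quartet, $\|\xi\|_2\asymp\sqrt{\binom{2n}{4}/K}\asymp n$, which exceeds $\|x\|_2=\sqrt{14B}$. So no constant-accuracy $\ell_2$ estimate exists at this budget. What works is the operator norm of the pair flattening, $\|Q(\xi)\|_{\op}\lesssim(n+\sqrt{\log(1/\delta)})/\sqrt K$, compared with the constant gaps of $\operatorname{spec}Q(x)=\{-3,1,0\}$.

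Next, ``diagonalizing these contractions gives $\hat R_0$ with $\|x(\rho)-x(\hat R_0)\|_2\le c$'' skips the hardest step. The blocks are eight-dimensional Majorana subspaces, not four-dimensional. The commutant's low space is $B$-dimensional and degenerate, and the paper isolates blocks with random pencils and a measured-kernel filter. More importantly, the operator-norm perturbation bounds available at this budget give each block only a constant accuracy independent of $n$. These errors add up to a frame error of order $\sqrt B$. The paper closes this with a third bank, which uses:
\begin{itemize}
\item a local spinor decoder for each block's $SO(8)$ frame;
\item cubically convergent cofactor corrections with polar orthogonalization;
\item a noise bound uniform over all data-selected frames.
\end{itemize}
Together these reach a total error below $1/2048$ independent of $B$. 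Your proposal has no substitute for this step.

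Finally, Gauss--Newton refinement is a viable alternative to the paper's strongly concave maximization, but two points need fixing. One step from distance $c$ leaves an $O(c^2)$ error. Further steps on the same batch use data-dependent Jacobians. You would need either fresh, geometrically sized banks or uniform chaining control of the tangent noise and the curvature term, as the paper provides.
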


\begin{figure}[!t]
 \centering
 \includegraphics[width=\columnwidth]{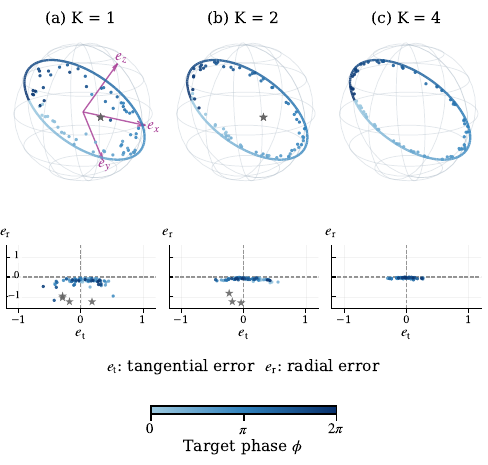}
 \caption{Four-mode logical reconstruction for $K=1,2,4$ in (a)--(c),
 with three trials at each of 24 phases; total copies follow
 Eq.~\eqref{eq:example-copies-main}.
 Dots and target circles share the target-phase colours, used only for display;
 grey stars mark fallbacks. Below each sphere, the same trials show signed
 tangential and radial errors $e_{\rm t},e_{\rm r}$ on common scales;
 dashed lines mark zero. Magenta axes appear in (a).
 These are three-observable projections, not the full paired orbit or
 full-state infidelity.}
 \label{fig:paired-logical}
\end{figure}
\begin{figure*}[!t]
 \centering
 \includegraphics[width=\textwidth]{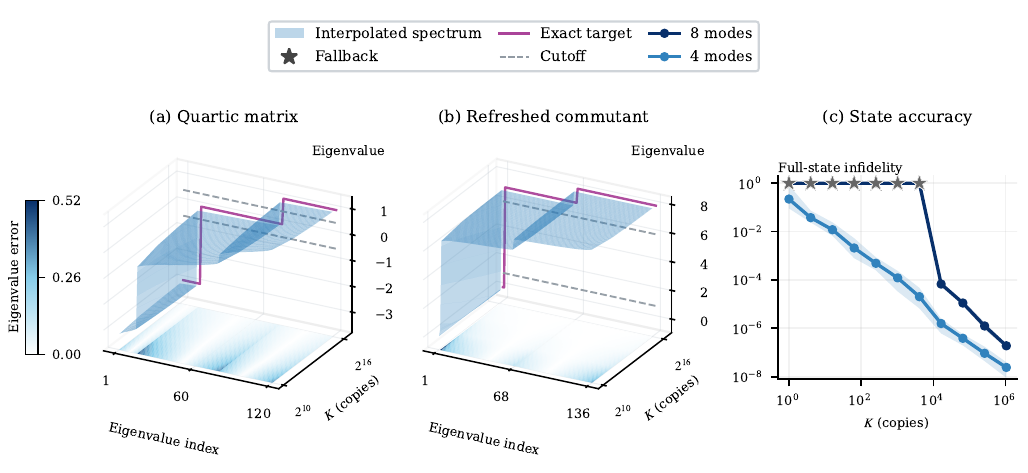}
 \caption{Structure recovery and full-state accuracy.
 (a,b) Eight-mode quartic and refreshed-commutant spectra: surfaces join
 median eigenvalues from 24 trials; floor colours show their absolute
 errors from the magenta exact spectra. Grey dashes mark the measured
 spectral cutoffs. (c) Median infidelity with 10th--90th percentile bands;
 stars mark fallback outputs, retained in all statistics.
 $K$ counts copies per observable in each initial bank; the refinement
 bank uses $8K$, with total copies given by~\eqref{eq:example-copies-main}.}
 \label{fig:paired-accuracy}
\end{figure*}

For any looser request with $0<\epsilon,\delta<1$, run the learner at
$\min(\epsilon,1/2)$ and $\min(\delta,1/8)$. This retains the learning guarantee
and the order of the sufficient bound. Matching optimality is asserted
only in the theorem's stated range.
The optimal learner enhances the quartic-range and commutant construction.
A first data bank estimates the quartic range; an independent second bank
refreshes that range. Its symmetric commutant yields candidate blocks,
which are tested against the measured commutant space. A third bank
corrects the assembled frame, and an independent fourth bank refines the
preparation. All four banks use predetermined fixed-basis quartic
measurements; the frame corrections are classical computations.
Appendix~\ref{app:reconstruction} proves the construction and the matching
lower bound.

\subsection{Unknown blocks under general quadratic evolution}
\begin{theorem}[Batch-adaptive learning with a preparation description]
\label{thm:active-main}
Fix a particle cap $r_\star\ge2$, an infidelity tolerance $0<f<1$
and a failure probability $0<\delta<1$. From independent copies of any
state of the form~\eqref{eq:active-target}, a batch-adaptive procedure
uses single-copy Gaussian rotations followed by full occupation readout
and randomized classical processing to return a normalized input
$\ket{\hat\Xi}$ of form~\eqref{eq:active-family}, including core modes,
and an even Gaussian circuit $U_{\hat R}$ satisfying
\begin{equation}
 \Prob\!\left\{1-|\langle\Psi|U_{\hat R}|\hat\Xi\rangle|^2\le f\right\}
 \ge1-\delta.
 \label{eq:active-guarantee}
\end{equation}
No original minimum branch weight, covariance gap, hidden frame or
homogeneity promise is supplied. Whenever the target admits a passive
preparation with the same particle cap, the procedure returns one
satisfying this guarantee with probability at least $1-\delta$,
without advance membership information.
With $f_0=\min\{f,1/16\}$ and $\delta_0=\min\{\delta,1/4\}$, put
\[
 \mathcal L=1+\log\frac{C_{r_\star}(n+1)}{f_0\delta_0}.
\]
A sufficient worst-case copy bound is
\begin{equation}
 N\le\frac{C_{r_\star}n^7}{f_0}\mathcal L^{11}
       =\widetilde O_{r_\star}(n^7/f).
 \label{eq:main-resources}
\end{equation}
Here $N$ counts all independently prepared copies, including calibration,
rejected trials, candidate validation and fallback. The constant
$C_{r_\star}$ depends only on the fixed particle cap; the tilde suppresses
logarithmic factors in $n,f^{-1},\delta^{-1}$. Classical processing uses
polynomially many arithmetic operations for fixed $r_\star$.
\end{theorem}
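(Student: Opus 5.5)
The proof will be constructive. The learner recovers the hidden structure in stages, and each stage's error budget is chosen so that the final infidelity is at most $f$, with total failure probability $\delta$ obtained by a union bound over stages and validation rounds.

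\emph{Stage 1: covariance and core.} Apply random Gaussian rotations with occupation readout to estimate all quadratic Majorana correlations. Classical shadows for fermionic Gaussian unitaries (matchgate shadows) give all degree-$2$ and degree-$4$ Majorana expectations to additive accuracy $\eta$ from $O(n^2\log(n/\delta)/\eta^2)$ copies, up to polynomial factors. From these:
\begin{enumerate}
\item The covariance matrix $M=R\,M_\Xi R^{T}$ is estimated.
\item $M_\Xi$ has eigenvalues $\pm1$ on core modes.
\item It has eigenvalues $1-2|c_{bl}|^2$-type values on block modes. Every branch mode of branch $l$ has occupation $|c_{bl}|^2$.
\end{enumerate}
Because no gap is assumed, I would not separate eigenspaces by spectral thresholds. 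Instead I would use a hierarchy of cutoffs at scales $f_0/\mathrm{poly}(n)$. Modes whose occupation is within the cutoff of $0$ or $1$ are absorbed into the core. The argument then shows that truncating the corresponding small branches costs at most $O(\text{cutoff}\cdot n)$ infidelity, by a direct overlap bound for product inputs.

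\emph{Stage 2: hidden mode groups from connected quartic correlations.} Rotate into the estimated eigenbasis of $M$, fixed as a batch-adaptive choice. Then estimate connected quartic correlators $\langle\gamma_i\gamma_j\gamma_k\gamma_l\rangle_c$. In the ideal frame these vanish unless all four indices lie in one block, since distinct blocks are uncorrelated product factors. For a fixed block, the branch coherences $c_{bl}\bar c_{bl'}$ produce nonzero connected terms linking modes of different branches. Build a graph or matrix from these magnitudes, and recover groups by simultaneous block diagonalization of the commutant of the symmetric quartic matrices, as in the cited Maehara--Murota approach. The perturbation arguments use Davis--Kahan on the degenerate eigenspaces of $M$. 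Residual rotation inside a degenerate eigenspace is exactly the freedom this commutant step resolves.

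Then, inside each recovered group of $p_bs_b\le$ (bounded per branch) modes, fresh local measurements restore the branch mode sets and the coefficients $c_{bl}$, including phases from coherence measurements. Groups with $s_b$ large are handled branch by branch, using only pairwise coherences between branches. This keeps the cost polynomial.

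\emph{Stage 3: assembly, validation, passive preference, and accounting.} Assemble $\hat R$ and $\ket{\hat\Xi}$, then validate the fidelity of each candidate. Direct fidelity estimation is not generally cheap for non-Gaussian states, so I would instead estimate $\Pr[\text{all-}\hat z]$ after undoing $U_{\hat R}$ and a block-local disentangler. Each block's state is on few occupied particles over a known branch set, so a Gaussian-plus-occupation test, together with a product bound over blocks, certifies infidelity. Failing candidates trigger a fallback with more copies; this uses a geometric schedule, which costs only log factors.

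For the passive claim, run the same pipeline constrained to particle-number-conserving $\hat R$ (unitary $u\in U(n)$) first. Accept if validation passes, otherwise run the active pipeline. A passive-preparable target passes by the same correctness argument, so no membership information is needed.

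The copy count multiplies three factors:
\begin{itemize}
\item accuracy $\eta\sim\sqrt{f}/\mathrm{poly}(n)$ needed so that frame errors propagate to infidelity at most $f$ (infidelity is quadratic in rotation error, giving $1/f$ rather than $1/f^2$);
\item the $n^2$--$n^4$ observables;
\item the logarithmic repetitions, which yield $\mathcal L^{11}$.
\end{itemize}
Careful bookkeeping should land at $n^7$.

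\emph{Main obstacle.} The hardest step is Stage 2 without a gap or minimum branch weight. Near-degenerate occupations of distinct blocks mix under estimation error, and weak coherences make connected quartic signals arbitrarily small. I would first try a multiscale argument. Pick a random cutoff among $O(\mathcal L)$ geometrically spaced levels, so that with constant probability no eigenvalue or coherence lies near it. Modes on the weak side are then merged or dropped with controlled infidelity. Show that the recovered partition is a coarsening or refinement that still admits a valid preparation within error $f/2$.
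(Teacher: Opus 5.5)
Your overall architecture matches the paper's: covariance calibration with measured cuts and a nearly deterministic core, the connected quartic tensor and its commutant for the hidden mode groups, local branch restoration, a Gaussian validation test, and passive-first capped loops. However, your error propagation has a genuine gap exactly where the theorem is strongest, namely no minimum branch weight together with $1/f$ scaling. You estimate all quadratic and quartic moments to a \emph{uniform additive} accuracy with matchgate shadows, whose variance is governed by a state-independent shadow norm. A retained branch can have weight $w$ as small as the truncation scale $\eta\asymp f_0/n$. The quartic signal that locates it is of order $w$: internally $4w(1-w)$, and $4w_iw_j$ toward the other branches of its block. With additive error $\epsilon$, its frame angle is $\theta\sim\epsilon/w$ and its infidelity contribution is $w\theta^2\sim\epsilon^2/w$. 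At $w\sim f/n$ you therefore need $\epsilon\lesssim f/n$, i.e.\ of order $\mathrm{poly}(n)/f^2$ copies. Your accuracy $\sqrt f/\mathrm{poly}(n)$ tacitly assumes a signal of constant size. The random geometric cutoff does not repair this. It addresses near-degeneracy at the cut, not the signal-to-noise ratio of every retained weak branch. Moreover, $O(\mathcal L)$ candidate levels cannot avoid $n$ adversarially placed eigenvalues. The paper instead uses $\Theta(n)$ nodes per dyadic scale and controls the resulting gaps of order $h/n$ through $\sum_i|x_i-t|^{-1}$ and a resolvent integral.

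The missing idea is that errors must be \emph{weighted} by branch probabilities rather than divided by them. The paper builds single-copy Gaussian-occupation records whose variance in each direction is proportional to the minority occupations involved (Lemmas~\ref{lem:batch-variance} and~\ref{lem:batch-quartic-records}). The noise is then of order $\sqrt{w/N}$ against a signal $w$, so $w\theta^2\sim1/N$ independently of $w$. Exploiting this requires four ingredients:
\begin{itemize}
\item a relative covariance estimate $-\beta(C+\eta I)\preceq C_p-C\preceq\beta(C+\eta I)$, obtained by batch-adaptive halving of the scale (Lemma~\ref{lem:n7-covariance-bands});
\item column normalization of the phase-averaged connected tensor by the band scales $V^{-1/2}$;
\item the weighted inverse $\|D^{1/2}(S-\sum_bt_bP_b)\|_F^2\le2000rn\|\mathcal AS\|^2$ of Lemma~\ref{lem:n7-weighted-inverse}, followed by a $D$-weighted soft partition;
\item the bound $\Var_\Psi(H_A)\le C_r\{\Tr(DA^{\mathsf T}A)+\|[A,\Gamma]\|_F^2\}$, which converts exactly these weighted errors into state error.
\end{itemize}

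The same weighting problem recurs in your local step. Your Stage~2 description of the signal is also off. For $p\ge3$, two disjoint branches differ in $2p\ge6$ modes, so no quartic correlator sees their coherence; the cross-branch quartic entries are density--density terms. Recovering branch orbitals in a large group with weak branches needs the relatively estimated local symmetry form and probability-weighted filtering of Lemmas~\ref{lem:n7-local-relative-acquisition}--\ref{lem:n7-local-support}, not pairwise coherences.

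Finally, a disentangler for a superposition of disjoint determinants is not Gaussian. Validation therefore needs a randomized Gaussian effect with $(I-\ketbra{\Phi}{\Phi})/t\preceq G_\Phi\preceq2K_\Phi/t$ (Lemma~\ref{lem:n7-gaussian-validation}). Its completeness requires the parent-energy bound $\langle\Psi|K_{\hat\Phi}|\Psi\rangle\le f_0/8$, which is a sum of local infidelities, not merely small global infidelity.
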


The learner first measures covariance to select spectral bands and nearly
deterministic core modes. Fresh occupation records estimate normalized
connected quartic correlations in these bands. Two independent data
banks identify a low commutator space; random matrices in this space
recover the hidden mode groups with errors weighted by their state
probabilities. This avoids dividing the error by an unknown small branch
weight.

Fresh measurements in each recovered frame then estimate a positive
local symmetry form and the coherences in a small candidate subspace.
The particle sectors used in this calculation refer to the recovered
local frame; they do not impose particle-number conservation on the
unknown evolution. Local branch restoration returns disjoint occupied-mode
sets and normalized coefficients, together with the Gaussian circuit
needed to prepare the reconstructed state. Appendix~\ref{app:active-reconstruction}
proves the acquisition, restoration and complete error budget.

Each measurement batch is fixed before its outcomes are inspected; later
batches may use earlier data. The number of batches is not asserted to be
constant. Passive candidates are attempted and tested first, followed
when necessary by capped general attempts. The test is an implemented
Gaussian occupation measurement whose rejection probability controls
infidelity. The returned description need not equal the original input
labels or circuit. Bound~\eqref{eq:main-resources} is sufficient, not
an optimality or practical-speed claim.

\subsection{Learning procedures}
\label{sec:learning-procedures}
Each procedure takes independent copies of one unknown promised state,
the system size, and the corresponding theorem's accuracy and confidence
parameters. The general learner also takes the fixed particle cap.
Their output is a finite-precision preparation description.
The summaries below use the tolerances, trial caps and error allocations
specified in Appendix~\ref{app:learning-algorithms}. If no valid output is
obtained, the procedure returns its fixed valid fallback.

\noindent\emph{Nonadaptive paired-orbit learner
(Theorem~\ref{thm:reconstruction}; Algorithm~\ref{alg:paired-quartic}).}
\begin{enumerate}[itemsep=2pt,topsep=3pt]
\interlinepenalty=10000
 \item Prescribe four independent banks, each measuring one fixed-basis
 quartic observable per copy, with the budgets in
 \eqref{eq:opt-bank-budgets}. Acquire all banks before processing outcomes.
 \item Estimate the quartic range with the first bank and refresh it
 with the second. Form its symmetric commutant, draw capped Gaussian
 pencils in the low commutant space, and filter their top-eight
 eigenspaces by a measured-kernel test. Remove duplicate candidates.
 \item Insert distinct blocks and apply three frame corrections after
 each insertion using the third bank. Return the paired reference if
 a required global guard fails.
 \item In the resulting local chart, maximize the fourth bank's
 coordinate inner product over the prescribed ball. Compile and round
 the resulting Gaussian circuit within the reserved error.
\end{enumerate}
\noindent\emph{Batch-adaptive general-family learner
(Theorem~\ref{thm:active-main}; Algorithm~\ref{alg:general-adaptive}).}
\begin{enumerate}[itemsep=2pt,topsep=3pt]
\interlinepenalty=10000
 \item Calibrate covariance, choose spectral bands and separate nearly
 deterministic core modes. Try a passive preparation first.
 \item Use two independent occupation-record banks to estimate the
 normalized connected quartic tensor. Recover candidate mode groups from
 their low commutator space by random probes and complete spectral filters.
 \item In each learned frame, measure the local symmetry form, select a
 branch subspace and estimate its coherences. Restore disjoint branch
 factors, attach the core and compile an even Gaussian preparation.
 \item Test each candidate with fresh Gaussian occupation measurements.
 Accept only after the test passes; otherwise repeat within the fixed caps,
 then use the general attempt if the passive attempts fail.
\end{enumerate}

Both procedures learn a preparation of approximately the same state,
rather than a unique original circuit. The paired procedure uses the
fixed seed and does not include the general learner's certified
passive-output requirement.

\section{Numerical examples}
\label{sec:paired-numerics}
We test Algorithm~\ref{alg:paired-quartic} on one and two dimers
($n=4,8$ modes). A dimer has two sites, each with spin-up and spin-down
modes; at zero phase its two electrons form a spin singlet. The inputs and
evolved targets are
\begin{equation}
 \begin{gathered}
 \ket{s_\phi}=\frac{e^{i\phi}\ket{1001}-\ket{0110}}{\sqrt2},\\
 \ket{\Psi_\phi^{(B)}}=e^{-i\tau H_q}\ket{s_\phi}^{\otimes B}.
 \end{gathered}
 \label{eq:example-target}
\end{equation}
The quench~\eqref{eq:example-quench} combines hopping and site energies with
nonzero pairing, as in a Kitaev-type chain~\cite{kitaev2001wire}; for two
dimers it also couples their modes. The targets lie exactly on the paired
orbit of Theorem~\ref{thm:reconstruction}. The learner receives quartic
outcome counts, not the Hamiltonian, phase or hidden frame.
Appendices~\ref{app:example-geometry}--\ref{app:example-records} give the
parameters, orbit membership and implementation details.

\subsection{One dimer: logical-state reconstruction}
Figure~\ref{fig:paired-logical} displays reconstructed four-mode states for
24 input phases. Each dot is a projection onto three logical expectation
values~\eqref{eq:example-logical}, evaluated after undoing the known quench
for visualization only.
Colour links each dot to its target phase on the circle; it does not encode error.
Real symmetry already confines the dots to the target circle's plane:
being in that plane does not establish recovery of the correct state.
The dots are not normalized to the sphere.
The lower panels resolve each displacement from its own target into a
signed tangential component $e_{\rm t}$ and a radial component $e_{\rm r}$,
as defined in Appendix~\ref{app:example-geometry}.
Negative $e_{\rm r}$ means an inward radial displacement; the origin marks
agreement of the logical coordinates, not a separate full-state error test.
Fallback stars can separate below even when they overlap on the sphere,
because each trial has its own target.

\subsection{Hidden structure and full-state accuracy}
Figure~\ref{fig:paired-accuracy} fixes $\phi=0$ and compares four and eight
modes. Each copy supplies one raw fixed-basis quartic sign. The axis label
``$K$ (copies)'' counts copies of each quartic in each of
the three initial banks. With $\epsilon^2=\delta=1/8$, the four independent
banks use $(K,K,K,8K)$ copies per observable, so the total is
\begin{equation}
 N=11\binom{2n}{4}K.
 \label{eq:example-copies-main}
\end{equation}
Panel (a) assembles the measured quartic means into a matrix: separated
eigenvalue groups reveal the structural information used for reconstruction.
Panel (b) uses an independent bank to refresh its range and forms the
commutant. Two small eigenvalues separate from the rest; their eigenspace
helps identify the two hidden mode groups. The surfaces interpolate measured
medians; lighter floor colours indicate smaller spectral errors.
Panel (c) shows full-state infidelity, including fixed fallback outputs
(stars). Higher sampled $K$ yields smaller median errors, without implying
monotonic improvement in individual trials. Equal $K$ across system sizes
does not imply equal total copies. Appendix~\ref{app:example-records}
reports the fallback counts and median infidelities.

The code implements the real-arithmetic construction numerically, with
rounded output circuits; it does not certify every finite-bit guard.
The budget sweep includes values below the theorem's unspecified sufficient
constant. These tests demonstrate reconstruction behaviour on small active
examples, not the worst-case confidence guarantee at every plotted budget.
The large copy counts do not demonstrate practical superiority, and two
system sizes cannot establish polynomial scaling; that conclusion rests
on the proof of Theorem~\ref{thm:reconstruction}.

\section{Scope and interpretation}
Both theorems learn one unknown pure state from repeated preparations
and return a classical description of a quantum preparation.
This description does not provide a classical sampler for all joint
occupation outcomes or refute the sampling-hardness assumptions used
in Fermion Sampling~\cite{oszmaniec2022}.

Theorem~\ref{thm:active-main} permits the entire disjoint-branch input
promise under general quadratic evolution, including inputs with very
small amplitudes and zero covariance eigenvalues. Its internal conditioning
is derived from measured truncation thresholds. It does not assert learning of
arbitrary non-Gaussian inputs, mixed states or uncontrolled physical noise.
Theorem~\ref{thm:reconstruction} uses the stronger promise of a fixed paired
reference and admits a quartic-only reconstruction. For fixed particle
cap, both theorems give polynomial resource guarantees despite the ambient
Hilbert-space dimension $2^n$. The general-family bounds are sufficient
worst-case estimates and need not be optimal or practical.
The paired bound is optimal in the stated fixed-basis quartic model;
it need not describe the cost of each particular state or give
practical superiority on a small instance.

A necessary budget answers a different question: below it, no learner
can satisfy the accuracy and confidence guarantee on every allowed state.
Lemma~\ref{lem:necessary-copies} proves the worst-case requirement
\begin{equation}
 N_{\rm copies}=\Omega\!\left(\frac{n^2+\log(1/\delta)}{f}\right)
 \label{eq:main-necessary-copies}
\end{equation}
in the stated small-error regime, even with collective measurements.
Here $f$ is the infidelity tolerance; for
Theorem~\ref{thm:reconstruction}, substitute $f=\epsilon^2$.
The hard states already belong to the paired orbit, so this necessary
bound applies to both theorems.
For the restricted quartic readout,
Lemma~\ref{lem:opt-lower} strengthens it by a factor of order $n^4$
and matches Theorem~\ref{thm:reconstruction}.
This comparison fixes both the task and the information obtained
from each copy. Optimal copy complexity for the general family and
optimal classical runtime remain open.

\section{Discussion}
The results identify a class of non-Gaussian states whose hidden
preparation structure remains learnable after general quadratic
evolution, including pair creation and annihilation. Connected Majorana
correlations reveal the hidden mode groups, while local reconstruction
produces a preparation description with a certified state error.
This distinguishes learning a state from expanding it into Gaussian
components. Recent exponential lower bounds on approximate Gaussian
rank apply to tensor powers of fixed pure non-Gaussian fermionic states
of definite parity~\cite{wei2026gaussianranks} and, by Gaussian-unitary
invariance, to their evolved states. Such bounds are therefore compatible
with polynomial learning of the paired-magic orbit: the learner returns
a structured preparation description rather than a Gaussian expansion.

Copy optimality depends on measurement access. A Gaussian circuit
followed by full occupation readout can provide several correlated quartic
outcomes per copy~\cite{zhao2021}. The lower bound in
Theorem~\ref{thm:reconstruction} concerns one raw fixed-basis quartic outcome
per copy and does not apply to this richer access. Comparisons must therefore
account for both the measurement resources and the classical processing.

The principal remaining challenge is to reduce the cost of stable
reconstruction. The general learner requires neither a supplied minimum
branch weight nor a covariance gap; it controls the state error while
resolving the structure supported by the data. Nevertheless, the resource
estimates for that general family are sufficient worst-case bounds,
with a substantial gap to the necessary copy bound.
For the paired family, fixed-basis quartic readout has a matching
adaptive lower bound and nonadaptive upper bound; avoiding measurement
adaptation costs no extra order of copies in that model.
Optimal general-family copy complexity, optimal classical runtime,
and optimal rates under richer measurement access remain open.

\section*{Code availability}
The code used for the numerical simulations and figure generation is
available from the authors upon reasonable request.

\section*{Note added}
After the manuscript was written, we became aware of the
work~\cite{heo2026learning}, which studies the same learning task for
disjoint-branch inputs under passive free-fermion evolution. However, our results
allow all parity-preserving quadratic Hamiltonians, including pair creation
and annihilation. Their method separates particle reduced-density-matrix
contributions and recursively removes recovered factors; ours uses connected
Majorana correlations and a covariance-compatible commutant. Under the same
fixed particle cap, our batch-adaptive learner achieves a sufficient
$\widetilde O_{r_\star}(n^7/f)$ copy bound with polynomial classical processing,
improving the published sufficient copy scaling on the shared passive family.
This comparison concerns sufficient copy bounds for the general family;
it establishes neither lower cost on every instance nor superior or
optimal classical runtime.

\section*{Acknowledgments}
E.~Amidi conceived the research ideas and made the principal contribution to this work. Generative AI systems were used to assist in generating numerical examples and figures, in refining the rigor of the proofs, in developing parts of some intermediate lemmas and in writing the Lean formalization. All AI-assisted content was reviewed and verified by the authors. The research ideas, theorem statements, learning algorithms, and overall mathematical arguments were developed by the authors, who take full responsibility for the accuracy, originality, and integrity of the manuscript.
\par
\setlength{\stripsep}{8pt plus 1pt minus 1pt}
\begin{strip}
 \centering\rule{.45\textwidth}{.3pt}\par
\end{strip}
\begingroup
\renewcommand{\bibsection}{}
\interlinepenalty=10000
\input{main.bbl}
\endgroup

\clearpage
\onecolumn
\appendix
\renewcommand{\thesection}{\Alph{section}}
\renewcommand{\thesubsection}{\thesection.\arabic{subsection}}
\renewcommand{\thesubsubsection}{\thesubsection.\arabic{subsubsection}}
\numberwithin{equation}{section}
\titleformat{\section}
  {\normalfont\large\centering}{Appendix \thesection.}{0.6em}{}
\titleformat{\subsection}
  {\normalfont\normalsize\centering}{\thesubsection.}{0.6em}{}
\titleformat{\subsubsection}
  {\normalfont\normalsize\itshape\centering}{\thesubsubsection.}{0.6em}{}
\titlespacing*{\section}{0pt}{2.8ex plus .7ex}{1.5ex}
\titlespacing*{\subsection}{0pt}{2ex plus .4ex}{1ex}
\titlespacing*{\subsubsection}{0pt}{1.6ex plus .3ex}{.8ex}
\setlength{\abovedisplayskip}{9pt plus 2pt minus 3pt}
\setlength{\belowdisplayskip}{9pt plus 2pt minus 3pt}

\section{Observable algebra and state certificates}
\label{app:fermions}
\noindent\emph{Index conventions.}
For a finite set $S$, let $\mathcal I_k(S)=\{I\subseteq S:|I|=k\}$;
$\mathcal I_k(d)$ abbreviates $\mathcal I_k(\{1,\ldots,d\})$.
A sum over subsets includes each subset once. Empty sums are zero and
empty products are the identity.
For mode indices $1\le i,j\le n$ and Majorana indices $1\le p,q\le2n$,
the conventions are
\begin{equation}
 \begin{gathered}
 \{a_i,a_j\}=\{a_i^\dagger,a_j^\dagger\}=0,\quad
 \{a_i,a_j^\dagger\}=\delta_{ij}I,\quad a_i\ket0=0,\quad
 I=\{i_1<\cdots<i_p\},\\
 a_I=a_{i_p}\cdots a_{i_1},\quad \ket I=a_I^\dagger\ket0,\qquad \gamma_{2j-1}=a_j+a_j^\dagger,\qquad
 \gamma_{2j}=-i(a_j-a_j^\dagger),\qquad
 U_R^\dagger\gamma_pU_R=\sum_{q=1}^{2n}R_{pq}\gamma_q.
 \end{gathered}
 \label{eq:fermionic-conventions}
\end{equation}
For density operators $\rho,\sigma$ and unit vectors $\psi,\phi$,
\begin{equation}
 \begin{gathered}
 \trd(\rho,\sigma)=\tfrac12\|\rho-\sigma\|_1,\qquad
 \trd(\psi,\phi)^2=1-|\braket\psi\phi|^2,\\
 d_2(\psi,\phi):=\min_{\theta\in\R}\|\psi-e^{i\theta}\phi\|
 =\sqrt{2-2|\braket\psi\phi|},\qquad
 \trd(\psi,\phi)\le d_2(\psi,\phi)\le\sqrt2\,\trd(\psi,\phi).
 \end{gathered}
 \label{eq:distance-conventions}
\end{equation}
In trace distances, vectors denote their rank-one projectors. The paired orbit is
\[
 \cO=\{U_RP_0U_R^\dagger:R\in SO(2n)\}.
\]
\noindent\emph{Trace and Majorana basis.}
In an orthonormal basis $\{\ket\ell\}_{\ell=1}^{2^n}$,
\begin{equation}
 \sigma=\sum_{k=1}^{2^n}s_k\ketbra{v_k}{v_k},\qquad I=\sum_{\ell=1}^{2^n}\ketbra\ell\ell,\qquad \Tr(A\sigma)=\sum_{\ell,k=1}^{2^n}s_k
 \bra\ell A\ket{v_k}\braket{v_k}\ell
 =\sum_{k=1}^{2^n}s_k\bra{v_k}A\ket{v_k}.
 \label{eq:spectral-trace}
\end{equation}
For a sorted $k$-element set $I$, put $s=k(k-1)/2$. Reversing the
anticommuting factors gives
\begin{equation}
 \begin{gathered}
 C_I=i^s\gamma_{i_1}\cdots\gamma_{i_k},\qquad C_I^2=i^{2s}(-1)^sI=I,\\
 C_I^\dagger=(-i)^s\gamma_{i_k}\cdots\gamma_{i_1}
 =(-i)^s(-1)^s\gamma_{i_1}\cdots\gamma_{i_k}=C_I.
 \end{gathered}
 \label{eq:majorana-adjoint}
\end{equation}
If $I\ne J$, their product is a phase times $C_{I\triangle J}$.
For a nonempty even set $K$, choose $j\in K$; for an odd set, choose
$j\notin K$, which exists because $2n$ is even. Then
\begin{equation}
 \Tr C_K=\Tr(\gamma_jC_K\gamma_j)=-\Tr(C_K\gamma_j^2)
       =-\Tr C_K,\qquad 2^{-n}\Tr(C_IC_J)=\delta_{IJ}.
 \label{eq:majorana-orthogonality}
\end{equation}
These $2^{2n}$ orthonormal operators form an operator basis. For quartets
$C_I=-B_I$; the raw convention remains~\eqref{eq:main-quartic-coordinates}.

\subsection{Gaussian covariance of the observable representation}
\begin{alem}[Exterior action and quartic flattening]
\label{lem:closure}
For every $R\in SO(2n)$ and fixed degree $k$,
\begin{equation}
 U_R^\dagger C_IU_R=\sum_{J\in\mathcal I_k(2n)}\det(R_{I,J})C_J.
 \label{eq:wedge-action}
\end{equation}
The same formula holds for the raw quartics $B_I$.
For an alternating four-tensor $Y(z)$ with sorted entries
$z\in\R^{\binom{2n}{4}}$, let
$Q(z)_{(pq),(rs)}=Y(z)_{pqrs}$ on the orthonormal basis
$\ket{p\wedge q}$, $p<q$. Then
$Q((\wedge^4R)z)=(\wedge^2R)Q(z)(\wedge^2R)^{\mathsf T}$.

A plane generator is $H_{pq}=i\gamma_p\gamma_q/2$; its Gaussian lift
$e^{\theta\gamma_p\gamma_q/2}$ induces a real plane rotation.
Every $R\in SO(2n)$ is a product of $O(n^2)$ such rotations.
\end{alem}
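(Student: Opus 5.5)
Conjugation by $U_R$ is an algebra automorphism, and every $C_I$ is a product of single Majoranas. So the plan is to apply the single-Majorana rule from \eqref{eq:fermionic-conventions} factor by factor and then use anticommutation to collapse the resulting sum.

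Concretely, for $I=\{i_1<\cdots<i_k\}$ we have
\[
 U_R^\dagger\gamma_{i_1}\cdots\gamma_{i_k}U_R=\prod_{m=1}^k\Bigl(\sum_q R_{i_mq}\gamma_q\Bigr)=\sum_{q_1,\ldots,q_k}R_{i_1q_1}\cdots R_{i_kq_k}\gamma_{q_1}\cdots\gamma_{q_k}.
\]
Terms with a repeated index cancel. To see this, pair the sequence with the one obtained by swapping the two equal indices; the swap is implemented by anticommuting through the intermediate factors. A cleaner route is to note that the map $q\mapsto\gamma_q$ extends to the Clifford algebra, whose antisymmetric part is the exterior algebra. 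The distinct-index terms group by their sorted set $J$. Reordering $\gamma_{q_1}\cdots\gamma_{q_k}=\sgn(\pi)\gamma_{j_1}\cdots\gamma_{j_k}$ gives the Leibniz expansion of $\det(R_{I,J})$.

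The harder part is the repeated-index cancellation. My first attempt is to write the product as an antisymmetrized sum. Since the coefficient $R_{i_1q_1}\cdots R_{i_kq_k}$ is generic, this cannot be done termwise. Instead, I would argue by induction on $k$, using that the row vectors of $R$ are orthonormal: $\sum_q R_{pq}R_{p'q}=\delta_{pp'}$, so $\{\tilde\gamma_p,\tilde\gamma_{p'}\}=2\delta_{pp'}$ for $\tilde\gamma_p=U_R^\dagger\gamma_pU_R$. Then $\tilde\gamma_{i_1}\cdots\tilde\gamma_{i_k}$ equals its own antisymmetrization $\frac1{k!}\sum_\pi\sgn(\pi)\tilde\gamma_{i_{\pi(1)}}\cdots\tilde\gamma_{i_{\pi(k)}}$. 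Expanding the antisymmetrized product multilinearly kills repeated indices and yields $\sum_J\det(R_{I,J})\gamma_{j_1}\cdots\gamma_{j_k}$ directly. The prefactor $i^s$ is common to both sides, and $B_I=-C_I$ for quartets, so the same formula holds for $B_I$.

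For the flattening, define $Y(z)$ as the alternating tensor with $Y_{pqrs}=\sgn\cdot z_{\mathrm{sort}}$ and zero on repeated indices. By \eqref{eq:wedge-action}, the induced action on alternating four-tensors is $Y\mapsto R^{\otimes4}Y$, or its transpose convention, which I would fix by matching coefficients. Restricting $R^{\otimes4}=R^{\otimes2}\otimes R^{\otimes2}$ to antisymmetric pairs $(pq)$ and $(rs)$ then gives $(\wedge^2R)Q(\wedge^2R)^{\mathsf T}$. The check uses $(\wedge^2R)_{(pq),(p'q')}=\det R_{\{p,q\},\{p',q'\}}$ together with the Cauchy--Binet identity for the $4\times4$ minors, which factor as sums of products of $2\times2$ minors under antisymmetry.

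For the plane rotations, use $\gamma_p\gamma_q$ with $p\ne q$. It satisfies $(\gamma_p\gamma_q)^2=-I$, commutes with $\gamma_r$ for $r\notin\{p,q\}$, and anticommutes with $\gamma_p$ and $\gamma_q$. Hence
\[
 e^{-\theta\gamma_p\gamma_q/2}\gamma_pe^{\theta\gamma_p\gamma_q/2}=\gamma_pe^{\theta\gamma_p\gamma_q}=\cos\theta\,\gamma_p+\sin\theta\,\gamma_q,
\]
and similarly for $\gamma_q$. This is a Givens rotation. Every $R\in SO(2n)$ factors into at most $\binom{2n}{2}=O(n^2)$ Givens rotations by standard QR elimination, since the determinant one removes the final sign. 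Composing the Gaussian lifts gives $U_R$ up to phase, because the ordering of conjugations reverses consistently.
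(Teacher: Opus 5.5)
Your proposal is correct and follows the paper's proof. The conjugated Majoranas obey the CAR, which is where row orthogonality of $R$ enters, so repeated-index terms cancel only after antisymmetrizing, not through the termwise swap you first tried. The Leibniz expansion then gives $\det(R_{I,J})$, the flattening is $R^{\otimes4}$ restricted to antisymmetric pairs as in the paper's pair-flattening computation, and the $O(n^2)$ count comes from Givens elimination. The only cosmetic difference is that you get the plane rotation by passing $\gamma_p,\gamma_q$ through $e^{\theta\gamma_p\gamma_q/2}$ directly, whereas the paper differentiates $U(\theta)^\dagger\gamma_jU(\theta)$ in $\theta$.
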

\begin{proof}
\noindent\emph{Antisymmetrization.} The transformed Majoranas
$\gamma'_i=\sum_{j=1}^{2n}R_{ij}\gamma_j$ obey
$\{\gamma'_i,\gamma'_l\}=2\delta_{il}I$.
For distinct $i_1,\ldots,i_k$, their product equals its antisymmetrization.
Repeated output indices cancel. The coefficient of an ordered product
with index set $J$ is
\begin{equation}
 \sum_{\pi\in S_k}\sgn(\pi)\prod_{a=1}^kR_{i_a j_{\pi(a)}}
 =\det(R_{I,J}).
 \label{eq:wedge-coefficient}
\end{equation}
The common Hermitian phase gives~\eqref{eq:wedge-action}.
Cauchy--Binet yields
$(\wedge^kR)(\wedge^kR)^{\mathsf T}=\wedge^k(RR^{\mathsf T})=I$.

\noindent\emph{States and measured moments.} Let $\mathcal S_n$ be the density
operators on the $n$-mode Fock space, and define
$m_k(\rho)=(\Tr(\rho C_I))_{|I|=k}$, with sorted index sets.
For fixed $R$, the action on states and the action on their real moments
fit into the square
\[
 \begin{tikzcd}[column sep=large,row sep=large]
 \mathcal S_n\arrow[r,"\rho\,\mapsto\,U_R\rho U_R^\dagger"]
 \arrow[d,"m_k"'] & \mathcal S_n\arrow[d,"m_k"]\\
 \R^{\binom{2n}{k}}\arrow[r,"\wedge^kR"'] &\R^{\binom{2n}{k}}.
 \end{tikzcd}
\]
Conjugation preserves positivity and trace, and each $C_I$ is Hermitian,
so all maps have the displayed domains and codomains. Cyclicity and
\eqref{eq:wedge-action} verify the two compositions coordinate by coordinate:
\[
 \begin{aligned}
 [m_k(U_R\rho U_R^\dagger)]_I
 &=\Tr(\rho U_R^\dagger C_IU_R)
 =\sum_{J\in\mathcal I_k(2n)}\det(R_{I,J})\Tr(\rho C_J)\\
 &=[(\wedge^kR)m_k(\rho)]_I.
 \end{aligned}
\]
This equivariance does not assert that $m_k$ is injective.

\noindent\emph{Pair flattening.} Put $m=\binom{2n}{2}$ and $S_R=\wedge^2R$.
Both arrows in the following square are defined on all alternating tensors:
\[
 \begin{tikzcd}[column sep=large,row sep=large]
 \R^{\binom{2n}{4}}\arrow[r,"\wedge^4R"]\arrow[d,"Q"']
 &\R^{\binom{2n}{4}}\arrow[d,"Q"]\\
 \operatorname{Sym}_m(\R)\arrow[r,"A\,\mapsto\,S_RA S_R^{\mathsf T}"']
 &\operatorname{Sym}_m(\R).
 \end{tikzcd}
\]
Antisymmetrizing both index pairs verifies the two compositions:
\begin{equation}
 \begin{aligned}
 Q((\wedge^4R)z)_{(pq),(rs)}
 &=\sum_{a,b,c,d=1}^{2n}R_{pa}R_{qb}R_{rc}R_{sd}Y(z)_{abcd}\\
 &=\sum_{\substack{1\le a<b\le2n\\1\le c<d\le2n}}(R_{pa}R_{qb}-R_{pb}R_{qa})
       (R_{rc}R_{sd}-R_{rd}R_{sc})Y(z)_{abcd}\\
 &=(S_RQ(z)S_R^{\mathsf T})_{(pq),(rs)}.
 \end{aligned}
 \label{eq:rec-q-equivariance}
\end{equation}

\noindent\emph{Plane rotations.} Anticommutation gives
$i[H_{pq},\gamma_j]=\delta_{pj}\gamma_q-\delta_{qj}\gamma_p$.
For $U(\theta)=e^{-i\theta H_{pq}}$, the product rule gives
\[
 \begin{gathered}
 \frac{d}{d\theta}(U^\dagger\gamma_jU)=iU^\dagger H_{pq}\gamma_jU-iU^\dagger\gamma_jH_{pq}U
   =\delta_{pj}U^\dagger\gamma_qU-\delta_{qj}U^\dagger\gamma_pU,\\
 U^\dagger\gamma_pU=\cos\theta\,\gamma_p+\sin\theta\,\gamma_q,\qquad U^\dagger\gamma_qU=-\sin\theta\,\gamma_p+\cos\theta\,\gamma_q.
 \end{gathered}
\]
All other Majoranas are fixed. Orthogonal elimination uses at most
$n(2n-1)$ such rotations for any $R\in SO(2n)$; lift phases do not
affect states.
\end{proof}

\subsection{The quartic parent of the paired reference}
Write $\obs=(B_I)_{|I|=4}$, $x_0=x(P_0)$, and $n=4B$.
We use the Jordan--Wigner representation with $\ket1$ the occupied state,
$a_j=Z_1\cdots Z_{j-1}(X_j+iY_j)/2$, so that
\begin{equation}
 \gamma_{2j-1}=a_j+a_j^\dagger=Z_1\cdots Z_{j-1}X_j,\qquad
 \gamma_{2j}=-i(a_j-a_j^\dagger)=Z_1\cdots Z_{j-1}Y_j .
 \label{eq:jw}
\end{equation}
In the occupation basis $X_j$ and $Z_j$ are real, and $Y_j$ is imaginary. One
block carries modes $1,\dots,4$ and Majoranas $\gamma_1,\dots,\gamma_8$; block
$b$ is shifted by $4(b-1)$ modes. Put $\ket{c_1}=\ket{1100}$,
$\ket{c_2}=\ket{0011}$, $\ket\omega=(\ket{c_1}+\ket{c_2})/\sqrt2$,
$P_\omega=\ketbra\omega\omega$, and $n_j=a_j^\dagger a_j$. From
$\gamma_{2j-1}\gamma_{2j}=-i(a_j+a_j^\dagger)(a_j-a_j^\dagger)
=-i(a_j^\dagger a_j-a_ja_j^\dagger)=-i(2n_j-1)$,
\begin{equation}
 (-1)^{n_j}=1-2n_j=-i\gamma_{2j-1}\gamma_{2j},\qquad
 \Pi_b=\prod_{j=4b-3}^{4b}(-1)^{n_j}
 =\prod_{j=4b-3}^{4b}(-i\gamma_{2j-1}\gamma_{2j})
 =\gamma_{8b-7}\gamma_{8b-6}\cdots\gamma_{8b}.
 \label{eq:parity}
\end{equation}
Commuting through $\Pi_b$ gives
$\gamma_p\Pi_b=(-1)^7\Pi_b\gamma_p$ for $p\in b$ and
$\gamma_q\Pi_b=(-1)^8\Pi_b\gamma_q$ for $q\notin b$.
For a monomial $C$ with $k_b$ factors in block $b$, use
$\Pi_b\ket{\Omega_B}=\ket{\Omega_B}$ to obtain
\begin{equation}
 \Pi_bC\Pi_b=(-1)^{k_b}C,\qquad
 \bra{\Omega_B}C\ket{\Omega_B}
 =\bra{\Omega_B}\Pi_bC\Pi_b\ket{\Omega_B}
 =(-1)^{k_b}\bra{\Omega_B}C\ket{\Omega_B},
 \label{eq:selection}
\end{equation}
Thus any odd $k_b$ forces zero expectation, also for one block
$(\omega,\Pi_1)$. A block-supported even monomial acts only on that
block's qubits, identically in every block.

\begin{alem}[Quadratic blindness]
\label{lem:quadratic}
$\bra\omega a_i^\dagger a_j\ket\omega=\delta_{ij}/2$,
$\bra\omega a_ia_j\ket\omega=\bra\omega a_i^\dagger a_j^\dagger\ket\omega=0$,
and $\bra{\Omega_B}\gamma_p\gamma_q\ket{\Omega_B}=\delta_{pq}$. The Majorana
covariance $\mathcal C_{pq}(\rho)=\frac i2\Tr(\rho[\gamma_p,\gamma_q])$
vanishes on $\cO$.
\end{alem}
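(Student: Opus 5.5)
The plan is to compute the one-block quadratic expectations directly in the occupation basis. We then lift them to $\ket{\Omega_B}$ with the block selection rule~\eqref{eq:selection}, and transport the vanishing covariance to the whole orbit with the equivariance of Lemma~\ref{lem:closure} at degree $k=2$.

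\emph{One block.} Write $\ket\omega=(\ket{c_1}+\ket{c_2})/\sqrt2$ with $\ket{c_1}=\ket{1100}$ and $\ket{c_2}=\ket{0011}$.

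For $i\ne j$, the operator $a_i^\dagger a_j$ changes the occupation pattern in exactly two modes. Each diagonal matrix element $\bra{c_\alpha}a_i^\dagger a_j\ket{c_\alpha}$ therefore vanishes. A cross element would require $\ket{c_1}$ and $\ket{c_2}$ to differ in only two modes, but they differ in all four, so these vanish too.

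For $i=j$, the occupation $n_i$ is diagonal in this basis. It equals $1$ on exactly one of $c_1,c_2$, since each mode is occupied in exactly one pattern. Hence $\bra\omega n_i\ket\omega=1/2$.

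The operators $a_ia_j$ and $a_i^\dagger a_j^\dagger$ change the particle number by $\mp2$. Both $c_1$ and $c_2$ have exactly two particles, so every matrix element of these operators between them vanishes.

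\emph{Majorana two-point function on $\ket{\Omega_B}$.} For $p=q$ we have $\gamma_p^2=I$, so the expectation is $1$.

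For $p\ne q$ in different blocks, each block carries one factor, so $k_b$ is odd. By~\eqref{eq:selection} the expectation is zero.

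For $p\ne q$ in the same block, $\gamma_p\gamma_q$ is block-supported and acts identically in every block. Its expectation thus equals the one-block value $\bra\omega\gamma_p\gamma_q\ket\omega$. Expanding $\gamma$'s via~\eqref{eq:fermionic-conventions} writes $\gamma_p\gamma_q$ as a linear combination of $a_i^\dagger a_j$, $a_ia_j$, $a_i^\dagger a_j^\dagger$ and $\delta_{ij}I$.

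\begin{itemize}
\item When $p,q$ belong to different modes $i\ne j$, only off-diagonal terms appear, and all of them vanish by the previous paragraph.
\item When $p,q$ belong to the same mode $j$, i.e.\ $\{p,q\}=\{2j-1,2j\}$, use $\gamma_{2j-1}\gamma_{2j}=-i(2n_j-1)$ from~\eqref{eq:parity}. Since $\langle n_j\rangle=1/2$, the expectation is $0$.
\end{itemize}

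Hence $\bra{\Omega_B}\gamma_p\gamma_q\ket{\Omega_B}=\delta_{pq}$.

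\emph{Covariance on the orbit.} For $p\ne q$, $\frac i2[\gamma_p,\gamma_q]=i\gamma_p\gamma_q$, which is $\pm C_{\{p,q\}}$. For $p=q$ the commutator vanishes. So $\mathcal C(P_0)=0$ follows from the previous step.

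For $\rho=U_RP_0U_R^\dagger$, Lemma~\ref{lem:closure} gives $m_2(\rho)=(\wedge^2R)\,m_2(P_0)=0$. Equivalently, $\mathcal C(\rho)=R^{\mathsf T}\mathcal C(P_0)R$ up to the transpose convention, which is $0$.

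The only point needing care is bookkeeping: the same-block and same-mode case must use~\eqref{eq:parity} rather than the off-diagonal argument, and the Hermitian phase relating $i\gamma_p\gamma_q$ to $C_{\{p,q\}}$ must be handled correctly. Everything else is routine.
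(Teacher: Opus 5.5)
Your proof is correct and follows essentially the same route as the paper. Both evaluate the one-block occupation-basis matrix elements directly; your particle-number counting is the same argument as the paper's commutator with $N_{\rm loc}$. Both then use the selection rule~\eqref{eq:selection} for cross-block pairs and the parity identity for same-mode pairs, and transport $\mathcal C(P_0)=0$ to $\cO$ by Gaussian covariance. With the convention~\eqref{eq:fermionic-conventions}, the transport is exactly $\mathcal C(U_RP_0U_R^\dagger)=R\,\mathcal C(P_0)R^{\mathsf T}$, but since $\mathcal C(P_0)=0$ your hedge on the transpose is harmless.
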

\begin{proof}
The occupations are
\[
 n_i\ket{c_1}=[i\le2]\ket{c_1},\qquad
 n_i\ket{c_2}=[i\ge3]\ket{c_2},\qquad
 \bra\omega n_i\ket\omega
 =\tfrac12\bigl(\bra{c_1}n_i\ket{c_1}+\bra{c_2}n_i\ket{c_2}\bigr)=\tfrac12 .
\]
For $i\ne j$, $a_i^\dagger a_j\ket{c_l}$ is zero or a signed occupation ket
with one particle in each of $\{1,2\}$ and $\{3,4\}$, hence orthogonal
to both $\ket{c_1},\ket{c_2}$:
\[
 \bra\omega a_i^\dagger a_j\ket\omega
 =\frac12\sum_{k,l=1}^2\bra{c_k}a_i^\dagger a_j\ket{c_l}=0 .
\]
For $N_{\rm loc}=\sum_{i=1}^4n_i$, the number-changing terms vanish because
\[
 \begin{gathered}
 N_{\rm loc}\ket{c_l}=2\ket{c_l}\quad(l=1,2),\qquad
 N_{\rm loc}\ket\omega=2\ket\omega,\\
 -2\bra\omega a_ia_j\ket\omega=\bra\omega[N_{\rm loc},a_ia_j]\ket\omega
 =(2-2)\bra\omega a_ia_j\ket\omega=0,\\
 2\bra\omega a_i^\dagger a_j^\dagger\ket\omega=\bra\omega[N_{\rm loc},a_i^\dagger a_j^\dagger]\ket\omega
 =(2-2)\bra\omega a_i^\dagger a_j^\dagger\ket\omega=0.
 \end{gathered}
\]
Together with~\eqref{eq:parity}, this gives
\[
 \bra\omega\gamma_{2i-1}\gamma_{2i}\ket\omega
 =i\bra\omega(1-2n_i)\ket\omega=i\bigl(1-2\cdot\tfrac12\bigr)=0 .
\]
For two modes $i\ne j$ of one block, the four products are
\begin{equation}
 \begin{gathered}
 \gamma_{2i-1}\gamma_{2j-1}=a_ia_j+a_ia_j^\dagger+a_i^\dagger a_j+a_i^\dagger a_j^\dagger,\qquad \gamma_{2i-1}\gamma_{2j}=-i(a_ia_j-a_ia_j^\dagger+a_i^\dagger a_j-a_i^\dagger a_j^\dagger),\\
 \gamma_{2i}\gamma_{2j-1}=-i(a_ia_j+a_ia_j^\dagger-a_i^\dagger a_j-a_i^\dagger a_j^\dagger),\qquad \gamma_{2i}\gamma_{2j}=-a_ia_j+a_ia_j^\dagger+a_i^\dagger a_j-a_i^\dagger a_j^\dagger.
 \end{gathered}
 \label{eq:quadratic-expansion}
\end{equation}
Every term has zero expectation, using also $a_ia_j^\dagger=-a_j^\dagger a_i$.
Cross-block products vanish by~\eqref{eq:selection}; $\gamma_p^2=I$ gives
$\bra{\Omega_B}\gamma_p\gamma_q\ket{\Omega_B}=\delta_{pq}$ and
$\mathcal C(P_0)=0$, where $P_0=\ketbra{\Omega_B}{\Omega_B}$.
For $\rho=U_RP_0U_R^\dagger$,
\[
 \begin{aligned}
 \mathcal C_{pq}(\rho)
 &=\frac i2\bra{\Omega_B}U_R^\dagger[\gamma_p,\gamma_q]U_R\ket{\Omega_B}\\
 &=\sum_{p',q'=1}^{2n}R_{pp'}R_{qq'}\,\mathcal C_{p'q'}(P_0)
 =\bigl(R\,\mathcal C(P_0)R^{\mathsf T}\bigr)_{pq}=0 .
 \end{aligned}
 \qedhere
\]
\end{proof}

\begin{alem}[Block expansion and block parent]
\label{lem:block}
With $x^\omega_I=\bra\omega B_I\ket\omega$ for $|I|=4$ in one block,
\begin{equation}
 16P_\omega=I+\Pi+\sum_{I\in\mathcal I_4(8)}x^\omega_IB_I,
 \qquad
 \sum_{I\in\mathcal I_4(8)}(x^\omega_I)^2=14,\qquad K_b:=I-\frac1{14}\sum_{I\in\mathcal I_4(8)}x^\omega_IB_I
 =\frac{15I+\Pi-16P_\omega}{14}.
 \label{eq:block-expansion}
\end{equation}
$K_b$ has eigenvalue $0$ on $\ket\omega$, $8/7$ on the seven even states
orthogonal to $\ket\omega$, and $1$ on the eight odd states.
\end{alem}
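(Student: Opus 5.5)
The plan is to expand $P_\omega$ in the Majorana operator basis of the single block and to read off each degree from earlier results. The block has $4$ modes, $8$ Majoranas and Fock dimension $16$. By the orthonormality relation~\eqref{eq:majorana-orthogonality} with $n=4$,
\[
 P_\omega=2^{-4}\sum_{K\subseteq\{1,\dots,8\}}\Tr(P_\omega C_K)\,C_K
 =\frac1{16}\sum_{K}\bra\omega C_K\ket\omega\,C_K .
\]
I then sort the terms by $|K|$.

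\emph{Degrees $0$, odd and $2$.} The empty set gives $I$ with coefficient $1$. Odd $|K|$ contribute nothing by the selection rule~\eqref{eq:selection}, applied to one block with $(\omega,\Pi_1)$. Degree $2$ vanishes by Lemma~\ref{lem:quadratic}, since $\bra\omega\gamma_p\gamma_q\ket\omega=0$ for $p\ne q$.

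\emph{Degree $8$.} Here $s=28$, so $i^s=1$ and $C_{\{1..8\}}=\gamma_1\cdots\gamma_8=\Pi$ by~\eqref{eq:parity}. Both branches $\ket{c_1},\ket{c_2}$ carry two particles, so $\Pi\ket\omega=\ket\omega$ and the coefficient is $1$.

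\emph{Degree $6$.} For $|K|=6$ with complement $\{a<b\}$, $C_K$ is a phase times $\Pi\gamma_a\gamma_b$. Using $\bra\omega\Pi=\bra\omega$, its expectation is a phase times $\bra\omega\gamma_a\gamma_b\ket\omega=0$, again by Lemma~\ref{lem:quadratic}.

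\emph{Degree $4$.} Since $C_I=-B_I$, the coefficient times the operator is $\bra\omega C_I\ket\omega C_I=x^\omega_IB_I$. This gives the displayed identity $16P_\omega=I+\Pi+\sum_I x^\omega_IB_I$.

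For the sum of squares, I take $\Tr(P_\omega^2)=1$ and use $\Tr(C_IC_J)=16\,\delta_{IJ}$. This gives $256=16\bigl(1+1+\sum_I(x^\omega_I)^2\bigr)$, hence $\sum_I(x^\omega_I)^2=14$. Substituting $\sum_I x^\omega_IB_I=16P_\omega-I-\Pi$ into the definition of $K_b$ yields $K_b=(15I+\Pi-16P_\omega)/14$.

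For the spectrum, note that $\Pi$ and $P_\omega$ commute because $\ket\omega$ is an even eigenvector of $\Pi$. So $K_b$ is diagonal in a basis consisting of $\ket\omega$, an orthonormal basis of its even complement, and the odd states, with $\Pi=\pm1$ on the even/odd sectors.
\begin{itemize}
 \item On $\ket\omega$: $(15+1-16)/14=0$.
 \item On the $7$ even states orthogonal to $\ket\omega$ (the even sector has dimension $8$): $16/14=8/7$.
 \item On the $8$ odd states: $(15-1)/14=1$.
\end{itemize}

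The only delicate step is the phase bookkeeping. First, $C_{\{1..8\}}$ must equal exactly $\Pi$ rather than $-\Pi$. Second, $\Pi=\prod_j(1-2n_j)$ must be $+1$ on even occupation states, so that the coefficient is $+1$ and the sign in $K_b$ comes out as stated. I would verify both directly from~\eqref{eq:majorana-adjoint} and~\eqref{eq:parity}. The degree-$6$ phases are irrelevant because those coefficients vanish.
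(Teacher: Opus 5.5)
Your proposal is correct and follows essentially the same route as the paper's proof. Like the paper, you expand $16P_\omega$ in the Majorana basis, eliminate the odd, degree-$2$ and degree-$6$ terms by the selection rule and quadratic blindness (writing $C_J$ as a phase times $\Pi C_{J^c}$), and identify $C_{\{1,\ldots,8\}}=\Pi$ and $C_I=-B_I$. You then obtain $\sum_I(x^\omega_I)^2=14$ from the trace of the squared expansion, and read off the spectrum of $K_b$ directly on $\ket\omega$, on its even complement and on the odd sector.
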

\begin{proof}
\noindent\emph{Expansion.} The four-mode basis
\eqref{eq:majorana-orthogonality} has $2^8=16^2$ members,
indexed by $J\subseteq\{1,\ldots,8\}$. By~\eqref{eq:spectral-trace},
\[
 16P_\omega=\sum_{J\subseteq\{1,\ldots,8\}}\Tr(C_JP_\omega)\,C_J=\sum_{J\subseteq\{1,\ldots,8\}}\bra\omega C_J\ket\omega\,C_J .
\]
By~\eqref{eq:selection} and Lemma~\ref{lem:quadratic},
\[
 \begin{gathered}
 \bra\omega C_J\ket\omega=(-1)^{|J|}\bra\omega C_J\ket\omega=0\qquad (|J|\text{ odd}),\\
 \bra\omega C_J\ket\omega=0\qquad (|J|=2),\\
 C_J=\zeta'\Pi C_{J^c},\qquad |\zeta'|=1,\qquad (|J|=6,\ |J^c|=2),\\
 \bra\omega C_J\ket\omega=\zeta'\bra\omega C_{J^c}\ket\omega=0\qquad (|J|=6).
 \end{gathered}
\]
The remaining coefficients are
\[
 \begin{gathered}
 \bra\omega C_\emptyset\ket\omega=1,\qquad
 C_{\{1,\dots,8\}}=i^{28}\gamma_1\cdots\gamma_8=\Pi,\quad
 \bra\omega\Pi\ket\omega=1,\\
 C_J=i^6B_J=-B_J,\qquad
 \bra\omega C_J\ket\omega\,C_J=x^\omega_JB_J\quad (|J|=4),
 \end{gathered}
\]
which is the expansion in~\eqref{eq:block-expansion}.

\noindent\emph{Normalization and spectrum.} Squaring the expansion, taking
$\frac1{16}\Tr$, and using orthonormality and $P_\omega^2=P_\omega$,
\[
 16=16\Tr P_\omega=\frac1{16}\Tr\bigl[(16P_\omega)^2\bigr]
 =1+1+\sum_{I\in\mathcal I_4(8)}(x^\omega_I)^2 .
\]
Rearranging, $K_b=I-(16P_\omega-I-\Pi)/14=(15I+\Pi-16P_\omega)/14$. For an
even $\ket e$ with $\braket\omega e=0$ and an odd $\ket o$
($\Pi\ket o=-\ket o$, $\braket\omega o=0$),
\[
 \begin{gathered}
 K_b\ket\omega=\tfrac1{14}\bigl(15\ket\omega+\ket\omega-16\ket\omega\braket\omega\omega\bigr)=0,\\
 K_b\ket e=\tfrac1{14}\bigl(15\ket e+\ket e-16\ket\omega\braket\omega e\bigr)=\tfrac87\ket e,\\
 K_b\ket o=\tfrac1{14}\bigl(15\ket o-\ket o-16\ket\omega\braket\omega o\bigr)=\ket o .
 \end{gathered}
\]
The even and odd sectors both have dimension $8$, so the multiplicities are
$1$, $7$, and $8$.
\end{proof}

\begin{acor}[Global parent]
\label{cor:global-parent}
$x_0=x(P_0)$ is supported on quartets inside single blocks,
$\|x_0\|^2=14B$, and
\begin{equation}
 K_0=BI-\frac1{14}x_0^{\mathsf T}\obs=\sum_{b=1}^{B}K_b,
 \qquad K_0\ket{\Omega_B}=0,
 \qquad K_0\ge I-P_0 .
 \label{eq:global-parent}
\end{equation}
\end{acor}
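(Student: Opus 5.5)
The plan is to assemble the global statement from the single-block facts of Lemma~\ref{lem:block}, using the tensor-product structure of $\ket{\Omega_B}$ and the parity selection rule~\eqref{eq:selection}. I will treat the support of $x_0$ first, then the norm, then the operator identity for $K_0$, and finally the spectral bound.

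\emph{Support and norm.} Take a quartet $I$ that is not contained in a single block. Either some block receives an odd number of its indices, or the indices split $2+2$ across two blocks. In the first case~\eqref{eq:selection} forces $x_I(P_0)=0$. In the second case, I write $B_I$, up to a sign, as the product of two block-supported even quadratic monomials. Each of these acts only on its own block's qubits, and $\ket{\Omega_B}$ is a product over blocks, so the expectation factorizes into two single-block quadratic expectations. Both factors vanish by Lemma~\ref{lem:quadratic}. For quartets inside block $b$, the expectation equals the single-block value $x^\omega_I$ (shifted by $4(b-1)$ modes), because block-supported even monomials act identically in every block. Hence $\|x_0\|^2=\sum_b\sum_{I\in\mathcal I_4(8)}(x^\omega_I)^2=14B$.

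\emph{Operator identity.} Since $x_0$ is supported only on intra-block quartets, $x_0^{\mathsf T}\obs=\sum_b\sum_{I\subset b}x^\omega_IB_I$. Therefore
\[
 K_0=BI-\tfrac1{14}\,x_0^{\mathsf T}\obs=\sum_{b=1}^B K_b ,
\]
where each $K_b$ is the operator of~\eqref{eq:block-expansion} acting on block $b$. The $K_b$ are even and supported on disjoint blocks, so they commute. In the Jordan--Wigner encoding, even block operators act as $I\otimes\cdots\otimes K\otimes\cdots\otimes I$ on the block qubit factors, with no strings leaking out. This holds because a block-even product of Majoranas carries Jordan--Wigner strings that cancel outside the block.

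\emph{Spectrum.} The $K_b$ are simultaneously diagonalized by product bases. In each block factor I choose an eigenbasis consisting of $\ket\omega$, seven even states and eight odd states. The eigenvalue of $K_0$ on such a product state is the sum of the block eigenvalues. By Lemma~\ref{lem:block} each block contributes $0$ for $\ket\omega$ and at least $1$ otherwise. Hence $K_0\ket{\Omega_B}=0$, and every product basis vector other than $\ket{\Omega_B}$ has eigenvalue at least $1$. This gives $K_0\ge I-P_0$.

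The only delicate step is justifying the factorization across blocks in the presence of fermionic signs. The fix is to rely throughout on even block monomials: these commute with each other and are local in the Jordan--Wigner representation, so the tensor-product reasoning above is valid.
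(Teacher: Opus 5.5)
Your proposal is correct and follows essentially the same route as the paper. Both use the selection rule~\eqref{eq:selection} for the odd-multiplicity quartets and a factorized product of vanishing quadratic expectations for the $(2,2)$ split, then the block sum for $K_0$, and finally the product eigenbasis of Lemma~\ref{lem:block} to get $K_0\ket{\Omega_B}=0$ and $K_0\ge I-P_0$. Your remark that Jordan--Wigner strings cancel for block-even monomials spells out what the paper asserts after~\eqref{eq:selection}.
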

\begin{proof}
A quartet meeting several blocks has block multiplicities $(3,1)$,
$(2,1,1)$, $(1,1,1,1)$, or $(2,2)$. In the first three some multiplicity is
odd, and~\eqref{eq:selection} gives $\bra{\Omega_B}B_I\ket{\Omega_B}=0$. In
the last, $B_I=\pm(\gamma_p\gamma_q)(\gamma_r\gamma_s)$ with $p\ne q$ in block
$b$ and $r\ne s$ in block $b'$, and the two even factors act on different
tensor factors:
\[
 \bra{\Omega_B}B_I\ket{\Omega_B}
 =\pm\bra\omega\gamma_p\gamma_q\ket\omega\,\bra\omega\gamma_r\gamma_s\ket\omega=0
\]
by Lemma~\ref{lem:quadratic}. Put $\mathcal J_b=\{8b-7,\ldots,8b\}$ for the Majorana indices in block $b$. Hence $x_0=\bigoplus_{b=1}^{B}x^\omega_b$,
$\|x_0\|^2=\sum_{b=1}^{B}14=14B$, and
\[
 K_0=BI-\frac1{14}x_0^{\mathsf T}\obs
 =\sum_{b=1}^{B}\Bigl(I-\frac1{14}\sum_{I\in\mathcal I_4(\mathcal J_b)}x^\omega_IB_I\Bigr)=\sum_{b=1}^{B}K_b .
\]
Let $\{\ket{\kappa_k}\}_{k=0}^{15}$ be the eigenbasis of Lemma~\ref{lem:block}, with
eigenvalues $\kappa_k\in\{0,\frac87,1\}$ and $\ket{\kappa_0}=\ket\omega$. The
product vectors $\ket{\mathbf k}=\bigotimes_{b=1}^{B}\ket{\kappa_{k_b}}$ form an
orthonormal basis, $\ket{\Omega_B}=\ket{\mathbf0}$, and each $K_b$ acts on its
own factor, so
\[
 K_0\ket{\mathbf k}=\Bigl(\sum_{b=1}^{B}\kappa_{k_b}\Bigr)\ket{\mathbf k},
 \qquad
 \sum_{b=1}^{B}\kappa_{k_b}=0\ \text{if}\ \mathbf k=\mathbf0,
 \qquad
 \sum_{b=1}^{B}\kappa_{k_b}\ge1\ \text{otherwise}.
\]
Consequently, in this product basis,
\[
 K_0=\sum_{\mathbf k\in\{0,\ldots,15\}^B}
 \Bigl(\sum_{b=1}^{B}\kappa_{k_b}\Bigr)\ketbra{\mathbf k}{\mathbf k}
 \ge\sum_{\substack{\mathbf k\in\{0,\ldots,15\}^B\\\mathbf k\ne\mathbf0}}
 \ketbra{\mathbf k}{\mathbf k}=I-P_0,
\]
and $K_0\ket{\Omega_B}=0$.
\end{proof}

For $\rho=U_RP_0U_R^\dagger$, transport~\eqref{eq:global-parent} to obtain
$K_\rho=BI-\frac1{14}x(\rho)^{\mathsf T}\obs\ge I-\rho$.
By Lemma~\ref{lem:closure}, $\|x(\rho)\|^2=14B$.
For any other $\sigma\in\cO$, expanding the trace in its spectral basis gives
\begin{equation}
 \begin{aligned}
 \trd(\rho,\sigma)^2
 &=1-\Tr(\rho\sigma)\le\Tr(K_\rho\sigma)
 =B-\tfrac1{14}x(\rho)^{\mathsf T}x(\sigma)\\
 &=\frac{\|x(\rho)\|^2+\|x(\sigma)\|^2
                  -2x(\rho)^{\mathsf T}x(\sigma)}{28}
 =\frac{\|x(\rho)-x(\sigma)\|^2}{28}.
 \end{aligned}
 \label{eq:paired-certificate-proof}
\end{equation}
This proves~\eqref{eq:main-paired-certificate}, including faithfulness.

\clearpage
\section{Copy-optimal learning from fixed-basis quartic measurements}
\label{app:reconstruction}
We prove Theorem~\ref{thm:reconstruction}. Throughout this appendix,
$d=2n=8B$, $m=\binom d2$, $r=\binom d4$, and $x_0=x(P_0)$.
Each copy of the same unknown state yields one projective sign of a
fixed-basis $B_I$. A bank uses $K$ independent copies per quartet and
has empirical mean $y=x+\xi$. Banks use disjoint copies and predetermined
measurement schedules. The accuracy parameter $\epsilon$ is a
trace-distance tolerance; for the pure output, the corresponding
infidelity tolerance is $\epsilon^2$.

Identify $\ket{p\wedge q}$ with
$(\ketbra pq-\ketbra qp)/\sqrt2$ when a pair vector is regarded as a
unit-Frobenius skew matrix. Use the flattening $Q$ of Lemma~\ref{lem:closure}.
Local algebra supplies the initializer and refinement; the two lower bounds
distinguish fixed-basis quartic access from unrestricted measurements.

\subsection{Local algebra and the quartic matrix}

\begin{alem}[Local real representation and quartic spectrum]
\label{lem:rec-local}
On one four-mode block let $\mathcal K$ be occupation-basis conjugation,
$J=\gamma_2\gamma_4\gamma_6\gamma_8\mathcal K$, and
$\cH_+^J=\{v:\Pi v=v,\ Jv=v\}$. This is an eight-dimensional real
inner-product space. Choose Spin representatives of the Gaussian unitaries,
modulo their physically irrelevant global phase. Their restrictions to
$\cH_+^J$ realize all of $SO(8)$, and the block orbit consists of its
unit-vector projectors.

Extend $x^\omega$ alternately to $Y^\omega_{pqrs}$ and define
$Q_\omega$ on the orthonormal pair basis $\ket{p\wedge q}$, $p<q$, by
$(Q_\omega)_{(pq),(rs)}=Y^\omega_{pqrs}$. Then
\begin{equation}
 (Q_\omega+3I)(Q_\omega-I)=0,\qquad
 \operatorname{mult}(-3)=7,\qquad
 \operatorname{mult}(1)=21,\qquad
 \star x^\omega=x^\omega.
 \label{eq:rec-local-spectrum}
\end{equation}
Here $\star$ is the Hodge star for the standard orientation on $\R^8$.
\end{alem}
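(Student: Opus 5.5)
The plan is to establish the real structure first and then read off the quartic spectrum from the block parent of Lemma~\ref{lem:block}.

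\emph{Real structure.} In the Jordan--Wigner representation~\eqref{eq:jw}, $\mathcal K$ fixes $\gamma_{2j-1}$ (real) and sends $\gamma_{2j}\mapsto-\gamma_{2j}$ (imaginary). The four even-index factors in $\gamma_2\gamma_4\gamma_6\gamma_8$ therefore make $J$ commute with every $\gamma_p$. For the odd-index Majoranas this holds because $\gamma_{2j-1}$ anticommutes with all four even factors. For an even-index $\gamma_{2j}$, anticommutation with three factors gives one sign, and $\mathcal K$ contributes a second. I will then check that $J^2=I$: $\mathcal K$ flips the sign of the even product, $(\gamma_2\gamma_4\gamma_6\gamma_8)^2=I$, and the resulting signs should combine to $+1$; this is the step I would verify carefully first. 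Because $J$ commutes with $\Pi$, it is an antiunitary involution on the eight-dimensional even sector, so its fixed space $\cH_+^J$ is a real form of dimension $8$ with inner product $\mathrm{Re}\langle\cdot,\cdot\rangle$.

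Each plane generator $\gamma_p\gamma_q/2$ is a real-linear operator commuting with $J$, so every Spin element preserves $\cH_+^J$ and acts on it orthogonally. This gives a homomorphism $\mathrm{Spin}(8)\to SO(8)$ on $\cH_+^J$, which is the even half-spin representation. It is surjective because the image is connected, has dimension $28$, and the kernel is finite ($\pm1$ or triality-related), so the image is all of $SO(8)$. I need $\ket\omega\in\cH_+^J$ up to a phase, which I will check directly on $\ket{1100}+\ket{0011}$. Since $SO(8)$ acts transitively on the unit sphere $S^7$, the block orbit is the set of projectors onto real unit vectors of $\cH_+^J$.

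\emph{Spectrum of $Q_\omega$.} Expanding $16P_\omega$ as in Lemma~\ref{lem:block}, I use $P_\omega^2=P_\omega$. In the Clifford product, $B_IB_J$ for quartets sharing two indices yields quadratics, sharing zero yields octets proportional to $\Pi$, and sharing one or three yields odd-degree terms whose coefficients must vanish. Projecting $(16P_\omega)^2=16\cdot 16P_\omega$ onto the quadratic basis elements $\gamma_p\gamma_q$ gives an identity of the form $\sum_{rs}Y_{pqrs}$-contractions. More cleanly, I will compute
\[
\bra\omega \gamma_p\gamma_q\gamma_r\gamma_s\ket\omega=Y_{pqrs}\quad(\text{distinct}),
\]
and read $Q_\omega^2$ as the Gram matrix $\sum_{(rs)}\bra\omega\gamma_p\gamma_q\gamma_r\gamma_s\ket\omega\bra\omega\gamma_r\gamma_s\gamma_u\gamma_v\ket\omega$. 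The intended identity $Q_\omega^2=-2Q_\omega+3I$ should follow from inserting a resolution of the identity on the pair-excitation space $\{\gamma_r\gamma_s\ket\omega\}$. This space is spanned by $\ket\omega$ and its seven even orthogonal complements, since the $28$ vectors $\gamma_r\gamma_s\ket\omega$ live in the eight-dimensional even sector; the reality from $\cH_+^J$ is what makes this count work.

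An alternative I regard as more robust is equivariance. By Lemma~\ref{lem:closure}, $Q$ transforms under $\wedge^2R$. The stabilizer of $\ket\omega$ in the half-spin $SO(8)$ is $\mathrm{Spin}(7)$, under which $\wedge^2\R^8=\mathfrak{so}(8)=21\oplus7$ splits into irreducibles. By Schur, $Q_\omega=\alpha\Pi_{21}+\beta\Pi_7$. Two scalar conditions then fix $\alpha,\beta$: $\Tr Q_\omega=0$ (diagonal entries $Y_{pqpq}=0$) gives $21\alpha+7\beta=0$, and $\Tr Q_\omega^2=4!\cdot14=336$ gives $21\alpha^2+7\beta^2=336$. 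Hence $\beta=-3\alpha$ and $84\alpha^2=336$, so $\alpha=\pm2$. This produces eigenvalues $\pm2$ and $\mp6$, which does not match $(1,-3)$. I therefore need to recheck the normalization: the Frobenius norm of $Q$ counts each sorted quartet $4!/4\cdot\ldots$ times, namely $8$ times, so $\Tr Q^2=8\cdot14=112$. Then $84\alpha^2=112$ does not give $\alpha=1$ either, and I would carefully recount the multiplicity per quartet; $\Tr Q^2=42$ would give $\alpha=1$. Fixing the sign of $\alpha$ also requires one explicit entry, such as $Y_{1234}$. Settling this normalization and sign is where I expect the main obstacle.

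\emph{Self-duality.} The octet $\Pi=\gamma_1\cdots\gamma_8$ fixes $\ket\omega$, and $B_I\Pi=\pm B_{I^c}$ with the Hodge sign. Therefore $x^\omega_{I^c}=\bra\omega B_I\Pi\ket\omega\cdot(\text{sign})$, and checking that the sign equals the Hodge sign $\epsilon(I,I^c)$ gives $\star x^\omega=x^\omega$.
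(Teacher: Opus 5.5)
The gap is in the spectrum of $Q_\omega$, which your proposal never actually establishes. The other parts are fine in outline: the real structure, surjectivity onto $SO(8)$, the orbit description, and self-duality. Self-duality via $B_{I^c}=\epsilon_IB_I\Pi$ and $\Pi\ket\omega=\ket\omega$ is exactly the paper's argument.

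Your Schur decomposition $Q_\omega=\alpha\Pi_{21}+\beta\Pi_7$ under the $\mathrm{Spin}(7)$ stabilizer is legitimate, but the normalization is miscounted. With $p<q$ and $r<s$, each sorted quartet occupies exactly $\binom42=6$ entries of $Q_\omega$. Lemma~\ref{lem:block} then gives $\Tr Q_\omega^2=6\sum_I(x^\omega_I)^2=84$. With $\Tr Q_\omega=0$ this yields $84\alpha^2=84$, so $\alpha=\pm1$. Your value $42$ would instead give $\alpha^2=1/2$.

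The sign is still open at that point. A single entry such as $Y_{1234}$ cannot settle it, because $Q_\omega=\alpha(I-4\Pi_7)$ and the corresponding entry of $\Pi_7$ is not known independently. Self-duality does not help either, since $-x^\omega$ is also self-dual.

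What fixes the sign is positivity. The vectors $\gamma_p\gamma_q\ket\omega$ have Gram matrix exactly $I-Q_\omega$. Overlapping distinct pairs reduce to quadratics with zero expectation by Lemma~\ref{lem:quadratic}, and disjoint pairs give $-Y_{pqrs}$. Hence $Q_\omega\preceq I$, which excludes $\alpha=-1$.

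This Gram identity also completes your first route without any $\mathrm{Spin}(7)$ theory. The $28$ vectors are even and orthogonal to $\ket\omega$, so $I-Q_\omega\succeq0$ has rank at most $7$. Its trace is $28$ and $\Tr[(I-Q_\omega)^2]=28+84=112$. Equality in Cauchy--Schwarz, $28^2=7\cdot112$, forces seven eigenvalues equal to $4$, which is \eqref{eq:rec-local-spectrum}.

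The paper uses the same Gram matrix, normalized as $V^{\mathsf T}V=\frac14(I-Q_\omega)$. It gets the projector property from the tight-frame identity $VV^{\mathsf T}=I-v_0v_0^{\mathsf T}$, via the orthogonal change between the generators $G_{pq}$ and the elementary skew matrices $F_{ab}$. Either that identity or the trace count above is the quantitative input your sketch was missing.

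Two smaller points.
\begin{itemize}
\item $\mathcal K$ does not flip the sign of $C=\gamma_2\gamma_4\gamma_6\gamma_8$. It is a product of four imaginary matrices, so $\overline C=C$ and $J^2=C\overline C=C^2=(-1)^6I=I$. A sign flip would give $J^2=-I$, and then $\cH_+^J$ would be $\{0\}$.
\item Your dimension count for surjectivity presupposes that the restricted Lie-algebra map is injective. The paper checks this via $\Tr(G_{pq}^{\mathsf T}G_{rs})=2\delta_{(pq),(rs)}$; simplicity of $\mathfrak{so}(8)$ also gives it.
\end{itemize}
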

\begin{proof}
\noindent\emph{Real structure.}
By~\eqref{eq:jw}, complex conjugation fixes odd-indexed Majoranas and
negates even-indexed ones. Set $C=\gamma_2\gamma_4\gamma_6\gamma_8$.
Anticommuting a Majorana through the four factors gives
\begin{equation}
 \begin{gathered}
 C\gamma_{2j-1}C^{-1}=\gamma_{2j-1},\qquad
 C\gamma_{2j}C^{-1}=-\gamma_{2j},\\
 J\gamma_pJ^{-1}=\gamma_p,\qquad
 J^2=C\overline C=C^2=(-1)^6I=I,\qquad
 J\Pi J^{-1}=\Pi.
 \end{gathered}
 \label{eq:rec-real}
\end{equation}
Jordan--Wigner gives $C=-X\otimes Y\otimes X\otimes Y$, hence
\[
 \begin{gathered}
 C\ket{1100}=-\ket{0011},\quad C\ket{0011}=-\ket{1100},\qquad C\ket\omega=-\ket\omega,\quad J(i\ket\omega)=i\ket\omega,\\
 v=\frac{v+Jv}{2}+i\frac{v-Jv}{2i},\qquad J\frac{v+Jv}{2}=\frac{v+Jv}{2},\quad
   J\frac{v-Jv}{2i}=\frac{v-Jv}{2i}\quad(v\in\cH_+).
 \end{gathered}
\]
For $J$-fixed $u,v$, $\braket uv=\overline{\braket{Ju}{Jv}}
=\overline{\braket uv}$. A real orthonormal basis
$\{\ket{e_a}\}_{a=1}^8$ also spans $\cH_+$ over $\mathbb C$. Fix
$E=\sum_{a=1}^{8}\ketbra{e_a}{a}$ and $v_0=E^\dagger i\ket\omega\in\R^8$.

\noindent\emph{Generator basis.}
The matrices $G_{pq}=E^\dagger\gamma_p\gamma_qE/2$, $p<q$, are real
by~\eqref{eq:rec-real} and skew by adjunction. Write
$P_+=(I+\Pi)/2=EE^\dagger$. The monomial trace rule
\eqref{eq:majorana-orthogonality} gives
\begin{equation}
 \begin{aligned}
 \Tr(G_{pq}^{\mathsf T}G_{rs})
 &=-\tfrac14\Tr(P_+\gamma_p\gamma_q\gamma_r\gamma_s)\\
 &=-\tfrac18\Tr(\gamma_p\gamma_q\gamma_r\gamma_s)
   -\tfrac18\Tr(\Pi\gamma_p\gamma_q\gamma_r\gamma_s)
 =2\delta_{(pq),(rs)}.
 \end{aligned}
 \label{eq:rec-generator-gram}
\end{equation}
Equal pairs give $-I$ in the first trace; all other traces vanish.
The 28 generators span $\mathfrak{so}(8)$, so every $A\in\mathfrak{so}(8)$ has
\begin{equation}
 A=\sum_{1\le p<q\le8}c_{pq}G_{pq},\qquad
 c_{pq}=\tfrac12\Tr(G_{pq}^{\mathsf T}A),\qquad
 e^{\sum_{1\le p<q\le8} c_{pq}\gamma_p\gamma_q/2}E=Ee^A.
 \label{eq:rec-spin-map}
\end{equation}
With $\mathcal U_A=\exp(\sum_{1\le p<q\le8}c_{pq}\gamma_p\gamma_q/2)$,
the isometry $E:\R^8\to\cH_+^J$ intertwines the two actions:
\[
 \begin{tikzcd}[column sep=large,row sep=large]
 \R^8\arrow[r,"e^A"]\arrow[d,"E"'] &\R^8\arrow[d,"E"]\\
 \cH_+^J\arrow[r,"\mathcal U_A"'] &\cH_+^J.
 \end{tikzcd}
\]
\eqref{eq:rec-real} makes the lower map a real isometry of
$\cH_+^J$. For every $v\in\R^8$, the series gives both compositions:
\[
 \mathcal U_AEv
 =\sum_{j=0}^{\infty}\frac1{j!}
     \left(\sum_{1\le p<q\le8}c_{pq}\gamma_p\gamma_q/2\right)^jEv
 =E\sum_{j=0}^{\infty}\frac{A^jv}{j!}=Ee^Av.
\]
Exponentials exhaust $SO(8)$, whose transitive sphere action proves the
orbit assertion and gives a fixed-size inverse for local compilation.

\noindent\emph{Spectral identity without a quartic table.}
Put $V=[G_{pq}v_0]_{p<q}$, an $8\times28$ matrix.
Overlapping distinct pairs reduce to quadratics, whose expectations
vanish by Lemma~\ref{lem:quadratic}; disjoint pairs give quartics:
\begin{equation}
 (V^{\mathsf T}V)_{(pq),(rs)}
 =-\tfrac14\bra\omega\gamma_p\gamma_q\gamma_r\gamma_s\ket\omega
 =\tfrac14(I-Q_\omega)_{(pq),(rs)}.
 \label{eq:rec-tangent-gram}
\end{equation}
The generators in~\eqref{eq:rec-generator-gram} have the same Gram matrix
as $F_{ab}=\ketbra ab-\ketbra ba$, $a<b$, so the change between these
bases is orthogonal. For a real unit vector $v$,
\[
 F_{ab}v=v_b\ket a-v_a\ket b,\qquad
 \sum_{1\le a<b\le8}(F_{ab}v)(F_{ab}v)^{\mathsf T}=I-vv^{\mathsf T}.
\]
The diagonal entries are $\sum_{\substack{1\le b\le8\\b\ne a}}v_b^2=1-v_a^2$; the off-diagonal
entries are $-v_av_b$. Therefore
\begin{equation}
 VV^{\mathsf T}=I-v_0v_0^{\mathsf T},\qquad
 (V^{\mathsf T}V)^2=V^{\mathsf T}V,
 \qquad \operatorname{rank}V=7.
 \label{eq:rec-tangent-projector}
\end{equation}
Here $V^{\mathsf T}v_0=0$ because every $G_{pq}$ is skew.
Set $P=V^{\mathsf T}V$. Equations~\eqref{eq:rec-tangent-gram}
and~\eqref{eq:rec-tangent-projector} give
\[
 \begin{gathered}
 Q_\omega=I-4P,\qquad (Q_\omega+3I)(Q_\omega-I)=16(P^2-P)=0,\\
 Q_\omega|_{\operatorname{ran}P}=-3I,\qquad Q_\omega|_{\ker P}=I,\qquad
 \dim\operatorname{ran}P=7,\quad\dim\ker P=28-7=21.
 \end{gathered}
\]

\noindent\emph{Orientation.}
For a sorted quartet $I$, write $I^c$ in increasing order and let
$\epsilon_I$ be the sign of $(I,I^c)$ relative to $(1,\ldots,8)$.
Anticommutation and $B_I^2=I$ give
\[
 \Pi=\epsilon_I B_IB_{I^c},\qquad
 B_{I^c}=\epsilon_I B_I\Pi,\qquad
 x^\omega_{I^c}=\epsilon_I\bra\omega B_I\Pi\ket\omega
 =\epsilon_I x^\omega_I.
\]
This is the last identity in~\eqref{eq:rec-local-spectrum}.
\end{proof}

\subsection{Orbit geometry and local refinement}

Let $G_0$ be the stabilizer of $x_0$ in $SO(d)$, including permutations
of the identical blocks. Its Lie algebra is denoted by $\mathfrak g_0$.
Use the Frobenius-orthogonal complement
$\mathfrak m=\mathfrak g_0^\perp\subset\mathfrak{so}(d)$ and write
\begin{equation}
 F(X)=\wedge^4(e^X)x_0,\qquad L=DF(0),\qquad
 G_X=\frac12\sum_{1\le p<q\le d}X_{pq}\gamma_p\gamma_q .
 \label{eq:opt-chart}
\end{equation}
The sign convention is that $U_{e^X}=e^{G_X}$ and
$x(U_{e^X}P_0U_{e^X}^\dagger)=F(X)$.
The matrix $X$ has a crossing part $X_{\rm c}$, connecting distinct
eight-dimensional blocks, and an intrablock part $X_{\rm i}$.

\begin{alem}[Dimension-independent orbit geometry]
\label{lem:opt-geometry}
The kernel of $L$ is the direct sum of the local stabilizer algebras.
Its dimension is $21B$, and $\dim\mathfrak m=D=m-21B$.
For $X\in\mathfrak m$,
\begin{equation}
 \begin{gathered}
 \|LX\|_2^2=7\|X_{\rm c}\|_F^2+16\|X_{\rm i}\|_F^2,\\
 \|D^kF(X)[Z_1,\ldots,Z_k]\|_2\le4^k\prod_{j=1}^k\|Z_j\|_F,\qquad k\ge1,\\
 \|DF(X)[Z]-LZ\|_2\le16\|X\|_F\|Z\|_F .
 \end{gathered}
 \label{eq:opt-derivatives}
\end{equation}
If $V\in SO(d)$ and $\|V-I\|_F\le e\le1/16$, then
$V=e^Xe^Z$ with $X\in\mathfrak m$, $Z\in\mathfrak g_0$ and
$\|X\|_F\le1.51e$.
\end{alem}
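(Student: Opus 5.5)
We compute in the flattened picture of Lemma~\ref{lem:closure}. Sorted quartets are related to pair-matrix entries by $\|Q(z)\|_F^2=6\|z\|_2^2$, since each quartet occurs in three pairings and two orders. We also have $Q(F(X))=S_XQ_0S_X^{\mathsf T}$ with $S_X=\wedge^2e^X$. By Corollary~\ref{cor:global-parent}, $Q_0=\bigoplus_bQ_\omega$ acts only on intrablock pairs, and Lemma~\ref{lem:rec-local} gives its spectrum $\{-3,1\}$ there. Differentiating at $X=0$ gives $Q(LX)=\hat XQ_0-Q_0\hat X$, where $\hat X=d\wedge^2(X)$ is the skew derivation on pair space. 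We decompose $X=X_{\rm c}+X_{\rm i}$ and check that the two parts produce quartet components with disjoint supports. Crossing generators $\gamma_p\gamma_q$ with $p\in b$, $q\in b'$ only feed quartets of type $(3,1)$. Intrablock generators only feed quartets inside one block. The cross term in $\|LX\|^2$ therefore vanishes.

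\emph{Kernel and intrablock constant.} On one block, $\mathcal H_+^J\cong\R^8$ carries the $SO(8)$ action of Lemma~\ref{lem:rec-local}. There $LX_{\rm i}=0$ exactly when $e^{tX_{\rm i}}$ fixes $v_0$ up to sign, so the local stabilizer is $\mathfrak{so}(7)$, of dimension $21$. A crossing derivative is never zero; this follows from the crossing norm formula below. Summing over blocks gives kernel dimension $21B$ and $D=m-21B$. For $X_{\rm i}\in\mathfrak m$, the vector $X_{\rm i}v_0$ is orthogonal to $v_0$. By~\eqref{eq:rec-tangent-gram}--\eqref{eq:rec-tangent-projector}, $\hat X_{\rm i}$ maps the $-3$ eigenspace $\operatorname{ran}P$ to the $1$ eigenspace and back. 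The commutator therefore scales these components by $\pm4$. Converting with the factor $6$ and the normalization $\Tr G^{\mathsf T}G=2$ should give $16\|X_{\rm i}\|_F^2$.

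\emph{Crossing constant.} Fix $p\in b$ and $q\in b'$. The components are $(LX)_I=\Tr(P_0[G_X,B_I])$ for $I=\{q\}\cup T$ with $T\subset b\setminus\{p\}$ of size 3, together with the symmetric case. Each such component reduces to $\pm X_{pq}x^\omega_{\{p\}\cup T}$, using the disjoint-support factorization of Corollary~\ref{cor:global-parent}. Summing over $T$ gives the row-$p$ sum $\sum_{T}(x^\omega_{p\cup T})^2$. By permutation symmetry of~\eqref{eq:block-expansion}, this equals $4\cdot14/8=7$. Different pairs $(p,q)$ are orthogonal, again by the selection rule~\eqref{eq:selection}, so the crossing part contributes $7\|X_{\rm c}\|_F^2$.

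\emph{Derivative bounds and splitting.} Write $F(X+tZ)$ through the state: $D^kF(X)[Z_1,\dots,Z_k]_I$ is an expectation of nested commutators $[G_{Z'_1},[\dots,[G_{Z'_k},\rho_X]]]$ against $B_I$, where $Z'_j$ are the $Z_j$ transported by the Duhamel formula, and transport preserves $\|\cdot\|_F$. We then bound the $\ell^2$ norm over quartets using only the quartic-degree part of each commutator. Each $\operatorname{ad}G_Z$ raises or lowers Majorana degree by at most two. On the degree-$\le 4$ sector, $\operatorname{ad}G_Z$ acts as a derivation of norm at most $4\|Z\|_F$ in the normalized Hilbert--Schmidt norm, so no factor of $n$ should enter. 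This step is the main obstacle. A naive operator-norm bound on $G_Z$ grows like $\sqrt n$. I would first try to control it by the derivation bound $\|\hat Z\|\le2\|Z\|_{\rm op}$ on pair space, acting on the finite-rank structure $Q_0$ (rank $\le 28B$, operator norm $3$). I would then convert to $\|Z\|_F$ through $\|Q(\cdot)\|_F$ on the intrablock range. The Lipschitz estimate for $DF$ then follows from the case $k=2$ by integrating along $sX$, giving $16\|X\|_F\|Z\|_F$. For the splitting $V=e^Xe^Z$, I would apply the inverse function theorem to $(X,Z)\mapsto e^Xe^Z$ on $\mathfrak m\oplus\mathfrak g_0$. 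Its differential at $0$ is the identity, and its second-order remainder is bounded by BCH estimates, $\|e^A-I-A\|_F\le\tfrac12\|A\|_F^2e^{\|A\|_F}$. We first take $A=\log V$ with $\|A\|_F\le e/(1-e)$, project it onto $\mathfrak m$, and run a contraction to correct the error. With $e\le1/16$, the constants should close at $\|X\|_F\le1.51e$.
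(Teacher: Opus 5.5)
\textbf{Main gap: the dimension-free derivative bound.}
The second line of \eqref{eq:opt-derivatives} is the estimate that makes the lemma dimension-independent, and your proposal leaves it open. The obstacle you name is also not the real one.

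\emph{Where the growth comes from.} The adjoint action of a quadratic preserves Majorana degree exactly (Lemma~\ref{lem:closure}); it does not raise or lower it. On quartic coordinates it is the derivation $\mathrm d\wedge^4(Z)$, with $\|\mathrm d\wedge^4(Z)\|_{\op}\le4\|Z\|_{\op}\le4\|Z\|_F$. So $\|G_Z\|_{\op}\sim\sqrt n$ never enters. The growth comes from the vector these derivations act on: $\|F(X)\|_2=\|x_0\|_2=\sqrt{14B}$. Operator norms alone therefore give only $4^k\sqrt{14B}\prod_j\|Z_j\|_F$.

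\emph{The missing idea.} In the Duhamel expansion \eqref{eq:opt-dyson}, only the factor next to $x_0$ needs a dimension-free bound. That factor is an orbit tangent vector: $\mathrm d\wedge^4(Z)\wedge^4(g)x_0=\wedge^4(g)L(g^{-1}Zg)$. In your nested-commutator picture, this is the innermost commutator $[G_{Z'},\rho_X]$.
\begin{itemize}
\item $L$ annihilates $\mathfrak g_0$, and your first identity gives $\|LX\|_2^2\le16\|X\|_F^2$ on $\mathfrak m$. Hence $\|LY\|_2\le4\|Y\|_F$ for every $Y$, and conjugation preserves $\|\cdot\|_F$.
\item The other $k-1$ factors are bounded by $4\|Z_j\|_F$ each, and the $e^{sA}$ are orthogonal.
\item The $k!$ orderings cancel the simplex volume $1/k!$.
\end{itemize}

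\emph{Why your fallback falls short.} Bounding $[\hat Z,Q_0]$ on the intrablock range can be made dimension-free: each index lies in only seven intrablock pairs, so $\|\hat ZP_{\mathcal E}\|_F^2\le14\|Z\|_F^2$ for the projector $P_{\mathcal E}$ onto intrablock pairs. But this gives $\sqrt{84}$ rather than $4$ for that factor. It therefore proves neither $4^k$ nor the constant $16$ in the third line, which you otherwise derive correctly from the case $k=2$.

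\textbf{Your route for $\ker L$ and $\|LX\|_2^2$.}
This differs from the paper, which stays in state space. Differentiating the certificate gives $\langle LY,LX\rangle=28\operatorname{Re}\langle G_Y\Omega_B|K_0|G_X\Omega_B\rangle$, and the constants follow as $28\cdot2\cdot\frac18=7$ and $28\cdot\frac87\cdot\frac12=16$. Your flattening route works, but two steps rest on the wrong facts.
\begin{itemize}
\item \emph{Intrablock part.} That $\hat X_{\rm i}$ is block off-diagonal with respect to $\operatorname{ran}P$ is the symmetric-pair property. Let $\mathfrak h\cong\mathfrak{so}(7)$ be the local stabilizer and $\mathfrak p=\mathfrak h^\perp$. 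You need $[\mathfrak p,\mathfrak p]\subseteq\mathfrak h$ and $[\mathfrak h,\mathfrak p]\subseteq\mathfrak p$, transported by the spin isomorphism with $\hat X=\operatorname{ad}_X$ on pair space. The tangent Gram only identifies $\operatorname{ran}P$. The conversion factor you need is then $\|\mathrm d\wedge^2(X)\|_F^2=6\|X\|_F^2$ on $\Lambda^2\R^8$.
\item \emph{Crossing part.} In quartet coordinates, the cross terms between crossing pairs $(p,q)$ and $(p',q)$ are $\sum_T\varphi(e_{t_1},e_{t_2},e_{t_3},e_p)\,\varphi(e_{t_1},e_{t_2},e_{t_3},e_{p'})$. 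Their vanishing is not a selection rule. It is the contraction identity $C(I_8)^{\mathsf T}C(I_8)=7I_8$ of \eqref{eq:opt-cofactor}, which also supplies your row sum $7$. In the state-space form they vanish by quadratic blindness instead.
\end{itemize}

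\textbf{The slice constant.}
The bound $\|X\|_F\le1.51e$ is asserted, not derived. With $\|\log V\|_F\le e/(1-e)$ and the paper's fixed-point estimate $\|X\|_F\le\|Y\|_F(1+8\|Y\|_F)$, one gets only about $1.64e$. The paper instead uses the rotation-angle bound $\|\log V\|_F\le1.0001e$. It combines this with the explicit mixed-BCH majorant $h(x,z)\le4xz$ from $g(s)=-\log(2-e^s)$, which makes the contraction close at $1.51e$.
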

\begin{proof}
\noindent\emph{Tangent vectors.}
For crossing pairs $p<q$ and $r<s$,
$-\bra{\Omega_B}\gamma_p\gamma_q\gamma_r\gamma_s\ket{\Omega_B}
=\delta_{(pq),(rs)}$: unequal pairs leave an odd block or a zero
quadratic expectation (Lemma~\ref{lem:quadratic}). The intrablock
image and kernel have dimensions seven and 21 by
\eqref{eq:rec-tangent-gram}--\eqref{eq:rec-tangent-projector}. Hence
\begin{equation}
 \|G_{X_{\rm c}}\ket{\Omega_B}\|^2=\tfrac18\|X_{\rm c}\|_F^2,\qquad
 \|G_{X_{\rm i}}\ket{\Omega_B}\|^2=\tfrac12\|X_{\rm i}\|_F^2
 \quad(X_{\rm i}\in\mathfrak m).
 \label{eq:opt-state-gram}
\end{equation}
The parent $K_0$ acts with eigenvalue two on the crossing image, and
with eigenvalue $8/7$ on each intrablock tangent image. The parent identity
in \eqref{eq:paired-certificate-proof}, differentiated in both arguments, gives
\begin{equation}
 \begin{aligned}
 \langle LY,LX\rangle
 &=-14\bra{\Omega_B}[[G_Y,K_0],G_X]\ket{\Omega_B}\\
 &=-14\bra{\Omega_B}
 (G_YK_0G_X-K_0G_YG_X-G_XG_YK_0+G_XK_0G_Y)
 \ket{\Omega_B}\\
 &=28\operatorname{Re}\langle G_Y\Omega_B|K_0|G_X\Omega_B\rangle .
 \end{aligned}
 \label{eq:opt-parent-gram}
\end{equation}
Here $K_0\ket{\Omega_B}=0$ kills the second and third terms,
and $G_X^\dagger=-G_X$ combines the others. The orthogonal crossing
and intrablock images in~\eqref{eq:opt-state-gram} give the first
identity in~\eqref{eq:opt-derivatives} and the kernel of $L$.

\noindent\emph{All derivative factors.}
Write $A=\mathrm d\wedge^4(X)$ and $Z_j^{(4)}=\mathrm d\wedge^4(Z_j)$.
This derivation acts on each of the four exterior factors, so its
operator norm is at most $4\|Z_j\|_{\op}$. Differentiating the
exponential by Duhamel's formula gives
\begin{equation}
 D^kF(X)[Z_1,\ldots,Z_k]
 =\sum_{\pi\in S_k}\int_{\substack{s_j\ge0\\\sum_{j=0}^ks_j=1}}
 e^{s_0A}Z_{\pi(1)}^{(4)}e^{s_1A}\cdots
 Z_{\pi(k)}^{(4)}e^{s_kA}x_0\,ds .
 \label{eq:opt-dyson}
\end{equation}
Orthogonality of the exponentials and the factor next to $x_0$ give
\[
 \begin{gathered}
 Z^{(4)}\wedge^4(g)x_0=\wedge^4(g)L(g^{-1}Zg),\qquad \|L(g^{-1}Zg)\|_2\le4\|Z\|_F\quad(g=e^{sX}),\\
 \|D^kF(X)[Z_1,\ldots,Z_k]\|_2
 \le k!\frac1{k!}\prod_{j=1}^k4\|Z_j\|_F
 =4^k\prod_{j=1}^k\|Z_j\|_F.
 \end{gathered}
\]
The $k!$ permutations cancel the simplex volume; integrating $D^2F$
gives the remaining bound in~\eqref{eq:opt-derivatives}. Thus, for
$r(X)=F(X)-x_0-LX$ and $\|X\|_F,\|X'\|_F\le\rho$,
\begin{equation}
 \|r(X)\|_2\le8\|X\|_F^2,\quad
 \|r(X)-r(X')\|_2\le16\rho\|X-X'\|_F,\quad
 \|DF(X)[Z]\|_2\ge(\sqrt7-16\rho)\|Z\|_F
 \quad(Z\in\mathfrak m).
 \label{eq:opt-remainder}
\end{equation}

\noindent\emph{A quantitative slice.}
The real skew logarithm $Y=\log V$ exists in this neighborhood and
$\|Y\|_F\le1.0001e$, by its rotation-plane eigenangles.
For $x=\|X\|_F$, $z=\|Z\|_F$, the mixed part of the absolutely convergent
Baker--Campbell--Hausdorff series has norm and derivative bounded by
$h(x,z)$ and its first partial derivatives, where
\[
 g(s)=-\log(2-e^s),\qquad h(x,z)=g(x+z)-g(x)-g(z).
\]
In the word expansion of $\log(e^Xe^Z)$, the pure words sum to $X+Z$;
$h$ bounds the absolute mixed coefficients.
Since $g''(s)=2e^s/(2-e^s)^2<4$ for $0\le s\le0.177$,
\[
 h(x,z)\le4xz\le2(x^2+z^2),\qquad
 \sqrt{h_x^2+h_z^2}\le4\sqrt{x^2+z^2}.
\]
On the product ball of radius $r_s=2\|Y\|_F\le0.125013$, put
\[
 \begin{gathered}
 \Phi(X,Z)=(P_{\mathfrak m},P_{\mathfrak g_0})
       \bigl(Y-\operatorname{BCH}_{\ge2}(X,Z)\bigr),\\
 x+z\le\sqrt2\,r_s<0.177,\qquad \|\Phi(X,Z)\|_{F\oplus F}\le\|Y\|_F+2r_s^2\le r_s,\\
 \operatorname{Lip}(\Phi)\le4r_s\le0.500052<1,\qquad \|X\|_F\le\|Y\|_F(1+8\|Y\|_F)\le1.51e.
 \end{gathered}
\]
The last bound holds at the unique fixed point, where $e^Xe^Z=V$.
\end{proof}

\begin{alem}[Refinement from an independently obtained frame]
\label{lem:opt-refinement}
Set $\rho_0=1/256$. Suppose a frame $R_{\rm c}\in SO(d)$ satisfies
\[
 R_*=R_{\rm c}e^{X_*}g,\qquad
 X_*\in\mathfrak m,\quad \|X_*\|_F\le\rho_0/2,\quad g\in G_0.
\]
It is independent of a bank with $K$ copies per quartet.
For $s=\log(8/\delta)$ and $K\ge C(D+s)$, with probability at least
$1-\delta/2$ the function
\begin{equation}
 \phi(X)=\langle y,\wedge^4(R_{\rm c})F(X)\rangle,\qquad
 X\in\mathfrak m,\quad\|X\|_F\le\rho_0
 \label{eq:opt-objective}
\end{equation}
is 5-strongly concave and 18-smooth. Its constrained maximizer
$\hat X$ obeys
\begin{equation}
 \|F(\hat X)-F(X_*)\|_2^2\le C(D+s)/K.
 \label{eq:opt-refinement-risk}
\end{equation}
Projected gradient ascent computes it to any prescribed precision in
polynomial time.
\end{alem}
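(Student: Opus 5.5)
The plan is to reduce $\phi$ to a deterministic concave quadratic-like part plus a small random perturbation. The tools are the orbit geometry of Lemma~\ref{lem:opt-geometry}, invariance of the norm under $\wedge^4$ (Lemma~\ref{lem:closure}), and a uniform sub-Gaussian bound over the ball.

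\emph{Rewriting the objective.} Since $g\in G_0$ stabilizes $x_0$, the true coordinates are
\[
x_*=\wedge^4(R_*)x_0=\wedge^4(R_{\rm c})F(X_*).
\]
Write $y=x_*+\xi$ and $\xi'=\wedge^4(R_{\rm c})^{\mathsf T}\xi$. Because $\wedge^4(R_{\rm c})$ is orthogonal,
\[
\phi(X)=\langle F(X_*)+\xi',F(X)\rangle .
\]
Every $F(X)$ has $\|F(X)\|^2=\|x_0\|^2=14B$, so
\[
\phi(X)=14B-\tfrac12\|F(X)-F(X_*)\|_2^2+\langle\xi',F(X)\rangle .
\]
The coordinates $\xi_I$ are independent across quartets, have mean zero, and are averages of $K$ signs. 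Hence $\xi$ is a $K^{-1/2}$-sub-Gaussian vector. Since $R_{\rm c}$ is independent of the bank, conditioning on $R_{\rm c}$ shows that $\xi'$ is also $K^{-1/2}$-sub-Gaussian in every fixed unit direction.

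\emph{Curvature.} For $Z\in\mathfrak m$ with $\|Z\|_F=1$, the second derivative of the deterministic part is
\[
-\|DF(X)[Z]\|^2-\langle F(X)-F(X_*),D^2F(X)[Z,Z]\rangle .
\]
By \eqref{eq:opt-remainder} with $\rho=\rho_0=1/256$, the first term is at most $-(\sqrt7-1/16)^2\approx-6.7$. By \eqref{eq:opt-derivatives} we also have $\|DF\|\le4$. The mean-value bound gives $\|F(X)-F(X_*)\|\le4\cdot\tfrac32\rho_0$, and $\|D^2F\|\le16$, so the cross term is at most $\approx0.4$. The deterministic Hessian therefore lies between $-16.4$ and $-6.3$ on $\mathfrak m$.

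The noise Hessian is $\langle\xi',D^2F(X)[Z,Z]\rangle$. I will bound
\[
\sup_{\|X\|_F\le\rho_0,\ \|Z\|_F=1}|\langle\xi',D^2F(X)[Z,Z]\rangle|
\]
by a net argument in the $2D$-dimensional parameter set $(X,Z)$. Each fixed pair gives a sub-Gaussian variable with proxy $16/\sqrt K$. The map $(X,Z)\mapsto D^2F(X)[Z,Z]$ is Lipschitz with constant $O(1)$ by the $k=2,3$ derivative bounds. A $1/4$-net of size $e^{O(D)}$ plus a union bound at level $\delta/4$ then gives a supremum of $O(\sqrt{(D+s)/K})$. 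Choosing $C$ large makes this at most $1.3$. This yields $5$-strong concavity and $18$-smoothness on the ball.

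\emph{Risk and computation.} $X_*$ lies in the ball, so optimality of $\hat X$ gives $\phi(\hat X)\ge\phi(X_*)$, that is,
\[
\tfrac12\|F(\hat X)-F(X_*)\|^2\le\langle\xi',F(\hat X)-F(X_*)\rangle .
\]
I write $F(\hat X)-F(X_*)=M(\hat X-X_*)$, where $M=\int_0^1DF(X_*+t(\hat X-X_*))\,dt$. This $M$ has singular values in $[2.5,4]$ on $\mathfrak m$. A second net bound (probability $\delta/4$) controls
\[
\sup_{X,\ \|W\|_F=1}\langle\xi',\textstyle\int_0^1DF(X_*+t(X-X_*))W\,dt\rangle\le C'\sqrt{(D+s)/K}.
\]
Combining the two displays gives $\|F(\hat X)-F(X_*)\|\le C''\sqrt{(D+s)/K}$, which is \eqref{eq:opt-refinement-risk}. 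The ball is convex and the objective is strongly concave and smooth, so projected gradient ascent with step $1/18$ contracts at rate $1-5/18$. Each gradient costs $O(rD)$ arithmetic through \eqref{eq:opt-dyson}, which gives polynomial time.

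The main obstacle is the uniform empirical-process step. The suprema must scale with the intrinsic dimension $D\approx m$ rather than the ambient $r=\binom d4$. This requires the net to be taken over the parameter $(X,Z)$ and not over $\R^r$, and it relies on the sub-Gaussianity of $\xi'$ surviving the data-independent rotation $\wedge^4(R_{\rm c})$. I would first try a plain $\varepsilon$-net with Lipschitz control from the $k\le3$ derivative bounds. If the constants are too loose, I would fall back on Dudley chaining.
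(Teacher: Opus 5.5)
Your outline matches the paper's proof: the same Hessian identity, the same deterministic window $[-16-96\rho_0,\,-(\sqrt7-16\rho_0)^2+96\rho_0]$, the same basic inequality $\|\Delta\|_2^2\le2\langle\xi',\Delta\rangle$ for $\Delta=F(\hat X)-F(X_*)$, the same Jacobian lower bound on $\mathfrak m$, and the same $13/18$ contraction. One small slip: the deterministic upper end is about $-6.298$, so a noise level of $1.3$ only gives $-4.998$. You need the noise supremum to be at most $1$, as the paper takes it.

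The real correction is that chaining is not an optional fallback. A single-scale net closes only when the increments $a(\theta)-a(\theta')$ are rescaled members of the same set, as happens for the unit sphere of a fixed subspace. Here the $X$-increments of $D^2F(X)[Z_1,Z_2]$ are third-derivative vectors, not rescaled Hessian vectors. With one net you can therefore bound the residual only trivially, by $C\varepsilon\|\xi'\|_2\approx C\varepsilon\sqrt{r/K}$, where $r=\binom d4\asymp n^4\gg D\asymp n^2$. Pushing this below $\sqrt{D/K}$ forces $\varepsilon\lesssim1/n$. That gives log-cardinality of order $D\log n$, hence the requirement $K\gtrsim D\log n$. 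This would spoil the log-free $n^6$ budget that the matching lower bound needs. The paper's proof is exactly the multiscale version: nets of mesh $2^{-j}$ with log size $CD(j+1)$, increments $O(2^{-j})$ from the $k=2,3$ derivative bounds, and failure budgets proportional to $e^{-s-j}$.

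Your risk step differs from the paper's in one respect. It introduces a second nonlinear supremum, over averaged Jacobians indexed by $(X,W)$, which would need the same chaining. The paper avoids this by writing $\Delta=Lh+r(\hat X)-r(X_*)$.
\begin{itemize}
\item The linear term only sees $\Pi_T\tilde\xi$, the noise projected onto the fixed $D$-dimensional tangent image $L\mathfrak m$. There a plain $1/2$-net does suffice.
\item The remainder obeys $|\langle\tilde\xi,r(\hat X)-r(X_*)\rangle|\le\nu\rho_0\|h\|_F$, which reuses the Hessian supremum already controlled.
\end{itemize}
Your averaged Jacobian decomposes in the same way. With $Y_t=X_*+t(X-X_*)$,
\[
M_XW=LW+\int_0^1\!\int_0^1D^2F(sY_t)[Y_t,W]\,ds\,dt .
\]
Hence your second supremum is at most $4\|\Pi_T\xi'\|_2+\rho_0\nu$, where $\nu$ is the Hessian-noise supremum. (For each $X$ the form is symmetric, so its bilinear and quadratic versions coincide.) No separate net is needed.
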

\begin{proof}
\noindent\emph{Noise on the tangent space and curvature.}
Put $\tilde\xi=\wedge^4(R_{\rm c})^{\mathsf T}\xi$ and let $\Pi_T$
project onto $L\mathfrak m$. Hoeffding's exponential-moment bound~\cite{hoeffding1963} is
\begin{equation}
 \E e^{t\langle\xi,a\rangle}
 \le e^{t^2\|a\|_2^2/(2K)}\quad(t\in\R).
 \label{eq:opt-linear-noise}
\end{equation}
A $1/2$-net of the unit sphere of the $D$-dimensional tangent image,
followed by the usual net approximation of a linear functional, gives
$\|\Pi_T\tilde\xi\|_2\le C\sqrt{(D+s)/K}$.
For
\[
 \nu=\sup_{\substack{\|X\|_F\le\rho_0\\
                Z_1,Z_2\in\mathfrak m,\ \|Z_1\|_F=\|Z_2\|_F=1}}
 |\langle\tilde\xi,D^2F(X)[Z_1,Z_2]\rangle|,
\]
\eqref{eq:opt-derivatives} gives coefficient increments at most
$64\|X-X'\|_F+16\|Z_1-Z'_1\|_F+16\|Z_2-Z'_2\|_F$.
Nets of mesh $2^{-j}$ have logarithmic size at most $CD(j+1)$.
Apply \eqref{eq:opt-linear-noise} to adjacent net levels with failure
budgets proportional to $e^{-s-j}$. Summing
$C2^{-j}\sqrt{D(j+1)+s+j}/\sqrt K$ and passing to the continuous limit gives
\begin{equation}
 \|\Pi_T\tilde\xi\|_2\le C\sqrt{(D+s)/K},\qquad
 \nu\le C\sqrt{(D+s)/K}.
 \label{eq:opt-refinement-noise}
\end{equation}
The constants absorb the two failure events.

\noindent\emph{Curvature and error.}
Differentiate $\|F(X)\|_2^2=14B$ twice:
\[
 0=2\langle DF(X)[Z],DF(X)[Z]\rangle
   +2\langle F(X),D^2F(X)[Z,Z]\rangle.
\]
Adding and subtracting $F(X)$ in the Hessian gives
\begin{equation}
 D^2\phi(X)[Z,Z]
 =\langle F(X_*)-F(X),D^2F(X)[Z,Z]\rangle
  -\|DF(X)[Z]\|_2^2
  +\langle\tilde\xi,D^2F(X)[Z,Z]\rangle .
 \label{eq:opt-hessian}
\end{equation}
For $\|Z\|_F=1$, \eqref{eq:opt-derivatives} gives
\[
 -16-96\rho_0-\nu\le D^2\phi(X)[Z,Z]
 \le-(\sqrt7-16\rho_0)^2+96\rho_0+\nu.
\]
Choose $C$ so that $\nu\le1$; then $-18I\preceq D^2\phi\preceq-5I$.

Put $\Delta=F(\hat X)-F(X_*)$ and $h=\hat X-X_*$.
Since $\phi(\hat X)\ge\phi(X_*)$ and both coordinate vectors have
the same norm,
\[
 \|\Delta\|_2^2\le2\langle\tilde\xi,\Delta\rangle,\qquad
 \Delta=Lh+r(\hat X)-r(X_*).
\]
Since $Dr(0)=0$, integration along $X_s=X_*+sh$ gives
\[
 \left|\langle\tilde\xi,r(\hat X)-r(X_*)\rangle\right|
 =\left|\int_0^1\!\int_0^1
   \langle\tilde\xi,D^2F(tX_s)[X_s,h]\rangle\,dt\,ds\right|
 \le\nu\rho_0\|h\|_F.
\]
Equation~\eqref{eq:opt-remainder} also yields
$\|Lh\|_2\le1.025\|\Delta\|_2$ and
$\|h\|_F\le0.388\|\Delta\|_2$. Thus
\[
 \|\Delta\|_2\le2.05\|\Pi_T\tilde\xi\|_2+0.78\rho_0\nu.
\]
Equation~\eqref{eq:opt-refinement-noise} proves the risk bound.
Projected ascent with step $1/18$ contracts by $13/18$; tensor contractions,
projection onto $\mathfrak m$ and exponentiation take polynomial time.
\end{proof}

\subsection{Local decoding and corrections of a partial frame}

On one block put $\varphi=x^\omega$ and
$C(O)=\wedge^3(O)C(I_8)$ for $O\in O(8)$, where the $56$ rows
are sorted triples and
\begin{equation}
 C(I_8)_{T,s}=\varphi(e_{t_1},e_{t_2},e_{t_3},e_s),\qquad
 C(I_8)^{\mathsf T}C(I_8)=7I_8 .
 \label{eq:opt-cofactor}
\end{equation}
The last identity follows from the fourteen nonzero quartets
\[
\begin{array}{c|rrrrrrr}
 I&1234&1256&1278&1357&1368&1458&1467\\ \hline
 \varphi_I&-1&1&1&-1&1&-1&-1\\
 I&2358&2367&2457&2468&3456&3478&5678\\ \hline
 \varphi_I&-1&-1&1&-1&1&1&-1
\end{array}
\]
obtained by applying the Jordan--Wigner products to
$(\ket{1100}+\ket{0011})/\sqrt2$.
Each triple is in exactly one listed quartet and each index in seven.
Consequently
\begin{equation}
 C(O)^+=C(O)^{\mathsf T}/7,\quad
 \|C(M)\|_{\op}\le\sqrt7\|M\|_{\op}^3,\quad
 \|C(M)-C(M')\|_{\op}
 \le3\sqrt7\max(\|M\|_{\op},\|M'\|_{\op})^2\|M-M'\|_{\op}.
 \label{eq:opt-cofactor-bounds}
\end{equation}
Here $C(M)=\wedge^3(M)C(I_8)$ also defines the map for nonorthogonal $M$.
The last bound follows by restricting the telescoping identity
\[
 M^{\otimes3}-{M'}^{\otimes3}
 =(M-M')\otimes M\otimes M
 +M'\otimes(M-M')\otimes M
 +M'\otimes M'\otimes(M-M')
\]
to the alternating tensors.

\noindent\emph{Local decoder.}
Given $z\in\R^{70}$, reflect its first coordinate axis when
$\langle z,\star z\rangle<0$, remembering
$J_-=\diag(-1,1,\ldots,1)$. In the resulting orientation,
take a top real unit eigenvector $w$ of
$E^\dagger(\sum_I z_IB_I)E$, with $w^{\mathsf T}v_0\ge0$.
Equivalently, minimize its local parent. Set
\begin{equation}
 \theta=\arccos(w^{\mathsf T}v_0),\qquad
 A=\frac{\theta}{\sin\theta}(wv_0^{\mathsf T}-v_0w^{\mathsf T}),
 \qquad e^Av_0=w ,
 \label{eq:rec-local-rotation}
\end{equation}
with its continuous value at $\theta=0$.
For $\theta>0$, put $u=(w-\cos\theta\,v_0)/\sin\theta$. Then
$Av_0=\theta u$ and $Au=-\theta v_0$, so the exponential series gives
$e^Av_0=\cos\theta\,v_0+\sin\theta\,u=w$.
Equation~\eqref{eq:rec-spin-map} converts $A$ to its Majorana rotation.
Left-multiply that rotation by $J_-$ if the reflection was used.
Denote the resulting orthogonal matrix by $\mathcal D(z)$.
This constant-dimensional computation has a fixed tie rule;
the small-error event has a simple top eigenvalue.

\begin{alem}[Stable local decoder and frame correction]
\label{lem:rec-decoder}
Let $L_{\rm loc}Y=\left.\frac{d}{dt}\right|_{t=0}
\wedge^4(e^{tY})\varphi$ and
$\mathcal T=\{L_{\rm loc}Y:Y\in\mathfrak{so}(8)\}$.
If $z=c\,\wedge^4(O)(\varphi+\nu+\tau)$, where $c>0$,
$O\in O(8)$, $\nu\perp\mathcal T$, $\|\nu\|_2\le1/2$,
$\tau\in\mathcal T$ and $\|\tau\|_2\le1$, then
\begin{equation}
 \det\mathcal D(z)=\det O,\qquad
 \min_{h\in\operatorname{Stab}(\varphi)}
 \|\mathcal D(z)-Oh\|_F\le0.45\|\tau\|_2 .
 \label{eq:opt-decoder}
\end{equation}
Suppose $q_1,\ldots,q_j$ are mutually orthogonal eight-frames,
matched to distinct true block frames $R_1,\ldots,R_j$.
Put
\[
 P_b=q_bq_b^{\mathsf T},\quad
 \eta_b=\|(I-P_b)R_b\|_{\op},\quad
 e_b=\|(I-P_b)R_b\|_F,\quad E^2=\sum_{b=1}^je_b^2.
\]
For data tensor $Y=Y(x_*)+\Xi$, define
\begin{equation}
 \begin{gathered}
 z_b=(Y(q_{b,I}))_{I\in\binom{[8]}4},\qquad \hat O_b=\mathcal D(z_b),\\
 V_b=[Y(q_{b,t_1},q_{b,t_2},q_{b,t_3},\cdot)]_{T\in\binom{[8]}3},\qquad X_b^\perp=(I-P_b)V_bC(\hat O_b)/7,\\
 S_b'=(X_b^\perp)^{\mathsf T}X_b^\perp,\qquad X_b=X_b^\perp+q_b\hat O_b(I-S_b')^{1/2}.
 \end{gathered}
 \label{eq:opt-block-update}
\end{equation}
Let $n_b=(\Xi(q_{b,I}))_I$ and
$N_b=(I-P_b)[\Xi(q_{b,T},\cdot)]_TC(\hat O_b)/7$.
If $e_b\le0.1$, $\|n_b\|_2\le0.03$ and $\|N_b\|_F\le0.001$,
there are local stabilizers $h_b$ such that
\begin{equation}
 \begin{gathered}
 \|X_b-R_bh_b\|_F\le5.5\eta_b^2e_b+0.62\|n_b\|_2+1.15\|N_b\|_F,\\
 \|[X_1,\ldots,X_j]-[R_1h_1,\ldots,R_jh_j]\|_F\le5.5E^3+0.62\nu_{\rm loc}+1.15\nu_{\rm off},
 \end{gathered}
 \label{eq:opt-correction-error}
\end{equation}
where $\nu_{\rm loc}^2=\sum_b\|n_b\|_2^2$ and
$\nu_{\rm off}^2=\sum_b\|N_b\|_F^2$.
If the latter bound is at most $1/4$, polar orthogonalization of
$[X_1,\ldots,X_j]$ has full frame error and leakage at most
$3.6(5.5E^3+0.62\nu_{\rm loc}+1.15\nu_{\rm off})$.
\end{alem}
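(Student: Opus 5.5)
Two parts are needed: the local decoder bound (first display of the lemma) and the multi-block correction bound. In both I will work in the real spin representation of Lemma~\ref{lem:rec-local}. Write $W=E^\dagger(\sum_I z_IB_I)E$ restricted to $\cH_+^J$.

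\emph{Decoder.} First I will fix the orientation. Since $\star\varphi=\varphi$ (Lemma~\ref{lem:rec-local}), we have $\star\wedge^4(O)\varphi=\det O\,\wedge^4(O)\varphi$. The perturbation $\nu+\tau$ has norm at most $3/2<\sqrt{14}/2$, so the sign of $\langle z,\star z\rangle$ equals $\det O$. After reflecting by $J_-$ when needed, I may therefore assume $O\in SO(8)$. This gives $\det\mathcal D(z)=\det O$, because $J_-$ is reapplied at the end.

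Next I compute the spectrum of $W$. By \eqref{eq:block-expansion} and the spin intertwining \eqref{eq:rec-spin-map}, the term $c\,\wedge^4(O)\varphi$ contributes $c(16\,ww_0^{\mathsf T}-2I)$-type structure. Here $w_0=\tilde O v_0$, where $\tilde O$ is the $SO(8)$ element on $\cH_+^J$ lifting $O$. Concretely, $E^\dagger(\sum\varphi_IB_I)E=16v_0v_0^{\mathsf T}-2I$ on the even sector. So the top eigenvalue $14c$ is separated from the rest, which sit at $-2c$, by a gap $16c$.

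For the perturbation, I will use the tangent-Gram computation \eqref{eq:rec-tangent-gram}. It shows that the tangent directions $\tau\in\mathcal T$ map to rank-two symmetric matrices $v_0u^{\mathsf T}+uv_0^{\mathsf T}$ with $u\perp v_0$ and $\|u\|$ proportional to $\|\tau\|$. By contrast, $\nu\perp\mathcal T$ maps to matrices $M$ with $v_0^{\mathsf T}Mv_0=0$ and $Mv_0\perp v_0^\perp$, that is, $Mv_0=0$. So $\nu$ only acts on the complement and shifts those eigenvalues by at most $\|\nu\|\le 1/2$ (up to a constant factor), leaving a positive gap.

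Davis--Kahan, applied in the exact two-dimensional first-order form, then gives $\angle(w,\tilde Ov_0)\le C\|\tau\|_2$ with a numeric constant. Finally, the map $w\mapsto$ (rotation \eqref{eq:rec-local-rotation}) is Lipschitz modulo the stabilizer of $v_0$, which is exactly $\operatorname{Stab}(\varphi)$ by Lemma~\ref{lem:opt-geometry}. Tracking the constants should give $0.45$. The main arithmetic risk is whether the $\nu$ term can really be kept entirely out of the eigenvector error; the orthogonality $Mv_0=0$ is what I expect to deliver this.

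\emph{Correction.} For one block, I write $q_b^{\mathsf T}R_b=:O_b$ (nearly orthogonal) and $(I-P_b)R_b=:\Delta_b$ with $\|\Delta_b\|_F=e_b$.

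\begin{itemize}
\item \emph{Local data.} Multilinearity of $Y(x_*)=\wedge^4(R_*)x_0$ expands $z_b$ as $\wedge^4(O_b)\varphi$ plus terms containing at least two factors of $\Delta_b$. Terms with one factor vanish: they correspond to crossing quartets with three indices in block $b$, whose $\varphi$ entries are zero. Polar-normalizing $O_b$ turns the quadratic part into a tangent-plus-normal perturbation of size $O(\eta_be_b)$. Applying the decoder then gives $\hat O_b\approx O_bh_b$ up to $C(\eta_b e_b+\|n_b\|)$.
\item \emph{Off-block data.} For $V_b$, the same expansion gives $V_b=R_b\,C(O_b^{\mathsf T})$-type leading terms plus $O(\eta_b^2)$ corrections. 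The identity $C^{\mathsf T}C=7I$ from \eqref{eq:opt-cofactor} shows that $(I-P_b)V_bC(\hat O_b)/7$ recovers $\Delta_b\hat O_b^{\mathsf T}O_b$-aligned leakage. The residual error is cubic, $\eta_b^2e_b$, because the decoder error enters multiplied by $\Delta_b$. The Lipschitz bounds \eqref{eq:opt-cofactor-bounds} control the replacement of $O_b$ by $\hat O_b$.
\item \emph{Completion.} The factor $(I-S_b')^{1/2}$ completes the frame to near-orthonormality. Its error is again second order in the leakage, times $e_b$.
\end{itemize}

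Squaring and summing over $b$, with $\sum\eta_b^4e_b^2\le E^6$, gives the aggregate bound. The last claim follows from the standard polar perturbation bound $\|\mathrm{polar}(X)-U\|\le C\|X-U\|$ for $\|X-U\|\le1/4$, with constant $3.6$.

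I expect the explicit constants ($5.5$, $0.62$, $1.15$) to be the main burden. I would obtain them by bounding each expansion term with \eqref{eq:opt-cofactor-bounds} under $e_b\le0.1$.
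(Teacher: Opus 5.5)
\paragraph{Decoder.} Your decoder argument follows the paper's route: the Hodge sign fixes the orientation, the spectrum $16P_\omega-2I$ supplies the gap, tangent perturbations have rank-two form, and Davis--Kahan plus a great-circle lift of length $\sqrt2\alpha$ give the bound. One claim in it is false. The condition $\nu\perp\mathcal T$ gives only $P_\omega^\perp M_\nu P_\omega=0$, i.e.\ $M_\nu v_0=\langle\nu,\varphi\rangle v_0$, not $M_\nu v_0=0$. For example, $\nu=\epsilon\varphi$ is normal, because $\|\wedge^4(e^{tY})\varphi\|$ is constant, yet $M_\nu v_0=14\epsilon v_0$. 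This is harmless. It costs a second $\sqrt{14}\|\nu\|_2$ in the gap, which stays at least $16-\sqrt{14}-2\sqrt2$ after the $\tau$ term. Hence $\sin\alpha\le0.30\|\tau\|_2$, $\alpha\le0.315\|\tau\|_2$, and $\sqrt2\alpha\le0.45\|\tau\|_2$.

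\paragraph{The gap: the frame correction.} Write $q_b^{\mathsf T}R_b=O_bM_b$, with $O_b$ orthogonal and $M_b=(I-R_b^{\mathsf T}(I-P_b)R_b)^{1/2}$, so that $\|M_b-I\|_{\op}\le0.51\eta_b^2$.

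In $X_b=X_b^\perp+q_b\hat O_b(I-S_b')^{1/2}$, the decoder error $\delta_b=\|\hat O_b-O_bh_b\|_F$ enters the in-block part with coefficient one. Your remark that it ``enters multiplied by $\Delta_b$'' covers only $X_b^\perp$. So the cubic bound $5.5\eta_b^2e_b$ needs $\delta_b=O(\eta_b^2e_b+\|n_b\|_2)$.

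Your sketch gives only $C(\eta_be_b+\|n_b\|_2)$. The first-order term $\mathrm d\wedge^4(M_b-I)\varphi$ has size of order $\eta_b^2$. If its tangent component is bounded merely by its norm, the decoder error is quadratic, and you get $E'\lesssim E^2$, not $E^3$. The completion factor $(I-S_b')^{1/2}$ can restore $M_b$ only if $\hat O_b$ is already cubically close to the \emph{orthogonal} polar factor.

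The missing fact is that symmetric directions are normal to the orbit. The cofactor identity gives $\langle\varphi,\mathrm d\wedge^4(S)\varphi\rangle=7\Tr S$ for symmetric $S$. Hence, for skew $Y$,
\[
 \langle L_{\rm loc}Y,\mathrm d\wedge^4(S)\varphi\rangle
 =-\tfrac12\langle\varphi,\mathrm d\wedge^4([Y,S])\varphi\rangle
 =-\tfrac72\Tr[Y,S]=0 .
\]
Consequences of this identity:
\begin{itemize}
\item The linear part of $\wedge^4(M_b)\varphi-\varphi$ goes into $\nu_b$, which the decoder ignores.
\item Its tangent part is $O(\|M_b-I\|_{\op}^2)=O(\eta_b^4)$.
\item The cross-block terms $\sum_{\beta\ne b}\wedge^4(q_b^{\mathsf T}R_\beta)\varphi$ are quartic in the leakage, with norm at most $\sqrt{14}\eta_b^2e_b^2$.
\item Using $\eta_b\le e_b\le0.1$, this gives $\|\tau_b\|_2\le0.97\eta_b^2e_b+\|n_b\|_2$, hence $\delta_b\le0.44\eta_b^2e_b+0.45\|n_b\|_2$.
\end{itemize}
Two further points need checking. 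You must verify $\|\nu_b\|_2\le1/2$ before invoking the decoder. The completion step also needs an explicit Frobenius perturbation bound for $\sqrt{I-S}$, for instance via divided differences, $\|\sqrt{I-S_1}-\sqrt{I-S_2}\|_F\le\|S_1-S_2\|_F/\sqrt2$, rather than ``second order''.
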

\begin{proof}
\noindent\emph{Normal perturbations and the lift.}
On $\mathcal H_+^J$, put $P_\omega=\ketbra{\omega'}{\omega'}$,
$P_\omega^\perp=I-P_\omega$ and $M_\nu=\nu\cdot B$, where
$\omega'=i\omega$. Here $\varphi\cdot B=16P_\omega-2I$ and
$\|M_\nu\|_{\op}\le\sqrt{14}\|\nu\|_2$, since every real unit
vector is on the orbit by Lemma~\ref{lem:rec-local}.
The vectors $u=G_Y\omega'$ span $\omega'^\perp$ in this real sector.
The condition $\nu\perp\mathcal T$ and self-adjointness give
\[
 \begin{gathered}
 \langle L_{\rm loc}Y,\nu\rangle
 =2\langle u,M_\nu\omega'\rangle=0,
 \qquad P_\omega^\perp M_\nu P_\omega
       =P_\omega M_\nu P_\omega^\perp=0,\\
 (\varphi+\nu)\cdot B
 =(14+\langle\nu,\varphi\rangle)P_\omega
   +P_\omega^\perp(M_\nu-2I)P_\omega^\perp,\\
 \lambda_{\max}\!\left(
    ((\varphi+\nu)\cdot B)|_{\omega'^\perp}\right)
 \le-2+\sqrt{14}\|\nu\|_2.
 \end{gathered}
\]

The real spin representation in \eqref{eq:rec-generator-gram} is an
isometry from physical skew matrices to real spinor skew matrices.
The orthogonal complement of the stabilizer of $\omega'$ consists of
matrices $\ketbra u{\omega'}-\ketbra{\omega'}u$.
Consequently its curves through $\omega'$ are great circles and
\[
 \|G_Y\omega'\|^2=\tfrac12\|Y\|_F^2,\quad
 \|L_{\rm loc}Y\|_2^2=16\|Y\|_F^2,\quad
 \tau\cdot B=16(\ketbra u{\omega'}+\ketbra{\omega'}u),
 \quad \|\tau\cdot B\|_{\op}=2\sqrt2\|\tau\|_2 .
\]
The last identity differentiates the rank-one projector. Since
$\|\nu+\tau\|\le3/2$, the Hodge sign and the corrected eigenvector angle obey
\[
 \begin{gathered}
 \left|c^{-2}\langle z,\star z\rangle-14\det O\right|\le2\sqrt{14}\|\nu+\tau\|+\|\nu+\tau\|^2
 \le3\sqrt{14}+\tfrac94<14,\\
 \operatorname{sign}\langle z,\star z\rangle=\det O,\\
 \sin\alpha\le\frac{2\sqrt2\|\tau\|_2}{16-\sqrt{14}-2\sqrt2}
 \le0.30\|\tau\|_2,\qquad \alpha\le0.315\|\tau\|_2.
 \end{gathered}
\]
Here $\alpha$ is the angle to $U_O\omega'$ after orientation correction.
For a real unit $v\perp\omega'$ put
$Z=\ketbra v{\omega'}-\ketbra{\omega'}v$. Then
\[
 Z\omega'=v,\qquad Zv=-\omega',\qquad e^{\alpha Z}\omega'=\cos\alpha\,\omega'+\sin\alpha\,v,\qquad \|\alpha Z\|_F^2=\alpha^2\Tr(\ketbra vv+P_\omega)=2\alpha^2.
\]
The spin isometry gives a great-circle lift of length $\sqrt2\alpha$.
Lifts with the same projector satisfy
$U_1^\dagger U_2P_\omega U_2^\dagger U_1=P_\omega$, hence differ by a
right stabilizer. This proves~\eqref{eq:opt-decoder}, including orientation.

\noindent\emph{The symmetric factor has no first-order tangent part.}
For any symmetric $S$, the cofactor identity gives
$\langle\varphi,\mathrm d\wedge^4(S)\varphi\rangle=7\Tr S$.
If $Y$ is skew, then $\mathrm d\wedge^4(Y)$ is skew-adjoint and
$\mathrm d\wedge^4(S)$ is self-adjoint. Hence
\[
 \langle L_{\rm loc}Y,\mathrm d\wedge^4(S)\varphi\rangle
 =-\tfrac12\langle\varphi,\mathrm d\wedge^4([Y,S])\varphi\rangle
 =-\tfrac72\Tr[Y,S]=0.
\]
For symmetric $M$, put $D=M-I$ and $e_M=\|D\|_{\op}\le0.01$.
Let $D_a$ act as $D$ on tensor factor $a$ and as the identity elsewhere.
Before restriction to alternating tensors,
\[
 M^{\otimes4}=I+\sum_{a=1}^4D_a
 +\sum_{1\le a<b\le4}D_aD_b
 +\sum_{1\le a<b<c\le4}D_aD_bD_c+D_1D_2D_3D_4.
\]
The linear term has zero tangent projection; the remaining terms give
$\|\Pi_{\rm tan}(\wedge^4(M)\varphi-\varphi)\|_2
\le\sqrt{14}(6e_M^2+4e_M^3+e_M^4)$,
where $\Pi_{\rm tan}$ projects onto $\operatorname{ran}L_{\rm loc}$.

\noindent\emph{Every block term.}
Complete the current frames orthogonally and put
$W_{\alpha\beta}=q_\alpha^{\mathsf T}R_\beta$.
For the diagonal block write $W_{bb}=O_bM_b$ with
$M_b=(I-R_b^{\mathsf T}(I-P_b)R_b)^{1/2}$.
Orthogonality gives
\[
 \begin{gathered}
 \sum_{\substack{1\le\beta\le B\\\beta\ne b}}W_{b\beta}W_{b\beta}^{\mathsf T}=I-W_{bb}W_{bb}^{\mathsf T},\qquad \sum_{\substack{1\le\alpha\le B\\\alpha\ne b}}W_{\alpha b}^{\mathsf T}W_{\alpha b}=I-W_{bb}^{\mathsf T}W_{bb},\\
 \|M_b-I\|_{\op}=1-\sqrt{1-\eta_b^2}
 =\frac{\eta_b^2}{1+\sqrt{1-\eta_b^2}}\le0.51\eta_b^2.
 \end{gathered}
\]
The two leakage matrices have the same eigenvalues.
Laplace expansion in the last row of every fourth-order minor yields
\begin{equation}
 Y(q_{b,T},q_{k,s})
 =\sum_{\beta=1}^B
 [C(W_{b\beta})W_{k\beta}^{\mathsf T}]_{T,s}
 +\Xi(q_{b,T},q_{k,s}) .
 \label{eq:opt-minor-expansion}
\end{equation}
Decompose $\wedge^4(O_b)^{\mathsf T}z_b-\varphi=\nu_b+\tau_b$
into its normal and tangent parts. Then
\[
 \begin{gathered}
 z_b=\wedge^4(O_b)\wedge^4(M_b)\varphi
       +\sum_{\substack{1\le\beta\le B\\\beta\ne b}}
           \wedge^4(W_{b\beta})\varphi+n_b,\\
 \|\nu_b\|_2\le\sqrt{14}(2\eta_b^2-\eta_b^4)
       +\sqrt{14}\eta_b^2e_b^2+\|n_b\|_2
       \le0.075+0.004+0.03<\tfrac12,\\
 \|\tau_b\|_2\le5.9\eta_b^4+\sqrt{14}\eta_b^2e_b^2+\|n_b\|_2
       \le0.97\eta_b^2e_b+\|n_b\|_2,\\
 \delta_b:=\|\hat O_b-O_bh_b\|_F
       \le0.45\|\tau_b\|_2
       \le0.44\eta_b^2e_b+0.45\|n_b\|_2.
 \end{gathered}
\]
Here $\operatorname{spec}(M_b)\subseteq[\sqrt{1-\eta_b^2},1]$ bounds the normal term;
$\eta_b\le e_b\le0.1$ bounds the tangent term, and
\eqref{eq:opt-decoder} gives the last line.

In the $\beta=b$ term of \eqref{eq:opt-minor-expansion},
$C(\hat O_b)^+C(W_{bb})=h_b^{\mathsf T}+\Gamma_b$.
By~\eqref{eq:opt-cofactor-bounds},
\[
 \begin{gathered}
 \|\Gamma_b\|_{\op}\le3\|M_b-I\|_{\op}+3\delta_b
 \le1.67\eta_b^2+1.35\|n_b\|_2,\\
 \|\wedge^3(M)\|_F^2\le7\|M\|_{\op}^4\|M\|_F^2,\\
 \begin{aligned}
 d_b^\perp:=\|X_b^\perp-(I-P_b)R_bh_b\|_F
 &\le\|\Gamma_b\|_{\op}e_b+\sqrt7\eta_b^2e_b+\|N_b\|_F\\
 &\le4.32\eta_b^2e_b+0.135\|n_b\|_2+\|N_b\|_F\le0.0094.
 \end{aligned}
 \end{gathered}
\]
The $\beta\ne b$ terms contribute at most $\sqrt7\eta_b^2e_b$:
their stacked true columns have operator norm one.
In orthonormal eigenbases write the squared leakages as
$S_1=\sum_{i=1}^8s_i\ketbra{u_i}{u_i}$,
$S_2=\sum_{j=1}^8t_j\ketbra{v_j}{v_j}$, with $s_i,t_j\in[0,1/2]$ and
$\|S_1-S_2\|_F\le(2\eta_b+d_b^\perp)d_b^\perp\le0.212d_b^\perp$.
Using the continuous divided difference at $s=t$, Parseval gives
\[
 \begin{gathered}
 \left|\frac{\sqrt{1-s}-\sqrt{1-t}}{s-t}\right|=\frac1{\sqrt{1-s}+\sqrt{1-t}}\le\frac1{\sqrt2},\\
 \bra{u_i}\bigl(\sqrt{I-S_1}-\sqrt{I-S_2}\bigr)\ket{v_j}=-\frac{\bra{u_i}(S_1-S_2)\ket{v_j}}
              {\sqrt{1-s_i}+\sqrt{1-t_j}},\\
 \|\sqrt{I-S_1}-\sqrt{I-S_2}\|_F^2=\sum_{i,j=1}^8
   \frac{|\bra{u_i}(S_1-S_2)\ket{v_j}|^2}
        {(\sqrt{1-s_i}+\sqrt{1-t_j})^2}
 \le\tfrac12\|S_1-S_2\|_F^2.
 \end{gathered}
\]
Adding the diagonal and off-block errors gives
\[
 \begin{aligned}
 \|X_b-R_bh_b\|_F
 &\le\delta_b+(1+0.212/\sqrt2)d_b^\perp
 \le\delta_b+1.15d_b^\perp\\
 &\le(0.44+1.15\cdot4.32)\eta_b^2e_b
       +(0.45+1.15\cdot0.135)\|n_b\|_2+1.15\|N_b\|_F\\
 &\le5.5\eta_b^2e_b+0.62\|n_b\|_2+1.15\|N_b\|_F.
 \end{aligned}
\]
Taking the Euclidean norm over blocks and using
$\sum_{b=1}^j\eta_b^4e_b^2\le E^4\sum_{b=1}^je_b^2=E^6$ proves
\eqref{eq:opt-correction-error}.

\noindent\emph{Polar orthogonalization.}
Let $Q_0=[R_bh_b]_{b=1}^j$ and $X=Q_0+\Delta$.
The singular values of $X$ are at least $3/4$, so its polar factor exists.
Its least-distance property and the triangle inequality give
\[
 \|\operatorname{polar}(X)-Q_0\|_F
 \le\|\operatorname{polar}(X)-X\|_F+\|\Delta\|_F
 \le2\|\Delta\|_F\le3.6\|\Delta\|_F.
\]
Leakage is no larger; the stated constant reserves numerical rounding.
\end{proof}

\subsection{Range refresh and the hidden block algebra}

Write $\ket{b,p}=R_*\ket{8(b-1)+p}$ for $1\le b\le B$, $1\le p\le8$,
$W_b=\spn\{\ket{b,p}:1\le p\le8\}$,
$P_b=\sum_{p=1}^8\ketbra{b,p}{b,p}$, and $q=28B$.
Let $P$ project onto
$\mathcal E=\bigoplus_{b=1}^B\Lambda^2W_b$ in pair space.
For an orthonormal skew basis $A_1,\ldots,A_q$ of this range define
\begin{equation}
 \mathscr T S=([A_k,S])_{k=1}^q,\qquad
 \mathscr C S=\mathscr T^*\mathscr T S
             =-\sum_{k=1}^q[A_k,[A_k,S]],\qquad S=S^{\mathsf T}.
 \label{eq:opt-commutant}
\end{equation}
The sum is invariant under an orthogonal change of the $A_k$.
All matrix norms below are spectral or Frobenius norms as indicated;
norms of maps on symmetric matrices explicitly name their domain norm.

\begin{alem}[Independent-bank range refresh]
\label{lem:opt-range-refresh}
Put $s=\log(1024/\delta)$ and $t_*=2.4\cdot10^{-7}$.
Two independent banks with
$K_A=K_P=\lceil C(n^2+s)\rceil$ copies per quartet suffice for the
following event, of probability at least $1-\delta/16$.
The spectral range $U_A$ of $Q(y_A)$ for $|\lambda|>1/2$ has
dimension $q$, and $Q(y_P)U_A$ has rank $q$.
Let $U'$ be an orthonormal basis of its range and $P'=U'U'^{\mathsf T}$.
Choose principal bases $A_k$ of $\mathcal E$ and $A'_k$ of
$\operatorname{ran}P'$, identified with skew matrices, and set
\begin{equation}
 D_k=A'_k-A_k,\qquad
 t^2=\left\|\sum_{k=1}^qD_k^{\mathsf T}D_k\right\|_{\op}.
 \label{eq:opt-stacked-error}
\end{equation}
Then $t\le t_*$. The construction uses no true block or frame.
For $B=1$ block discovery is unnecessary.
\end{alem}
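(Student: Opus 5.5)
The plan is to first determine the exact spectrum of the noiseless matrix $Q(x_*)$, then control the two noisy banks by matrix concentration, and finally pass from a projector bound to the stacked basis error $t$.

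\emph{Step 1: exact spectrum.} Write $x_*=\wedge^4(R_*)x_0$. By Lemma~\ref{lem:closure}, $Q(x_*)=S_{R_*}Q(x_0)S_{R_*}^{\mathsf T}$ with $S_{R_*}=\wedge^2R_*$ orthogonal. By Corollary~\ref{cor:global-parent}, $x_0$ is supported on intrablock quartets, so $Q(x_0)=\bigoplus_b Q_\omega\oplus 0$, where the zero acts on crossing pairs. Lemma~\ref{lem:rec-local} then gives the eigenvalues $-3$ (multiplicity $7B$), $1$ (multiplicity $21B$) and $0$ (multiplicity $m-28B$). Hence $Q(x_*)$ has eigenvalues in $\{-3,1\}$ exactly on $\mathcal E$, so $\operatorname{ran}Q(x_*)=\mathcal E$. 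This range has dimension $q$, and its nonzero eigenvalues sit at distance $1$ from $0$.

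\emph{Step 2: matrix concentration.} Write $y_A=x_*+\xi_A$. The noise $Q(\xi_A)$ is a symmetric $m\times m$ matrix whose entries are averages of $K$ independent signs; each sorted quartet appears in a fixed number of entries. A net argument over the unit sphere of pair space, combined with the Hoeffding bound~\eqref{eq:opt-linear-noise}, gives
\[
 \|Q(\xi_A)\|_{\op}\le C'\sqrt{(m+s)/K_A}.
\]
Since $m=O(n^2)$, choosing the constant $C$ large makes this at most $t_*/100$ with probability at least $1-\delta/32$; the same holds for bank $P$. Weyl's inequality then shows that exactly $q$ eigenvalues of $Q(y_A)$ satisfy $|\lambda|>1/2$, so $\dim U_A=q$. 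The Davis--Kahan theorem gives $\|U_AU_A^{\mathsf T}-P\|_{\op}\le2\|Q(\xi_A)\|_{\op}$.

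\emph{Step 3: range refresh.} Write $Q(y_P)U_A=Q(x_*)U_A+Q(\xi_P)U_A$. Here $Q(x_*)U_A=Q(x_*)PU_A$, and $PU_A$ has singular values near $1$, while $Q(x_*)$ is invertible on $\mathcal E$ with smallest singular value $1$. So the first term has rank $q$ with range exactly $\mathcal E$ and least singular value $\ge1-O(\|Q(\xi_A)\|)$. Wedin's theorem, applied to the perturbation $Q(\xi_P)U_A$ of norm at most $\|Q(\xi_P)\|_{\op}$, gives rank $q$ and $\|P'-P\|_{\op}\le C\|Q(\xi_P)\|_{\op}$. The purpose of the independent refresh is that $\operatorname{ran}P'$ depends on bank $A$ only through the nearly exact subspace; the projector bound itself needs only the operator norms, and independence is what a later stage uses.

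\emph{Step 4: from projectors to $t$ (the main obstacle).} The quantity $t^2$ is the operator norm of a sum of $q$ terms, not a single projector difference, so it needs a separate argument. With principal bases, $D_k=A'_k-A_k$ has Frobenius norm $2\sin(\theta_k/2)$, where $\theta_k$ are the principal angles. The operator $\sum_kD_k^{\mathsf T}D_k$ acts on $\R^d$, while the $D_k$ live in pair space. I would bound it by
\[
 \Bigl\|\sum_kD_k^{\mathsf T}D_k\Bigr\|_{\op}=\sup_{\|v\|=1}\sum_k\|D_kv\|^2,
\]
and view $v\mapsto(D_kv)_k$ as a restriction of the map $\Delta=U'O-U_{\mathcal E}$ composed with the contraction $X\mapsto Xv$. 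The map $A\mapsto Av$ from unit-Frobenius skew matrices has norm at most $1/\sqrt2\cdot\sqrt2$ in the $\ket{p\wedge q}$ normalization. This yields $\sum_k\|D_kv\|^2\le\|\Delta\|_{\op}^2\cdot\sup_{A\in\text{pair sphere}}\|Av\|^2\cdot c$, with $\|\Delta\|_{\op}\le\sqrt2\|P'-P\|_{\op}$. Whether this contraction costs a factor of $d$ is the delicate point. If it does, I would instead exploit that $\sum_k(D_k^{\mathsf T}D_k)_{vv}$ is a partial trace over one index of $\Delta\Delta^{\mathsf T}$, and bound it by the operator norm of the noise restricted to the fixed-$v$ slice of $Q(\xi_P)$. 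Nets over $v\in S^{d-1}$ only, of log size $O(n)$, keep this within the $n^2+s$ budget. A union bound over the four failure events yields probability at least $1-\delta/16$ and $t\le t_*$ after fixing $C$. For $B=1$, $\mathcal E$ is the whole pair space and nothing is needed.
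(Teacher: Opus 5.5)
Your Steps 1--3 are fine and agree with the paper. They give the exact spectrum of $Q(x_*)$ with range $\mathcal E$, and operator-norm control of $Q(\xi)$ at scale $\sqrt{(n^2+s)/K}$. Weyl and Wedin then yield $\dim U_A=q$, $\operatorname{rank}Q(y_P)U_A=q$ and $\|P'-P\|_{\op}\le C\|Q(\xi_P)\|_{\op}$.

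The genuine gap is Step 4, and it is quantitative. Write $V_vu=u_{\rm skew}v$ for the map from pair space to $\R^d$, so that $\|V_v\|_{\op}=1/\sqrt2$. The out-of-range part of the stacked error is $\sum_k\|(I-P)A'_kv\|^2=\|V_v(I-P)U'\|_F^2$, a Frobenius norm of a $d\times q$ matrix.

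Your displayed bound by $\|\Delta\|_{\op}^2\sup_A\|Av\|^2$ is false in general. Take $v=e_j$ and orthonormal columns $\ket{i\wedge j}$, $i\ne j$: the left side is $(d-1)/2$, while the right side is $c/2$. The best generic bound from operator norms is $\tfrac12\min(d,q)\|P'-P\|_{\op}^2$. Steps 2--3 only give $\|P'-P\|_{\op}^2\lesssim(n^2+s)/K$, so this route yields $t^2\lesssim n(n^2+s)/K$. That forces $K\gtrsim n^3$ copies per quartet, not $C(n^2+s)$.

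Your fallback does not close this. The operator norm of the fixed-$v$ slice $V_vQ(\xi_P)$, a $d\times m$ matrix, is itself of order $n/\sqrt K$. Passing to the Frobenius norm after multiplying by $U_A$ costs the same $\sqrt{\min(d,q)}$. Even an operator-norm bound on the $d\times q$ slice followed by a rank conversion multiplies the confidence term $s$ by $n$.

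What the lemma actually requires is a direct bound
$\tau:=\sup_{\|v\|=1}\|V_v(I-P)Q(\xi_P)U_A\|_F\le C\sqrt{(d+dq+s)/K_P}$.
This fits the budget because $dq=14n^2$. It needs ingredients missing from your plan.

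\emph{Graph representation.} Use $YU=(I-P)E_PU_AM^{-1}$ and $U'=(U+YU)(I+U^{\mathsf T}Y^{\mathsf T}YU)^{-1/2}$, with $\|M^{-1}\|_{\op}<1.17$. This strips the bank-dependent factors deterministically and gives $t_Y:=\sup_v\|V_v(I-P)U'\|_F\le1.17\tau$.

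\emph{Conditioning on bank $A$.} This makes $U_A$ a fixed isometry independent of $\xi_P$. It is exactly where the independence of the two banks enters, in this lemma itself, not in a later stage as you say. With $U_A$ fixed, the expected squared Frobenius norm is at most $3dq/K_P$. That is the same order $n^2/K_P$ as $\|P'-P\|_{\op}^2$, so the spurious dimension factor disappears.

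\emph{Uniformity in $v$ for a Frobenius norm.} The paper writes $\langle Z,V_v(I-P)Q(\xi_P)U_A\rangle_F=\langle a_{v,Z},\xi_P\rangle$ with $\|a_{v,Z}\|_2\le\sqrt3$. It then takes a product net over $v\in S^{d-1}$ and unit $Z\in\R^{d\times q}$, of log-size $O(d+dq)$. A net over $v$ alone suffices only if paired with a concentration inequality for this Frobenius norm.

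\emph{In-range component.} The part $(\cos\theta_k-1)A_k$ must be split off and bounded through the basis-independent identity $\sum_kA_k^{\mathsf T}A_k=\tfrac72I_d$. This gives $t^2\le\tfrac74\theta_{\max}^4+2t_Y^2$. Treating $\Delta=U'O-U_{\mathcal E}$ as a whole with a generic rank bound again loses a factor of order $n$.
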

\begin{proof}
\noindent\emph{Spectrum and independent noise.}
Block support, Lemma~\ref{lem:rec-local} and
\eqref{eq:rec-q-equivariance} give
\begin{equation}
 Q_*:=Q(x_*)=\wedge^2(R_*)Q(x_0)\wedge^2(R_*)^{\mathsf T},\qquad \operatorname{spec}Q_*=\{-3^{[7B]},1^{[21B]},0^{[m-q]}\},\qquad \operatorname{ran}Q_*=\mathcal E.
 \label{eq:opt-spectrum}
\end{equation}
Set $E_X=Q(\xi_X)$ and $h_X=\|E_X\|_{\op}$ for $X=A,P$.
Separate the three pairings of each quartet into three symmetric
matrices. Each has independent centered upper entries with variance
proxy $C/K_X$ by~\eqref{eq:opt-linear-noise}.
For a unit vector, the sum of squared coefficients of its quadratic
form is bounded by a constant. Hoeffding's inequality~\cite{hoeffding1963}, a $1/4$-net
of at most $9^m$ unit vectors~\cite{vershynin2018}, and a union over
the three matrices give
\begin{equation}
 \|Q(\xi)\|_F^2=6\|\xi\|_2^2,\qquad
 \Prob\{\|Q(\xi)\|_{\op}>C(n+\sqrt u)/\sqrt K\}\le12e^{-u}.
 \label{eq:opt-matrix-noise}
\end{equation}
Thus $h_A,h_P$ are at most any prescribed positive constant with
failure at most $\delta/32$, after increasing $C$.

\noindent\emph{The refreshed graph.}
Assume $h_A,h_P\le1/8$. Weyl's inequality gives rank $q$ and
$|\hat\lambda_k|\ge1-h_A$ for the retained pilot eigenvalues.
Since $(I-P)Q_*=0$,
\[
 (I-P)U_A\hat\Lambda_A=(I-P)E_AU_A,
 \qquad\|(I-P)U_A\|_{\op}\le\frac{h_A}{1-h_A}\le\frac17.
\]
Fix an orthonormal basis $U$ of $\mathcal E$ and set
$M=U^{\mathsf T}Q(y_P)U_A$. On $\mathcal E$, the least singular
value of $Q_*$ is one. Hence
\[
 \sigma_{\min}(M)\ge\sqrt{1-1/49}-1/8>0.86,
 \qquad\|M^{-1}\|_{\op}<1.17.
\]
The range of $Q(y_P)U_A$ is the graph of $Y:\mathcal E\to\mathcal E^\perp$,
where
\begin{equation}
 YU=(I-P)E_PU_AM^{-1},\qquad U'=(U+YU)(I+U^{\mathsf T}Y^{\mathsf T}YU)^{-1/2},\qquad \|P'-P\|_{\op}\le1.17h_P.
 \label{eq:opt-refreshed-graph}
\end{equation}
Any orthonormal basis of this graph gives the same estimates.

\noindent\emph{Conditional linear estimate.}
For $\|v\|=1$, let $V_vu=u_{\rm skew}v$, and put
\[
 \tau=\sup_{\|v\|=1}\|V_v(I-P)E_PU_A\|_F,
 \qquad t_Y=\sup_{\|v\|=1}\|V_v(I-P)U'\|_F.
\]
Equation~\eqref{eq:opt-refreshed-graph} removes the bank-dependent
inverse and square-root factors deterministically: $t_Y\le1.17\tau$.
For a real unit vector $v$ and $1\le i<j\le d$, the pair-space map
and its adjoint satisfy
\[
 V_v\ket{i\wedge j}=\frac{v_j\ket i-v_i\ket j}{\sqrt2},\qquad V_v^*w=\frac{w\wedge v}{\sqrt2},\qquad \|V_v^*w\|^2=\frac{\|w\|^2-|\braket wv|^2}{2}
 \le\frac{\|w\|^2}{2}.
\]
Together with the flattening identity $\|Q(\xi)\|_F^2=6\|\xi\|_2^2$,
this gives
\[
 \|u_{\rm skew}v\|\le\|u\|/\sqrt2,\qquad
 \|Q(\xi)U_A\|_F\le\sqrt6\|\xi\|_2.
\]
Condition on bank A and take $\|Z\|_F=1$ in $\R^{d\times q}$.
For the Euclidean/Frobenius adjoint $Q^*$, trace cyclicity gives
\[
 \begin{gathered}
 \begin{aligned}
 \langle Z,V_v(I-P)Q(\xi_P)U_A\rangle_F
 &=\Tr[U_AZ^{\mathsf T}V_v(I-P)Q(\xi_P)]\\
 &=\langle (I-P)V_v^*ZU_A^{\mathsf T},Q(\xi_P)\rangle_F
 =\langle a_{v,Z},\xi_P\rangle,
 \end{aligned}\\
 a_{v,Z}=Q^*((I-P)V_v^*ZU_A^{\mathsf T}),\\
 \|a_{v,Z}\|_2\le\sqrt6\,\|(I-P)V_v^*ZU_A^{\mathsf T}\|_F
 \le\sqrt6\,\|V_v\|_{\op}\|Z\|_F\|U_A\|_{\op}\le\sqrt3.
 \end{gathered}
\]
The coefficients are uniform in bank A and independent of $\xi_P$.
Take $1/4$-nets of the unit spheres in $\R^d$ and
$\R^{d\times q}$. Their product has at most $9^{d+dq}$ points.
The two-net inequality and Hoeffding's inequality~\cite{hoeffding1963} yield
\begin{equation}
 \tau\le C\sqrt{\frac{(d+dq)\log9+\log(128/\delta)}{K_P}},
 \qquad d+dq=2n+14n^2,
 \label{eq:opt-power-concentration}
\end{equation}
except with conditional probability $\delta/64$.
Uniformity in the conditioned $U_A$ gives the unconditional bound;
$M^{-1}$ and $E_P$ need not be independent.

\noindent\emph{Principal-vector error.}
Write $A'_k=\cos\theta_k A_k+\sin\theta_k B_k$, with
$B_k\in\mathcal E^\perp$ orthonormal. Such a choice is available for
$B\ge2$, since $m-q=32B(B-1)\ge q$.
For the normalized pair basis within every block,
\[
 \sum_{k=1}^q A_k^{\mathsf T}A_k=\tfrac72I_d,
 \qquad 1-\cos\theta\le\theta^2/2.
\]
This sum is basis independent. Expanding $D_kv$ and using
$\|a+b\|^2\le2\|a\|^2+2\|b\|^2$ gives
\begin{equation}
 \begin{aligned}
 t^2=\sup_{\|v\|=1}\sum_{k=1}^q\|D_kv\|^2
 &\le\sup_{\|v\|=1}\left[
     2\sum_{k=1}^q(1-\cos\theta_k)^2\|A_kv\|^2
    +2\sum_{k=1}^q\sin^2\theta_k\|B_kv\|^2\right]\\
 &\le\tfrac12\theta_{\max}^4
       \left\|\sum_{k=1}^qA_k^{\mathsf T}A_k\right\|_{\op}+2t_Y^2
 =\tfrac74\theta_{\max}^4+2t_Y^2,
 \qquad\theta_{\max}=\arcsin\|P'-P\|_{\op}.
 \end{aligned}
 \label{eq:opt-power-stacked}
\end{equation}
Choose the bank constants in~\eqref{eq:opt-power-concentration} so that
\[
 \tfrac74\theta_{\max}^4\le\tfrac12t_*^2,\qquad
 \tau\le\frac{t_*}{2\cdot1.17}
 \quad\Longrightarrow\quad
 t^2\le\tfrac12t_*^2+2(1.17\tau)^2\le t_*^2.
\]
The failure sum remains $\delta/32+\delta/64<\delta/16$.
\end{proof}

\begin{alem}[Stability of the block commutant]
\label{lem:opt-commutant-stability}
Let $P'$ be a rank-$q$ pair-space projector with stacked error
$t\le1/40$ as in~\eqref{eq:opt-stacked-error}.
Define $\mathscr C'$ by~\eqref{eq:opt-commutant} using $A'_k$.
Its $B$ lowest eigenvectors span $\hat{\mathcal K}$, with projector
$\hat\Pi$. For $\mathcal K=\spn\{P_b/\sqrt8:1\le b\le B\}$,
\begin{equation}
 \begin{gathered}
 \|\mathscr C'-\mathscr C\|_{F\to F}\le11.42t,\qquad \|\mathscr C'-\mathscr C\|_{\op\to\op}\le15.1t,\\
 \|\hat\Pi-\Pi_{\mathcal K}\|_{F\to F}\le0.9t,\qquad \|\hat\Pi-\Pi_{\mathcal K}\|_{\op\to\op}\le131t.
 \end{gathered}
 \label{eq:opt-commutant-stability}
\end{equation}
Moreover $\hat\Pi I_d=I_d$; the first $B$ eigenvalues of
$\mathscr C'$ are at most $0.29$ and the others are at least $6.7$.
\end{alem}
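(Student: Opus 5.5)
The plan is to first compute the exact operator $\mathscr C$ attached to the true range $\mathcal E$, show that its kernel is $\mathcal K$ and that all other eigenvalues are bounded away from zero, and then treat $\mathscr C'$ as a perturbation controlled by the stacked error $t$ of~\eqref{eq:opt-stacked-error}.

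\emph{Step 1: spectrum of $\mathscr C$.} Expanding the double commutator gives
\[
 \mathscr CS=-\Bigl(\sum_kA_k^2\Bigr)S-S\Bigl(\sum_kA_k^2\Bigr)+2\sum_kA_kSA_k .
\]
Lemma~\ref{lem:opt-range-refresh} records $\sum_kA_k^{\mathsf T}A_k=\tfrac72I_d$, and since the $A_k$ are skew this turns the first two terms into $7S$. I would evaluate $\sum_kA_kSA_k$ in the adapted basis $\ket{b,p}$, with $A_k$ the normalized pair matrices $(\ketbra{b,p}{b,q}-\ketbra{b,q}{b,p})/\sqrt2$; the sum is basis independent, so this choice is harmless. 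Decompose symmetric $S$ into three parts: block scalars $P_b$, traceless symmetric parts inside a block, and cross-block parts mapping $W_c\to W_b$ with $b\neq c$. On each part the operator acts as a scalar. The block scalars give $0$. The cross-block parts get no contribution from $\sum_kA_kSA_k$, so they give $7$. The traceless intrablock parts give a Casimir value, which I expect to be $8$.

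Hence $\ker\mathscr C=\mathcal K$, and the rest of the spectrum is at least $7$. This is consistent with the stated thresholds $0.29$ and $6.7$, which are $7$ and $0$ shifted by $11.42/40$. Since $d=8B$, the blocks exhaust $\R^d$, so there is no residual subspace.

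\emph{Step 2: perturbation in Frobenius norm.} Write $A'_k=A_k+D_k$. Then $\mathscr C'-\mathscr C$ is a sum of terms linear and quadratic in the $D_k$, each of the form $\sum_kX_kSY_k$ or $(\sum_kX_kY_k)S$. I would bound each by the operator Cauchy--Schwarz inequality
\[
 \Bigl\|\sum_kX_kSY_k\Bigr\|\le\Bigl\|\sum_kX_kX_k^{\mathsf T}\Bigr\|_{\op}^{1/2}\|S\|\,\Bigl\|\sum_kY_k^{\mathsf T}Y_k\Bigr\|_{\op}^{1/2},
\]
valid both for $\|\cdot\|_F$ and for $\|\cdot\|_{\op}$. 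The inputs are $\|\sum_kA_k^{\mathsf T}A_k\|=7/2$ and $\|\sum_kD_k^{\mathsf T}D_k\|=t^2$; the $X_kX_k^{\mathsf T}$ versions follow from skewness of the $A_k$ and $A'_k$. Collecting the constants, with the quadratic terms absorbed using $t\le1/40$, should give $11.42t$ in the $F\to F$ norm and $15.1t$ in the $\op\to\op$ norm. Weyl's inequality then yields the eigenvalue claims.

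Every $A'_k$ commutes with $I_d$, so $\mathscr C'I_d=0$. Because $0$ is the least possible eigenvalue of the positive operator $\mathscr C'$, $I_d$ lies in the bottom eigenspace, which gives $\hat\Pi I_d=I_d$.

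\emph{Step 3: projector bounds.} For the Frobenius bound I would apply the Davis--Kahan $\sin\theta$ theorem in the one-sided form
\[
 \|\hat\Pi-\Pi_{\mathcal K}\|\lesssim\frac{\|(\mathscr C'-\mathscr C)\Pi_{\mathcal K}\|_{F\to F}}{\text{gap}} .
\]
The point is that on $\mathcal K$ only the terms commuting $D_k$ with a block scalar survive, which should sharpen the constant to $0.9$. The $\op\to\op$ estimate is the main obstacle, since spectral perturbation theory is not isometric for that norm. I would write both projectors as Riesz integrals $\frac1{2\pi i}\oint(z-\cdot)^{-1}dz$ over a circle of radius about $3.5$ around $0$. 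The resolvents of $\mathscr C$ are explicit in the operator norm, as scalars on the three explicit parts, with the projections onto those parts bounded in operator norm. Expanding $(z-\mathscr C')^{-1}$ in a Neumann series in $\mathscr C'-\mathscr C$, using the $15.1t$ bound, then yields a constant like $131$ after crude accounting.
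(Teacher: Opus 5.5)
Your architecture matches the paper's proof. You compute the exact spectrum $\{0,7,8\}$ of $\mathscr C$ with kernel $\mathcal K$, and use Weyl for the eigenvalue thresholds. You get $\hat\Pi I_d=I_d$ from $\mathscr C'I_d=0$ and $\mathscr C'\succeq0$. You use a one-sided $\sin\Theta$ bound through $\mathscr C'\Pi_{\mathcal K}$, and a Riesz contour with the explicit resolvent of $\mathscr C$ for the $\op\to\op$ estimate. Step 1 is correct: inside a block $\sum_kA_kSA_k=\frac12(P_bSP_b-\Tr(P_bS)P_b)$, so the traceless intrablock value is $7+1=8$. Your term-by-term bound in Step 2 also gives exactly the paper's operator-norm constant $(4\sqrt{14}+4t)t\le15.1t$.

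\textbf{The gap is the Frobenius estimate in Step 2, on which the eigenvalue thresholds depend.}

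\emph{What your method gives.} Split $\mathscr C'-\mathscr C$ into the linear terms $-(\sum_kA_kD_k+D_kA_k)S$, $-S(\sum_kA_kD_k+D_kA_k)$, $2\sum_kA_kSD_k$, $2\sum_kD_kSA_k$ and bound each with inputs $7/2$ and $t^2$. The linear coefficient is $8\sqrt{7/2}=4\sqrt{14}\approx14.97$ in \emph{both} norms. So you only get about $15.07t$ for $F\to F$, not $11.42t$.

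\emph{Consequences.}
\begin{itemize}
\item Weyl then gives $\lambda_B(\mathscr C')\le0.377$ and $\lambda_{B+1}(\mathscr C')\ge6.62$, not the stated $0.29$ and $6.7$.
\item Treating $\mathscr C'\Pi_{\mathcal K}=-\sum_k[A'_k,[D_k,\cdot\,]]\Pi_{\mathcal K}$ term by term gives about $4(\sqrt{7/2}+t)t\approx7.6t$. After dividing by $6.62$ this is about $1.15t$, not $0.9t$.
\end{itemize}

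\emph{The missing idea} is to keep the Hilbert-space factorization. Put $\mathscr TS=([A_k,S])_k$ and $\Delta\mathscr TS=([D_k,S])_k$, and write
\[
\mathscr C'-\mathscr C=\mathscr T^*\Delta\mathscr T+\Delta\mathscr T^*\mathscr T+\Delta\mathscr T^*\Delta\mathscr T .
\]
Use $\|\mathscr T\|_{F\to F}^2=\lambda_{\max}(\mathscr C)=8$ from your own Step 1. Use also $\|\Delta\mathscr T\|_{F\to F}\le2t$, which follows from $\sum_k\|D_kS\|_F^2\le t^2\|S\|_F^2$ and $\sum_k\|SD_k\|_F^2\le t^2\|S\|_F^2$. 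Together these give $4\sqrt8\,t+4t^2\le11.42t$.

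The triangle inequality over the six terms throws away the cancellation
\[
\sum_k\|[A_k,T]\|_F^2=7\|T\|_F^2+2\langle T,\sum_kA_kTA_k\rangle\le8\|T\|_F^2 ,
\]
and so replaces $8$ by $14$. The same factorization handles the projector bound. Since $\mathscr T\Pi_{\mathcal K}=0$, you have $\mathscr C'\Pi_{\mathcal K}=\mathscr T'^*\Delta\mathscr T\Pi_{\mathcal K}$. Also $\|\mathscr T'\|_{F\to F}^2=\|\mathscr C'\|\le8.29$, so $\|\mathscr T'\|<3$. This gives $\|\mathscr C'\Pi_{\mathcal K}\|\le6t$, and $6t/6.7<0.9t$.

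\textbf{Smaller points.}
\begin{itemize}
\item With contour radius $3.5$ the same accounting gives about $176t$. The paper takes $|z|=3$, where $\|(z-\mathscr C)^{-1}\|_{\op\to\op}\le\frac13+\frac24+\frac25<1.234$ and $\|(z-\mathscr C')^{-1}\|_{\op\to\op}<2.33$. That choice is what yields $3(2.33)(15.1)(1.234)t<131t$.
\item The contour must act on complex symmetric matrices. Say why the pinching and the normalized block traces remain $\op$-contractions there.
\item Your Cauchy--Schwarz inequality with the mixed pair $\sum_kX_kX_k^{\mathsf T}$, $\sum_kY_k^{\mathsf T}Y_k$ is false for the Frobenius norm in general. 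Take $X_k=\ketbra k1$, $Y_k=\ketbra 1k$, $S=\ketbra11$: the left side is $\|I\|_F=\sqrt d$ while the bound is $1$. It holds here only because all your $X_k,Y_k$ are skew, so the two Gram sums coincide.
\end{itemize}
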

\begin{proof}
\noindent\emph{Exact block action.}
In a true block write $E_{pq}=\ketbra{b,p}{b,q}$ and
$S_{pq}=\bra{b,p}S\ket{b,q}$, with $S=S^{\mathsf T}$. Then
\[
 E_{pq}E_{rs}=\delta_{qr}E_{ps},\qquad
 E_{pq}SE_{rs}=S_{qr}E_{ps},\qquad 1\le p,q,r,s\le8.
\]
For $A_{b,pq}=(E_{pq}-E_{qp})/\sqrt2$,
\[
 A_{b,pq}^2=-\tfrac12(E_{pp}+E_{qq}),\qquad A_{b,pq}SA_{b,pq}=\tfrac12(S_{qp}E_{pq}-S_{qq}E_{pp}
                 -S_{pp}E_{qq}+S_{pq}E_{qp}).
\]
Each diagonal matrix unit occurs seven times in the first sum.
Summing the second line and using $S_{pq}=S_{qp}$ gives
\[
 \begin{gathered}
 \sum_{1\le p<q\le8}A_{b,pq}^2=-\tfrac72P_b,\\
 \begin{aligned}
 \sum_{1\le p<q\le8}A_{b,pq}SA_{b,pq}
 &=\tfrac12\left[\sum_{p,q=1}^8S_{pq}E_{pq}
                      -\left(\sum_{p=1}^8S_{pp}\right)P_b\right]\\
 &=\tfrac12\bigl(P_bSP_b-\Tr(P_bS)P_b\bigr).
 \end{aligned}
 \end{gathered}
\]
Expanding both commutators in~\eqref{eq:opt-commutant},
\[
 \begin{aligned}
 \mathscr C S
 &=-\sum_{b=1}^B\sum_{1\le p<q\le8}
     (A_{b,pq}^2S-2A_{b,pq}SA_{b,pq}+SA_{b,pq}^2)\\
 &=\tfrac72\sum_{b=1}^B(P_bS+SP_b)
      +\sum_{b=1}^B(P_bSP_b-\Tr(P_bS)P_b).
 \end{aligned}
\]
Since $\sum_{b=1}^B P_b=I_d$ and
$S=\sum_{b,c=1}^B P_bSP_c$, collecting blocks yields
\begin{equation}
 \begin{gathered}
 \mathscr C S=7\sum_{\substack{1\le b,c\le B\\b\ne c}}P_bSP_c
   +8\sum_{b=1}^B
       \left(P_bSP_b-\tfrac18\Tr(P_bS)P_b\right),\\
 \Pi_{\mathcal K}S=\tfrac18\sum_{b=1}^B\Tr(P_bS)P_b,\qquad \mathscr C=7\Pi_{\rm off}+8\Pi_{\rm diag,0}.
 \end{gathered}
 \label{eq:opt-exact-commutant}
\end{equation}
Thus $\ker\mathscr C=\mathcal K$ and $\|\mathscr T\|=\sqrt8$.

\noindent\emph{Two norms.}
Put $\Delta\mathscr T S=([D_k,S])_{k=1}^q$.
Since $\sum_kD_k^{\mathsf T}D_k\preceq t^2I_d$ and $D_k$ is skew,
\begin{equation}
 \begin{gathered}
 \sum_{k=1}^q\|[D_k,S]\|_F^2\le4t^2\|S\|_F^2,\\
 \sup_{\|v\|=1}\sum_{k=1}^q\|[D_k,S]v\|^2\le4t^2\|S\|_{\op}^2,\\
 \mathscr C'-\mathscr C=\mathscr T^*\Delta\mathscr T
   +\Delta\mathscr T^*\mathscr T
   +\Delta\mathscr T^*\Delta\mathscr T .
 \end{gathered}
 \label{eq:opt-commutant-expansion}
\end{equation}
Skew adjunction gives
$\sum_{k=1}^qD_kD_k^{\mathsf T}
 =\sum_{k=1}^qD_k^{\mathsf T}D_k\preceq t^2I_d$, hence
\[
 \begin{gathered}
 \sum_{k=1}^q\|D_kS\|_F^2=\Tr\!\left(S^\dagger\Bigl[\sum_{k=1}^qD_k^{\mathsf T}D_k\Bigr]S\right)
   \le t^2\|S\|_F^2,\\
 \sum_{k=1}^q\|SD_k\|_F^2=\Tr\!\left(S^\dagger S\sum_{k=1}^qD_kD_k^{\mathsf T}\right)
   \le t^2\|S\|_F^2,\\
 \sum_{k=1}^q\|[D_k,S]v\|^2\le2t^2\|Sv\|^2+2\|S\|_{\op}^2
            \sum_{k=1}^q\|D_kv\|^2
   \le4t^2\|S\|_{\op}^2\|v\|^2.
 \end{gathered}
\]
Thus $\|\Delta\mathscr T\|_{F\to F}\le2t$ and
\[
 \|\mathscr C'-\mathscr C\|_{F\to F}
 \le2\|\mathscr T\|\|\Delta\mathscr T\|
       +\|\Delta\mathscr T\|^2
 \le4\sqrt8\,t+4t^2\le11.42t.
\]
For the spectral-map bound use
\[
 \left\|\sum_{k=1}^qA_kM_k\right\|_{\op}
 \le\|[A_1\ \cdots\ A_q]\|_{\op}
       \|[M_1^{\mathsf T}\ \cdots\ M_q^{\mathsf T}]^{\mathsf T}\|_{\op},
 \qquad\|[A_1\ \cdots\ A_q]\|_{\op}=\sqrt{7/2}.
\]
With $M_k=[D_k,S]$, and then with $A_k,D_k$ interchanged,
\[
 \begin{gathered}
 \|\mathscr T^*\Delta\mathscr T S\|_{\op},\quad
 \|\Delta\mathscr T^*\mathscr T S\|_{\op}\le2\sqrt{14}\,t\|S\|_{\op},\\
 \|\Delta\mathscr T^*\Delta\mathscr T S\|_{\op}\le4t^2\|S\|_{\op},\\
 \|(\mathscr C'-\mathscr C)S\|_{\op}\le(4\sqrt{14}\,t+4t^2)\|S\|_{\op}
 \le15.1t\|S\|_{\op}.
 \end{gathered}
\]

\noindent\emph{The low space.}
Weyl and~\eqref{eq:opt-exact-commutant} give the spectral separation;
$\mathscr C'\succeq0$ and $\mathscr C'I_d=0$ give $\hat\Pi I_d=I_d$.
Since $\mathscr C'\Pi_{\mathcal K}
=\mathscr T'^*\Delta\mathscr T\Pi_{\mathcal K}$ and
$\|\mathscr T'\|<3$, equal projector ranks yield
\[
 \|\hat\Pi-\Pi_{\mathcal K}\|_{F\to F}
 =\|(I-\hat\Pi)\Pi_{\mathcal K}\|_{F\to F}
 \le\frac{\|\mathscr C'\Pi_{\mathcal K}\|_{F\to F}}{6.7}
 \le\frac{3\cdot2t}{6.7}<0.9t.
\]

Block pinching and $\Pi_{\mathcal K}$ are spectral-norm contractions;
$\|\Pi_{\rm off}\|_{\op\to\op},\|\Pi_{\rm diag,0}\|_{\op\to\op}\le2$.
On the separating circle $|z|=3$, spectral resolution and the Neumann series give
\[
 \begin{gathered}
 (z-\mathscr C)^{-1}
 =\frac{\Pi_{\mathcal K}}z+\frac{\Pi_{\rm off}}{z-7}
  +\frac{\Pi_{\rm diag,0}}{z-8},\qquad \|(z-\mathscr C)^{-1}\|_{\op\to\op}
 \le\tfrac13+\tfrac24+\tfrac25<1.234,\\
 \|(z-\mathscr C')^{-1}\|_{\op\to\op}
 \le\frac{1.234}{1-1.234(15.1)t}<2.33.
 \end{gathered}
\]
The resolvent identity then gives
\begin{equation}
 \begin{gathered}
 \hat\Pi-\Pi_{\mathcal K}=\frac1{2\pi i}\oint_{|z|=3}
 (z-\mathscr C')^{-1}(\mathscr C'-\mathscr C)
 (z-\mathscr C)^{-1}\,dz,\\
 \|\hat\Pi-\Pi_{\mathcal K}\|_{\op\to\op}\le3(2.33)(15.1)(1.234)t<131t.
 \end{gathered}
 \label{eq:opt-projector-resolvent}
\end{equation}
The contour is valid on complex symmetric matrices: transpose symmetry
preserves the row/column estimates; pinching and normalized traces
remain contractions.
\end{proof}

\subsection{Measured-kernel filtering of Gaussian pencils}

\begin{alem}[A polynomial list of accurate block candidates]
\label{lem:opt-candidates}
The two banks of Lemma~\ref{lem:opt-range-refresh}, followed by
polynomial classical processing, produce a list with probability
at least $1-\delta/8$. Every retained subspace is within $0.01$
in projector operator norm of a true block, and every true block is
represented exactly once. For $B=1$ use the full Majorana space.
\end{alem}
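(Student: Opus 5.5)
The plan is to work on the good event of Lemma~\ref{lem:opt-range-refresh}, which has probability at least $1-\delta/16$. On that event the stacked error satisfies $t\le t_*$, and Lemma~\ref{lem:opt-commutant-stability} then gives $\|\hat\Pi-\Pi_{\mathcal K}\|_{\op\to\op}\le131t_*\approx3.1\cdot10^{-5}$. It also gives $\hat\Pi I_d=I_d$, together with the spectral separation $0.29$ versus $6.7$, so the low space $\hat{\mathcal K}$ is computable from data alone. For $B=1$ nothing is needed, so assume $B\ge2$.

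\textbf{Gaussian pencils.} I would draw a standard Gaussian $S=\sum_j g_j\hat S_j$ in an orthonormal (Frobenius) basis of $\hat{\mathcal K}$. Its comparison matrix is $S_0=\Pi_{\mathcal K}S=\sum_b \lambda_b P_b$, where the $\lambda_b$ are Gaussian with variance of order $1/8$ and small correlation corrections. The perturbation is bounded by $\|S-S_0\|_{\op}\le131t_*\|S\|_{\op}$, and $\|S\|_{\op}\le C\sqrt{\log(B/\delta')}$ with high probability.

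I would take the top-eight eigenspace of $S$, or more generally the eight eigenvectors belonging to an isolated cluster. Davis--Kahan then places it within $131t_*\|S\|_{\op}/\mathrm{gap}$ of the $P_b$ with the largest $\lambda_b$, where the gap is the distance from the top $\lambda_b$ to the second.

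\textbf{Anticoncentration.} For i.i.d.-like Gaussians, the probability that the top gap exceeds $c/\sqrt{B\log B}$, say, is at least a constant. The probability that a specific block $b$ is the maximum is about $1/B$. To reach every block, I would use ``capped'' pencils: draw $O(B\log(B/\delta))$ pencils, and for each pencil also consider $\pm S$ and shifted spectral windows. Equivalently, I would extract every eigenvalue cluster separated by a gap, not only the top one. Each block then appears as some well-separated cluster with constant probability per draw, and a coupon-collector bound over $B$ blocks gives full coverage with failure at most $\delta/32$.

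The accuracy requirement is $131t_*\|S\|_{\op}/\mathrm{gap}\le0.005$. With $t_*=2.4\cdot10^{-7}$, the gap must be at least roughly $10^{-2}/\sqrt{\log}$ in the relevant scaling. I expect this quantitative balance to be the main obstacle. If a $\sqrt B$ spacing loss is fatal, I would restrict to clusters whose measured gap exceeds a fixed constant $g_0$. Such clusters still appear: the event that a given $\lambda_b$ is isolated from all others by $g_0$ has probability at least $e^{-CBg_0}$, which is too small for large $B$. So I would instead use random two-dimensional \emph{pencils} $S=\cos\phi\,S_1+\sin\phi\,S_2$, or random sparse combinations, to make single-block isolation likely. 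Concretely, this means projecting $\hat\Pi$ onto random low-dimensional coordinate probes built from measured data. This is the step I am least sure about.

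\textbf{Measured-kernel filter and deduplication.} A candidate eight-space $W$ has projector $P_W$. I would accept $W$ only if $\|\mathscr C'(P_W/\sqrt8)\|_F$ is at most a threshold $\kappa$, and if $P_W$ is within $\kappa'$ of $\hat{\mathcal K}$ (for instance $\|(I-\hat\Pi)P_W\|_F\le\kappa'$).

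The filter is sound for the following reason. By \eqref{eq:opt-exact-commutant}, $\mathscr C\ge 7$ on $\mathcal K^\perp$. Moreover, any rank-eight projector lying in $\mathcal K$ up to error $\eta$ must be $\eta$-close to a single $P_b$, because projectors in $\mathcal K$ are sums of $P_b$'s and rank eight forces exactly one. Combined with the bound $11.42t$ on $\mathscr C'-\mathscr C$, passing the test therefore implies distance at most $0.01$ to some true block. The filter is complete because accurate candidates pass: $\|\mathscr C P_b\|=0$.

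Finally I would remove duplicates by merging candidates whose projectors are within $0.02$ of each other. True blocks are orthogonal, so distinct true blocks are at projector distance $1$, and each block is retained exactly once. A union bound over the failure events gives total failure at most $\delta/8$. All steps are eigendecompositions of polynomial-size matrices, repeated polynomially many times, so the classical processing is polynomial.
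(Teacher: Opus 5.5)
\textbf{The gap is in the coverage step.} Your framework matches the paper's: a Gaussian element of the measured low space, its top-eight eigenspace, a Davis--Kahan bound against $131t\|S\|_{\op}$, a measured-kernel filter, and deduplication. What you do not establish is that \emph{every} true block is returned by some pencil whose gap is large compared with the perturbation. The fixes you sketch do not close this.
\begin{itemize}
\item Extracting every separated eigenvalue cluster fails. The $B$ Gaussian block coefficients have bulk spacings of order $1/B$, whereas your own accounting needs a gap $\gtrsim 6\cdot10^{-3}\|S\|_{\op}$. So the claim that each block appears as a well-separated cluster with constant probability per draw is false for large $B$.
\item Even the typical top gap, of order $1/\sqrt{\log B}$, is not uniformly sufficient against a perturbation of order $t\sqrt{\log B}$.
\item The pencil $\cos\phi\,S_1+\sin\phi\,S_2$ of two independent standard Gaussians in $\hat{\mathcal K}$ is again a standard Gaussian in $\hat{\mathcal K}$, so it changes nothing.
\item Sparse combinations in a measured basis of $\hat{\mathcal K}$ are not sparse in the unknown block basis.
\item Your estimate $e^{-CBg_0}$ for isolating a block is the misjudgment that steers you away from the argument that works. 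It describes isolating a bulk coefficient, but only the extreme position is needed. Requiring $\mu_b$ to be the top coefficient and to exceed all others by a margin proportional to $\sqrt{\log B}$ costs only a polynomial factor in $B$.
\end{itemize}

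\textbf{The missing idea is a polynomial-probability large-deviation event on which the gap scales like the noise.} Let $\overline\Phi(x_B)=1/B$, $t_B=\max\{1,4x_B/3\}$ and $m_B=\sqrt{2\log(16B)}$.
\begin{itemize}
\item The paper requires $\xi_b\in[t_B,t_B+1]$ and $-m_B\le\xi_c\le x_B$ for all $c\ne b$. This event has probability at least $q_B=e^{-20}/(\sqrt{2\pi}B^2)$.
\item After the $O(t^2)$ correlation correction $\mu=(I-\Gamma)^{-1/2}\xi$, it gives $\tfrac7{32}t_B\le\mu_b-\mu_c\le5t_B$.
\item Since $\hat\Pi I_d=I_d$, one has $\mu_bI_d-\sqrt8Z=N_b^0+(\hat\Pi-\Pi_{\mathcal K})N_b^0$ with $N_b^0=\sum_{c\ne b}(\mu_b-\mu_c)P_c$. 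The perturbation is therefore at most $655t\,t_B$, while the gap is at least $7t_B/32$.
\item The ratio is $O(t)$ uniformly in $B$, which yields $\|P_V-P_b\|_{\op}\le4000t$.
\item Drawing $T=\lceil2q_B^{-1}\log(32B/\delta)\rceil$ independent pencils misses some block with probability at most $\delta/32$. This is what makes the list both polynomial and complete.
\end{itemize}
Your Davis--Kahan bound with $\|S\|_{\op}$ in place of $\|N_b^0\|_{\op}$ would also suffice on this event, because the cutoffs keep $\|S\|_{\op}=O(t_B)$.

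\textbf{A lesser point.} Your soundness claim, that a rank-eight projector near $\mathcal K$ is near a single $P_b$, needs a quantitative step. Put $\alpha_c=\Tr(P_VP_c)/8$; then $\sum_c\alpha_c=1$ and $8(1-\max_c\alpha_c)^2\le8\,\sigma(V)^2$. This forces $\max_c\alpha_c\ge3/4$ and gives $\|P_V-P_b\|_F\le6.83\,\sigma(V)$.
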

\begin{proof}
Assume $B\ge2$ and the bank event $t\le t_*$.
Draw a standard Gaussian $G$ in the Frobenius space of real symmetric
$d\times d$ matrices and form the measured pencil $Z=\hat\Pi G$.
The coordinates of $G$ in the standard symmetric orthonormal basis
are independent $N(0,1)$, so its off-diagonal matrix entries
have variance $1/2$.

\noindent\emph{Coefficient law and scalar cancellation.}
For $S_b=P_b/\sqrt8$ and $\mathcal S e_b=S_b$, orthogonal projection gives
\[
 \Gamma=\mathcal S^*(I-\hat\Pi)\mathcal S,
 \quad\|\Gamma\|_{\op}=\|(I-\hat\Pi)\mathcal S\|_{\op}^2\le(0.9t)^2,
 \quad\langle\hat\Pi S_b,\hat\Pi S_c\rangle_F=(I-\Gamma)_{bc}.
\]
Hence $Z=\sum_{b=1}^B\mu_b\hat\Pi S_b$ has coefficient covariance
$\E\mu\mu^{\mathsf T}=(I-\Gamma)^{-1}$ and law
\begin{equation}
 \mu=(I-\Gamma)^{-1/2}\xi,\quad\xi\sim N(0,I_B),\qquad
 \operatorname{sd}(\mu_b-\xi_b)
 \le2\|\Gamma e_b\|\le1.62t^2=:\sigma_\zeta.
 \label{eq:opt-pencil-law}
\end{equation}
The deviation bound uses $(1-u)^{-1/2}-1\le2u$ on the spectrum of
$\Gamma$. Since $\hat\Pi I_d=I_d$,
\begin{equation}
 N_b:=\mu_bI_d-\sqrt8Z=N_b^0+(\hat\Pi-\Pi_{\mathcal K})N_b^0,\qquad N_b^0=\sum_{\substack{1\le c\le B\\c\ne b}}
                    (\mu_b-\mu_c)P_c .
 \label{eq:opt-pencil-shift}
\end{equation}
The top-eight space of $Z$ is the bottom-eight space of $N_b$;
true projectors are used only in the proof.

\noindent\emph{Coverage event.}
Let $\overline\Phi(x_B)=1/B$ for the standard normal survival function,
$t_B=\max\{1,4x_B/3\}$ and $m_B=\sqrt{2\log(16B)}$.
For a fixed block $b$ consider
\[
 \xi_b\in[t_B,t_B+1],\qquad
 -m_B\le\xi_c\le x_B\ (c\ne b),\qquad
 \max_{1\le c\le B}|\mu_c-\xi_c|\le t_B/64.
\]
The probability of the first two conditions is at least
\[
 \frac{e^{-(t_B+1)^2/2}}{\sqrt{2\pi}}
       (1-1/B-1/(32B))^{B-1}
 \ge\frac{e^{-20}}{\sqrt{2\pi}B^2}=:q_B.
\]
Indeed, with $u=(4/3)\sqrt{2\log B}$,
\[
 \begin{gathered}
 (1-33/(32B))^{B-1}\ge e^{-2},\qquad x_B\le\sqrt{2\log B},\\
 (t_B+1)^2\le(u+2)^2\le\tfrac98u^2+36=4\log B+36,\\
 \Prob\{\max_c|\mu_c-\xi_c|>t_B/64\}\le2B\exp[-t_B^2/(8192\sigma_\zeta^2)]\le q_B/2.
 \end{gathered}
\]
The last line uses~\eqref{eq:opt-pencil-law} and
$t_B\ge1$ for $B<50$. For $B\ge50$, Mills' bound gives
$\overline\Phi(\sqrt{\log B})\ge
\varphi(\sqrt{\log B})\sqrt{\log B}/(1+\log B)\ge1/B$,
hence $t_B^2\ge(16/9)\log B$. Both cases permit
$\sigma_\zeta\le0.0018$, whereas here $\sigma_\zeta<10^{-13}$.
Subtracting the failure probability gives $q_B/2$ without independence.

On this event every $c\ne b$ satisfies
\begin{equation}
 \tfrac7{32}t_B\le\mu_b-\mu_c\le5t_B,
 \qquad
 \|N_b-N_b^0\|_{\op}\le655t\,t_B.
 \label{eq:opt-pencil-gap}
\end{equation}
For the upper constant, $B\le4$ gives $t_B=1$ and $m_B<2.885$.
For $B\ge5$, $x_B>0.8$, $t_B=4x_B/3$ and Mills' inequality gives
\[
 \begin{gathered}
 m(x):=\sqrt{2\log(16/\overline\Phi(x))}\ge x,\qquad m'(x)=\frac{\varphi(x)}{m(x)\overline\Phi(x)}
       \le1+x^{-2}\le2.5625\quad(x\ge0.8),\\
 m(0.8)<3<(127/24)0.8-1,\qquad m_B<(127/24)x_B-1=(127/32)t_B-1,\\
 \mu_b-\mu_c\le t_B+1+m_B+t_B/32<5t_B.
 \end{gathered}
\]

\noindent\emph{The extreme eigenspace.}
The matrix $N_b^0$ vanishes on $W_b$ and is at least
$g=7t_B/32$ on $W_b^\perp$.
Let $V$ be the measured bottom-eight space. Weyl's inequality places
the spectrum on $V^\perp$ above $g-\|N_b-N_b^0\|_{\op}$.
Projecting the residual $N_bP_b=(N_b-N_b^0)P_b$ into $V^\perp$
therefore gives
\begin{equation}
 \|P_V-P_b\|_{\op}=\|P_{V^\perp}P_b\|_{\op}
 \le\frac{655t\,t_B}{g-655t\,t_B}\le4000t,\qquad \|P_V-P_b\|_F\le\sqrt{16}\|P_V-P_b\|_{\op}\le16000t.
 \label{eq:opt-pencil-candidate}
\end{equation}
The eighth/ninth gap of the unscaled pencil $Z$ is at least
$(7/32-1310t)t_B/\sqrt8>1/16$.

\noindent\emph{A test for every retained candidate.}
For any rank-eight projector define
\begin{equation}
 \hat\sigma(V)=\|(I-\hat\Pi)(P_V/\sqrt8)\|_F,
 \quad\sigma(V)=\|(I-\Pi_{\mathcal K})(P_V/\sqrt8)\|_F,
 \quad|\hat\sigma(V)-\sigma(V)|\le0.9t.
 \label{eq:opt-kernel-filter}
\end{equation}
If $\sigma(V)\le\nu\le1/4$, put $\alpha_c=\Tr(P_VP_c)/8$.
Then $\alpha_c\ge0$, $\sum_{c=1}^B\alpha_c=1$ and
$\Pi_{\mathcal K}P_V=\sum_{c=1}^B\alpha_cP_c$.
Frobenius orthogonality, $\Tr(P_V^2)=8$ and
$\Tr(P_bP_c)=8\delta_{bc}$, together with
$\sum_{c=1}^B\alpha_c^2\le\max_c\alpha_c$, give
\[
 8(1-\max_c\alpha_c)^2
 \le8\left(1-\sum_{c=1}^B\alpha_c^2\right)
 =\|P_V-\Pi_{\mathcal K}P_V\|_F^2
 =8\sigma(V)^2\le8\nu^2.
\]
For its unique largest coefficient $\alpha_b\ge3/4$,
\[
 \left\|\sum_{c=1}^B\alpha_cP_c-P_b\right\|_F^2
 =8\left[(1-\alpha_b)^2+\sum_{\substack{1\le c\le B\\c\ne b}}\alpha_c^2\right]
 \le16\nu^2.
\]
Thus $\|P_V-P_b\|_F\le(\sqrt8+4)\nu<6.83\nu$.
Keep a top-eight pencil space only when
\begin{equation}
 \lambda_8(Z)-\lambda_9(Z)\ge1/32,\qquad
 \hat\sigma(V)\le\sigma_0:=5863t_*.
 \label{eq:opt-pencil-tests}
\end{equation}
A covered candidate passes the gap test and has
$\hat\sigma(V)\le5658t_*$. Every retained candidate obeys
$\sigma(V)\le5864t_*$, hence
$\min_b\|P_V-P_b\|_F\le6.83(5864)t_*<0.00962$.
Draw
\begin{equation}
 T=\left\lceil2q_B^{-1}\log\frac{32B}{\delta}\right\rceil,
 \qquad q_B=\frac{e^{-20}}{\sqrt{2\pi}B^2}.
 \label{eq:opt-probe-count}
\end{equation}
Conditional independence and the distance bounds give
\[
 \begin{gathered}
 \Prob\{\text{some block missed}\mid\text{banks}\}\le B(1-q_B/2)^T\le Be^{-Tq_B/2}\le\delta/32,\\
 \|P_V-P_{V'}\|_F<2(0.01)=0.02\qquad \text{(same block)},\\
 \|P_V-P_{V'}\|_F>\|P_b-P_c\|_F-0.02=3.98>0.98\qquad \text{(different blocks)}.
 \end{gathered}
\]
Scanning in list order with Frobenius separation threshold $1/2$ retains exactly
$B$ projectors, using $O(BT)$ comparisons and $T$ eigendecompositions.

Use certified outward errors below $10^{-6}$ and discard ambiguous pencils.
The favorable gap $>1/16$ and filter margin
$(5863-5658)t_*=4.92\cdot10^{-5}$ preserve coverage and distance $<0.01$.
Lemma~\ref{lem:opt-range-refresh} and the finite-computation allocation
below give total failure $<3\delta/32+\delta/32=\delta/8$.
\end{proof}

\subsection{Uniform correction noise and global initialization}

\begin{alem}[Joint noise control on data-selected orthogonal blocks]
\label{lem:opt-joint-noise}
For one bank of $K$ shots per quartet, with probability at least
$1-\delta/8$, simultaneously for every orthogonal frame
$Q=[q_1,\ldots,q_B]$ and every $O_b\in O(8)$,
\begin{equation}
 \max\left\{
 \left(\sum_{b=1}^B\|n_b\|_2^2\right)^{1/2},
 \left(\sum_{b=1}^B\|N_b(O_b)\|_F^2\right)^{1/2}
 \right\}\le C\sqrt{\frac{n^2+\log(1024/\delta)}K},
 \label{eq:opt-joint-noise}
\end{equation}
where $n_b$ and $N_b(O_b)$ are the noise terms of
\eqref{eq:opt-block-update}, with $\hat O_b$ replaced by $O_b$.
The same estimate holds for every partial orthogonal block frame.
\end{alem}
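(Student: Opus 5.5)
The plan is to write both noise quantities as suprema of linear functionals of the centered noise vector $\xi$ and then apply Hoeffding's bound \eqref{eq:opt-linear-noise} with a net argument. The key point is that the total number of net parameters is only $O(d^2)=O(n^2)$, not $O(Bd^2)$.

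\textbf{Step 1 (local terms).} The stacked vector $(n_b)_b$ has entries $\Xi(q_{b,I})=\langle\xi,\wedge^4(Q)e_{(b,I)}\rangle$, where $e_{(b,I)}$ is the sorted quartet inside block $b$ of the completed frame. Let $\Pi_{\rm in}$ project onto quartets supported inside a single block. Then
\[
 \Bigl(\sum_{b=1}^B\|n_b\|_2^2\Bigr)^{1/2}=\|\Pi_{\rm in}\wedge^4(Q)^{\mathsf T}\xi\|_2
 =\sup_{w}\langle\wedge^4(Q)\Pi_{\rm in}w,\xi\rangle
\]
over unit $w$ supported on in-block quartets. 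Each coefficient vector $a=\wedge^4(Q)\Pi_{\rm in}w$ has $\|a\|_2\le1$ because $\wedge^4(Q)$ is orthogonal (Lemma~\ref{lem:closure}). For each fixed $a$, \eqref{eq:opt-linear-noise} gives a tail $\le e^{-Ku^2/2}$.

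The union is over the pair $(Q,w)$. Here $Q$ ranges over $O(d)$, of dimension $O(d^2)$, and $w$ over a sphere of dimension $70B=O(d)$. An $\eta$-net of $O(d)$ in Frobenius norm of radius $\sqrt d$ has log-size $O(d^2\log(d/\eta))$. To avoid the $\log d$, I would use operator-norm nets, which have log-size $O(d^2)$ at constant mesh (Szarek). I would also use that $\|\wedge^4(Q)-\wedge^4(Q')\|_{\op}\le4\|Q-Q'\|_{\op}$. With a fixed constant mesh $1/16$, the usual two-net absorption gives the sup $\le C\sqrt{(d^2+\log(1/\delta))/K}$, with the sup over the net losing only a constant factor. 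Partial frames are handled by completing them orthogonally, which does not affect the selected entries.

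\textbf{Step 2 (off-block terms).} By definition,
\[
 N_b(O_b)=(I-P_b)[\Xi(q_{b,T},\cdot)]_TC(O_b)/7 .
\]
Write $\sum_b\|N_b\|_F^2=\sup\sum_b\langle Z_b,N_b\rangle_F$ over $(Z_b)$ with $\sum_b\|Z_b\|_F^2=1$. Each term is $\langle\xi,a\rangle$, where $a$ is the antisymmetrized contraction of $q_{b,T}$ with the columns $(I-P_b)Z_bC(O_b)^{\mathsf T}/7$. Since $\|C(O_b)\|_{\op}=\sqrt7$ by \eqref{eq:opt-cofactor}, the bounds $\|(I-P_b)\|\le1$ and orthonormality of the triples across blocks give $\|a\|_2\le C$, uniformly in $Q$, the $O_b$, and $Z$.

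The parameters are $Q$ ($O(d^2)$), the $O_b$ ($28B=O(d)$), and $Z\in\R^{d\times 8B}$ ($O(d^2)$). The total dimension is again $O(n^2)$. The Lipschitz dependence on each parameter follows from \eqref{eq:opt-cofactor-bounds} and multilinearity of $\wedge^3$. The same net-plus-Hoeffding argument then yields the second bound in \eqref{eq:opt-joint-noise}.

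\textbf{Step 3 (probability bookkeeping).} Each of the two events gets failure $\delta/16$, which explains the $\log(1024/\delta)$. Every parameter enters through compositions of orthogonal maps, so the constants are uniform. This uniformity is what permits data-selected frames, since $Q$ and $O_b$ may later depend on this very bank.

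The main obstacle is the net cardinality. A naive Frobenius net on $O(d)$, or independent nets per block, would cost $\log d$ or a factor $B$ and break the $n^2$ rate. So the key care is to choose operator-norm nets for $Q$ and to bound the Lipschitz constants of $a$ in operator norm. The suprema over $w$ and $Z$ are handled linearly, through duality rather than separate nets where possible.
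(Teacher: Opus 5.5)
There is a gap in the step that actually produces the rate: the single net of constant mesh $1/16$ in $Q$ combined with ``two-net absorption''. Your overall setup is sound: linear functionals of $\xi$, Hoeffding \eqref{eq:opt-linear-noise}, $O(n^2)$ net parameters, and operator-norm nets on $O(d)$ to avoid a $\log d$.

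Absorption needs the approximation residual to lie in a scaled copy of the set being maximized. That holds for the linear parameters $w$ and $Z$ at fixed $Q$, but not for $Q$. Take $a=\wedge^4(Q)\Pi_{\rm in}w$ and a net point $Q'$. The residual $(\wedge^4(Q)-\wedge^4(Q'))\Pi_{\rm in}w$ carries a free index outside the blocks, so it is not in $h\mathcal A$. Bounding it by the off-block supremum only moves the problem up one level, because the off-block residual has two free indices, and so on. The only bound that closes is $|\langle\xi,a-\pi(a)\rangle|\le4h\|\xi\|_2$, with $\|\xi\|_2\asymp\sqrt{\binom{2n}{4}/K}\sim n^2/\sqrt K$. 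At constant $h$ this is far above the target $n/\sqrt K$. Pushing it below the target requires $h\lesssim1/n$, which reinstates exactly the $\log n$ you set out to avoid.

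The missing idea is multiscale chaining, which is how the paper proves the lemma.
\begin{itemize}
\item Take nets $\mathcal N_j$ of mesh $2^{-j}$ with log-size $Cn^2(j+1)$, and let $\pi_j$ map to the nearest point of $\mathcal N_j$. Your Szarek operator-norm nets would do, as would the paper's product nets in the max-block metric $d_{\rm bl}$.
\item Apply \eqref{eq:opt-linear-noise} to the increments $\pi_j(a)-\pi_{j-1}(a)$, which have norm at most $C2^{-j}$, with failure budgets proportional to $e^{-s-j}$.
\item Sum to get $\sum_j2^{-j}\sqrt{n^2(j+1)+s+j}/\sqrt K\le C\sqrt{(n^2+s)/K}$.
\end{itemize}

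Separately, in Step~2 the Lipschitz dependence on $Q$ at fixed $Z$ does not follow from multilinearity alone. Differentiating $q_{b,T}\wedge(I-P_b)Z_be_s$ produces terms with only two indices in block $b$. These are not orthogonal across blocks, and the triangle inequality loses a factor $\sqrt B$. The paper handles this by noting that each quartet has at least two indices in at most two blocks, so $\|\sum_b\dot a_b\|_2^2\le2\sum_b\|\dot a_b\|_2^2$.

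In your operator-norm route there is a simpler fix. Substitute $\tilde Z_b=Q^{\mathsf T}Z_b$, so that $a_{\rm off}=\wedge^4(Q)\,\tilde a(\tilde Z,O)$, where $\tilde a$ is linear in $\tilde Z$ in the fixed frame and $\|\tilde a\|_2\le1/\sqrt7$. Your bound $\|\wedge^4(Q)-\wedge^4(Q')\|_{\op}\le4\|Q-Q'\|_{\op}$ then gives an absolute Lipschitz constant directly. With this and chaining in place, your Steps~1--3 go through, and the result is a clean alternative to the paper's max-block metric.
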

\begin{proof}
\noindent\emph{Dual coefficients and orthogonality.}
For $\sum_{b,I}c_{bI}^2=1$, the local dual functional is
$\langle\xi,a_{\rm loc}\rangle$ with
\[
 a_{\rm loc}=\sum_{b=1}^B\sum_{I\in\binom{[8]}4}c_{bI}q_{b,I},
 \qquad \|a_{\rm loc}\|_2=1.
\]
For matrices $Z_b$ with $\sum_b\|Z_b\|_F^2=1$, write
$v_{bs}=(I-q_bq_b^{\mathsf T})Z_be_s$ and $D_b=C(O_b)/7$.
The off-block functional has coefficient
\begin{equation}
 a_{\rm off}=\sum_{b=1}^B\sum_{T\in\binom{[8]}3}\sum_{s=1}^8
 (D_b)_{Ts}\,q_{b,t_1}\wedge q_{b,t_2}\wedge q_{b,t_3}\wedge v_{bs},
 \qquad \|a_{\rm off}\|_2^2\le1/7 .
 \label{eq:opt-dual}
\end{equation}
Put $\mathsf V_b=[v_{b1},\ldots,v_{b8}]$ and
$w_{bT}=\sum_{s=1}^8(D_b)_{Ts}v_{bs}$. These vectors are orthogonal
to block $b$. Decompose $a_{\rm off}=\sum_{b=1}^B a_b$, where
\[
 a_b=\sum_{T\in\binom{[8]}3}q_{b,T}\wedge w_{bT}.
\]
Distinct block triples have orthogonal supports, and
$D_b^{\mathsf T}D_b=I_8/7$, so $\|D_b\|_{\op}=1/\sqrt7$ and
\[
 \begin{gathered}
 \|a_b\|_2^2=\sum_{T\in\binom{[8]}3}\|w_{bT}\|^2
 =\|\mathsf V_bD_b^{\mathsf T}\|_F^2
 =\tfrac17\|\mathsf V_b\|_F^2\le\tfrac17\|Z_b\|_F^2,\\
 \langle a_b,a_c\rangle=0\quad(b\ne c),\qquad
 \|a_{\rm off}\|_2^2=\sum_{b=1}^B\|a_b\|_2^2\le\tfrac17.
 \end{gathered}
\]
A quartet cannot contain triples from two disjoint blocks, proving
\eqref{eq:opt-dual} without a factor $B$.

\noindent\emph{Continuity without a growing frame radius.}
Use the metric
$d_{\rm bl}(Q,Q')=\max_b\|q_b-q'_b\|_F$.
In one determinant component choose a real skew logarithm
$X$ of $Q'Q^{\mathsf T}$ with rotation angles in $[-\pi,\pi]$.
An even-dimensional $-1$ eigenspace is paired into rotation planes;
the analytic principal logarithm is not needed.
For $q_b(t)=e^{tX}q_b$ the rotation-plane scalar inequality gives
\begin{equation}
 \|\dot q_b(t)\|_F=\|Xq_b\|_F
 \le\tfrac\pi2\|q'_b-q_b\|_F .
 \label{eq:opt-frame-path}
\end{equation}
Holding $O_b,Z_b$ fixed, every term of the off-block derivative is
\begin{equation}
 \begin{aligned}
 \dot a_b=\sum_{T\in\binom{[8]}3}\sum_{s=1}^8(D_b)_{Ts}\bigl(
 &\dot q_{b,t_1}\wedge q_{b,t_2}\wedge q_{b,t_3}\wedge v_{bs}
 +q_{b,t_1}\wedge\dot q_{b,t_2}\wedge q_{b,t_3}\wedge v_{bs}\\
 &+q_{b,t_1}\wedge q_{b,t_2}\wedge\dot q_{b,t_3}\wedge v_{bs}
 -q_{b,T}\wedge\dot P_bZ_be_s\bigr),\qquad
 \dot P_b=\dot q_bq_b^{\mathsf T}+q_b\dot q_b^{\mathsf T}.
 \end{aligned}
 \label{eq:opt-dual-derivative}
\end{equation}
Exterior multiplication from $\wedge^3\R^d\otimes\R^d$ to
$\wedge^4\R^d$ has norm at most two: each quartet has four orthogonal
preimages. Together with $\|\dot P_b\|_{\op}\le2\|\dot q_b\|_F$,
this bounds $\|\dot a_b\|_2\le16\|\dot q_b\|_F\|Z_b\|_F$.
Every term retains at least two indices in block $b$.
At the current frame, let $\mathcal I_b=\{8(b-1)+1,\ldots,8b\}$
and expand in its orthonormal quartet basis $\{q_I\}$:
\[
 \dot a_b=\sum_{I\in\binom{[d]}4}\alpha_{bI}q_I,
 \qquad \alpha_{bI}=0\quad\text{if }|I\cap\mathcal I_b|<2.
\]
Thus $\mathcal B_I=\{b:|I\cap\mathcal I_b|\ge2\}$ has cardinality
at most two. Parseval and Cauchy--Schwarz give
\begin{equation}
 \begin{aligned}
 \left\|\sum_{b=1}^B\dot a_b\right\|_2^2
 &=\sum_{I\in\binom{[d]}4}
          \left|\sum_{b\in\mathcal B_I}\alpha_{bI}\right|^2\\
 &\le\sum_{I\in\binom{[d]}4}|\mathcal B_I|
                       \sum_{b\in\mathcal B_I}|\alpha_{bI}|^2
 \le2\sum_{b=1}^B\|\dot a_b\|_2^2 .
 \end{aligned}
 \label{eq:opt-support-multiplicity}
\end{equation}
Local-coefficient derivatives retain three indices and are orthogonal
across blocks. Equations~\eqref{eq:opt-frame-path}--\eqref{eq:opt-support-multiplicity}
show that both coefficient maps are Lipschitz in $d_{\rm bl}$ with an
absolute constant. Variation in the $O_b$ uses the three-factor
telescoping bound, with the maximum local Frobenius metric.
Variation in dual variables is bounded by their Euclidean norm.

\noindent\emph{Nets and the probability bound.}
Each eight-frame lies in a ball of radius $\sqrt8$ in dimension $8d$.
Product nets for the maximum-block metric have size at most
$(C/h)^{Cn^2}$. Choose an actual orthogonal frame from each
occupied product cell; this doubles the mesh and preserves that bound.
The local orthogonal factors and dual balls add at most $Cn^2$ dimensions.
Cover the two determinant components separately.
Both compact coefficient sets $\mathcal A$ therefore satisfy
\[
 \sup_{a\in\mathcal A}\|a\|_2\le1,\qquad
 \log\mathcal N(\mathcal A,\|\cdot\|_2,h)
 \le Cn^2\log(C/h).
\]
Apply~\eqref{eq:opt-linear-noise} on nets of mesh $2^{-j}$ with failure
budgets proportional to $e^{-s-j}$. The increments satisfy
\[
 \sup_{a\in\mathcal A}|\langle\xi,a\rangle|
 \le\frac C{\sqrt K}\sum_{j=0}^{\infty}
    2^{-j}\sqrt{n^2(j+1)+s+j}
 \le C\sqrt{\frac{n^2+s}{K}}.
\]
Continuity proves~\eqref{eq:opt-joint-noise} for both coefficient sets.
A partial frame is extended orthogonally and its unused dual variables
are set to zero. Uniformity permits selection and repeated corrections
using this same bank; no extra independence is assumed.
\end{proof}

\begin{alem}[Initialization without a supplied approximation]
\label{lem:opt-initialization}
Three nonadaptive banks, each with $\lceil C(n^2+\log(1024/\delta))\rceil$
copies per quartet, produce in polynomial time a frame $R_{\rm c}$
satisfying the hypothesis of Lemma~\ref{lem:opt-refinement}, except
with probability at most $\delta/2$.
\end{alem}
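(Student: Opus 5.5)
The plan is to chain the three preceding ingredients, one bank each. Banks A and P are used by Lemmas~\ref{lem:opt-range-refresh} and~\ref{lem:opt-candidates} to produce, with failure at most $\delta/8$, exactly $B$ rank-eight projectors $P_{V_1},\ldots,P_{V_B}$, each within $0.01$ in operator norm of a distinct true block projector $P_b$. From these I build an approximately orthogonal frame. Take orthonormal bases $\tilde q_b$ of $V_b$, stack them, and apply polar orthogonalization to $[\tilde q_1,\ldots,\tilde q_B]$. Since $\sum_b P_{V_b}$ is within $O(0.01)$ of $I_d$ blockwise and the cross terms $P_{V_b}P_{V_c}$ are small, the polar factor $Q=[q_1,\ldots,q_B]$ has per-block leakage $\eta_b\le e_b\le C\cdot0.01$. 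The Frobenius leakage per block is at most $\sqrt8$ times the operator leakage, so $e_b\le0.1$ holds with room to spare. The frames $q_b$ are matched to $R_b$ only up to an unknown element of $O(8)$, which is exactly what the decoder $\mathcal D$ handles. For $B=1$, start from $q_1=I_8$ and skip discovery.

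Next, I run corrections with the third bank C, using the same bank for every correction. Lemma~\ref{lem:opt-joint-noise} gives, with failure at most $\delta/8$ and uniformly over all data-selected orthogonal frames and local factors, $\nu_{\rm loc},\nu_{\rm off}\le C\sqrt{(n^2+s)/K_C}$. Choosing $K_C$ large makes each $\|n_b\|_2\le0.03$ and each $\|N_b\|_F\le0.001$, so Lemma~\ref{lem:rec-decoder} applies. The main obstacle is the global error, which starts as $E\sim0.03\sqrt B$ and is not small. I would therefore follow the insertion scheme announced in the main text: insert blocks one at a time and perform three corrections after each insertion. The correction map sends per-block error $e_b$ to $5.5\eta_b^2e_b+O(\text{noise})$, which is cubic per block. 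I would run the correction argument blockwise, using the first inequality of~\eqref{eq:opt-correction-error} together with $\eta_b\le0.03$. This contracts each $e_b$ by a factor $5.5\cdot 10^{-3}$ per round, independently of $B$, down to the floor $0.62\|n_b\|_2+1.15\|N_b\|_F$. After three rounds the stacked error satisfies $E\le C'(\nu_{\rm loc}+\nu_{\rm off})+(5.5\cdot10^{-3})^3E_{\rm init}$. I expect the delicate point to be that $E_{\rm init}$ is of order $\sqrt B$. I would address it either by re-deriving the blockwise bound with the leakage measured against the already corrected frames (this is why blocks are inserted incrementally: each insertion sees the other blocks already at the noise floor, so cross-block leakage is controlled by $\nu$ rather than by $0.01\sqrt B$), or by iterating $O(\log B)$ times if necessary. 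After polar orthogonalization, the result is a full frame with $\|[X_b]-[R_bh_b]\|_F\le3.6\,C'\sqrt{(n^2+s)/K_C}$.

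Finally, set $R_{\rm c}=\mathrm{polar}[X_1,\ldots,X_B]$ with orientations fixed by the decoder. Then $R_{\rm c}^{\mathsf T}R_*$ lies within Frobenius distance $e\le1/16$ of an element of $G_0$, absorbing block permutations and the local $h_b$, and also a reflection, since $\det$ is tracked by~\eqref{eq:opt-decoder}. The slice statement of Lemma~\ref{lem:opt-geometry} gives $R_*=R_{\rm c}e^{X_*}g$ with $\|X_*\|_F\le1.51e$. Choosing the constant $C$ in $K_C$ so that $1.51e\le\rho_0/2=1/512$ meets the hypothesis of Lemma~\ref{lem:opt-refinement}. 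The failures add to $\delta/8+\delta/8+O(\delta/32)\le\delta/2$, and every step is an eigendecomposition, a polar factor, or a constant-size decoding, so the procedure runs in polynomial time.
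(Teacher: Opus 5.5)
\textbf{There is a genuine gap in how you control the frame globally.}

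Your first step polar-orthogonalizes all $B$ candidate bases at once and asserts per-block leakage $C\cdot0.01$. Lemma~\ref{lem:opt-candidates} only bounds each candidate separately, however, and all candidates come from the same measured projector $\hat\Pi$. Nothing excludes coherent errors. Suppose $V_1=W_1$ exactly and every $V_b$, $b\ge2$, is tilted by $\epsilon_0=0.01$ toward one common unit vector of $W_1$. Each candidate is then within $\epsilon_0$ of its block, yet the Gram matrix of the stacked bases has extreme eigenvalues of order $1+B\epsilon_0^2$ and $(1+B\epsilon_0^2)^{-1}$. Once $B\epsilon_0^2$ is of order one, the polar factor pushes block $1$ out of $W_1$ by an amount of order one.

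The same obstruction defeats the blockwise correction in your second step. Lemma~\ref{lem:rec-decoder} needs mutually orthogonal input frames, and its polar-orthogonalization guarantee holds only when the \emph{aggregate} error is at most $1/4$. So the per-block factor $5.5\eta_b^2$ cannot be carried through the polar step when $E_{\rm init}\sim0.03\sqrt B$. Iterating $O(\log B)$ times does not help, because the aggregate bound $5.5E^3$ does not contract for $E$ of order one.

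Your option (a) names the right mechanism, and it is the paper's route. It must, however, replace your first step rather than follow it, and it must be run on the aggregate error. The argument goes as follows.
\begin{enumerate}
\item Maintain the invariant that the found frame has total leakage $E\le0.01$ before each insertion.
\item Reject a candidate $U$ when $t:=\|Q_{\rm found}^{\mathsf T}U\|_{\op}>1/2$. An already found block has $t\ge\sqrt{1-(E+\epsilon_0)^2}>1/2$, and a missing one has $t\le E+\epsilon_0<1/2$.
\item Otherwise append an orthonormal basis of $(I-P_{\rm found})U$. Its leakage is at most $(E+2\epsilon_0)/\sqrt{1-(E+\epsilon_0)^2}\le0.0301$, so $e_{\rm new}\le0.0852$ and the total after insertion is at most $0.0858$, independently of $B$.
\item The second line of~\eqref{eq:opt-correction-error}, which uses $\sum_b\eta_b^4e_b^2\le E^6$, then gives $E'\le19.8E^3+6.4\nu_*$. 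Here $\nu_*$ is the $B$-independent aggregate noise bound from Lemma~\ref{lem:opt-joint-noise}. Three rounds bring $E$ below $3.3\cdot10^{-4}<1/2048$ and restore the invariant.
\end{enumerate}

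Finally, no reflection can be absorbed into $G_0$, which lies in $SO(d)$. Instead, $R_{\rm c}$ is within $1/2048$ of $R_*g_0$ with $g_0\in SO(d)$, whereas the opposite determinant component is at operator distance $2$. Hence $R_{\rm c}\in SO(d)$, and the slice statement of Lemma~\ref{lem:opt-geometry} applies.
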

\begin{proof}
Use the first two banks to obtain the list of
Lemma~\ref{lem:opt-candidates}, and the third for all corrections.
Set the correction coefficients, noise tolerance and initialization radius to
\begin{equation}
 \begin{gathered}
 c_{\rm cub}=3.6\cdot5.5=19.8,\qquad c_{\rm noise}=6.4\ge3.6(0.62+1.15),\\
 \nu_*=5\cdot10^{-5},\qquad \rho_{\rm init}=\rho_0/8=1/2048,\\
 \mathcal R(u)=c_{\rm cub}u^3+c_{\rm noise}\nu_*.
 \end{gathered}
 \label{eq:opt-correction-constants}
\end{equation}
Lemma~\ref{lem:rec-decoder} gives $\mathcal R$ for a correction followed
by polar orthogonalization. Choose the bank constant to make
\eqref{eq:opt-joint-noise} at most $\nu_*$.
From an empty frame, scan the retained list in order. Reject $U$ when
$t:=\|Q_{\rm found}^{\mathsf T}U\|_{\op}>1/2$; otherwise append the
orthonormalized $(I-P_{\rm found})U$ and perform three corrections
\eqref{eq:opt-block-update}, each followed by polar orthogonalization.
Stop at $B$ blocks; a failed global guard or exhausted list returns
the reference preparation.

\noindent\emph{Insertion.}
Let $P_*$ project onto the distinct true blocks matched to the found
frame. For total leakage $E\le0.01$, equal ranks give
\[
 \|P_{\rm found}-P_*\|_{\op}^2
 \le\|(I-P_*)Q_{\rm found}\|_F^2
 \le\sum_{b\ {\rm found}}\|(I-R_bR_b^{\mathsf T})q_b\|_F^2=E^2.
\]
For candidate error $\epsilon_0=0.01$, an already found block has
$t\ge\sqrt{1-(E+\epsilon_0)^2}>1/2$; a missing block has
$t\le E+\epsilon_0<1/2$. In the latter case,
$\sigma_{\min}((I-P_{\rm found})U)\ge\sqrt{1-(E+\epsilon_0)^2}$
and $\|(I-R_bR_b^{\mathsf T})(I-P_{\rm found})U\|_{\op}\le E+2\epsilon_0$.
Orthonormalization therefore gives
\begin{equation}
 \eta_{\rm new}\le
 \frac{E+2\epsilon_0}{\sqrt{1-(E+\epsilon_0)^2}}\le0.0301,\qquad
 e_{\rm new}\le\sqrt8\,\eta_{\rm new}\le0.0852.
 \label{eq:opt-insertion}
\end{equation}
After insertion, $E\le0.0858$.

\noindent\emph{The invariant and final frame.}
Lemma~\ref{lem:rec-decoder} and the uniform bank event give
\begin{equation}
 E'\le\mathcal R(E)=19.8E^3+6.4\nu_*\le0.146E+6.4\nu_*
 \quad(E\le0.0858).
 \label{eq:opt-recursion}
\end{equation}
The local hypotheses hold, and the linear bound in~\eqref{eq:opt-recursion} gives
\[
 \begin{gathered}
 E_1\le0.146(0.0858)+c_{\rm noise}\nu_*\le0.01285,\\
 E_2\le0.146(0.01285)+c_{\rm noise}\nu_*\le0.00220,\\
 E_3\le0.146(0.00220)+c_{\rm noise}\nu_*\le0.000642<0.01,\\
 \mathcal R(0.0022)=c_{\rm cub}(0.0022)^3+c_{\rm noise}\nu_*
 <3.3\cdot10^{-4}<1/2048=\rho_{\rm init}.
 \end{gathered}
\]
Thus coverage permits each insertion; the last line bounds the final
\emph{full} frame error by $\|R_{\rm c}-R_*g_0\|_F<\rho_{\rm init}$
for a block permutation and local-stabilizer gauge $g_0\in G_0\subset SO(d)$.
Indeed, permutations of eight-dimensional blocks have determinant one,
and a stabilizer of the nonzero self-dual local four-form preserves
orientation. Opposite components would give
$\|R_{\rm c}-R_*g_0\|_{\op}=2>\rho_{\rm init}$, so $R_{\rm c}\in SO(d)$.
Lemma~\ref{lem:opt-geometry} gives
$R_*=R_{\rm c}e^{X_*}g$ with
$\|X_*\|_F\le1.51\rho_{\rm init}=1.51/2048<1/512=\rho_0/2$.
For $B=1$, take the full space as the initial block, with zero leakage;
the same local correction recovers its orientation.
The bank events and reserved numerical failures total less than $\delta/2$.
\end{proof}

\noindent\emph{Proof of the upper bound in Theorem~\ref{thm:reconstruction}.}
Precollect the three initialization banks and an independent refinement bank.
For $s=\log(1024/\delta)$ prescribe
\begin{equation}
 K_A=K_P=K_W=\lceil C(n^2+s)\rceil,\qquad
 K_R=\left\lceil C(n^2+s)/\epsilon^2\right\rceil .
 \label{eq:opt-bank-budgets}
\end{equation}
Condition on the initialized frame and apply Lemma~\ref{lem:opt-refinement}
with statistical infidelity $\epsilon^2/4$. The parent certificate and
triangle inequality give
$\trd(\rho,\hat\rho)\le\epsilon/2+\epsilon/2=\epsilon$,
reserving the second half for optimization and rounding.
The output is pure and all measurements are predetermined, with
\begin{equation}
 N_{\rm copies}
 =r(K_A+K_P+K_W+K_R)
 \le C\,\frac{n^4(n^2+\log(1/\delta))}{\epsilon^2}.
 \label{eq:opt-total-copies}
\end{equation}
Independent pencils cover the blocks and the measured-kernel test controls
retained candidates; individual pencils may fail and all blocks share banks.

\noindent\emph{Finite computations and guards.}
Reading the records, computing the projectors and list, correcting frames
and constrained optimization take polynomially many operations.
For the cap $T$ in~\eqref{eq:opt-probe-count}, put
\[
 p_{\rm G}=d(d+1)/2,\qquad
 M=\sqrt{p_{\rm G}}+\sqrt{2\log(128T/\delta)},\qquad
 \eta_{\rm num}=10^{-10}.
\]
The $p_{\rm G}$ coefficients of each ideal $G_j$ in the symmetric
Frobenius-orthonormal basis are independent standard normals. The Gaussian
norm bound gives
\[
 \Prob\{\max_{1\le j\le T}\|G_j\|_F>M\}\le\delta/128,\qquad
 \|\hat\Pi G_j\|_{\op}\le\|G_j\|_F,\qquad
 M=O(n+\sqrt{s+\log T}).
\]
Choose a dyadic mesh with
$\eta_{\rm num}/(4\sqrt{p_{\rm G}})\le h_{\rm G}\le\eta_{\rm num}/(2\sqrt{p_{\rm G}})$ on
$[-M,M]$, with overflow symbols, and round each coefficient to its nearest
mesh point. Reject overflow or $\|\widetilde G\|_F>M+h_{\rm G}\sqrt{p_{\rm G}}/2$,
advancing the pencil counter. Approximate each unconditioned rounded
Gaussian cell law, including overflow, within total variation
$\delta/(128Tp_{\rm G})$. The product-law and trial union bounds cost
at most $\delta/128$ overall; together with the norm tail this costs
at most $\delta/64$, within the reserved budget.
This compares discrete rounded laws, not atomic and continuous laws.
On the coupled good event, compute the projector in the symmetric basis with
$\|\widetilde\Pi-\hat\Pi\|_{F\to F}\le\alpha_{\rm G}
\le\eta_{\rm num}/[4(M+1)]$. Then
\[
 \begin{gathered}
 \|\widetilde G-G\|_F\le h_{\rm G}\sqrt{p_{\rm G}}/2,\\
 \|\widetilde\Pi\widetilde G-\hat\Pi G\|_{\op}
 \le h_{\rm G}\sqrt{p_{\rm G}}/2
      +\alpha_{\rm G}(M+h_{\rm G}\sqrt{p_{\rm G}}/2)
 <\eta_{\rm num}.
 \end{gathered}
\]
The mesh has polynomially many cells. Their probabilities and the projector
can be evaluated with polynomial bit precision in $n$, $\log(1/\delta)$
and $\log(1/\epsilon^2)$.

Certify the rank-$q$ pilot and refreshed ranges and the $B$-dimensional
low commutant space using their separated spectra. The good-event margins are
\[
 \sigma_{\min}(Q(y^P)U_A)>0.86>\tfrac12,\qquad
 \lambda_B(\mathscr C')\le0.29<1<6<6.7\le\lambda_{B+1}(\mathscr C').
\]
For each pencil certify~\eqref{eq:opt-pencil-tests} to outward error
$<10^{-6}$; discard ambiguities and increment the counter. The favorable
gap $>1/16$ and filter margin $4.92\cdot10^{-5}$ suffice.
For a favorable gap $g\ge1/16$, the rounded pencil changes its top-eight
projector by at most $4\eta_{\rm num}/(g-\eta_{\rm num})$ in Frobenius norm~\cite{davis1970}.
Its filter changes by at most
$\sqrt2\eta_{\rm num}/(g-\eta_{\rm num})+\alpha_{\rm G}<3\cdot10^{-9}$,
and Weyl's inequality changes the gap by at most $2\eta_{\rm num}$.
Deduplicate to error $<1/100$ using Frobenius distances $<0.02$ within blocks
and $>0.98$ between blocks; require exactly $B$ candidates.
Insertion requires $\sigma_{\min}((I-P_{\rm found})U)\ge1/2$.
Each correction requires absolute Hodge test and top local gap $>1$,
$\|X_b^\perp\|_{\op}^2\le1/2$ for every block, and
$\sigma_{\min}([X_1,\ldots,X_j])\ge1/2$.
The insertion and correction estimates give strict good-event margins.
Polynomially smaller per-operation errors preserve~\eqref{eq:opt-recursion}
and its strict final $1/2048$ allowance. A failed global guard returns
the reference.
For refinement, $\lceil10\log(16/\epsilon^2)\rceil$ exact projected iterations
reduce the Frobenius error to at most
$\rho_0(13/18)^{\lceil10\log(16/\epsilon^2)\rceil}<\epsilon/16$.
The derivative bound and parent certificate convert this to trace
error below $\epsilon/8$. Allocate a further $\epsilon/8$ to numerical
evaluation and $\epsilon/4$ to circuit rounding.
Finally decompose the returned $SO(d)$ matrix into at most $d(d-1)/2$
plane rotations~\cite[Section~5.1]{golub2013}. If each angle is rounded to error $h$, the
telescoping product of their unitary differences bounds the state-vector
error by $d(d-1)h/4$.
Taking $h\le c\epsilon/n^2$ pays the circuit part of the reserved error.
This proves polynomial implementability, not an optimal bit or gate cost.

\subsection{A matching lower bound for fixed-basis quartic readout}

\begin{alem}[Necessary copies, including adaptive quartet choices]
\label{lem:opt-lower}
Fix $0<\epsilon_{\max}<1/\sqrt2$. For $n=4B$, $0<\epsilon\le \epsilon_{\max}$ and
$0<\delta\le1/8$, every learner that observes one raw fixed-basis
quartic sign per copy and succeeds uniformly on $\mathcal O$ requires
\begin{equation}
 N_{\rm copies}\ge c_{\epsilon_{\max}}\,
 \frac{n^4(n^2+\log(1/\delta))}{\epsilon^2}.
 \label{eq:opt-lower}
\end{equation}
The quartet may be chosen adaptively from the complete earlier transcript.
The estimate may be any density operator specified by that transcript;
success means $1-\bra\psi\hat\rho\ket\psi\le \epsilon^2$.
Classical computation is unrestricted.
\end{alem}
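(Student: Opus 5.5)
The plan is an information-theoretic argument: Fano's inequality for the $n^2$ term and a two-point Le Cam bound for the $\log(1/\delta)$ term. Both are run in a small neighbourhood of a suitably rotated base point of $\mathcal O$. The per-copy information is controlled through a single quartic Bernoulli sign.

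\emph{Step 1: local hypothesis family.} Fix a base frame $R_0\in SO(d)$ and consider $\rho_X=U_{R_0e^X}P_0U_{R_0e^X}^\dagger$ with $X\in\mathfrak m$ and $\|X\|_F\le c_1\epsilon$. The coordinates are $x(\rho_X)=\wedge^4(R_0)F(X)$. Lemma~\ref{lem:opt-geometry} gives $7\|X\|_F^2\le\|LX\|_2^2\le16\|X\|_F^2$ and quadratic control of the remainder $r(X)$. The state Gram identity~\eqref{eq:opt-state-gram}, with $\|G_X\Omega_B\|^2\asymp\|X\|_F^2$ on $\mathfrak m$, gives $1-|\langle\psi_X|\psi_{X'}\rangle|^2\asymp\|X-X'\|_F^2$ for small radii. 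The hypothesis $\epsilon\le\epsilon_{\max}<1/\sqrt2$ leaves a constant-size window in which this two-sided comparison holds. A Gilbert--Varshamov packing of the $D=m-21B=\Theta(n^2)$-dimensional ball then yields $e^{cD}$ states with pairwise infidelity at least $4\epsilon^2$. Any density operator $\hat\rho$ with $1-\langle\psi_X|\hat\rho|\psi_X\rangle\le\epsilon^2$ identifies $X$ uniquely, by the triangle inequality for the Bures angle, so success decodes the packing index.

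\emph{Step 2: per-copy information, uniformly over quartets.} A copy measured on quartet $I$ yields a $\pm1$ sign with mean $x_I(\rho_X)$. Its $\chi^2$ or KL divergence to the outcome law under the packing-averaged state is at most $(x_I(\rho_X)-\bar x_I)^2/(1-\bar x_I^2)$. The chain rule for mutual information handles adaptivity: conditional on the transcript, the $t$-th quartet is fixed, so
\[
 I(X;\text{transcript})\le N\max_{I}\max_X\frac{(x_I(\rho_X)-\bar x_I)^2}{1-\bar x_I^2}.
\]
It remains to make this maximum $O(\epsilon^2 D/r)$. For this, $R_0$ is chosen so that, first, all base coordinates satisfy $|x_I(\rho_0)|\le1/2$, and second, every row of $\wedge^4(R_0)L$ is delocalized: $\|L^{\mathsf T}\wedge^4(R_0)^{\mathsf T}e_I\|_2^2\le C\|L\|_F^2/r=C'D/r$. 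The packing is also restricted to directions $X$ whose coordinates in each $L^{\mathsf T}$-row are typical, which is possible by taking the packing as a random subset of a Gaussian cloud. Then $|(LX)_I|^2\le C\epsilon^2D\cdot(1/r)\cdot(\log\text{ factor})$. The remainder contributes $O(\epsilon^2)\cdot$ (the same delocalized row bound via~\eqref{eq:opt-derivatives}).

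Fano then gives $N\cdot C\epsilon^2/r\ge cD$, so $N\ge c\,n^4n^2/\epsilon^2$, since $r=\Theta(n^4)$ and $D=\Theta(n^2)$. For the $\log(1/\delta)$ term, take two states $X,X'$ with infidelity separation $4\epsilon^2$ along one delocalized direction. Each per-copy KL is then at most $C\epsilon^2/r$ regardless of the adaptively chosen quartet. The Bretagnolle--Huber bound, applied to the adaptive transcript by the same chain rule, forces $\delta\ge\frac14e^{-NC\epsilon^2/r}$, that is, $N\ge c\,r\log(1/\delta)/\epsilon^2$. Combining the two bounds gives~\eqref{eq:opt-lower}, with $c_{\epsilon_{\max}}$ absorbing the window constants.

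\emph{Main obstacle.} I expect the hardest step to be the uniform delocalization in Step 2 without losing logarithmic factors: every one of the $r$ quartet rows of $\wedge^4(R_0)L$ must carry only an $O(1/r)$ share of the total tangent energy $\|L\|_F^2\le16D$, and simultaneously no base coordinate may sit near $\pm1$. The first thing I would try is a Haar-random $R_0$ combined with concentration of the degree-four polynomials $x_I(\rho_{R_0})$ and of the row norms. Each is a low-degree polynomial in $R_0$ with mean of order $D/r$ (respectively $14B/r$ for $x_I^2$), and hypercontractivity plus a union bound over $r$ rows typically costs only a logarithm. To remove that logarithm, I would average the information bound over a random $R_0$ drawn with the hypotheses rather than fixing it. The Fano/Le Cam bound then only needs $\mathbb E_{R_0}$ of each row energy to be $O(D/r)$, and that follows exactly from $\wedge^4$-invariance and the Frobenius identity $\sum_I\|L^{\mathsf T}e_I\|^2=\|L\|_F^2$. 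The Bernoulli denominators are handled by truncating the rare event where $|x_I|>1/2$.
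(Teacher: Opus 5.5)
There is a genuine gap in Step~2, and a secondary one in Step~1.

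\textbf{Step 2.} Your adaptive chain-rule bound is correct, but it carries $\max_X$ over the whole packing and $\max_I$ over all $r$ quartets. With those maxima, the per-copy term cannot be made $O(\epsilon^2/r)$.

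\begin{itemize}
\item For a row $w_I$ of $\wedge^4(R_0)L$ with $\|w_I\|_2^2\asymp D/r$, Cauchy--Schwarz gives only $\epsilon^2D/r$. That is your first target, and it yields merely $N\gtrsim n^4/\epsilon^2$.
\item Restricting the packing to directions typical for all $r$ rows at once gives at best $\epsilon^2\log r/r$, so Fano yields $n^6/(\epsilon^2\log n)$. The line ``Fano then gives $N\cdot C\epsilon^2/r\ge cD$'' silently drops this factor.
\item Averaging over a Haar $R_0$ does not repair it. The decoder needs $R_0$, since $X$ is not identifiable from the state otherwise, so you must bound $I(X;T\mid R_0)$. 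An adaptive learner's quartets depend on data that carry information about $R_0$. The quantity you actually control is therefore $\mathbb E_{R_0}\max_I$, not $\max_I\mathbb E_{R_0}$, and invariance of $\mathbb E_{R_0}\|w_I\|_2^2$ does not suffice.
\item The nonlinear part is also uncontrolled per quartet. The estimates \eqref{eq:opt-derivatives} are Euclidean and give only $|[\wedge^4(R_0)r(X)]_I|\le 8\|X\|_F^2$. That is a per-copy contribution of order $\epsilon^4$, far above $\epsilon^2/n^4$.
\end{itemize}

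\textbf{Step 1.} Unique decoding requires overlap below $1-2\epsilon^2$, i.e.\ infidelity above $4\epsilon^2(1-\epsilon^2)$. Your threshold $4\epsilon^2$ exceeds one for $\epsilon>1/2$. For $\epsilon$ near $\epsilon_{\max}$ this requires nearly orthogonal hypotheses, which no small-radius chart with the linear comparison $1-|\langle\psi_X|\psi_{X'}\rangle|^2\asymp\|X-X'\|_F^2$ provides.

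\textbf{The missing idea.} Replace Fano around a central point by Assouad's lemma on a sign hypercube. Then only neighbouring hypotheses, which differ by one small plane rotation, need to be compared, and per-quartet control comes deterministically from $\ell^1$ flatness.

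The paper takes $u\in\{\pm1\}^{d_0}$ with $d_0=\binom{2B}{2}$. The rotations $e^{A_u}$ have entries $\pm a$, with $a^2=\kappa\epsilon^2/d_0$ and $\kappa=256/(1-2\epsilon_{\max}^2)$. They act only on $2B$ selected Majoranas, one mode per block, and are followed by a fixed Hartley-type rotation $H$ with $|\det H[I,J]|\le16/n^2$.

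A single sign flip changes the base coordinates by $O_{\epsilon_{\max}}(a)$ in $\ell^1$: by Duhamel, only quartets through $p$ or $q$ move. Flatness then gives $|x_I(u)-x_I(u^{(k)})|=O_{\epsilon_{\max}}(a/n^2)$ for \emph{every} quartet. So the neighbour divergence is at most $C_{\epsilon_{\max}}Na^2/n^4$, uniformly over adaptive choices. Assouad gives $N\gtrsim n^4/a^2\asymp n^6/\epsilon^2$ with no logarithm.

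The exact overlap $\prod_j\cos^2(\theta_j/2)$ gives $|\langle\psi_u|\psi_v\rangle|^2\le e^{-a^2h(u,v)/2}$ on good signs ($\|A_u\|_{\rm op}\le1/2$). Together with the large constant $\kappa$, this handles every $\epsilon\le\epsilon_{\max}$.

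Your $\log(1/\delta)$ two-point step is sound in outline once such a flat frame supplies per-quartet control. The paper uses an exact single-mode phase rotation whose total $\ell^1$ quartic change is at most $32\theta$.
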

\begin{proof}
\noindent\emph{A flat physical rotation.}
Let
\[
 H_{jk}=\frac{\cos(2\pi jk/d)+\sin(2\pi jk/d)}{\sqrt d},
 \qquad 0\le j,k<d,
\]
and negate one row if necessary to make its determinant positive.
For the unitary Fourier matrix $C-iS$, $(C-iS)(C+iS)=I$ and
the real index-reversal identity for $(C-iS)^2$ give
$C^2+S^2=I$, $CS=SC=0$, hence $(C+S)^2=I$.
Each four-entry row has squared norm at most $4(2/d)=8/d$;
Hadamard therefore gives $|\det H[I,J]|\le(8/d)^2$ and
\begin{equation}
 H\in SO(d),\qquad
 |\det H[I,J]|\le64/d^2=16/n^2 .
 \label{eq:opt-flat}
\end{equation}

\noindent\emph{Many physical directions.}
Select one mode in each reference block. Its reduced state is
$I/2^B$: the two complementary occupation strings of each block
are orthogonal. On the $2B$ selected Majoranas put
\begin{equation}
 \begin{gathered}
 d_0=\binom{2B}{2},\qquad \kappa=\frac{256}{1-2\epsilon_{\max}^2},\qquad a^2=\frac{\kappa \epsilon^2}{d_0},\\
 A_u=a\sum_{1\le p<q\le2B}u_{pq}F_{pq},\qquad \ket{\psi_u}=U_HU_{e^{A_u}}\ket{\Omega_B},\qquad u\in\{-1,1\}^{d_0}.
 \end{gathered}
 \label{eq:opt-hard-cube}
\end{equation}
These are allowed states for every sign vector.
The normalized trace of the selected-mode spin representation gives,
for plane angles $\theta_j$ of $e^{-A_u}e^{A_v}$,
\begin{equation}
 |\braket{\psi_u}{\psi_v}|^2
 =\left|2^{-B}\Tr(U_{e^{A_u}}^\dagger U_{e^{A_v}})\right|^2
 =\prod_{j=1}^B\cos^2(\theta_j/2).
 \label{eq:opt-overlap}
\end{equation}
The trace sums the eigenphases $\sum_{j=1}^B\theta_j(z_j-1/2)$
over every $z\in\{0,1\}^B$.

Call $u$ good when $\|A_u\|_{\op}\le1/2$.
For good $u,v$, diagonalize $iA_u,iA_v$, with eigenvalues
$\alpha_j,\beta_k$ and respective orthonormal eigenvectors $v_j,w_k$.
Expanding both Frobenius norms in these bases gives
\begin{equation}
 \begin{aligned}
 \|e^{A_u}-e^{A_v}\|_F^2
 &=\sum_{j,k=1}^{2B}|e^{-i\alpha_j}-e^{-i\beta_k}|^2
                         |\braket{v_j}{w_k}|^2\\
 &\ge\tfrac12\sum_{j,k=1}^{2B}|\alpha_j-\beta_k|^2
                         |\braket{v_j}{w_k}|^2
 =\tfrac12\|A_u-A_v\|_F^2 .
 \end{aligned}
 \label{eq:opt-exponential-separation}
\end{equation}
We used $|\alpha_j-\beta_k|\le1$.
If $h(u,v)$ is the Hamming distance, then
$\|A_u-A_v\|_F^2=8a^2h(u,v)$.
Equations~\eqref{eq:opt-overlap}--\eqref{eq:opt-exponential-separation}
and $1-z\le e^{-z}$ imply
$|\braket{\psi_u}{\psi_v}|^2\le e^{-a^2h(u,v)/2}$.

Given an estimate $\hat\rho$, choose a good sign $v$ maximizing
$\bra{\psi_v}\hat\rho\ket{\psi_v}$.
On success at a good $u$, both this overlap and the overlap with $u$
are at least $1-\epsilon^2$. The largest eigenvalue of $P_u+P_v$ is
$1+|\braket{\psi_u}{\psi_v}|$. Therefore
\begin{equation}
 2(1-\epsilon^2)\le\Tr[\hat\rho(P_u+P_v)]
 \le1+|\braket{\psi_u}{\psi_v}|,\qquad
 \frac{h(u,v)}{d_0}
 \le\frac{-4\log(1-2\epsilon^2)}{\kappa \epsilon^2}\le\frac1{32}.
 \label{eq:opt-hamming-decoder}
\end{equation}
The last inequality uses $-\log(1-x)\le x/(1-x)$ and the definition
of $\kappa$. The testing decoder need not be computationally efficient.

\noindent\emph{Good signs without conditioning the prior.}
Keep $u$ uniform on the entire sign cube.
For a complex unit vector $z$,
\[
 z^\dagger iA_uz=-2a\sum_{p<q}u_{pq}\operatorname{Im}(\bar z_pz_q),
 \qquad
 \sum_{p<q}4a^2\operatorname{Im}(\bar z_pz_q)^2\le2a^2 .
\]
A Rademacher exponential-moment bound and a $1/4$-net of the
complex unit sphere, of size at most $9^{4B}$, give
\begin{equation}
 \Prob_u\{\|A_u\|_{\op}>1/2\}
 \le2\exp\left(4B\log9-\frac1{64a^2}\right)
 \le2\exp\left(4B\log9-\frac{B^2}{64\kappa \epsilon_{\max}^2}\right).
 \label{eq:opt-good-cube}
\end{equation}
For $B\ge B_0(\epsilon_{\max})$, the bad-sign probability is $\le1/32$.
Keeping the full cube, including every edge,~\eqref{eq:opt-hamming-decoder}
and $h\le d_0$ give
\[
 \frac{\E h}{d_0}\le\frac1{32}+\Prob\{\text{failure}\}
       +\Prob\{\text{bad sign}\}
 \le\frac1{32}+\delta+\frac1{32}\le\frac3{16}.
\]

\noindent\emph{The information in one permitted sign.}
Put $M=\sqrt{\kappa \epsilon_{\max}^2}$; then $aB\le M$.
For the exterior-coordinate $\ell^1$ norm,
\[
 \|\wedge^4(g)e_J\|_1\le\prod_{j\in J}\|ge_j\|_1 .
\]
Only the selected $2B$ columns move; for $0\le t\le1$ their
$\ell^1$ norms are $\le e^{2aB}$. Each seed quartet contains at most
two such columns, giving
$\|\wedge^4(e^{A_u})x_0\|_1\le14Be^{4M}$.

If $u^{(k)}$ flips one sign $k=(p,q)$, Duhamel's identity reads
\begin{equation}
 [\wedge^4(e^{A_u})-\wedge^4(e^{A_{u^{(k)}}})]x_0 =\int_0^1\wedge^4(e^{tA_u})
 \,\mathrm d\wedge^4(2au_kF_{pq})\,
 \wedge^4(e^{(1-t)A_{u^{(k)}}})x_0\,dt .
 \label{eq:opt-duhamel-edge}
\end{equation}
The outer factor costs at most $e^{8aB}$; the derivation vanishes
outside quartets containing $p$ or $q$. For selected $j$,
$|(e^{tA}-I)_{pj}|\le ae^{2aB}$. Each selected index occurs in seven
seed quartets, bounding intermediate mass through $p$ (or $q$) by
$(7+14Ba e^{2aB})e^{6aB}$.
Thus~\eqref{eq:opt-duhamel-edge} is at most
$D_Ma$, with $D_M=4(7+14Me^{2M})e^{14M}$.
The flat minors \eqref{eq:opt-flat} give, for every sign and quartet,
\begin{equation}
 |x_I(u)|\le56e^{4M}/n,\qquad
 |x_I(u)-x_I(u^{(k)})|\le16D_Ma/n^2 .
 \label{eq:opt-weak-response}
\end{equation}
For $n\ge112e^{4M}$ every mean is at most $1/2$ in modulus.
The sign probabilities are $p_{u,I}(\pm)=(1\pm x_I(u))/2$.
For $s\in\{-1,1\}$ the allowed measurement has effects
\[
 E_{I,s}=\tfrac12(I+sB_I),\qquad
 p_{u,I}(s)=\bra{\psi_u}E_{I,s}\ket{\psi_u}
          =\tfrac12(1+sx_I(u)).
\]
Relative entropy is bounded by the classical chi-square divergence:
\[
 \begin{aligned}
 D_{\rm KL}(p_{u,I}\|p_{v,I})
 &\le\sum_{s\in\{-1,1\}}
       \frac{(p_{u,I}(s)-p_{v,I}(s))^2}{p_{v,I}(s)}\\
 &=\frac{(x_I(u)-x_I(v))^2}{2}
       \left(\frac1{1+x_I(v)}+\frac1{1-x_I(v)}\right)\\
 &=\frac{(x_I(u)-x_I(v))^2}{1-x_I(v)^2}
 \le\tfrac43(x_I(u)-x_I(v))^2.
 \end{aligned}
\]
Pad early stops with ignored records and include the state-independent
random seed in the transcript. Conditional on it, let $h_{t-1}$ be
the earlier quartets and signs and $\pi_t(I\mid h_{t-1})$ the selection law:
\[
 P_u(I_t=I,s_t=s\mid h_{t-1})
       =\pi_t(I\mid h_{t-1})p_{u,I}(s).
\]
The seed contributes zero relative entropy; at each common history
the policies cancel on their support. Averaging over the seed gives
\[
 \begin{aligned}
 D_{\rm KL}(P_u\|P_v)
 &=\sum_{t=1}^{N_{\rm copies}}\E_u\!\Bigl[
   \sum_{I\in\binom{[d]}4}\pi_t(I\mid h_{t-1})
   \sum_{s\in\{-1,1\}}p_{u,I}(s)\\
 &\hspace{7em}\times
   \log\frac{\pi_t(I\mid h_{t-1})p_{u,I}(s)}
             {\pi_t(I\mid h_{t-1})p_{v,I}(s)}\Bigr]\\
 &=\sum_{t=1}^{N_{\rm copies}}\E_u\!\left[
      D_{\rm KL}(p_{u,I_t}\|p_{v,I_t})\right].
 \end{aligned}
\]
In particular, the neighboring signs satisfy
\[
 D_{\rm KL}(P_u\|P_{u^{(k)}})
 =\sum_{t=1}^{N_{\rm copies}}\E_u\!\left[
 D_{\rm KL}(p_{u,I_t}\|p_{u^{(k)},I_t})\right]
 \le\frac{C_MN_{\rm copies}a^2}{n^4},
 \qquad C_M=\tfrac43(16D_M)^2.
\]
Testing each sign, averaging the remaining signs, and applying
Pinsker's inequality yields
\begin{equation}
 \E h(u,\hat u)
 \ge\frac{d_0}{2}
 \left(1-\sqrt{\frac{C_MN_{\rm copies}a^2}{2n^4}}\right).
 \label{eq:opt-assouad}
\end{equation}
Comparison with $\E h/d_0\le3/16$ forces
$N_{\rm copies}\ge c_{\epsilon_{\max}}n^4/a^2
 \ge c_{\epsilon_{\max}}n^6/\epsilon^2$.

\noindent\emph{Arbitrarily high confidence.}
Use the allowed phase pair
\[
 \ket{\psi_0}=U_H\ket{\Omega_B},\qquad
 \ket{\psi_\theta}=U_He^{i\theta(2a_1^\dagger a_1-I)}
                      \ket{\Omega_B},\qquad
 \cos\theta=(1-2\epsilon^2)^2 .
\]
Its overlap amplitude is $(1-2\epsilon^2)^2<1-2\epsilon^2$.
By the projector-sum argument of \eqref{eq:opt-hamming-decoder},
no density estimate can have infidelity at most $\epsilon^2$ to both states.
The eight initially nonzero coherence quartics acquire factors $\cos2\theta$,
while their eight initially zero partners acquire factors $\sin2\theta$, with total
$\ell^1$ difference at most
$8(|1-\cos2\theta|+|\sin2\theta|)\le32\theta$.
Furthermore $\theta^2\le2\pi^2\epsilon^2$, since
$1-\cos\theta=4\epsilon^2(1-\epsilon^2)$ and
$1-\cos\theta\ge2\theta^2/\pi^2$.
For $n\ge256$, flatness and the preceding Bernoulli calculation give
\[
 \max_I D_{\rm KL}(p_{0,I}\|p_{\theta,I})\le C\epsilon^2/n^4.
\]
Learning supplies a test of total error at most $2\delta$.
For transcript laws $P,Q$, let $\mathcal H(P,Q)=\int\sqrt{dP\,dQ}$.
Jensen and the variation--affinity inequality give
\[
 \mathcal H(P,Q)\ge e^{-D_{\rm KL}(P\|Q)/2},\qquad
 2\delta\ge1-\|P-Q\|_{\rm TV}
 \ge1-\sqrt{1-\mathcal H(P,Q)^2}
 \ge\tfrac12 e^{-D_{\rm KL}(P\|Q)}.
\]
The same conditional quartet-choice law contributes zero relative
entropy at each history $H_{t-1}$:
\[
 \begin{gathered}
 D_{\rm KL}(P\|Q)=\sum_{t=1}^{N_{\rm copies}}
    \mathbb E_P\!\left[
       D_{\rm KL}(p_{0,I_t}\|p_{\theta,I_t})\right]
   \le\frac{C N_{\rm copies}\epsilon^2}{n^4},\\
 \log\frac1{4\delta}\le D_{\rm KL}(P\|Q)
   \quad\Longrightarrow\quad
 N_{\rm copies}\ge\frac{n^4}{C\epsilon^2}\log\frac1{4\delta}
       \ge\frac{c n^4}{\epsilon^2}\log\frac1\delta .
 \end{gathered}
\]
Here $0<\delta\le1/8$; any private random seed has the same law under
both targets and may be included in the history.

\noindent\emph{The remaining finite dimensions.}
Before the common flat rotation the phase-pair squared overlap is
$F=(1-2\epsilon^2)^4$. Even arbitrary collective measurements on $N_{\rm copies}$
copies have minimum total testing error
$1-\sqrt{1-F^{N_{\rm copies}}}\ge F^{N_{\rm copies}}/2$.
It follows that
\[
 N_{\rm copies}\ge
 \frac{\log(1/(4\delta))}{-4\log(1-2\epsilon^2)}
 \ge\frac{(1-2\epsilon_{\max}^2)\log(1/(4\delta))}{8\epsilon^2}.
\]
The excluded dimensions are bounded solely in terms of $\epsilon_{\max}$;
decreasing the lower-bound constant covers them. Combining dimension
and confidence terms by $\max(a,b)\ge(a+b)/2$ proves~\eqref{eq:opt-lower}.
\end{proof}

\noindent\emph{Completion of Theorem~\ref{thm:reconstruction}.}
The nonadaptive upper bound~\eqref{eq:opt-total-copies} matches the
adaptive lower bound~\eqref{eq:opt-lower} in $n,\epsilon,\delta$.
Since nonadaptive learners are adaptive learners, both minimax budgets
have this order. Constants need not be uniform as
$\epsilon_{\max}\uparrow1/\sqrt2$.
The lower bound uses infidelity, equal to squared trace distance for
the constructed pure output; it does not cover arbitrary mixed estimates
judged only by trace distance $\epsilon$.
A fixed state-independent laboratory rotation is absorbed into the hard
family. State-dependent prerotations, extra outcomes per copy and
collective measurements lie outside this lower-bound model.

\subsection{The necessary budget without a readout restriction}
\begin{alem}[Worst-case necessary copies]
\label{lem:necessary-copies}
Let $n\ge4$, $0<f\le2^{-13}$ and $0<\delta\le1/8$.
Suppose a learner uses at most $N_{\rm copies}$ copies of each unknown
state and returns a density operator $\widehat\rho$ with
$1-\bra\psi\widehat\rho\ket\psi\le f$ with probability at least
$1-\delta$ for every state in the paired orbit on $4B$ modes,
where $B=\lfloor n/4\rfloor$ and the remaining modes are in a fixed vacuum.
Then, for a universal constant $c>0$,
\begin{equation}
 N_{\rm copies}\ge c\,\frac{n^2+\log(1/\delta)}{f}.
 \label{eq:necessary-copies}
\end{equation}
The bound permits arbitrary collective measurements and classical
computation. It therefore applies to
Theorem~\ref{thm:active-main}, for every $r_\star\ge2$, and to
Theorem~\ref{thm:reconstruction}, where $n=4B$ and $f=\epsilon^2$
(so this lemma requires $0<\epsilon\le2^{-13/2}$).
\end{alem}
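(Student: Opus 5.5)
We prove the two terms of~\eqref{eq:necessary-copies} separately and combine them with $\max(a,b)\ge(a+b)/2$. Each term comes from a packing of the paired orbit whose members are pairwise far apart in fidelity. Because collective measurements are allowed, we use information bounds that hold for arbitrary measurements: Holevo's bound with Fano's inequality for the $n^2$ term, and the Helstrom bound for the $\log(1/\delta)$ term.

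\emph{Confidence term.} We reuse the phase pair from the last part of the proof of Lemma~\ref{lem:opt-lower}. Set $\ket{\psi_0}=\ket{\Omega_B}$ and $\ket{\psi_\theta}=e^{i\theta(2a_1^\dagger a_1-I)}\ket{\Omega_B}$ with $\cos\theta=(1-2f)^2$; both are Gaussian images of the reference, padded with vacuum modes. The projector-sum argument of~\eqref{eq:opt-hamming-decoder} shows that no $\widehat\rho$ can have infidelity at most $f$ to both states, so any successful learner yields a test with total error at most $2\delta$. The squared overlap is $F=(1-2f)^4$, and the Helstrom bound on $N_{\rm copies}$ copies gives total error at least $F^{N_{\rm copies}}/2$. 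Therefore $N_{\rm copies}\ge\log(1/(4\delta))/(-4\log(1-2f))\ge c\log(1/\delta)/f$, using $f\le2^{-13}$ and $\delta\le1/8$.

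\emph{Dimension term.} We take the cube construction~\eqref{eq:opt-hard-cube} without the flat rotation $U_H$, since readout is unrestricted. Select one mode per block on $2B$ Majoranas, set $d_0=\binom{2B}2$, and define $\ket{\psi_u}=U_{e^{A_u}}\ket{\Omega_B}$ with $a^2=\kappa f/d_0$ for a fixed constant $\kappa$. By the Gilbert--Varshamov bound we keep a subset $\mathcal U$ of good signs ($\|A_u\|_{\op}\le1/2$, which holds for most of the cube by~\eqref{eq:opt-good-cube}) with pairwise Hamming distance at least $d_0/8$ and $\log|\mathcal U|\ge c d_0$. From~\eqref{eq:opt-overlap}--\eqref{eq:opt-exponential-separation}, $|\braket{\psi_u}{\psi_v}|^2\le e^{-a^2h/2}\le e^{-\kappa f/16}$. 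With $\kappa$ large, this fails the decoding condition~\eqref{eq:opt-hamming-decoder}, so a successful learner identifies $u$ with probability at least $1-\delta\ge7/8$.

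For the upper information bound, the uniform mixture $\bar\rho$ over $\mathcal U$ of $\psi_u^{\otimes N}$ has Holevo quantity at most $\frac1{|\mathcal U|}\sum_u S(\psi_u^{\otimes N}\|\sigma^{\otimes N})$ for any reference state $\sigma$. Choosing $\sigma$ to be the state $\ket{\Omega_B}$ dephased slightly along the $d_0$ tangent directions does not give a useful relative entropy for pure states. Instead we bound the Holevo quantity by the entropy of $\bar\rho$. The states $\psi_u^{\otimes N}$ lie, up to error, in the symmetric subspace spanned by $N$-fold tangent excitations of total weight $O(Na^2d_0)=O(Nf)$. A Taylor expansion in $A_u$ shows the mixture is supported, up to trace error $e^{-\Omega(Nf)}$, on a subspace of dimension $\binom{d_0+k}{k}$ with $k=O(Nf)$ excitations. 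This gives $\chi\le C\,Nf\log(d_0/(Nf)+e)$. Fano's inequality then requires $\chi\ge c d_0$, hence $N_{\rm copies}\ge c\,d_0/(f\log)$.

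The main obstacle is removing this logarithm to reach the clean bound $c\,n^2/f$. I would first try a sharper packing argument: excitations are bosonic-like with per-copy weight $a^2d_0$. Alternatively, I would use the known optimal pure-state tomography lower bound in the form of a quantum Assouad/Fano method with fidelity-based local entropy $\chi\le N\cdot O(f)\cdot d_0$-type bounds on a local sphere of radius $\sqrt f$. Locally around $\ket{\Omega_B}$ the orbit embeds a real $d_0$-dimensional sphere of pure states, since the tangent images are orthonormal by~\eqref{eq:opt-state-gram}. The standard $\Omega(D/f)$ pure-state lower bound for a $D$-dimensional real family, such as Haah et al.'s covering argument applied to a $\sqrt f$-cap, therefore transfers with $D=d_0=\Theta(n^2)$. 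The final step is to check that the nonlinear map $e^{A_u}$ distorts the cap only by constant factors on good signs, which follows from~\eqref{eq:opt-exponential-separation} and the derivative bounds of Lemma~\ref{lem:opt-geometry}.
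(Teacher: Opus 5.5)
Your confidence term is correct. It is the paper's two-point Helstrom argument with a different parametrization of the one-block phase rotation: the paper takes $\sin\theta=3\sqrt f$ with disjoint reconstruction balls, and your $\cos\theta=(1-2f)^2$ with the projector-sum test works equally well.

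The gap is in the $n^2/f$ term. Fano plus Holevo is a legitimate framework for collective measurements, but the argument then rests entirely on your entropy bound for $\bar\rho$, and the support claim behind it is false as stated.

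\begin{itemize}
\item \emph{The out-of-span component.} Besides the tangent images $G_{pq}\ket{\Omega_B}$, the second-order Taylor term of $U_{e^{A_u}}\ket{\Omega_B}$ has a component along the quartic words $\gamma_I\ket{\Omega_B}$ on the selected Majoranas. Its coefficients are proportional to the Pfaffians of the $4\times4$ principal submatrices of $A_u$. This component is orthogonal to $\ket{\Omega_B}$ and to the tangent span, and for $B\ge2$ it has weight $\Theta(a^4\binom{2B}{4})=\Theta(f^2)$ per copy.
\item \emph{Why it matters at $N\approx d_0/f$.} The weight of $\psi_u^{\otimes N}$ outside the tangent-excitation subspace is $1-(1-\Theta(f^2))^N$. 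This is close to one once $d_0f\gg1$, not $e^{-\Omega(Nf)}$. So these roughly $\binom{2B}{4}$ further directions (and more at higher order) must enter the entropy count, and you never bound their contribution uniformly in $n$.
\item \emph{The fallback does not repair this.} A Haah et al.-type covering bound is again Fano plus a Holevo estimate. Constant-factor control of the metric distortion of $u\mapsto e^{A_u}$ fixes the packing, not the $N$-copy entropy, which is exactly where the out-of-span weight enters. A lower bound for the linear model (real unit vectors near $\ket{\Omega_B}$ in the tangent span) does not by itself apply to the actual nonlinear family.
\item \emph{An aside on the logarithm.} Your displayed inequality would lose no logarithm if it held. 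With $x=Nf/d_0$ it reads $x\log(e+1/x)\ge c/C$, which already forces $x\ge x_0>0$.
\end{itemize}

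The paper avoids entropy counting altogether by applying Assouad's lemma on the full sign cube. Take $a^2=256f/d_0$; then $\|A_u\|_F^2=512f\le1/16$ for every sign, so neither a good-sign restriction nor a Gilbert--Varshamov code is needed. Your bound $e^{-a^2h/2}$ already gives $1-|\braket{\psi_u}{\psi_v}|^2\ge a^2h(u,v)/8$ here. What you are missing is the matching upper bound for neighbours:
\[
 1-|\braket{\psi_u}{\psi_{u^{(k)}}}|^2\le\sum_j\sin^2(\theta_j/2)=\tfrac18\|R_u-R_{u^{(k)}}\|_F^2\le\tfrac18\|A_u-A_{u^{(k)}}\|_F^2=a^2 .
\]
This follows from $\prod_j(1-x_j)\ge1-\sum_jx_j$ and $\|e^A-e^C\|_F\le\|A-C\|_F$. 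The argument then closes in three steps.

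\begin{itemize}
\item \emph{Upper bound on $\E h$.} On success, the trace-distance-nearest sign $\hat u$ has $h(u,\hat u)\le32f/a^2=d_0/8$, so $\E h\le15d_0/64$.
\item \emph{Lower bound on $\E h$.} For each coordinate, average over the other signs and use convexity. The two $N$-copy mixtures are then at trace distance at most $\sqrt{1-(1-a^2)^N}\le\sqrt{Na^2}$, under arbitrary collective measurements. Summing the $d_0$ Helstrom bounds gives $\E h\ge\frac{d_0}{2}(1-\sqrt{Na^2})$.
\item \emph{Conclusion.} The two bounds are incompatible unless $N>d_0/(4096f)$.
\end{itemize}

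No logarithm appears because the Hamming loss splits into $d_0$ binary tests, each seeing per-copy infidelity of only $a^2\sim f/d_0$.
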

\begin{proof}
\noindent\emph{Discrimination and reconstruction.}
Write $D(\rho,\sigma)=\frac12\|\rho-\sigma\|_1$.
The trace-distance bound and equal-prior Helstrom formula
are~\cite{nielsen2010}
\begin{equation}
 D(\ketbra\psi\psi,\widehat\rho)
 \le\sqrt{1-\bra\psi\widehat\rho\ket\psi},\qquad
 p_{\rm err}^{\rm opt}(\rho,\sigma)=\tfrac12[1-D(\rho,\sigma)].
 \label{eq:lower-discrimination}
\end{equation}
These bounds apply to randomized or adaptive reconstruction: its final
decision measures the available copies. Pad early stops with unused
copies to the worst-case budget.

\noindent\emph{An allowed family with many directions.}
Select one mode from each paired block and call this subsystem $A$;
call the remaining modes $E$. In the reordered fermionic occupation
basis, the paired seed has the Schmidt expansion
\begin{equation}
 \ket{\Omega_B}=2^{-B/2}\sum_{z\in\{0,1\}^B}\ket z_A\ket{v_z}_E,
 \qquad \braket{v_z}{v_w}=\delta_{zw}.
 \label{eq:lower-schmidt}
\end{equation}
The complementary strings $100,011$ are orthogonal; $\ket{v_z}$ absorbs
fermionic reordering signs. Every even Gaussian unitary on $A$ is
an allowed quadratic control. For its lift $U_R$, $R\in SO(2B)$, put
$\ket{\Psi_R}=(U_R\otimes I_E)\ket{\Omega_B}$.
\eqref{eq:lower-schmidt} gives
\begin{equation}
 \braket{\Psi_R}{\Psi_S}
 =2^{-B}\sum_{z\in\{0,1\}^B}\bra zU_R^\dagger U_S\ket z
 =2^{-B}\Tr(U_R^\dagger U_S).
 \label{eq:lower-spin-trace}
\end{equation}
If $R^{\mathsf T}S$ has plane angles $\theta_1,\ldots,\theta_B$,
its Gaussian lift has occupation eigenphases
$\sum_{j=1}^{B}\theta_j(z_j-1/2)$, up to a common sign. Thus
\begin{equation}
 \begin{gathered}
 \begin{aligned}
 F(R,S)&:=|\braket{\Psi_R}{\Psi_S}|^2\\
 &=\left|2^{-B}\sum_{z\in\{0,1\}^B} e^{i\sum_{j=1}^{B}\theta_j(z_j-1/2)}\right|^2
   =\prod_{j=1}^{B}\cos^2(\theta_j/2),
 \end{aligned}\\
 \|R-S\|_F^2=8\sum_{j=1}^{B}\sin^2(\theta_j/2).
 \end{gathered}
 \label{eq:lower-overlap}
\end{equation}
For $x_j=\sin^2(\theta_j/2)$ and $s=\sum_{j=1}^Bx_j=\|R-S\|_F^2/8\le1$,
$s/2\le1-e^{-s}\le1-\prod_{j=1}^B(1-x_j)\le s$, so
\begin{equation}
 \|R-S\|_F^2/16\le1-F(R,S)\le\|R-S\|_F^2/8.
 \label{eq:lower-metric}
\end{equation}

\noindent\emph{A finite family of nearby states.}
Let $d=\binom{2B}{2}$, $a^2=256f/d$, and index the signs
$u\in\{-1,1\}^d$ by pairs $1\le p<q\le2B$. In the standard real
orthonormal basis define
\begin{equation}
 A_u=a\sum_{1\le p<q\le2B}u_{pq}(\ketbra{p}{q}-\ketbra{q}{p}),
 \qquad R_u=e^{A_u},\qquad \ket{\Psi_u}=\ket{\Psi_{R_u}}.
 \label{eq:lower-cube}
\end{equation}
Here $\|A_u\|_F^2=512f\le1/16$.
For skew matrices $A,C$ with operator norms at most $1/4$,
Duhamel's identity and the telescoping power identity give
\begin{equation}
 \begin{gathered}
 e^A-e^C=\int_0^1e^{(1-t)A}(A-C)e^{tC}\,dt,\\
 A^k-C^k=\sum_{j=0}^{k-1}A^{k-1-j}(A-C)C^j,\\
 \begin{aligned}
 \|e^A-e^C-(A-C)\|_F
 &\le\sum_{k=2}^{\infty}\frac{k(1/4)^{k-1}}{k!}\|A-C\|_F\\
 &=(e^{1/4}-1)\|A-C\|_F<\tfrac12\|A-C\|_F.
 \end{aligned}
 \end{gathered}
 \label{eq:lower-exponential}
\end{equation}
The integral has orthogonal factors, so also
$\|e^A-e^C\|_F\le\|A-C\|_F$.
If $h(u,v)$ counts differing signs, then
$\|A_u-A_v\|_F^2=8a^2h(u,v)$ and
$\|R_u-R_v\|_F\le1/2$.
\eqref{eq:lower-metric}--\eqref{eq:lower-exponential} yield
\begin{equation}
 a^2h(u,v)/8\le1-|\braket{\Psi_u}{\Psi_v}|^2\le a^2h(u,v).
 \label{eq:lower-hamming-metric}
\end{equation}

\noindent\emph{Reconstruction would recover the signs.}
Let $\widehat u\in\arg\min_v
D(\ketbra{\Psi_v}{\Psi_v},\widehat\rho)$, with a fixed tie rule.
On success, the triangle inequality,~\eqref{eq:lower-discrimination}
and~\eqref{eq:lower-hamming-metric} give
\[
 \frac{a^2h(u,\widehat u)}8
 \le D(\ketbra{\Psi_u}{\Psi_u},\ketbra{\Psi_{\widehat u}}{\Psi_{\widehat u}})^2
 \le(2\sqrt f)^2=4f,
 \qquad h(u,\widehat u)\le32f/a^2=d/8.
\]
For uniform $u$,
\begin{equation}
 \E h(u,\widehat u)\le d/8+7\delta d/8\le15d/64.
 \label{eq:lower-decoder-error}
\end{equation}
This testing reduction need not be computationally efficient.

For neighboring sign vectors, the $N_{\rm copies}$-copy trace distance
is, by~\eqref{eq:lower-hamming-metric}, at most
\begin{equation}
 \sqrt{1-|\braket{\Psi_u}{\Psi_v}|^{2N_{\rm copies}}}
 \le\sqrt{1-(1-a^2)^{N_{\rm copies}}}
 \le\sqrt{N_{\rm copies}a^2}.
 \label{eq:lower-neighbors}
\end{equation}
Put $N=N_{\rm copies}$ and $P_u=\ketbra{\Psi_u}{\Psi_u}$.
For each $1\le k\le d$, average the other signs and pair $u_k=1$
with its sign flip $u^{(k)}$. Convexity, \eqref{eq:lower-neighbors}
and the equal-prior Helstrom bound~\eqref{eq:lower-discrimination} give
\begin{equation}
 \begin{gathered}
 \rho_{k,\pm}^{(N)}=2^{-(d-1)}\sum_{\substack{u\in\{-1,1\}^d\\u_k=\pm1}}
       (\ketbra{\Psi_u}{\Psi_u})^{\otimes N},\\
 D(\rho_{k,+}^{(N)},\rho_{k,-}^{(N)})\le2^{-(d-1)}\sum_{\substack{u\in\{-1,1\}^d\\u_k=1}}
      D(P_u^{\otimes N},P_{u^{(k)}}^{\otimes N})
 \le\sqrt{Na^2},\\
 \E h(u,\widehat u)=\sum_{k=1}^d\Prob\{\widehat u_k\ne u_k\}
 \ge\tfrac12\sum_{k=1}^d
       [1-D(\rho_{k,+}^{(N)},\rho_{k,-}^{(N)})]
 \ge\tfrac d2(1-\sqrt{N_{\rm copies}a^2}).
 \end{gathered}
 \label{eq:lower-bit-error}
\end{equation}
If $N_{\rm copies}\le d/(4096f)$, the last expression is at least
$3d/8$, contradicting~\eqref{eq:lower-decoder-error}.
Thus $N_{\rm copies}>d/(4096f)$. Since
$d=B(2B-1)\ge B^2$ and $B=\lfloor n/4\rfloor\ge n/8$,
this proves the $n^2/f$ term.

\noindent\emph{Confidence dependence.}
On one block use the allowed number-phase rotation
\begin{equation}
 \ket{\omega_\theta}
 =e^{i\theta(2a_1^\dagger a_1-I)}\ket\omega
 =\frac{e^{i\theta}\ket{1100}+e^{-i\theta}\ket{0011}}{\sqrt2}.
 \label{eq:lower-phase-pair}
\end{equation}
Take $\sin\theta=3\sqrt f$ and leave all other blocks unchanged.
The two states have squared overlap $1-9f$ and trace distance
$3\sqrt f$. Their radius-$\sqrt f$ reconstruction balls are disjoint,
so a successful learner distinguishes them with error at most $\delta$.
The pure-state Helstrom formula gives
\begin{equation}
 \delta\ge\tfrac12\left[1-\sqrt{1-(1-9f)^{N_{\rm copies}}}\right],\qquad
 N_{\rm copies}\ge\frac{\log[1/(4\delta(1-\delta))]}{-\log(1-9f)}
 \ge\frac{\log(1/\delta)}{54f}.
 \label{eq:lower-confidence}
\end{equation}
We used $-\log(1-9f)\le18f$ and
$\log[1/(4\delta(1-\delta))]\ge\frac13\log(1/\delta)$
in the stated range. Combining the two bounds by
$\max\{x,y\}\ge(x+y)/2$ proves~\eqref{eq:necessary-copies}.
Vacuum padding preserves every overlap. All hard states are already
allowed cap-two inputs under even quadratic controls, so no further
input promise or access assumption is imposed.
\end{proof}

\clearpage

\clearpage
\section{Batch-adaptive learning of the general family}
\label{app:active-reconstruction}
Throughout this appendix $r=r_\star$ is fixed. Constants $c_r,C_r>0$
depend only on $r$; sufficiently small tolerance constants and sufficiently
large measurement constants are chosen in the stated order. Copies are
independent; a later batch may use earlier records, but its settings are
fixed before that batch is read. All matrix spaces are finite-dimensional
with their displayed Frobenius or operator norms. Every search, sector
collection and precision refinement has the specified deterministic cap;
a failed computable norm, rank or positivity check returns failure.
This convention charges bad-data paths as well as successful ones.

\subsection{Quadratic fluctuations}
\begin{alem}[Quadratic fluctuations on the promised family]
\label{lem:batch-variance}
Let $\ket\psi=U_R\bigotimes_b\ket{\phi_b}$ satisfy the input promise
with particle cap $r$. Write its real covariance as $\Gamma$ and put
$D=(I-\sqrt{-\Gamma^2})/2$. For a real skew matrix $A$ and its quadratic
Hamiltonian $H_A$, and for quadratic operators $X$ and
$Y=cI+\sum_{i<j}y_{ij}i\gamma_i\gamma_j$ with $Y=Y^\dagger$,
\[
 \Var_\psi(H_A)\le C_r\{
    \Tr(DA^{\mathsf T}A)+\|[A,\Gamma]\|_F^2\},\qquad \|(Y-\langle Y\rangle_\psi)X\ket\psi\|^2\le C_r\sum_{i<j}|y_{ij}|^2\|X\ket\psi\|^2 .
\]
These inequalities include the purifying modes of any retained marginal.
\end{alem}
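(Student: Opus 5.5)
Both inequalities are invariant under the Gaussian rotation, so I will first reduce to the input frame and then argue block by block. Writing $\ket\psi=U_R\ket\phi$ with $\ket\phi=\bigotimes_b\ket{\phi_b}$, Lemma~\ref{lem:closure} gives $U_R^\dagger H_AU_R=H_{R^{\mathsf T}AR}$. The covariance transforms as $\Gamma=R\Gamma_\phi R^{\mathsf T}$, so $D$, $\Tr(DA^{\mathsf T}A)$ and $\|[A,\Gamma]\|_F$ are all conjugation invariant. Likewise $\sum|y_{ij}|^2$ is the Frobenius norm of the coefficient matrix and is preserved. It therefore suffices to treat $R=I$, where $\ket\phi$ is a product of core occupations and disjoint-branch blocks, each of definite parity.

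In that frame $\Gamma_\phi$ is block diagonal. Each core mode contributes a $2\times2$ block with $\Gamma^2=-I$, hence zero entry in $D$. Each input block contributes a block on its $2p_bs_b$ Majoranas: the one-particle density matrix of the block is diagonal with occupation $|c_{bl}|^2$ on every mode of branch $l$, and pairing terms vanish because every branch has the same number $p_b\ge2$ of particles (the argument of Lemma~\ref{lem:quadratic}). So on that block $\sqrt{-\Gamma^2}$ has entries $|1-2|c_{bl}|^2|$, and $D$ has weight $\min(|c_{bl}|^2,1-|c_{bl}|^2)$ on branch $l$.

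For the variance, split $A$ into three parts. The first, $A_{\rm d}$, acts within single blocks or core pairs. The second, $A_{\rm x}$, couples two distinct factors; for it, cross terms in $\langle H_A^2\rangle-\langle H_A\rangle^2$ factor into products of single-factor expectations of odd monomials, which vanish by parity as in~\eqref{eq:selection}, or of quadratics. The third, $A_{\rm cc}$, consists of core--core terms, whose variance is exactly $\frac14\|[A,\Gamma]\|_F^2$-type since the core is Gaussian and pure. The cross-factor term $\gamma_p\gamma_q$ with $p\in b$, $q\in b'$ then has second moment $\langle\gamma_p\gamma_q\gamma_q\gamma_p\rangle=1$, while its mean is a product of Majorana means, which is zero. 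To bound that $1$ by $D$-weights I will use the Gaussian identity on pure-core or near-deterministic modes: the sum over $q$ of such terms is controlled by $\Tr(DA^{\mathsf T}A)+\|[A,\Gamma]\|_F^2$, because $[A,\Gamma]$ detects couplings between modes with different covariance eigenvalues.

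For intrablock pieces, each block has at most $r$ particles but possibly many branches. I will write $H_{A_{\rm d}}\ket{\phi_b}$ and bound its norm branch by branch. A hopping or pairing term touching branch $l$ acts on $c_{bl}\ket{F_{bl}}$, giving a contribution of order $|c_{bl}|^2|A_{pq}|^2$ up to $C_r$ from combinatorics of at most $r$ occupied modes. Subtracting the mean removes the diagonal number part, leaving the part weighted either by $D$ or, where $|c_{bl}|^2>1/2$, by the commutator with $\Gamma$. The case $|c_{bl}|^2$ close to one, where $D$ is small but the hopping is off-diagonal between occupied and vacant modes, is the main obstacle. There I will show that such a term has $[A,\Gamma]_{pq}\approx\pm2A_{pq}$, since $\Gamma$ distinguishes occupation $|c|^2$ from $0$ or from other branches' occupations. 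The delicate subcase is the exchange between two branches of equal weight, where $[A,\Gamma]$ vanishes; that hopping moves a particle from one branch's set into another's, producing a state outside the block's span with weight $|c_{bl}|^2\le\min$, which is a $D$-weight.

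For the second inequality, $\|(Y-\langle Y\rangle)X\psi\|^2\le\|(Y-\langle Y\rangle_{X\psi})X\psi\|^2+|\langle Y\rangle_{X\psi}-\langle Y\rangle_\psi|^2\|X\psi\|^2$. I will bound both terms crudely by $\|Y-c\|_{\op}^2$ applied to quadratic operators. Since $X$ is quadratic and $\ket\psi$ is a superposition of at most $r$-particle factors, $X\psi$ lies in the span of states differing from $\psi$ in $O(1)$ factors. On such states a quadratic form's second moment is bounded by $C_r\sum|y_{ij}|^2$ via the same parity factorization: cross terms between untouched blocks vanish and each touched block contributes $C_r$.

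Purifying modes of a retained marginal are covered by applying everything to the global pure state, with $A$ and $Y$ supported on the retained modes.
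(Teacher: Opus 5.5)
Your reduction to $R=I$, and your split of $H_A$ into single-factor pieces and two-factor pieces with orthogonal output parity patterns, are sound and match the paper. The gap is in the step that carries the lemma: showing that modes with small $D$ enter only through $D$ or through $[A,\Gamma]$.

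Your per-branch rule, that a term touching branch $l$ contributes about $|c_{bl}|^2|A_{pq}|^2$, is false for pair creation. For $i,j\in F_{bl'}$ one has $\|a_i^\dagger a_j^\dagger\ket{\phi_b}\|^2=1-w_{bl'}$, because the term acts on every \emph{other} branch. When $w_{bl'}$ is small this is of order one while $D=w_{bl'}$, so it must be paid by $[A,\Gamma]$, whose anomalous multiplier is $2|1-w_i-w_j|$. Conversely, for a pairing between a mode of a dominant branch ($w=1-\epsilon$) and a rare-branch mode, that multiplier is at most $2\epsilon$, so the commutator is useless there. That term is paid by $D$ only because it annihilates both branches involved.

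Your cross-factor step has the same problem. Counting $\langle\gamma_p\gamma_q\gamma_q\gamma_p\rangle=1$ per monomial gives $|A_{pq}|^2$. Yet hopping between an occupied core mode and a mode of weight $1-\epsilon$ has variance $\epsilon|h|^2$ and commutator entry $O(\epsilon)$. The cancellations sit in the one-body correlations of the two factors; cross-factor second moments depend only on these, and you would have to compute and bound them. The Gaussian identity you invoke does not apply to non-Gaussian factors as stated.

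The missing idea is the paper's minority convention: in the product frame, particle--hole transform the dominant branch of each factor. Then every occupation is $u_i\le1/2$, each factor has at most $2r$ particles, $\langle a_ia_j\rangle=0$, and $\gamma=\diag(u)$. Occupied-to-vacant hoppings have become pairings, and dominant--rare pairings have become hoppings. With $h,\Delta$ the coefficients in this convention, the bounds are:
\begin{itemize}
\item the normal part is at most $C_r\Tr(h^2\gamma)$: inside a factor via $\|d\Gamma(h)\phi\|^2\le2r\Tr(h^2\gamma)$ in the eigenbasis of $h$, and across factors via $\Tr(h_{bc}^\dagger h_{bc}\gamma_c)$;
\item the anomalous part is at most $C_r\|\Delta\|_F^2$;
\item each anomalous entry is charged to $[A,\Gamma]$ when $u_i+u_j\le1/2$, where the multiplier $2(1-u_i-u_j)$ is at least one, and to $D$ otherwise.
\end{itemize}
This handles your ``main obstacle'' and all its subcases at once.

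For the second inequality, the split around $\mu=\langle Y\rangle_{X\psi}$ is an exact Pythagorean identity but reduces nothing. Each piece is bounded by the left-hand side itself, and recentering at $\mu$ destroys the factorwise centering you need afterwards. The crude bound by $\|Y-c\|_{\op}^2$ fails by a factor of order $n$. For $Y-c=\sum_k(2n_k-1)$ over $m$ occupied core modes, $\sum_{i<j}|y_{ij}|^2=m$ but $\|Y-c\|_{\op}^2=m^2$, while the left-hand side vanishes for $X=I$.

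What works, and what your last sentence gestures at, is centering on $\psi$ factor by factor:
\begin{itemize}
\item write $Y-\langle Y\rangle_\psi=\sum_bY_b+\sum_{b<c}Y_{bc}$, with $\langle Y_b\rangle=0$ and $Y_{bc}$ odd at both endpoints;
\item compress each local space to particle number at most $r+4$, so that $\sum_b\|Y_b\|^2+\sum_{b<c}\|Y_{bc}\|^2\le C_r\sum_{i<j}|y_{ij}|^2$;
\item decompose $X\ket\psi=\sum_{|S|\le2}P_SX\ket\psi$ by excited factors;
\item bound the creation, motion, retention and removal of excitations on each sector with combinatorial factors depending only on $|S|\le2$, using that transitions from different input sectors add in quadrature.
\end{itemize}
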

\begin{proof}
\noindent\emph{Physical variance.}
Conjugate into the product frame and use the minority convention for this
calculation only. Each factor then has at most $K=2r$ particles,
$\langle a_i a_j\rangle=0$ and
$\gamma=\diag(u_1,\ldots,u_n)$. Write
\[
 H_A=cI+\sum_{i,j=1}^n h_{ij}a_i^\dagger a_j+
 \frac12\sum_{i,j=1}^n(\Delta_{ij}a_i^\dagger a_j^\dagger+
                         \overline\Delta_{ij}a_j a_i).
\]
Centered one-factor terms and terms odd in two different factors have
orthogonal output parity patterns. For a transfer between factors $b,c$,
CAR and factorization give
\[
 \begin{gathered}
 \Big\|\sum_{i\in b,j\in c}h_{ij}a_i^\dagger a_j\ket\psi\Big\|^2=\Tr[h_{bc}^\dagger(I-\gamma_b)h_{bc}\gamma_c]
 \le\Tr(h_{bc}^\dagger h_{bc}\gamma_c),\\
 \Big\|\sum_{i\in b,j\in c}\Delta_{ij}a_i^\dagger a_j^\dagger
           \ket\psi\Big\|^2=\sum_{i\in b,j\in c}|\Delta_{ij}|^2(1-u_i)(1-u_j),\\
 \Big\|\sum_{i\in b,j\in c}\overline\Delta_{ij}a_j a_i\ket\psi\Big\|^2=\sum_{i\in b,j\in c}|\Delta_{ij}|^2u_i u_j .
 \end{gathered}
\]
Inside one factor, diagonalize $h$ and use
$(\sum_{i\in F}h_i)^2\le K\sum_{i\in F}h_i^2$ for $|F|\le K$.
In exterior multiplication, each output determinant receives at most
$\binom{K+2}{2}$ pair-creation terms, and pair removal has at most
$\binom K2$ terms. Cauchy--Schwarz, followed by the orthogonal parity sum,
therefore gives
\[
 \Var(H_A)\le C_r\{\Tr(h^2\gamma)+\|\Delta\|_F^2\}.
\]
Normal entries are weighted by $u_i+u_j$. Anomalous entries of
$[A,\Gamma]$ have multiplier $2(1-u_i-u_j)$; if $u_i+u_j\le1/2$
its magnitude is at least one, and otherwise $u_i+u_j>1/2$
controls the entry in $\Tr(DA^{\mathsf T}A)$. This proves the first bound.

\noindent\emph{A quadratic insertion.}
Return to the original particle convention. Set
$P_b=\ketbra{\phi_b}{\phi_b}$, $Q_b=I-P_b$ and
$P_S=\bigotimes_{b\in S}Q_b\bigotimes_{b\notin S}P_b$.
Then
\[
 X\ket\psi=\sum_{|S|\le2}P_SX\ket\psi.
\]
Compress each local space to particle number at most $M=r+4$;
this preserves both $X\ket\psi$ and $YX\ket\psi$. Write
$Y-\langle Y\rangle=\sum_bY_b+\sum_{b<c}Y_{bc}$, with
$\langle Y_b\rangle=0$ and $Y_{bc}$ odd at both endpoints, and let
$h_b=\|Y_b\|$, $h_{bc}=\|Y_{bc}\|$. The preceding exterior-power bounds,
centering, and disjoint coefficient supports imply
\[
 \sum_bh_b^2+\sum_{b<c}h_{bc}^2
 \le2^{12}(r+6)^2\sum_{i<j}|y_{ij}|^2.
\]
Insert $P_b+Q_b$ at both ends of every term. On the sector $|S|=l$,
the bounds for creating, removing and retaining one excitation are
$\sqrt{l+1}$, $\sqrt l$, $\sqrt l$ times $(\sum_bh_b^2)^{1/2}$.
For two endpoints, the respective bounds for creating two, moving one,
creating one while retaining one, and retaining two are
\[
 \sqrt{\binom{l+2}{2}},\quad
 2\sqrt2\,l,\quad
 2\sqrt{l(l+1)},\quad
 \sqrt{\binom l2}
\]
times $(\sum_{b<c}h_{bc}^2)^{1/2}$; removals are adjoints.
The first and third bounds follow by counting the possible new endpoints
of each output configuration and Cauchy--Schwarz. The moving term is
dominated by $l\|W\|\le l\|W\|_F$ for $W_{bc}=h_{bc}$.
Each transition changes $l$ by a fixed amount, so different input sectors
sum in quadrature. Parity strings preserve each configuration and have
norm one. Consequently
\[
 \|(Y-\langle Y\rangle)P_{\le k}\|
 \le100(k+2)\Big(\sum_bh_b^2+\sum_{b<c}h_{bc}^2\Big)^{1/2}.
\]
Taking $k=2$ proves the second bound with $C_r=2^{30}(r+6)^2$.
Gaussian conjugation preserves degree and the coefficient norm.
\end{proof}

\subsection{Measured covariance bands}

\noindent\emph{Conventions.}
Write $m=2n$, $d=(a_1,\ldots,a_n,a_1^\dagger,\ldots,a_n^\dagger)$ and
$C_{ij}=\Tr(\rho d_i^\dagger d_j)$. If $\bar i$ exchanges particles and holes, CAR gives
\[
 0\preceq C\preceq I,\qquad C=C^\dagger,\qquad
 C_{ij}+C_{\bar j\bar i}=\delta_{ij}.
\]
The estimated matrices preserve these linear identities. The relative
pilot is never spectrally clipped; a separate physical reference is
constructed below only for the Wick records. All unknown spectral
projectors below are analysis objects.

\begin{alem}[Measured covariance bands]
\label{lem:n7-covariance-bands}
Let $0<\eta\le1/(16n)$, $L=1+\log(C_r(n+1)/\eta)$.
Single-copy Gaussian measurements construct a finite spectral partition
with a matching exact comparison partition. At a constant success probability,
for $N_C\ge C_r n^5\eta^{-1}L$, their simultaneous alignment has
\begin{equation}
 a_C^2\le C_r n^3L^7/N_C,\qquad
 \|\widehat\Gamma-\Gamma\|_F^2\le C_r n^3/N_C.
 \label{eq:n7-band-risk}
\end{equation}
Here $a_C^2$ denotes the sum of the two minority-weighted squared
alignment norms. The corresponding unweighted error is at most
$C_ra_C^2/\eta$. The construction uses neither a supplied spectrum nor
an original covariance gap.
\end{alem}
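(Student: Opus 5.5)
The plan is to split the proof into an estimation step and a combinatorial cut-selection step, and to keep the unknown exact spectral projectors purely as analysis objects.

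\emph{Estimation.} First I would acquire $\widehat\Gamma$ from single-copy Gaussian settings. Decompose the complete graph on the $m=2n$ Majorana indices into $m-1$ perfect matchings, using the round-robin schedule. For each matching, apply the prescribed product of plane rotations (Lemma~\ref{lem:closure}) that maps its pairs to the canonical pairs $(2j-1,2j)$. Full occupation readout on that copy then returns the $n$ commuting signs $-i\gamma_p\gamma_q$ of the matched pairs at once. Spreading $N_C$ copies evenly over the $O(n)$ settings gives each entry $\Gamma_{pq}$ variance $O(n/N_C)$. Summing over $O(n^2)$ entries gives $\E\|\widehat\Gamma-\Gamma\|_F^2\le C_r n^3/N_C$, and Markov's inequality yields the second bound in~\eqref{eq:n7-band-risk} at constant probability. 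After antisymmetrizing, $\widehat\Gamma$ satisfies the CAR linear identities exactly.

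\emph{Minority weighting.} For the weighted alignment I would not use entrywise variances. Instead I would test $\widehat\Gamma-\Gamma$ against skew directions $A$ and apply Lemma~\ref{lem:batch-variance}: $\Var_\psi(H_A)\le C_r\{\Tr(DA^{\mathsf T}A)+\|[A,\Gamma]\|_F^2\}$. The point is that in the exact eigenframe of $\Gamma$, the noise seen by an off-band block $P_k(\cdot)P_l$ is controlled by the minority weights $D$ of the two bands involved, up to the commutator term, which is large only across distinct eigenvalues. Combining this with a Davis--Kahan (or Sylvester-equation) bound for each band projector, where the denominator is the band separation and the numerator has $D$-weighted variance, gives each squared alignment term of order $n^3/N_C$ times polylogarithmic factors. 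Summing the two minority-weighted alignment norms produces $a_C^2$.

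\emph{Cut selection.} To avoid assuming a spectral gap, I would place the cut points for $\sqrt{-\widehat\Gamma^2}$ (equivalently for $D$) on a geometric grid of $O(L)$ scales between $\eta$ and $1/2$, with a random shift inside each scale. By pigeonhole over the $2n$ eigenvalues and the $O(L)$ candidate windows per scale, each chosen cut has a window of relative width $\gtrsim 1/(nL)$ containing no true eigenvalue, except with small probability. The exact comparison partition is then defined by applying the same cuts to the true spectrum. Because the estimation error is below the window width once $N_C\ge C_r n^5\eta^{-1}L$, Weyl's inequality guarantees that the measured and exact partitions assign eigenvalues identically. The relative gap $\sim 1/(nL)$ enters squared in Davis--Kahan, and together with the union over scales and the chaining logarithms this accounts for the $L^7$. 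Modes with $D<\eta$ are grouped into the nearly deterministic band; dividing their weight by $\eta$ gives the stated unweighted bound $C_r a_C^2/\eta$.

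\emph{Main obstacle.} I expect the hardest step to be the second one: converting the state-dependent variance bound into a weighted projector perturbation bound uniformly over all bands. The difficulty is that the weights $D$ can be arbitrarily small, so naive operator-norm Davis--Kahan would divide by them. I would first attempt a blockwise Sylvester-equation argument, in which each off-diagonal block is divided only by the separation of its own pair of bands, combined with a chaining bound over skew directions to obtain the $D$-weighted noise in Frobenius norm.
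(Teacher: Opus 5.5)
Your plan has a genuine gap: the noise in your covariance estimate is isotropic, whereas the lemma needs noise that is relative (state-dependent). Getting relative noise requires adapting the measurement frame to the state.

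You acquire $\widehat\Gamma$ nonadaptively from round-robin matchings of the laboratory Majorana indices. The variance of your estimator of $\langle A,\widehat\Gamma\rangle$ is therefore $\sum_s\Var_\psi(H_{A_s})/N_s$, where $A_s$ is the restriction of $A$ to the laboratory matching $s$. It is not $\Var_\psi(H_A)/N$.

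Lemma~\ref{lem:batch-variance} bounds each term by $\Tr(DA_s^{\mathsf T}A_s)+\|[A_s,\Gamma]\|_F^2$. When the covariance eigenframe is generic relative to the laboratory frame, this is of order $\|A_s\|_F^2$, even if $A$ itself is a minority-supported off-band direction. For example, on a nearly pure Gaussian state each laboratory sign $i\gamma_p\gamma_q$ generically has variance $1-\Gamma_{pq}^2=\Theta(1)$. So $\E|E_{ij}|^2\asymp n/N_C$ in every basis, and no minority weight ever enters the noise. Two of your steps then fail.
\begin{itemize}
\item Weyl at the lowest scale. Since $\|E\|_{\op}$ is of order $n/\sqrt{N_C}$, Weyl's inequality needs $\|E\|_{\op}\lesssim\eta/(nL)$, i.e.\ $N_C\gtrsim n^4L^2/\eta^2$. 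This exceeds the permitted $n^5L/\eta$ by a factor $L/(n\eta)\ge16L$. The factor diverges as $n\eta\to0$, which is the regime $\eta=c_1f_0/n$ of the application.
\item Weighted alignment. In your weighted Davis--Kahan/Sylvester step, a minority pair at scale $h$ contributes about $h\cdot(n/N_C)/h^2$. Hence $a_C^2$ acquires a factor $1/\eta$ and cannot be the $\eta$-independent $C_rn^3L^7/N_C$.
\end{itemize}

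The missing idea is the paper's batch-adaptive relative pilot.
\begin{itemize}
\item Pilot construction. At each dyadic scale $h$, re-diagonalize the current estimate of the Nambu covariance $C$ and set $W=C_{\rm old}+2hI$. Then measure disjoint-pair hopping and pairing quadratures in that frame; their variances are at most $C_{ii}+C_{jj}$. Halving $h$ down to $\eta$, with final precision $\beta=(10^5m)^{-1}$ and $m=2n$, gives $-\beta(C+\eta I)\preceq C_p-C\preceq\beta(C+\eta I)$ at cost $C_rn^5\eta^{-1}L$. This is exactly where the hypothesis on $N_C$ comes from.
\item Replacing Weyl. The min--max principle gives the relative bounds $(1-\beta)x_i-\beta\eta\le y_i\le(1+\beta)x_i+\beta\eta$ between the exact eigenvalues $x_i$ and the pilot eigenvalues $y_i$.
\item Relative noise. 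The fresh bank is measured in the pilot eigenbasis. It then satisfies $\E|E_{ij}|^2\le\frac{C_rn}{N_C}(x_i+x_j+2\eta)$ in the exact eigenbasis, and this cancels the $1/h$.
\end{itemize}

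Two further points in your sketch would also need repair.
\begin{itemize}
\item Cut selection. A uniform gap of width $h/(nL)$, entering squared, loses a factor $n^2$ when many eigenvalues lie near the cut; you would get $n^5$ rather than $n^3$. The paper instead picks, among $256m$ nodes per scale, the admissible node minimizing $\sum_i|y_i-t|^{-1}$. This gives $\sum_i|x_i-t|^{-1}\le128mh^{-1}\log(1024m)$. It then integrates $\widehat P_t-P_t=-\pi^{-1}\int_0^\infty\Re M_y\,dy$ over dyadic distance classes.
\item Unweighted bound. Your argument runs the wrong way. For modes with $D<\eta$, a $D$-weighted norm gives no control of the unweighted error. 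The paper's resolvent estimate is weighted by $C+hI\succeq hI$ with $h\ge\eta$, and that is what justifies dividing by $\eta$.
\end{itemize}
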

\begin{proof}
\noindent\emph{Relative pilot.}
For $j\ne i,\bar i$, put $A=d_i^\dagger d_j$, $N_i=d_i^\dagger d_i$.
The measured hopping or pairing quadratures satisfy
\[
 X=A+A^\dagger,\quad Y=-i(A-A^\dagger),\qquad
 X^2=Y^2=N_i+N_j-2N_iN_j,
 \quad \Var X,\Var Y\le C_{ii}+C_{jj}.
\]
At scale $h$, diagonalize the preceding estimate and put
$W=C_{\rm old}+2hI$. The induction
$\tfrac12(C+2hI)\preceq W\preceq\tfrac32(C+2hI)$ gives
$hI\preceq W\preceq3(C+hI)$ and $C\preceq2W$.
At most $4n+1$ disjoint-pair Gaussian settings measure all independent
entries. Reflected entries are imposed from the smaller occupation allowance.
Scalar Bernstein and a union bound give, using $C\xi^{-2}\ell$ records
per setting,
\[
 |E_{ij}|\le8\xi(\sqrt{W_{ii}}+\sqrt{W_{jj}}),\qquad
 E=W^{1/2}F+FW^{1/2},\quad \|F\|_F\le8m\xi,
\]
\[
 \|(C+hI)^{-1/2}E(C+hI)^{-1/2}\|
 \le16\sqrt3\,m\xi/\sqrt h.
\]
Take $\xi=\beta\sqrt h/(32\sqrt3m)$, halve $h$ until $h\asymp\eta$,
and use $\beta=1/4$ except for the last round, where
$\beta=(10^5m)^{-1}$. Starting from $C_0=I/2$, $h_0=1$, this yields
\begin{equation}
 -\beta(C+\eta I)\preceq C_p-C\preceq\beta(C+\eta I),\qquad
 N_p\le C_rn^5\eta^{-1}L.
 \label{eq:n7-relative-pilot}
\end{equation}
Geometrically allocated round failures sum to a fixed constant; explicitly,
$\sum_{j=1}^M2^j[\ell+(M-j)\log2]\le C2^M(\ell+1)$.

\noindent\emph{Cuts selected from measured eigenvalues.}
For each dyadic $\eta\le h\le1/8$, let
$M=256m$, $b=h/M$, $g=4b$, $t_j=h+(j+1/2)b$.
Discard nodes within $g$ of a pilot eigenvalue or its particle--hole reflection.
At most $18m$ nodes are discarded. For one eigenvalue $y$,
\[
 \sum_{j:|t_j-y|\ge g}|t_j-y|^{-1}\le4b^{-1}(1+\log M).
\]
Choose the remaining node minimizing $\sum_i|y_i-t|^{-1}$.
The min--max principle applied to~\eqref{eq:n7-relative-pilot} gives
\[
 (1-\beta)x_i-\beta\eta\le y_i\le(1+\beta)x_i+\beta\eta,
\quad \min_i|x_i-t|\ge g/2,
\quad S_x(t):=\sum_{i=1}^m|x_i-t|^{-1}
       \le128m h^{-1}\log(1024m).
\]
For $x_i\le4h$ use $|x_i-y_i|\le5\beta h<g/4$;
for $x_i>4h$ use $|x_i-t|\ge x_i/2$.

\noindent\emph{A fresh covariance bank.}
In the pilot basis $W=C_p+\eta I$ is diagonal.
The fixed-cap quadratic-variance bound, expanded within each disjoint
matching, gives the complete-record estimate
\begin{equation}
 \mathbb E|\Tr(M^\dagger E)|^2
 \le\frac{C_rn}{N_C}\Tr[W(MM^\dagger+M^\dagger M)],
 \qquad
 \mathbb E|E_{ij}|^2\le\frac{C_rn}{N_C}(x_i+x_j+2\eta).
 \label{eq:n7-covariance-functional}
\end{equation}
The second expression is in an exact eigenbasis of $C$. Indeed a setting
with normal coefficients $z_{ij}$ has variance at most
$C_r\sum_{(i,j)}(W_{ii}+W_{jj})|z_{ij}|^2$; anomalous channels have a
particle--hole allowance bounded below by a constant and use the unweighted
quadratic bound. Variances add over independently measured settings,
not over outcomes within a setting. Each independent entry occurs a bounded
number of times and each setting receives $\lfloor N_C/(4n+1)\rfloor$ copies.

\noindent\emph{Finite resolvent perturbation.}
Set $A=C+hI$, $H=A^{-1/2}EA^{-1/2}$,
$R_y=(C-t-iy)^{-1}$ and $M_y=(C+E-t-iy)^{-1}-R_y$.
The event $\|H\|(1+3h/g_0)\le1/2$, $g_0=\min_i|x_i-t|$, implies
\[
 B_y^{-1}=\sum_{i=1}^m\frac{x_i+h}{x_i-t-iy}|i\rangle\langle i|,
 \qquad B_y=A^{-1/2}(C-t-iy)A^{-1/2},
\]
\[
 (B_y+H)^{-1}-B_y^{-1}
 =-(I+B_y^{-1}H)^{-1}B_y^{-1}HB_y^{-1},\qquad
 \|A^{1/2}M_y\|_F\le2\|A^{1/2}R_yER_y\|_F.
\]
Since $M_y^\dagger=M_{-y}$, use both signs before taking a weighted norm:
\begin{equation}
 \|A^{1/2}\Re M_y\|_F
 \le\tfrac12\bigl(\|A^{1/2}M_y\|_F+\|A^{1/2}M_{-y}\|_F\bigr)
 \le2\|A^{1/2}R_yER_y\|_F.
 \label{eq:n7-conjugate-resolvent}
\end{equation}
The last two linear-resolvent norms are equal by
\[
 \|A^{1/2}R_{\pm y}ER_{\pm y}\|_F^2
 =\sum_{i,j=1}^m
 \frac{(x_i+h)|E_{ij}|^2}
 {((x_i-t)^2+y^2)((x_j-t)^2+y^2)}.
\]
No contraction property for the weighted Hermitian part is assumed.
By~\eqref{eq:n7-covariance-functional},
\[
 \mathbb E\|(C+\eta I)^{-1/2}E(C+\eta I)^{-1/2}\|_F^2
 \le C_rn^3/(N_C\eta).
\]
Markov supplies the preceding resolvent event for every cut simultaneously
when $N_C\ge C_rn^5/\eta$.

\noindent\emph{Integrating the projector.}
Let $P_t=\mathbf1_{(-\infty,t)}(C)$,
$\widehat P_t=\mathbf1_{(-\infty,t)}(C+E)$ and $D=\min(C,I-C)$.
Then
\[
 \widehat P_t-P_t=-\pi^{-1}\int_0^\infty\Re M_y\,dy.
\]
Partition the distances $|x_i-t|$ into dyadic classes
$a_k\le|x_i-t|<2a_k$ of sizes $m_k$; their number is at most $CL$.
For $a\le b$,
\[
 \int_0^\infty\frac{dy}{\sqrt{a^2+y^2}\sqrt{b^2+y^2}}
 \le\frac{2+\log(b/a)}b,
\quad
 \sum_{k,l}\sqrt{m_km_l}\le CLm,
\quad
 \sum_{k,l}\frac{\sqrt{m_km_l}}{\max(a_k,a_l)}\le CL S_x(t).
\]
Minkowski,~\eqref{eq:n7-covariance-functional} and
\eqref{eq:n7-conjugate-resolvent} yield, on the shared good event $\mathcal G$,
\[
 \mathbb E\mathbf1_{\mathcal G}\|D^{1/2}(\widehat P_t-P_t)\|_F^2
 \le C_rn^3L^6/N_C,\qquad
 \mathbb E\mathbf1_{\mathcal G}\|\widehat P_t-P_t\|_F^2
 \le C_rn^3L^6/(N_Ch).
\]
The event indicator is discarded only after the deterministic resolvent
bound. Equal ranks follow by inertia along $C+uE-tI$, $0\le u\le1$.
Sum the two cumulative errors at each band edge, the reflected edges,
and the $O(L)$ scales; Markov and simultaneous polar alignment give
\eqref{eq:n7-band-risk}. The covariance term also follows by summing
\eqref{eq:n7-covariance-functional} over matrix units. Its commutator
cost is retained independently:
\[
 \sum_a\|[\Gamma,\widehat P_a]\|_F^2
 =\sum_a\|[\Gamma-\widehat\Gamma,\widehat P_a]\|_F^2
 \le2\|\Gamma-\widehat\Gamma\|_F^2.
\]
\end{proof}

\subsection{The weighted connected tensor and finite model error}
Let $P_b$ be the retained exact block projectors, $w_{bl}=|c_{bl}|^2$,
$u_{bl}=\min(w_{bl},1-w_{bl})\ge\eta$, and $D$ have branch values $u_{bl}$.
The retained block mass is $W_b=\sum_lw_{bl}\ge1/2$.
The measured calibration gives a central band $E_0$ and rare bands $E_a$
with minority complex structures $J_a$ and known scales $v_a$ satisfying
$u_{bl}\le v_a\le8u_{bl}$ on each rare branch; centrally $v_0=1$ and
$u_{bl}\ge1/8$. These exact branchwise statements hold in the aligned
comparison frame; the measured objects are used in the algorithm.
For distinct Majorana indices set $c_{ij}=\Tr(\rho\gamma_i\gamma_j)$ and
extend
\[
 \kappa_{ijkl}=\Tr(\rho\gamma_i\gamma_j\gamma_k\gamma_l)
       -c_{ij}c_{kl}+c_{ik}c_{jl}-c_{il}c_{jk}
\]
alternately, with repeated-index entries zero. In the orthonormal pair
basis $|ij\rangle$, $i<j$, define
$Q_{(ij),(kl)}=\kappa_{ijkl}$. Parity kills odd block multiplicities;
for a $(2,2)$ split the fourth moment and covariance product cancel.
Thus every nonzero column is supported within one block, and a Gaussian
core contributes zero. The column identification is Frobenius-isometric.
For $F_{ij}=(|i\rangle\langle j|-|j\rangle\langle i|)/\sqrt2$, put
\[
 R_\theta=\exp\!\left(\sum_a\theta_aJ_a\right),\quad
 \overline Q=\mathbb E_\theta(\wedge^2R_\theta)Q(\wedge^2R_\theta)^\mathsf T,
 \quad V|ij\rangle=\min(v_{a(i)},v_{a(j)})|ij\rangle,
 \quad Q_*=\overline QV^{-1/2}.
\]
Fifth-root phases suffice: every degree-four charge lies between $-4$ and $4$.
Each coefficient involves at most four bands, hence at most $5^4$ terms.
Let $L_e=\sum_{i<j}(Q_*)_{ij,e}F_{ij}$ and restrict the symmetric domain to
\[
 \mathfrak C_{\rm sym},\qquad
 \mathcal C(X)=E_0XE_0+\frac12\sum_a(E_aXE_a-J_aXJ_a),
 \qquad \mathcal A(S)=([L_e,S])_e.
\]
The restriction concerns the domain, not the individual quartic columns.

\noindent\emph{The computed Wick reference.}
Use the fine covariance bank, not only its pilot, to form $C_f$.
The paid relative event gives, for $\beta_f\le1/8$,
\[
 (1-\beta_f)C\preceq C_f+\beta_f\eta I.
\]
In its learned minority basis, choose each rare-band allowance at
least both its largest measured minority eigenvalue and $\eta$;
use allowance one centrally. Compressing the preceding positive-order
bound gives $G\preceq(9/7)V\preceq4V$ for the actual normal
one-particle density in that convention. Clip only the diagonal
reference occupations into their physical interval, so
\[
 0\le u_i^0\le v_i,\qquad
 \|\Gamma_0-\Gamma\|_F^2\le C_r E_C.
\]
Spectral clipping is the Frobenius projection onto the physical
covariance interval and respects particle--hole conjugation. It is not
used to assert the pilot's relative order. Minority coordinate changes
are compiled into Gaussian measurements with occupation-bit relabeling;
they do not alter the physical particle convention of a passive output.

\subsection{Physical records for the normalized quartic tensor}
\begin{alem}[Conditional covariance of the actual quartic records]
\label{lem:batch-quartic-records}
Condition on the preceding calibration data. In its computed frame let
$V=\diag(v_1,\ldots,v_n)$ and let $u_i^0$ be the reference occupations,
with $G_{ij}=\langle a_j^\dagger a_i\rangle$ in this computed
particle convention, and
\[
 \eta\le v_i\le1,\quad 0\le u_i^0\le v_i,\quad
 G\preceq4V,\quad
 \|\Gamma-\Gamma_0\|_F^2\le c_0\eta.
\]
The scales are constant on each rare band and one on the central band.
There is an unbiased single-copy Gaussian-occupation record $Z$ for the
phase-averaged, column-normalized Wick tensor. For every fixed coefficient
matrix $M$,
\begin{equation}
 \E|\langle M,Z-\E Z\rangle_F|^2\le C_rn^2\|M\|_F^2.
 \label{eq:batch-record-covariance}
\end{equation}
Writing its centered columns in the orthonormal skew basis as $E_e$,
\begin{equation}
 \|Z\|\le C_rn^3/\sqrt\eta,\qquad
 \|S\mapsto([E_e,S])_e\|\le C_rn^{7/2}/\sqrt\eta.
 \label{eq:batch-record-range}
\end{equation}
All expectations use the joint Born law on a copy.
\end{alem}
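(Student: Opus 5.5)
The record will be built from occupation readout in randomly rotated frames, and the two bounds then follow from low-degree arithmetic plus Lemma~\ref{lem:batch-variance}. I would proceed in three steps.

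\emph{Step 1: build the record $Z$ and check unbiasedness.} On each copy I draw a uniformly random phase vector $\theta$ from the fifth roots of unity on the rare bands. I also draw a uniformly random disjoint-pair matching, or equivalently one of a bounded family of Gaussian settings, each sending the chosen Majorana pairs to occupation parities. I then apply the Gaussian rotation $R_\theta$ composed with that setting and read all occupations. A full occupation string supplies every product of commuting pair parities $i\gamma_p\gamma_q$ in the setting, hence unbiased $\pm1$ estimates of all quartic and quadratic monomials aligned with that matching. Reweighting by the inverse selection probability of each pair of pairs, which is polynomial in $n$, gives an unbiased estimate of $\Tr(\rho\,\gamma_i\gamma_j\gamma_k\gamma_l)$ for every quartet. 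The connected subtraction uses the computed reference $\Gamma_0$ rather than the unknown $c_{ij}$, so the map stays linear in the Born law and unbiasedness is preserved. The discrepancy $\Gamma_0-\Gamma$ is deterministic given the calibration data and enters only the mean, which is why the hypothesis $\|\Gamma-\Gamma_0\|_F^2\le c_0\eta$ appears; that term is handled by the model-error lemma that follows. Phase averaging is exact because the phase is sampled, and column normalization by the known $V^{-1/2}$ is linear.

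\emph{Step 2: the covariance bound \eqref{eq:batch-record-covariance}.} For fixed $M$, the quantity $\langle M,Z\rangle$ is a single random variable per copy. Conditioning on the phase and the setting, it is a linear combination of commuting parity products: quartic terms with coefficients $M_{(ij),(kl)}v^{-1/2}$ times the inverse selection weight, plus quadratic terms multiplied by reference entries. I would bound its variance in three pieces.
\begin{enumerate}
\item The randomness of the setting and phase costs at most the second moment of the reweighted coefficients, giving a factor $n^2$ against $\|M\|_F^2$.
\item The normalization $v^{-1/2}$ is dangerous only on rare bands. There the variance of a product involving a rare mode is proportional to its occupation, which is at most $G\preceq 4V$. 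This cancels $v^{-1}$, and the argument parallels the first bound of Lemma~\ref{lem:batch-variance}.
\item The quartic-times-reference cross terms are handled by the second, insertion bound of Lemma~\ref{lem:batch-variance}, taking $Y$ quadratic and $X$ the other pair operator. This gives $C_r\sum|y|^2\|X\psi\|^2$, and again $\|X\psi\|^2$ is controlled by occupations of order $v$.
\end{enumerate}

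\emph{Step 3: the range bounds \eqref{eq:batch-record-range}.} These are crude deterministic estimates. Each entry of $Z$ is at most (inverse selection probability) $\times\,\eta^{-1/2}\times O(1)$. The pair-space matrix has $O(n^2)$ rows and columns, so a Frobenius bound gives $\|Z\|\le C_r n^3/\sqrt\eta$ after accounting for the $n$-dependent reweighting. The commutator map $S\mapsto([E_e,S])_e$ has norm at most $2(\sum_e\|E_e\|_F^2)^{1/2}\le 2\|Z-\E Z\|_F$, which adds at most a factor $\sqrt n$ from converting the operator norm to the Frobenius norm.

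\emph{Main obstacle.} I expect the main obstacle to be Step 2 on rare bands: proving that a variance of order $v$ genuinely cancels the $v^{-1}$ normalization. This must hold uniformly, including for cross-band quartets where only one index is rare, and without the aligned exact frame, only the computed one satisfying $G\preceq4V$. I would first try to reduce this to the minority-convention computation in Lemma~\ref{lem:batch-variance}, weighting each factor by $u_i+u_j\le 2\min$-allowance, and use $\min(v_{a(i)},v_{a(j)})$ in $V$ exactly for this purpose.
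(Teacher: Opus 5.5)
\textbf{Main gap: your Step~1 record is unbiased but too noisy.} It cannot satisfy \eqref{eq:batch-record-covariance}, and nothing in Step~2 repairs it.

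You estimate raw Majorana quartets from products of two matched pair parities, whose outcomes are always $\pm1$, and you realize the phase average by sampling $\theta$. On a rare band the copy is close to the minority vacuum, which every band phase leaves invariant. A cross-mode parity such as $i\gamma_{2i-1}\gamma_{2j-1}$ maps $\ket{00}$ to a multiple of $\ket{11}$, so its outcome is essentially a fair coin. Sampling $\theta$ cancels the charged words only in the mean, not in the second moment. A column-normalized entry inside one rare band therefore has variance of order (importance weight)$/v\sim n^2/v$. For $M$ equal to that single entry this reaches $n^2/\eta$, whereas the lemma allows no $\eta^{-1}$.

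Your item~2 ("variance proportional to occupation") holds only for band-number-conserving observables, and your record never measures those in isolation. The missing idea is to do the phase average algebraically:
\begin{itemize}
\item After reference contraction, each phase-averaged entry is a bounded combination, via $\Re[z(X+iY)(U+iV)]$, of products $F_eF_g$. The factors are $q_i=a_i^\dagger a_i-u_i^0$, hopping quadratures inside one band, and pairing quadratures only on the central band.
\item These products are measured with a random matching of \emph{modes} and two-mode Gaussian rotations, so each factor outcome has second moment $O(v_e)$.
\item The obstacle you flagged is then exactly \eqref{eq:batch-row-product}. Split the columns at $v_g=v_e$. When $v_g\le v_e$, apply the insertion bound of Lemma~\ref{lem:batch-variance} with inserted operator $\sum_gc_gF_g/\sqrt{v_g}$. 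When $v_g>v_e$, reverse the order using $[F_e,F_g]=0$.
\end{itemize}

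\textbf{Further points that are wrong as stated.}
\begin{itemize}
\item \emph{The role of $\|\Gamma-\Gamma_0\|_F^2\le c_0\eta$.} This hypothesis is needed in the variance, not only in the bias. The factors are centered at reference values, so in the computed frame they have nonzero means: occupation offsets and off-diagonal hopping means. Importance weighting turns squared means into variance. The insertion bound applies only to $Y-\langle Y\rangle_\psi$, and the paper closes the difference with $\sum_e|\langle F_e\rangle|^2/v_e\le C\|\Gamma-\Gamma_0\|_F^2/\eta\le Cc_0$.
\item \emph{Same-copy correlations.} Your item~1 ignores correlations between different quartets read on the same copy. These are the disjoint-support and shared-factor terms, with weights $p_4/p_2^2$ and $p_3/p_2^2$. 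The paper controls the disjoint part by inclusion--exclusion over $S\subseteq U_\tau\cap U_\sigma$ and the row bound on at most $2n^2$ rows.
\item \emph{Step~3.} A Frobenius count gives only $\|Z\|\le Cn^{7/2}/\sqrt\eta$. That does suffice for the commutator map through your $2\|Z-\E Z\|_F$. However, it is not because operator-to-Frobenius conversion costs $\sqrt n$; on pair space that conversion can cost a factor $n$. The operator bound $n^3/\sqrt\eta$ needs the $O(n)$ nonzero entries per row and column together with $\|Z\|\le\sqrt{\|Z\|_1\|Z\|_\infty}$.
\end{itemize}
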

\begin{proof}
\noindent\emph{Quadratic factors.}
Use
\[
 \begin{gathered}
 q_i=a_i^\dagger a_i-u_i^0I,\\
 h_{ij}^{R}=a_i^\dagger a_j+a_j^\dagger a_i,\qquad h_{ij}^{I}=-i(a_i^\dagger a_j-a_j^\dagger a_i),\\
 p_{ij}^{R}=a_i^\dagger a_j^\dagger+a_j a_i,\qquad p_{ij}^{I}=-i(a_i^\dagger a_j^\dagger-a_j a_i).
 \end{gathered}
\]
Hopping factors stay in one band; pairing factors occur only centrally.
Each factor $F_e$ has norm at most one and scale $v_e$.
For a number-preserving real skew generator $A$, let $J$ be the computed
vacuum structure, $K=A^{\mathsf T}A$, and
$G_{\R}=(2I+\Gamma J+J\Gamma)/4$. Since $[A,J]=0$,
\[
 D\preceq(I\pm i\Gamma)/2,\qquad \Tr(DK)\le\Tr(G_{\R}K),\qquad (\Gamma-J)^{\mathsf T}(\Gamma-J)\preceq4G_{\R},\qquad \|[A,\Gamma]\|_F^2\le16\Tr(G_{\R}K).
\]
Apply Lemma~\ref{lem:batch-variance} and $G\preceq4V$ to the normal part;
the central anomalous part has scale one. Reference centering costs
\[
 \sum_e\frac{|\langle F_e\rangle|^2}{v_e}
 \le C\|\Gamma-\Gamma_0\|_F^2/\eta\le Cc_0.
\]
Orthogonality of the coefficient matrices and scalar Cauchy--Schwarz give
\[
 \Big\|\sum_e b_eF_e\ket\psi\Big\|^2
 \le C_r\sum_e v_e|b_e|^2.
\]
For a fixed row $e$, split disjoint-support columns at $v_g=v_e$.
On $v_g\le v_e$, apply the insertion bound with inserted
$\sum_g c_gF_g/\sqrt{v_g}$; on $v_g>v_e$ reverse the insertion order,
using $[F_e,F_g]=0$. Both terms obey
\begin{equation}
 \Big\|F_e\sum_g\frac{c_gF_g}{\sqrt{\min(v_e,v_g)}}\ket\psi\Big\|^2
 \le C_r\sum_g|c_g|^2.
 \label{eq:batch-row-product}
\end{equation}

\noindent\emph{Physical record.}
Let $m=2\lceil n/2\rceil$. Draw a uniform perfect matching of $m$ labels,
using one classical dummy if necessary. On each pair choose uniformly
among identity and its permitted quadratures; the number $k_b$ of settings
is at most five. A two-mode Gaussian rotation diagonalizes each setting.
For occupations $x$, the complete law is
\[
 \Pr(M,s,x)=\frac{\prod_{b\in M}k_b^{-1}}{(m-1)!!}
                  |\langle x|U_{M,s}^\dagger|\psi\rangle|^2.
\]
For disjoint factors $\tau=\{e,g\}$ put
$T_\tau=F_eF_g/\sqrt{\min(v_e,v_g)}$ and let
$j_\tau=(|\operatorname{supp}e|-1)+(|\operatorname{supp}g|-1)$.
If the required pairs and settings occur, record the measured product divided by $\sqrt{\min(v_e,v_g)}$,
times $\prod_{b\in\mathcal B_\tau}k_b/p_{j_\tau}$, and otherwise record
zero, where $p_j=\prod_{a=0}^{j-1}(m-2a-1)^{-1}$.
A pair is counted only once even if both factors are singletons on it.
The spectral resolution of the joint measurement gives
\[
 \E Y_\tau=\langle T_\tau\rangle,\qquad
 \E Y_\tau^2\le25p_{j_\tau}^{-1}\|T_\tau\ket\psi\|^2.
\]

\noindent\emph{Shared-record terms.}
For $j_\tau=2$, put $U_\tau=\operatorname{supp}e\cup\operatorname{supp}g$.
Partition an ordered pair $(\tau,\sigma)$ by disjoint supports, one shared
colored factor, or equality. Incompatible settings contribute zero. The
second moment is exactly
\[
 \E\Big|\sum_{j_\tau=2}c_\tau Y_\tau\Big|^2
 =\frac{p_4}{p_2^2}D_0+\frac{p_3}{p_2^2}D_1+p_2^{-1}D_2.
\]
Here $D_0$ sums $\bar c_\tau c_\sigma\langle T_\tau^\dagger T_\sigma\rangle$
over disjoint supports, $D_1$ over one shared factor with its factor $k_e$,
and $D_2=\sum_\tau k_ek_g|c_\tau|^2\|T_\tau\ket\psi\|^2$.
For $n\ge8$ the three coefficients are bounded by $12,2m,m^2$.
With $O_S=\sum_{\tau:S\subseteq U_\tau}c_\tau T_\tau$,
\[
 D_0=\sum_{|S|\le4}(-1)^{|S|}\|O_S\ket\psi\|^2,\qquad \sum_{S\subseteq U_\tau\cap U_\sigma}(-1)^{|S|}=\mathbf1_{U_\tau\cap U_\sigma=\varnothing}.
\]
Assign each term once to a row. There are at most $2n^2$ rows and sixteen
subsets per support. Equation~\eqref{eq:batch-row-product} gives
$|D_0|\le32C_rn^2\sum_\tau|c_\tau|^2$.
For a shared row $e$, the identical calculation on the two-mode column
supports gives
\[
 \sum_{\operatorname{supp}g\cap\operatorname{supp}h=\varnothing}
 \bar c_{eg}c_{eh}\langle T_{eg}^\dagger T_{eh}\rangle =\sum_{|S|\le2}(-1)^{|S|}
 \Big\|F_e\sum_{g:S\subseteq\operatorname{supp}g}
       \frac{c_{eg}F_g}{\sqrt{\min(v_e,v_g)}}\ket\psi\Big\|^2.
\]
Thus $|D_1|\le40C_r\sum|c_\tau|^2$ and
$D_2\le25C_r\sum|c_\tau|^2$. For $4\le n<8$, the individual second moments
and the bounded number of colored terms give the same order directly.
For $j_\tau=1$, condition on the matching and group by the two-mode row:
there are at most $n$ rows, the squared multiplier is at most
$625(m-1)^2$, and the remaining pair indicator has expectation $(m-1)^{-1}$.
For $j_\tau=0$, there are at most $n$ singleton rows and squared multiplier
at most $625$. Consequently
\[
 \E\Big|\sum_\tau c_\tau Y_\tau\Big|^2
 \le C_rn^2\sum_\tau|c_\tau|^2.
\]

\noindent\emph{Wick tensor.}
For $a<b<c<d$, reference contractions give
\begin{equation}
 \begin{gathered}
 \begin{aligned}
 W_{abcd}={}&\gamma_a\gamma_b\gamma_c\gamma_d
 -c^0_{ab}\gamma_c\gamma_d+c^0_{ac}\gamma_b\gamma_d-c^0_{ad}\gamma_b\gamma_c\\
 &-c^0_{bc}\gamma_a\gamma_d+c^0_{bd}\gamma_a\gamma_c-c^0_{cd}\gamma_a\gamma_b
 +(c^0_{ab}c^0_{cd}-c^0_{ac}c^0_{bd}+c^0_{ad}c^0_{bc})I,
 \end{aligned}\\
 \langle W_{abcd}\rangle={}\kappa_{abcd}
 +(c-c^0)_{ab}(c-c^0)_{cd}-(c-c^0)_{ac}(c-c^0)_{bd}
 +(c-c^0)_{ad}(c-c^0)_{bc}.
 \end{gathered}
 \label{eq:batch-wick-record}
\end{equation}
Phase averaging removes nonzero rare-band charges. Three- and two-mode
entries reduce respectively to $-2q_i i\gamma_\alpha\gamma_\beta$ and
$-4q_iq_j$; the occupation and scalar parts of $q_i$ are kept together.
For four modes, expand the sixteen creation/annihilation words and use
\[
 \Re[z(X+iY)(U+iV)]
 =\Re(z)(XU-YV)-\Im(z)(XV+YU).
\]
There are at most $64$ products per entry with coefficients of modulus at
most four. If $w_\alpha$ is the column scale and $v_\tau$ the smaller
factor scale, then $v_\tau\le w_\alpha$, so a test matrix induces
\[
 c_\tau=\sum_{\alpha,l:\tau(\alpha,l)=\tau}
       \bar M_\alpha a_{\alpha l}\sqrt{v_\tau/w_\alpha},\qquad
 \sum_\tau|c_\tau|^2\le2^{23}\|M\|_F^2.
\]
The multiplicity is bounded by sixteen axis choices, six placements and
$64$ products. Centering decreases each scalar second moment, proving
\eqref{eq:batch-record-covariance}.

\noindent\emph{Record range.}
Each row and column of the flattened record has $O(n)$ nonzero entries,
each at most $C_rn^2/\sqrt\eta$. Hence
$\|Z\|\le\sqrt{\|Z\|_1\|Z\|_\infty}\le C_rn^3/\sqrt\eta$.
For $F_{ij}=(\ketbra ij-\ketbra ji)/\sqrt2$ and symmetric $S$,
\[
 \sum_{i<j}\|[F_{ij},S]\|_F^2
 =d\|S\|_F^2-(\Tr S)^2\le d\|S\|_F^2,
 \qquad d\le2n.
\]
Apply this identity to the column map and include the factor two from
centering to prove~\eqref{eq:batch-record-range}.
\end{proof}
\subsection{Weighted inversion and complete grouping}
\begin{alem}[Weighted inversion and finite perturbation]
\label{lem:n7-weighted-inverse}
The exact map has kernel $\operatorname{span}\{P_b\}$ and positive singular
values at least $s=\sqrt{\eta/(2000rn)}$. For each symmetric $S$ in its
known domain there are block scalars $t_b$ such that
\begin{equation}
 \left\|D^{1/2}\left(S-\sum_bt_bP_b\right)\right\|_F^2
 \le2000rn\|\mathcal A S\|^2.
 \label{eq:n7-weighted-inverse}
\end{equation}
If a full pure comparison source $\phi$ and the actual source $\chi$ satisfy
$d_2(\phi,\chi)=a$, their normalized tensor discrepancy obeys
\begin{equation}
 \|\mathcal P[Q(\chi)-Q(\phi)]V^{-1/2}\|_F
 \le C_r\sqrt{n+1}\,a+
       \frac{40n^2a^2+12\|c^\chi-c^\phi\|_F^2}{\sqrt\eta}.
 \label{eq:n7-finite-tensor-bias}
\end{equation}
Only the comparison source is required to have the block-product form.
\end{alem}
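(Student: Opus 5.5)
We plan to prove the weighted inversion bound and the finite-perturbation estimate separately, working in the aligned comparison frame where the exact state is $U_R$ applied to a block product with retained projectors $P_b$.

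\emph{Kernel and weighted inversion.} First compute $\mathcal A$ exactly on the comparison source, following the pattern of~\eqref{eq:opt-exact-commutant}. Because every nonzero column of $Q$ is supported in a single block, the normalized operators $L_e$ decompose into block-diagonal pieces. After phase averaging over $R_\theta$ and normalizing by $V^{-1/2}$, the connected tensor of each branch superposition $\sum_l c_{bl}\ket{F_{bl}}$ has columns equal to the coherence operators between distinct branches, weighted by $\sqrt{w_{bl}w_{bl'}}/\sqrt{\min(v,v')}$. Since $v_a\le 8u_{bl}$, these weights are at least of order $\sqrt{u_{bl}}$. Within each branch's mode set, the generators include the local hopping and pairing directions acting on the occupied $p_b\le r$ modes.

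Next write $S=\sum_{b,c}P_bSP_c$. Off-diagonal pieces $P_bSP_c$ are detected by commutators with $L_e$ from block $b$, with a squared constant of at least $c\,u_{bl}$ on branch $l$. Diagonal traceless pieces are handled by a local Casimir-type computation, as in the $\sum_{p<q}A_{b,pq}SA_{b,pq}$ identity, restricted to the symmetric domain $\mathfrak C_{\rm sym}$. Summing these estimates should give $\|\mathcal AS\|^2\ge \|D^{1/2}(S-\sum_bt_bP_b)\|_F^2/(2000rn)$, with $t_b=\Tr(P_bS)/\Tr P_b$. The factor $n$ comes from the branch count, and the factor $r$ from the particle cap. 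Using $u\ge\eta$ then gives the singular value bound $s$. Establishing this lower bound uniformly in tiny weights is the main obstacle. I would first prove it for a single block of $s_b$ branches with explicit $2\times2$ coherence algebra between each pair of branches, and then show that different blocks and the core interact only through the off-diagonal term.

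\emph{Finite perturbation.} Write the moment part of $Q(\chi)-Q(\phi)$ as $\bra\chi W\ket\chi-\bra\phi W\ket\phi$, where $W$ are the normalized Wick records~\eqref{eq:batch-wick-record}. Choose the phase so that $\|\chi-\phi\|=a$. The difference splits into a linear term $2\Re\bra\phi W\ket{\chi-\phi}$ and a quadratic term $\bra{\chi-\phi}W\ket{\chi-\phi}$.

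The linear term is bounded by Cauchy--Schwarz against $\|W^\dagger$-columns$\,\ket\phi\|$. Here I would use the row-product bound~\eqref{eq:batch-row-product} and Lemma~\ref{lem:batch-variance}, which give $C_r\sqrt{n+1}\,a$ after projecting with $\mathcal P$.

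The quadratic term uses only the operator norms of the normalized quartic columns, which are at most $1/\sqrt\eta$. Counting $O(n^2)$ columns should give $40n^2a^2/\sqrt\eta$.

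Finally, the covariance-product corrections in $\kappa$ differ by the terms $(c^\chi-c^\phi)(c^\chi-c^\phi)$ and the cross terms. The cross terms with $c^\phi$ cancel against the linear Wick subtraction, so only the $12\|c^\chi-c^\phi\|_F^2/\sqrt\eta$ term remains, using the three pairings and a factor of $4$ for the normalization.
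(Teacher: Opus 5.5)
Your proposal has genuine gaps in both halves.

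\emph{Weighted inversion.} The center $t_b=\Tr(P_bS)/\Tr P_b$ does not work. Take one block with branch weights $(\tfrac12,\tfrac12-\epsilon,\epsilon)$, so $u\approx(\tfrac12,\tfrac12,\epsilon)$. Let $S$ be the identity on the $2p$ Majorana directions of the light branch; in the aligned frame this lies in $\mathfrak C_{\rm sym}$. The light branch sits in a rare band, so the phase average removes any coherence between it and the heavy branches. The only columns not commuting with $S$ then come from density--density entries $\pm4w_3w_j$, whose normalized size is at most $4w_j\epsilon/\sqrt{v_3}\le4w_j\sqrt\epsilon$. Hence $2000rn\|\mathcal AS\|^2=O_r(n\epsilon)$. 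Your center is $1/3$, and it gives $\|D^{1/2}(S-\tfrac13P_b)\|_F^2\approx2p/9$. Since only $u\ge\eta$ is assumed, $\epsilon\ll1/(rn)$ is allowed, and the inequality fails.

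This reflects a wrong structural picture. You transplant the unweighted full-$\Lambda^2W_b$ commutant of the paired case and posit inter-branch coherence columns of weight $\sqrt{w_iw_j}$. In fact, apart from $p=2$ coherences among central branches, inter-branch information enters only through the occupation four-form, which is quadratic in the probabilities. After normalization its squared weight is of order $\epsilon^3$, not $\epsilon$, for two branches of weight $\epsilon$. So no pairwise $2\times2$ argument gives control proportional to $u$.

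The paper proceeds in two steps. First it removes the branch-nonscalar part by independent branchwise $SU(p)$ averaging, not by a full-block Casimir identity. The within-branch $(1,1)$ generators satisfy $Q(T_\alpha)=\lambda T_\alpha$ with $\lambda=4w(1-w)$ and $\lambda^2/v\ge u/2$. Rectangular cross-branch blocks are caught one-sidedly via $\sum_\alpha\|T_\alpha M\|_F^2\ge\tfrac34\|M\|_F^2$. Second, it treats the branch scalars in aggregate around the probability-weighted mean $\bar t=W^{-1}\sum_iw_it_i$:
\[
 \sum_{i<j}w_iw_j(w_i+w_j)(z_i-z_j)^2
 =W\sum_iw_i^2z_i^2+\Bigl(\sum_iw_i^2\Bigr)\sum_iw_iz_i^2
 \ge\frac{W^2}{s_b}\sum_iu_iz_i^2,\qquad z_i=t_i-\bar t .
\]
Light branches are pinned through their coupling to heavy ones. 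The factor $n$ enters through $\sum_iw_i^2\ge W^2/s_b$, with $s_b\le n$ and $W\ge1/2$.

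\emph{Linear perturbation term.} Your quadratic term and your $\Delta c\,\Delta c$ terms are fine; with Wick reference $c^\phi$ the first-order covariance corrections cancel, as in the paper. The linear term at rate $C_r\sqrt{n+1}\,a$, however, does not follow from the tools you cite. By duality you need $\sup_{\|M\|_F=1}\|(O_M-\langle O_M\rangle_\phi)\phi\|\le C_r\sqrt{n+1}$ for the normalized quartic combination $O_M$.

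Summing the row-product bound~\eqref{eq:batch-row-product} over the $O(n^2)$ rows gives only $C_rn$. This is exactly how the paper gets $|D_0|\le32C_rn^2\sum_\tau|c_\tau|^2$ in Lemma~\ref{lem:batch-quartic-records}. Lemma~\ref{lem:batch-variance} controls only quadratic operators and single quadratic insertions. So you would prove $C_rn\,a$. That lost $\sqrt n$ is not slack: in Lemma~\ref{lem:n7-global-frame} the term $\sqrt n\,a_{\rm cal}$ is already of order $\nu$.

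The missing ingredient is the product structure of $\phi$:
\begin{itemize}
 \item Set $\xi=\chi-\langle\phi|\chi\rangle\phi\perp\phi$ and decompose $\xi=\sum_S\xi_S$ by the set $S$ of excited source blocks, so that $\sum_S\|\xi_S\|^2=\|\xi\|^2$.
 \item The centered Wick factors $W_4-\langle W_4\rangle$, $W_3$, $W_2$, $W_1$ have zero source expectation blockwise. Hence a quartet touching $S$ sees only $\xi_S$, with $|S|\le4$.
 \item Bound each $S$ by weighted annihilation counts in normal Nambu order, so that the $O(1)$ raw Majorana contractions cancel and only the minority two-form $g$ appears. This gives $C_r(1+\sum_iu_i/v_i)\|\xi_S\|^2$.
 \item Orthogonality over $S$ then yields the factor $\sqrt{n+1}$.
\end{itemize}
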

\begin{proof}
\noindent\emph{Nonscalar and scalar branch directions.}
On one branch choose traceless Hermitian $h_\alpha$ with
$\Tr(h_\alpha h_\beta)=\delta_{\alpha\beta}$ and
$T_\alpha=\operatorname{real}(ih_\alpha)/\sqrt2$.
For $x_i=2i-1$, $y_i=2i$, $\lambda=4w(1-w)$,
\[
 Q_{x_i y_i,x_j y_j}=-\lambda,\qquad
 Q_{x_i x_j,y_i y_j}=\lambda,\qquad
 Q_{x_i y_j,y_i x_j}=-\lambda\quad(i\ne j).
\]
Diagonal traceless and real/imaginary matrix units therefore give
$Q(T_\alpha)=\lambda T_\alpha$. The branch phase fixes these $(1,1)$
forms; $p=2$ coherences have types $(2,0)$ or $(0,2)$ and are orthogonal.
Since $v\le8u$ and $w(1-w)\ge u/2$, $\lambda^2/v\ge u/2$.
A real symmetric branch map has the form $z\mapsto Hz+K\bar z$,
$H=H^\dagger$, $K=K^{\mathsf T}$. The matrix-unit sums are
\[
 \sum_{\alpha=1}^{p^2-1}h_\alpha^2=(p-p^{-1})I,\qquad
 \sum_{\alpha=1}^{p^2-1}h_\alpha Xh_\alpha=\Tr(X)I-X/p,
\]
\[
 \begin{gathered}
 \begin{aligned}
 \sum_\alpha\|[h_\alpha,H]\|_F^2
 &=\sum_\alpha\Tr(Hh_\alpha^2H-Hh_\alpha Hh_\alpha
                -h_\alpha Hh_\alpha H+h_\alpha H^2h_\alpha)\\
 &=2p\|H\|_F^2-2(\Tr H)^2,
 \end{aligned}\\
 \begin{aligned}
 \sum_\alpha\|h_\alpha K+Kh_\alpha^{\mathsf T}\|_F^2
 &=2(p-p^{-1})\|K\|_F^2
       +2\Re\sum_\alpha\Tr(K^\dagger h_\alpha K h_\alpha^{\mathsf T})\\
 &=2(p+1-2/p)\|K\|_F^2.
 \end{aligned}
 \end{gathered}
\]
The last cross sum is $(1-1/p)\|K\|_F^2$ because $K^{\mathsf T}=K$.
Realification doubles squared norms and $T_\alpha$ contributes $1/2$.
For a rectangular cross-branch matrix $M$,
$\sum_\alpha\|T_\alpha M\|_F^2=(p^2-1)\|M\|_F^2/(2p)\ge3\|M\|_F^2/4$.
Independent branch $SU(p)$ averaging therefore produces a scalar-on-branch
$S_{\rm av}$. Invariance cancels the cross energy between $S_{\rm av}$
and $S-S_{\rm av}$. Using both endpoint generators on each rectangular
block and $\lambda^2/v\ge u/2$ yields
\[
 \|D^{1/2}(S-S_{\rm av})\|_F^2
       \le32\|\mathcal A(S-S_{\rm av})\|^2.
\]
For one block write $S_{\rm av}=\sum_it_i I_i$,
$W=\sum_iw_i$, $\bar t=W^{-1}\sum_iw_it_i$, $z_i=t_i-\bar t$.
Choose one physical mode in each of the distinct branches $i,j$.
The occupation four-form is phase invariant, and its cross-plane entry is
$|\overline Q_{(x_i x_j),(y_i y_j)}|=4w_iw_j$.
Its column weight is $v_{ij}=\min(v_i,v_j)$; the two symmetric commutator
entries cancel the $1/\sqrt2$ column normalization. Distinct branch pairs
give distinct squared entries. Thus each pair contributes at least
\[
 \frac{16w_i^2w_j^2}{v_{ij}}(t_i-t_j)^2
 \ge2w_iw_j\max(w_i,w_j)(t_i-t_j)^2
 \ge w_iw_j(w_i+w_j)(t_i-t_j)^2,
\]
where $v_{ij}\le8\min(u_i,u_j)\le8\min(w_i,w_j)$.
Consequently
\[
 \begin{aligned}
 \|\mathcal A S_{\rm av}\|^2
 &\ge\sum_{i<j}w_iw_j(w_i+w_j)(z_i-z_j)^2\\
 &=W\sum_iw_i^2z_i^2+\left(\sum_iw_i^2\right)\sum_iw_iz_i^2
       -2\left(\sum_iw_i^2z_i\right)\underbrace{\sum_iw_iz_i}_{0}\\
 &\ge\frac{W^2}{s_b}\sum_iu_iz_i^2
 \ge\frac1{4n}\sum_iu_iz_i^2.
 \end{aligned}
\]
Each branch scalar has multiplicity $2p\le2r$.
Summing over blocks proves~\eqref{eq:n7-weighted-inverse}.
All columns commute with every $P_b$, and the two zero-residual statements
force exactly these kernel elements. Finally $D\succeq\eta I$ gives the
singular bound. Restricting to $\mathfrak C_{\rm sym}$ preserves the kernel
and cannot lower that bound.

\noindent\emph{A finite pure-state expansion.}
Choose $a_0=\langle\phi|\chi\rangle\in[0,1]$ and
$\xi=\chi-a_0\phi$. Then
\[
 \|\xi\|^2=1-a_0^2,\quad a^2=2(1-a_0),\quad
 E=|\xi\rangle\langle\xi|-\|\xi\|^2|\phi\rangle\langle\phi|,
 \quad \|E\|_1=2\|\xi\|^2\le2a^2,
\]
\[
 |\chi\rangle\langle\chi|-|\phi\rangle\langle\phi|
 =a_0(|\xi\rangle\langle\phi|+|\phi\rangle\langle\xi|)+E.
\]
Set $c_{ij}=\langle\phi|\gamma_i\gamma_j|\phi\rangle$ and
$\Delta c=c^\chi-c$. Expanding all three covariance products gives
\[
 \begin{aligned}
 (\kappa_\chi-\kappa_\phi)_{ijkl}
 ={}&a_0D\kappa_\phi[\xi]_{ijkl}+\Tr(EW_{ijkl}(c))\\
 &-\Delta c_{ij}\Delta c_{kl}+\Delta c_{ik}\Delta c_{jl}
                     -\Delta c_{il}\Delta c_{jk}.
 \end{aligned}
\]
Here the Wick operator has one quartic, six quadratic and three scalar terms,
so $\|W_{ijkl}(c)\|\le10$.

\noindent\emph{Weighted linear term.}
For the source factor projectors $\Pi_b=|\phi_b\rangle\langle\phi_b|$, decompose
\[
 \xi=\sum_{\varnothing\ne S\subseteq\{1,\ldots,B\}}\xi_S,
 \quad
 \xi_S=\left(\bigotimes_{b\in S}(I-\Pi_b)
             \bigotimes_{b\notin S}\Pi_b\right)\xi,
 \quad \sum_S\|\xi_S\|^2=\|\xi\|^2.
\]
The centered Wick vectors factor as $W_4-\langle W_4\rangle$,
$W_3\otimes W_1$, $W_2\otimes W_2$,
$W_2\otimes W_1\otimes W_1$, or $W_1^{\otimes4}$, according to the
index partition $4$, $3+1$, $2+2$, $2+1+1$, or $1+1+1+1$.
Here $W_1(i)=\gamma_i$, $W_2(i,j)=\gamma_i\gamma_j-c_{ij}$,
$W_3(i,j,k)=\gamma_i\gamma_j\gamma_k-c_{ij}\gamma_k+c_{ik}\gamma_j-c_{jk}\gamma_i$.
Each factor has zero source expectation. Thus a quartet touching $S$
sees only $\xi_S$, with $|S|\le4$. Its source has at most $8r$ minority
particles; only sectors through $8r+4$ of $\xi_S$ contribute.
For a source through sector $P$, ordered annihilation sets obey
\[
 \sum_{|I|=k}\|d_I\phi\|^2=\left\langle\binom Nk\right\rangle,
 \quad
 \sum_{|I|=k}\sum_{i\in I}v_i^{-1}\|d_I\phi\|^2
 =\left\langle\binom{N-1}{k-1}\sum_i v_i^{-1}n_i\right\rangle.
\]
Band-phase retention pairs creation and annihilation weights, so
Cauchy--Schwarz for $\langle d_I\phi|d_J\xi_S\rangle$ gives the weighted
squared transition sum at degree $k+l\le4$ at most
$4(P+5)^{k+l}(1+\sum_i u_i/v_i)\|\xi_S\|^2$.
\noindent\emph{Normal ordering and all Wick derivatives.}
For CAR-linear $x_1,\ldots,x_4$, write
$\kappa_{1234}=\langle x_1x_2x_3x_4\rangle
-c_{12}c_{34}+c_{13}c_{24}-c_{14}c_{23}$.
An adjacent interchange gives
\[
 \kappa_{1234}+\kappa_{1324}
 =\langle x_1\{x_2,x_3\}x_4\rangle
   -c_{14}(c_{23}+c_{32})=0.
\]
The other adjacent interchanges have the same cancellation. Thus the
connected tensor is alternating and can be evaluated in normal Nambu
order; the vacuum contractions cancel before its norm is estimated.
On the touched modes let $e_i^\dagger,e_i$ denote the creation and
annihilation coefficient basis, with the same weight $v_i$ on both, and put
\[
 g=\sum_i u_i e_i^\dagger\wedge e_i,\qquad
 \|\alpha\|_{v,d}^2
 =\sum_{|I|=d}\left(\sum_{a\in I}v_a^{-1}\right)|\alpha_I|^2.
\]
Here $g$ is the normally ordered two-form, not the raw Majorana covariance
$c$. It is fixed by the band phases and obeys
\[
 \|g\|^2=\sum_i u_i^2\le P,\qquad
 \|g\|_{v,2}^2=2\sum_i u_i^2/v_i\le2P.
\]
For the normally ordered moments, every product-rule term is
\[
 \begin{gathered}
 \begin{aligned}
 \dot\kappa_{1234}={}&\dot M_{1234}
 -\dot g_{12}g_{34}-g_{12}\dot g_{34}
 +\dot g_{13}g_{24}+g_{13}\dot g_{24}\\
 &-\dot g_{14}g_{23}-g_{14}\dot g_{23},
 \end{aligned}\\
 \mathcal P\dot\kappa=\mathcal P\dot M_4-g\wedge\mathcal P\dot g.
 \end{gathered}
\]
The six-term exterior-product expansion and Cauchy--Schwarz give
\[
 \|\alpha\wedge\beta\|_{v,4}^2
 \le6\bigl(\|\alpha\|_{v,2}^2\|\beta\|^2+
             \|\alpha\|^2\|\beta\|_{v,2}^2\bigr).
\]
Apply the preceding transition bound to both $\dot M_4$ and $\dot g$;
the Nambu-to-Majorana change contributes $2^{d/2}$ at degree $d$, and
one quartic subset occurs in six flattening positions. In particular,
\[
 \|\mathcal P\dot Q\,V^{-1/2}\|_F^2
 \le3\|\mathcal P\dot\kappa^\gamma\|_{v,4}^2
 =48\|\mathcal P\dot\kappa\|_{v,4}^2.
\]
Sum the disjoint touched-set contributions to obtain
$\|\mathcal P DQ_\phi[\xi]V^{-1/2}\|_F
 \le256(8r+5)^2\sqrt{n+1}\|\xi\|$.
The remaining entries are bounded by $20a^2$; use
$6\binom{2n}{4}\le4n^4$ and $V\succeq\eta I$.
The three quadratic covariance terms contribute at most
$12\|\Delta c\|_F^2/\sqrt\eta$. This proves~\eqref{eq:n7-finite-tensor-bias}.
\end{proof}

\begin{alem}[Complete soft partition and simultaneous alignment]
\label{lem:n7-soft-partition}
Let $P_1,\ldots,P_B$ be an unknown orthogonal partition of $\mathbb R^d$,
$[D,P_b]=0$, $D\succeq\eta I$. Suppose the four measured symmetric
matrices admit comparison centers
$S_j^\diamond=\sum_ba_{jb}P_b$ and a measured center list assigning
each exact center to its own cell with margin $g/4$.
If the corresponding unweighted residual is sufficiently small compared
with $g$, the construction below returns orthogonal projectors
$\widehat P_b$ of the same ranks, with
\begin{equation}
 \sum_{b=1}^B\|D^{1/2}(\widehat P_b-P_b)\|_F^2
 \le Cg^{-2}\sum_{j=1}^4\|D^{1/2}(S_j-S_j^\diamond)\|_F^2.
 \label{eq:n7-soft-output}
\end{equation}
Only the measured matrices and measured center list are algorithm inputs.
\end{alem}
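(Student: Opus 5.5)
The construction uses the four measured matrices and the measured center list to define cells, and then projects each cell onto an orthogonal projector of the correct rank. Concretely, I would form the joint spectral data of $(S_1,\ldots,S_4)$. One route is a random real combination $S_\theta=\sum_j\theta_jS_j$ drawn from a fixed finite set of directions that separates the listed centers by at least $g/4$. Another is to handle the four matrices one at a time through spectral windows. For each listed center $\hat a_b\in\R^4$, let $\widehat P_b$ be the spectral projector of $S_\theta$ onto the window of radius $g/8$ around $\langle\theta,\hat a_b\rangle$. Because the listed centers are separated with margin $g/4$ and the exact comparison $S_\theta^\diamond=\sum_b\langle\theta,a_b\rangle P_b$ has exactly the $P_b$ as its eigenprojectors, the windows are disjoint and each contains exactly one exact eigenvalue cluster. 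Together the windows cover the spectrum. The rank statement then follows from continuity along $S_\theta^\diamond+u(S_\theta-S_\theta^\diamond)$, $0\le u\le1$. The hypothesis that the unweighted residual is small compared with $g$ ensures that no eigenvalue crosses a window boundary, so the ranks are preserved (an inertia argument as in Lemma~\ref{lem:n7-covariance-bands}). Orthogonality and completeness of the $\widehat P_b$ are automatic because they are spectral projectors of one symmetric matrix.

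For the weighted bound \eqref{eq:n7-soft-output}, write $E=S_\theta-S_\theta^\diamond$ and express $\widehat P_b-P_b$ through the Sylvester/resolvent identity. Off-diagonal blocks satisfy
\[
 P_c(\widehat P_b-P_b)P_{b'}\text{-type terms}:\quad
 \widehat P_bP_c=\frac{\widehat P_bEP_c}{\lambda_b-\langle\theta,a_c\rangle}\ \text{(to first order)},
\]
where the denominator is at least of order $g$. Since $[D,P_c]=0$, the weight $D^{1/2}$ commutes past $P_c$, giving $\|D^{1/2}P_c\widehat P_b\|_F\lesssim g^{-1}\|D^{1/2}P_cEP_b'\|_F$. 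To avoid any dependence on the number of blocks, I would not sum windows one by one. Instead I would use the exact Sylvester equation for the block-off-diagonal part of $\widehat P_b$ relative to the partition $\{P_c\}$, namely $S^\diamond X-XS' = $ (residual). Its solution operator is bounded by the spectral separation, and the Sylvester bound is Frobenius-isometric cell by cell. Summing over $b$ then gives $\sum_b\|D^{1/2}(I-P_b)\widehat P_b\|_F^2\le Cg^{-2}\|D^{1/2}E\|_F^2$. Combining this with the equal-rank identity $\|(\widehat P_b-P_b)\|$ controlled by twice the off-diagonal part (the weighted version needs care, since $D^{1/2}P_b\widehat P_b^\perp$ is the adjoint-side term, which is handled by applying the same Sylvester bound with the roles of $S$ and $S^\diamond$ swapped) gives the stated estimate. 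The last step is $\|D^{1/2}E\|_F^2\le\|\theta\|^2\sum_j\|D^{1/2}(S_j-S_j^\diamond)\|_F^2$ by Cauchy–Schwarz.

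\textbf{Main obstacle.} I expect the main obstacle to be the weighting. Sylvester bounds naturally control $\|(I-P_b)\widehat P_b\|_F$ in terms of $E$ multiplied on one side, but the weight $D^{1/2}$ must land on the correct side. $D$ commutes with the exact $P_c$ and not with $\widehat P_b$. I would first try placing $D^{1/2}$ only next to exact projectors: write $D^{1/2}(\widehat P_b-P_b)=D^{1/2}(I-P_b)\widehat P_b-D^{1/2}P_b(I-\widehat P_b)$. Each piece is then a solution of a Sylvester equation of the form $P_c^\perp S^\diamond$ versus $S$ restricted to the measured cell, with the weight acting on the exact side only. If that fails, I would fall back to $D\succeq\eta I$ only at second-order terms, which are quadratic in the unweighted residual and are absorbed by the smallness hypothesis.
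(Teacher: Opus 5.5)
\textbf{The gap: no single direction separates the centers.} The problem is the reduction to one matrix $S_\theta=\sum_j\theta_jS_j$ with hard spectral windows.

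The hypothesis separates the centers $a_b=(a_{jb})_j$ only as vectors in $\mathbb R^4$. It does not give a direction $\theta$ for which $\Delta_\theta:=\min_{b\ne c}|\langle\theta,a_b-a_c\rangle|$ is of order $g\|\theta\|$, and in general no such direction exists:
\begin{itemize}
\item If the centers lie in a set of diameter $\rho$, then for every $\theta$ some two projections are within $\rho\|\theta\|/(B-1)$ of each other.
\item $g$-separation in $\mathbb R^4$ allows $B$ of order $(\rho/g)^4$.
\item Hence $\Delta_\theta/\|\theta\|$ can be forced down to order $g^4/\rho^3$. Rescaling $\theta$ does not help, because $\|\theta\|^2$ reappears in your final Cauchy--Schwarz step.
\end{itemize}
Your Sylvester argument therefore yields a constant $C\|\theta\|^2\Delta_\theta^{-2}$, not $Cg^{-2}$.

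This matters in the application. In Lemma~\ref{lem:n7-global-frame}, $g=c/\sqrt n$ with up to order $n$ blocks. The centers stay separated only because two Gaussian centers in four dimensions come within distance $O(g)$ with probability of order $g^4$. A single random combination has typical minimal spacing of order $\|\theta\|B^{-2}$, which would turn $g^{-2}=O(n)$ into order $B^4$.

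Your alternative of hard windows on the $S_j$ one at a time fails for the same reason: a single coordinate $a_{jb}$ need not separate blocks. It fails also because the $S_j$ do not commute. Products of their spectral projectors are then not projectors, and the measured spectra need not have gaps at the window edges.

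\textbf{The missing idea: a joint smooth filter.} What is needed is a four-coordinate filter that requires neither commutation nor any measured spectral gap. The paper proceeds as follows.
\begin{itemize}
\item It takes $f_k(x)=\cos(\pi(x-kh)/(2h))$ on $|x-kh|\le h$, $h=g/256$. These satisfy $\sum_kf_k^2=1$ and $\sum_k|f_k(x)-f_k(y)|^2\le\pi^2|x-y|^2/(4h^2)$.
\item It stacks $V=\sum_c\ket{c}\otimes f_{k_4}(S_4)\cdots f_{k_1}(S_1)$. This is an isometry even though the factors do not commute.
\item Assigning cells to the nearest measured center gives soft weights $A_b=V^\dagger L_bV$ with $\sum_bA_b=I$.
\item $S_j^\diamond$ acts as the scalar $a_{jb}$ on $P_bD^{1/2}\ket{i}$. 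Telescoping the four factors and applying the Lipschitz inequality give $\sum_b\Tr[DP_b(I-A_b)]\le\pi^2h^{-2}\sum_j\|D^{1/2}(S_j-S_j^\diamond)\|_F^2$. The weight lands automatically on the exact side.
\item Thresholding $Q_b=\mathbf{1}_{[1/2,1]}(A_b)$ and using both resolutions of the identity bounds both sides of the weighted error.
\item The $Q_b$ are not mutually orthogonal. One polar factor $U$ of $\sum_bQ_bP_b(P_bQ_bP_b)^{-1/2}$ then gives $\widehat P_b=UP_bU^{\mathsf T}$, with $\Tr[D(U-I)^{\mathsf T}(U-I)]\le C\sum_b\Tr[DP_b(I-A_b)]$. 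This near-identity rotation is what the subsequent lemma uses.
\end{itemize}

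\textbf{The weighting is not the obstacle.} The issue you flagged as the main difficulty is not one. Suppose a single matrix with separation $\Delta$ were available. Put $D^{1/2}$ on the left of both $(I-P_b)\widehat P_b$ and $P_b(I-\widehat P_b)$; the left coefficients are scalars, and $[D,P_c]=0$. Summing over the two partitions already gives $C\Delta^{-2}\|D^{1/2}E\|_F^2$. The real obstruction is the separation, not the side on which the weight acts.
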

\begin{proof}
\noindent\emph{A complete finite filter.}
Put $h=g/256$ and
$f_k(x)=\cos(\pi(x-kh)/(2h))$ for $|x-kh|\le h$, and zero otherwise.
Then $\sum_kf_k(x)^2=1$ and
$\sum_k|f_k(x)-f_k(y)|^2\le\pi^2|x-y|^2/(4h^2)$.
Define
\[
 K_c=f_{k_4}(S_4)\cdots f_{k_1}(S_1),\qquad
 V=\sum_c|c\rangle\otimes K_c,
 \qquad V^\dagger V=I.
\]
There are at most $(2d)^4$ nonzero cells. Assign each to its nearest
measured center and let $L_b$ project onto labels assigned to $b$.
Thus
\[
 A_b=V^\dagger L_bV,\quad \sum_bA_b=I,
 \qquad Q_b=\mathbf1_{[1/2,1]}(A_b).
\]
The ideal filter $V_0$ obtained from $S_j^\diamond$ obeys
$L_bV_0P_b=V_0P_b$. Telescope the four factors, applying the scalar
Lipschitz identity on $P_bD^{1/2}|i\rangle$; all later factors form an
isometry. With $E_j=S_j-S_j^\diamond$,
\[
 \begin{aligned}
 \ell_D&:=\sum_b\Tr[DP_b(I-A_b)]
   =\sum_b\|(I-L_b)VP_bD^{1/2}\|_F^2\\
 &\le\sum_b\|(V-V_0)P_bD^{1/2}\|_F^2
 \le\pi^2h^{-2}\sum_{j=1}^4\|D^{1/2}E_j\|_F^2.
 \end{aligned}
\]
The last equality uses $E_j^\mathsf T=E_j$ and the orthogonal right supports,
not $[D,E_j]=0$.
\noindent\emph{Both sides of the weighted error.}
The inequalities $I-Q_b\preceq2(I-A_b)$, $Q_b\preceq2A_b$ give
\[
 \begin{gathered}
 \sum_b\Tr[DP_b(I-Q_b)]\le2\ell_D,\\
 \sum_b\Tr[D(I-P_b)Q_b]\le2\sum_b\Tr[D(I-P_b)A_b]
 =2\ell_D,\\
 \sum_b\|D^{1/2}(Q_b-P_b)\|_F^2\le4\ell_D.
 \end{gathered}
\]
The equality in the second line uses both resolutions of the identity.
The same proof without $D$ gives sufficiently small aggregate projector
error and hence equal ranks.

\noindent\emph{One polar factor for all blocks.}
Concatenate orthonormal bases of the $Q_b$ and polar-orthogonalize once.
In analysis gauges this is
\[
 X=\sum_bQ_bP_b(P_bQ_bP_b)^{-1/2},\qquad
 U=X(X^\mathsf TX)^{-1/2},\qquad
 \widehat P_b=UP_bU^\mathsf T.
\]
All inverse square roots act on their stated supports. With $F=X-I$,
$K_b=P_bQ_bP_b$, expand the two weighted orders separately:
\[
 \begin{gathered}
 \begin{aligned}
 (X-I)P_b
 &=(Q_b-P_b)P_bK_b^{-1/2}+P_b(K_b^{-1/2}-I)\\
 &=(Q_b-P_b)P_b+Q_bP_b(K_b^{-1/2}-I),
 \end{aligned}\\
 K_b^{-1/2}-I=(K_b-I)h_b(K_b)=h_b(K_b)(K_b-I),\\
 \|D^{1/2}F\|_F^2+\|FD^{1/2}\|_F^2\le C\sum_b\|D^{1/2}(Q_b-P_b)\|_F^2.
 \end{gathered}
\]
The functions $h_b$ are uniformly bounded because every $K_b$ is near $I$.
For $B_X=X^\mathsf TX$, use
$B_X-I=F+F^\mathsf T+F^\mathsf TF$ and
$U-I=FB_X^{-1/2}+(B_X^{-1/2}-I)$.
The same bounded-function factorization proves
$\Tr[D(U-I)^\mathsf T(U-I)]\le C\ell_D$ and
\eqref{eq:n7-soft-output}. The unweighted event puts $U$ near $I$,
so $U\in SO(d)$. If the measured inputs belong to a known real
$*$-algebra, all products, spectral functions and this polar factor
stay in it.
\end{proof}

\subsection{Two independent banks and the global frame}
\begin{alem}[Measured global frame]
\label{lem:n7-global-frame}
Put $f_0=\min(f,1/16)$, $L=1+\log(C_r(n+1)/f_0)$ and prescribe
\[
 \eta=c_1f_0/n,\quad \nu^2=c_2f_0/(n^3L^3),\quad
 N_C=\lceil C_rn^7L^{10}/f_0\rceil.
\]
Calibration followed by two fresh quartic banks and classical grouping
recovers a comparison frame at constant success probability with
\begin{equation}
 d_2(U_{\rm align}\Psi,\Psi)^2\le C_rn^2\nu^2L.
 \label{eq:n7-global-state-error}
\end{equation}
Its total copies are at most $C_rn^7L^{10}/f_0$ and its classical
processing is polynomial. The frame and all block ranks are learned.
\end{alem}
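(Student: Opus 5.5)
The proof chains the calibration, record, inversion and grouping lemmas in acquisition order, with parameters fixed by the prescribed $\eta,\nu,N_C$. \emph{Step 1 (calibration).} Run Lemma~\ref{lem:n7-covariance-bands} with $\eta=c_1f_0/n$ and $N_C$ copies. Since $N_C\ge C_rn^5\eta^{-1}L$ holds with room to spare, this gives bands, minority structures $J_a$, scales $v_a$ and the clipped Wick reference with $a_C^2\le C_rn^3L^7/N_C$ and $\|\widehat\Gamma-\Gamma\|_F^2\le C_rn^3/N_C$. With $N_C\asymp n^7L^{10}/f_0$, the requirement $\|\Gamma-\Gamma_0\|_F^2\le c_0\eta$ of Lemma~\ref{lem:batch-quartic-records} holds. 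The band alignment defines the pure comparison source $\phi$ (block-product form in the aligned frame), with $d_2(\phi,\Psi)^2\lesssim a_C^2$ controlled by the minority-weighted alignment and Lemma~\ref{lem:batch-variance}.

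\emph{Step 2 (two quartic banks and inversion).} Draw two fresh banks of $N_C$ records $Z$ each, built as in Lemma~\ref{lem:batch-quartic-records}. By \eqref{eq:batch-record-covariance}, a matrix Bernstein or net argument on the $O(n^2)$-dimensional symmetric domain bounds the centered operator $\mathcal A_{\rm emp}-\mathcal A$ acting on the low space. The second bank is used, as in Lemma~\ref{lem:opt-range-refresh}, so that the noise is independent of the range chosen from the first. Condition on the first bank and apply the second bank's Hoeffding/Bernstein bound uniformly over a net of the $B$-dimensional candidate space; the range bound \eqref{eq:batch-record-range} enters only through the logarithmic truncation, which gives the $L$ powers. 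The bias from replacing $\Psi$ by $\phi$ is \eqref{eq:n7-finite-tensor-bias}. Its linear term $C_r\sqrt{n+1}\,a$ and its quadratic term divided by $\sqrt\eta$ are both at most $\nu$ once $a^2\le C_rn^3L^7/N_C$. I then take random Gaussian probes $S_j$ in the measured low commutator space. By \eqref{eq:n7-weighted-inverse}, each has a block-scalar centre $S_j^\diamond=\sum_bt_bP_b$ with $\|D^{1/2}(S_j-S_j^\diamond)\|_F^2\le2000rn\,\nu^2$ after normalizing so that $\|\mathcal A S_j\|\lesssim\nu$.

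\emph{Step 3 (grouping and state error).} This is the step I expect to be the main obstacle. Lemma~\ref{lem:n7-soft-partition} needs separated centres with margin $g$ and a measured centre list, but the block scalars $t_b$ of a Gaussian probe may coincide or be close. I would use four independent probes and an anticoncentration argument: for Gaussian coefficients, all $\binom B2$ pairwise gaps exceed $g\asymp1/(B^2L)$ in at least one coordinate with constant probability. The centre list is obtained by one-dimensional clustering of the measured eigenvalues, with the cap on retries from the paper's convention. Since eigenvalues of small weight $D$ may be misplaced, the list will be defined through $D$-weighted spectral mass, which avoids dividing by branch weights. Lemma~\ref{lem:n7-soft-partition} then gives $\sum_b\|D^{1/2}(\widehat P_b-P_b)\|_F^2\le C g^{-2}n\nu^2$. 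Finally, the resulting $U_{\rm align}\in SO(d)$ is close to $I$ with weighted error $\Tr[D(U-I)^{\mathsf T}(U-I)]\lesssim n\nu^2L$ (absorbing $g^{-2}$ into the $L$ powers). Applying the first bound of Lemma~\ref{lem:batch-variance} to $\log U_{\rm align}$, through $\Var_\psi(H_A)\le C_r\{\Tr(DA^{\mathsf T}A)+\|[A,\Gamma]\|_F^2\}$ and $d_2^2\le\Var$ for small generators, gives \eqref{eq:n7-global-state-error}. The commutator term is controlled by the band commutator estimate at the end of Lemma~\ref{lem:n7-covariance-bands}, and the extra $n$ comes from summing over blocks. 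The copies total $3N_C$, and all steps use polynomially many eigendecompositions, so the processing is polynomial.
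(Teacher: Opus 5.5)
\textbf{Step 2: no low-space estimator.} Step~2 never specifies an estimator of the low commutator space that reaches accuracy $\nu$ with $O(N_C)$ copies. Bounding the centred operator $\mathcal A_{\rm emp}-\mathcal A$ on the low space controls only $\|W\mathsf P\|_{\mathrm{HS}}^2\lesssim C_rn^6/N$. At $N\asymp n^7L^{10}/f_0$ this is $\asymp f_0/(nL^{10})$, far above $\nu^2\asymp f_0/(n^3L^3)$. Choosing the space from one bank and then re-bounding noise on it with the second bank does not refine that space. A single-bank low eigenspace of $\mathcal A_1^\dagger\mathcal A_1$ also carries a bias from the nonzero mean $\E W^\dagger W$, amplified by $1/s$ with $s=\sqrt{\eta/(2000rn)}$.

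The missing idea is the symmetrized cross-Gram $\widehat G=\tfrac12(\mathcal A_1^\dagger\mathcal A_2+\mathcal A_2^\dagger\mathcal A_1)$ of two independent banks, cut at $s^2/4$ under $\max_i\|\Delta_i\|\le s/8$. This fixes the rank, hence the learned number of blocks, and yields $\|\mathcal A\widehat{\mathsf P}\|_{\mathrm{HS}}\le2(\|\mathsf R\bar\Delta\mathsf P\|_{\mathrm{HS}}+s^{-1}\|Z\mathsf P\|_{\mathrm{HS}})$. Only two noise terms enter: the range-projected noise, of variance $C_rn^4/N_Q$ rather than $n^6/N_Q$, and the mean-zero cross product $Z$, with its $n^9/N_Q^2$ term. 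That is why $N_Q\le C_rn^7L^4/f_0$ suffices. The analogy with the paired range refresh does not supply this, because there the exact operator has a constant spectral gap.

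\textbf{Step 3: wrong separation scale.} Step~3 contains a quantitative error that breaks \eqref{eq:n7-global-state-error}. With $g\asymp1/(B^2L)$, the factor $g^{-2}\asymp B^4L^2$ is polynomial in $n$ and cannot be absorbed into powers of $L$. Worse, with the prescribed $\nu,\eta$ the unweighted probe residual is $\|S_j-S_j^0\|_F^2\le C_rn\nu^2L/\eta\asymp1/(nL^2)$. This is not small compared with $g^2$ once $B\gg n^{1/4}$, so the hypothesis of Lemma~\ref{lem:n7-soft-partition} fails.

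The paper instead uses four-dimensional anticoncentration of the normalized centres $z_b/\sqrt{d_b}\in\R^4$:
$\Pr\{\|z_b/\sqrt{d_b}-z_c/\sqrt{d_c}\|\le4g\}\le32g^4\min(d_b,d_c)^2$, with $\sum_{b<c}\min(d_b,d_c)^2\le2n^2$. This permits $g=c/\sqrt n$. The $n^2$ in the statement is then $(g/4)^{-2}=O(n)$ times the weighted probe error $C_rn\nu^2L$. It is not a sum over blocks, since the soft-partition bound is already summed. The centre list comes from a grid search with the test $\sigma_{\min}((S_j-z_jI)_{j=1}^4)\le g/32$, accepting exactly $\operatorname{rank}\widehat{\mathsf P}$ components. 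That search is justified by the unweighted residual via $D\succeq\eta I$, not by $D$-weighted spectral mass.

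\textbf{Commutator term.} The commutator term in Lemma~\ref{lem:batch-variance} is not handled by the band commutator estimate of Lemma~\ref{lem:n7-covariance-bands}; that estimate concerns the calibration projectors. You need every grouping operation to stay in the band algebra. Then $[U,J_a]=0$ and $\Gamma=J(I-2D)$ on rare bands, while $D\succeq I/8$ centrally, which gives $\|[U,\Gamma]\|_F^2\le32\Tr[D(U-I)^{\mathsf T}(U-I)]$. Otherwise the commutator is controlled only by the unweighted norm, at a cost of $1/\eta$.

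\textbf{Smaller omissions.} The calibration state error needs $E_C$ as well as $a_C^2$. The bias must also include the Wick-reference residual $12\|c-c^0\|_F^2/\sqrt\eta$.
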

\begin{proof}
If the retained space is empty, return the calibrated core; its discarded
occupation and frame errors are already counted. For one retained block,
its projector is the full retained identity and the center search is skipped.
\noindent\emph{Calibration and bias.}
By~\eqref{eq:n7-band-risk}, Lemma~\ref{lem:batch-variance}, and simultaneous
polar alignment,
\[
 a_C^2\le C_rn^3L^7/N_C,\qquad E_C\le C_rn^3/N_C,
\]
\[
 a_{\rm cal}^2:=d_2(U_{R_C}\Psi,\Psi)^2
 \le C_r(a_C^2+E_C)\le C_rn^3L^7/N_C.
\]
The second term pays the covariance commutator in
Lemma~\ref{lem:batch-variance}; a small weighted alignment norm alone is
not substituted for state error. With the reference contractions $c^0$
fixed by calibration,~\eqref{eq:batch-wick-record} has no linear
reference-covariance bias: its three residual products are quadratic in
$c-c^0$. Their weighted flattening norm is at most
$12\|c-c^0\|_F^2/\sqrt\eta\le C_rE_C/\sqrt\eta$.
Consequently~\eqref{eq:n7-finite-tensor-bias} gives
\[
 b\le C_r\left(\sqrt n\,a_{\rm cal}+
                   (n^2a_{\rm cal}^2+E_C)/\sqrt\eta\right)\le c\nu.
\]
Here $b$ is the Frobenius norm of the common mean-bias generators,
including that reference-covariance residual. The three terms are respectively
$O_r(\sqrt{f_0}n^{-3/2}L^{-3/2})$,
$O_r(\sqrt{f_0}n^{-3/2}L^{-3})$ and
$O_r(\sqrt{f_0}n^{-7/2}L^{-10})$; increasing $C_r$ supplies the slack.

\noindent\emph{Two measured maps.}
Let $\mathsf P$ project onto $\ker\mathcal A$, $\mathsf R$ onto its range,
$\mathcal A_i=\mathcal A+\mathcal B+W_i$ and
\[
 \widehat G=\tfrac12(\mathcal A_1^\dagger\mathcal A_2+
                          \mathcal A_2^\dagger\mathcal A_1),\quad
 \Delta_i=\mathcal B+W_i,\quad
 Z=\tfrac12(\Delta_1^\dagger\Delta_2+\Delta_2^\dagger\Delta_1).
\]
The batches are independent conditional on calibration. If
$\max_i\|\Delta_i\|\le s/8$, then
\begin{equation}
 \widehat{\mathsf P}=\mathbf1_{(-\infty,s^2/4)}(\widehat G),\qquad
 \operatorname{rank}\widehat{\mathsf P}=\operatorname{rank}\mathsf P,
 \quad
 \|\mathcal A\widehat{\mathsf P}\|_{\rm HS}
 \le2\left(\|\mathsf R\bar\Delta\mathsf P\|_{\rm HS}
                     +s^{-1}\|Z\mathsf P\|_{\rm HS}\right).
 \label{eq:n7-two-bank}
\end{equation}
Indeed $\widehat G=(\mathcal A+\bar\Delta)^\dagger
(\mathcal A+\bar\Delta)-D_\Delta^\dagger D_\Delta$ has low eigenvalues in
$[-e^2,e^2]$ and all remaining eigenvalues at least $3s^2/4$.
On $\mathsf Q=I-\mathsf P$, let $H=\mathcal A^\dagger\mathcal A$ and
$C=\mathsf Q\widehat G\mathsf Q$; then $C\succeq47H/64$.
For a low-space basis $U$, $V=\mathsf QU$, $W=\mathsf PU$, its eigen-equation
$CV+\mathsf Q\widehat G\mathsf PW=VT$ gives
\[
 x^2\le x\|C^{-1/2}\mathsf Q\widehat G\mathsf P\|_{\rm HS}+x^2/47,
 \quad x=\|C^{1/2}V\|_{\rm HS},
 \quad
 \|\mathcal AU\|_{\rm HS}\le\tfrac{64}{46}
       \|H^{-1/2}\mathsf Q\widehat G\mathsf P\|_{\rm HS}.
\]
Now $\mathsf Q\widehat G\mathsf P=\mathcal A^\dagger\bar\Delta\mathsf P+
\mathsf QZ\mathsf P$ and $H^{-1/2}\mathcal A^\dagger$ is an isometry on
the range, proving~\eqref{eq:n7-two-bank}.

\noindent\emph{Actual-record noise.}
The covariance bound of Lemma~\ref{lem:batch-quartic-records} and
\[
 \sum_{i<j}\|[F_{ij},S]\|_F^2=d\|S\|_F^2-(\Tr S)^2
\]
give rectangular variance $C_rn^5$ and
$\mathbb E\|W_i\mathsf P\|_{\rm HS}^2\le C_rn^6/N_Q$.
For the projected transfer use $K_b=P_b/\sqrt{d_b}$ and orthonormal range
vectors $Y_i$. Writing $T_{bc}=P_bTP_c$ and
$d_{\min}=\min_b d_b$, the block entries give
\[
 \begin{gathered}
 \sum_b\|[K_b,T]\|_F^2=\sum_{b<c}(d_b^{-1}+d_c^{-1})
       (\|T_{bc}\|_F^2+\|T_{cb}\|_F^2)
   \le\frac2{d_{\min}}\|T\|_F^2,\\
 \sum_{i,b,e}\|[K_b,(Y_i)_e]\|_F^2\le\frac2{d_{\min}}\sum_{i,e}\|(Y_i)_e\|_F^2
   =\frac{2\operatorname{rank}\mathcal A}{d_{\min}}
   \le d(d+1).
 \end{gathered}
\]
Here $\operatorname{rank}\mathcal A\le\dim\operatorname{Sym}(d)$ and
$d\le2n$; the output norm already sums all tensor columns.
Only off-diagonal block sectors contribute, and
\[
 \sum_{i,b,e}\|[K_b,(Y_i)_e]\|_F^2
 \le C\sum_{b<c}(d_b^{-1}+d_c^{-1})d_bd_c\le Cn^2,
\quad
 \mathbb E\|\mathsf RW_i\mathsf P\|_{\rm HS}^2\le C_rn^4/N_Q.
\]
For a fixed output $Y=(Y_e)_e$, the skew-basis sum gives
\[
 \begin{aligned}
 \mathbb E\|W_i^\dagger Y\|_F^2
 &\le\frac{C_rn^2}{N_Q}
       \sum_{a<b,e}\|[F_{ab},Y_e]\|_F^2\\
 &=\frac{C_rn^2}{N_Q}
       \sum_e\bigl(d\|Y_e\|_F^2-(\Tr Y_e)^2\bigr)
   \le\frac{C_rn^3}{N_Q}\|Y\|^2.
 \end{aligned}
\]
The block-entry calculation also gives
$\sum_b\mathbb E|\langle Y,W_iK_b\rangle|^2\le C_rn^2\|Y\|^2/N_Q$.
Expand every term:
\[
 Z\mathsf P=\mathcal B^\dagger\mathcal B\mathsf P+
 \mathcal B^\dagger\bar W\mathsf P+
 \bar W^\dagger\mathcal B\mathsf P+
 \tfrac12(W_1^\dagger W_2+W_2^\dagger W_1)\mathsf P,
\]
\[
 \mathbb E\|Z\mathsf P\|_{\rm HS}^2
 \le C_r(\nu^4+n^3\nu^2/N_Q+n^9/N_Q^2).
\]
Markov bounds these summed risks; rectangular matrix Bernstein, with the
range bound in Lemma~\ref{lem:batch-quartic-records}, supplies operator
smallness. A sufficient count in each bank is
\[
 N_Q=\left\lceil C_r\left[
 n^4/\nu^2+n^{9/2}/(s\nu)+n^5/s^2+
 n^{7/2}/(\sqrt\eta s)+n^4/(\sqrt\eta\nu)\right]L\right\rceil
 \le C_rn^7L^4/f_0.
\]
Thus~\eqref{eq:n7-two-bank} gives
$\|\mathcal A\widehat{\mathsf P}\|_{\rm HS}\le C_r\nu$.

\noindent\emph{Measured centers and complete grouping.}
Couple computed and exact orthonormal low-space bases by principal angles.
For four independent standard Gaussian coefficient vectors, the measured
matrices $S_j$ have exact counterparts
$S_j^0=\sum_bz_{jb}P_b/\sqrt{d_b}$, with
\[
 \|S_j-S_j^0\|_F^2\le C_rn\nu^2L/\eta,
 \qquad
 \|D^{1/2}(S_j-S_j^\diamond)\|_F^2\le C_rn\nu^2L.
\]
The latter uses~\eqref{eq:n7-weighted-inverse}. Gaussian norm tails follow
from $\mathbb E\exp(\|Tz\|^2/(4\|T\|_{\rm HS}^2))\le e^{1/2}$.
Cap each coefficient norm at $4\sqrt B$. For $g=c/\sqrt n$,
\[
 \Pr\{\|z_b/\sqrt{d_b}-z_c/\sqrt{d_c}\|\le4g\}
 \le32g^4\min(d_b,d_c)^2,\quad
 \sum_{b<c}\min(d_b,d_c)^2\le2n^2.
\]
A trial has separated centers and all four norm caps with probability at
least $5/8$. Search $[-5n,5n]^4$ on a grid of mesh $h\le g/128$; retain
$z$ when $\sigma_{\min}((S_j-z_jI)_{j=1}^4)\le g/32$, and join retained
nodes at distance at most $g/2$. Accept exactly $B=\operatorname{rank}
\widehat{\mathsf P}$ components. Every retained node lies within $g/16$
of an exact center, and a nearest grid node to every center is retained.
Thus acceptance forces a bijection with the centers, each representative
is within $g/16$. Representatives in different components are more than
$g/2$ apart, so the exact centers are more than $3g/8$ apart.
An exact center is at most $g/16$ from its assigned representative and
more than $5g/16$ from every other representative. Hence its distance
to each nearest-center bisector is at least $g/8$.
The scalar-averaged centers move by at most $g/64$ on the same event.
Apply Lemma~\ref{lem:n7-soft-partition} with scale $g'=g/4$ and filter
width $g'/256$; the remaining margin exceeds $g'/4$.
A constant number of capped trials supplies constant success probability.
Since $(g')^{-2}=O(n)$, the lemma yields
\[
 \Tr[D(U-I)^\mathsf T(U-I)]\le C_rn^2\nu^2L.
\]
All operations stay in the known band algebra. On rare bands,
$\Gamma=J(I-2D)$ and $[U,J]=0$; centrally $D\succeq I/8$. Hence
\[
 \|[U,\Gamma]\|_F^2\le32\Tr[D(U-I)^\mathsf T(U-I)].
\]
For $A=\log U$, divided differences of the logarithm near $I$ control
its weighted norm and covariance commutator by the same two quantities.
Lemma~\ref{lem:batch-variance} and
$d_2(e^{-iH_A}\Psi,\Psi)^2\le\Var_\Psi(H_A)$ prove
\eqref{eq:n7-global-state-error}. Each local purification difference is
bounded by this global squared distance. This avoids imposing an
unneeded small many-body operator norm on $H_A$.
\end{proof}
\subsection{Retained marginals and the measured local conventions}

\begin{alem}[Canonical local marginals without an original gap]
\label{lem:n7-canonical-marginals}
Use the calibration and global-frame constructions above with
$\eta\le1/(16n)$ and $d_0=C_rn^2\nu^2L$.
The learned local spaces admit comparison marginals
$\tau_b=\ketbra{v_b}{v_b}+\mu_b\ketbra{h_b}{h_b}$, with
$\|v_b\|^2=1-\mu_b\ge7/8$, such that each actual local marginal
and its comparison have purifications satisfying
\begin{equation}
 \|X_b-Y_b\|^2\le d_0.
 \label{eq:n7-local-purification}
\end{equation}
The removed Gaussian complement has total comparison occupation
defect at most $2n\eta$. Computed covariance cuts yield exactly one
of the central, nondominant or dominant forms below; they need no
original covariance gap or supplied particle label.
\end{alem}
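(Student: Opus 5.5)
The comparison marginals come from the exact block-product structure, transported into the learned frame by the global alignment of Lemma~\ref{lem:n7-global-frame}, and the local conventions are then fixed by the measured covariance cuts of Lemma~\ref{lem:n7-covariance-bands}.

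\emph{Step 1: transfer the global error to purifications.} Write the aligned target $U_{\rm align}\Psi$. In the comparison frame the exact state is a product over retained blocks and a Gaussian core. Restricting to the learned local space of block $b$ gives an actual marginal. Its exact counterpart is the reduced state of the aligned product state. Since both are marginals of pure states whose $d_2$ distance is controlled by \eqref{eq:n7-global-state-error}, Uhlmann's theorem yields purifications $X_b,Y_b$ with $\|X_b-Y_b\|^2\le d_2(U_{\rm align}\Psi,\Psi)^2\le d_0$. This is exactly the remark at the end of the proof of Lemma~\ref{lem:n7-global-frame}: each local purification difference is bounded by the global squared distance. No operator-norm smallness of $H_A$ enters.

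\emph{Step 2: build the rank-at-most-two comparison marginal.} On the retained block the exact input is $\ket{\omega_b}$. The band truncation at $\eta$ moves branches with $u_{bl}<\eta$ into the Gaussian complement. Their occupation defect is at most $\eta$ per mode, so summing over at most $2n$ Majorana directions gives the claimed $2n\eta$. Projecting the comparison state onto the retained sector leaves a dominant normalized vector $v_b$. Any residual weight orthogonal to it, coming from a single discarded branch or parity-flipped complement, is collected into $\mu_b\ketbra{h_b}{h_b}$. Because $\eta\le1/(16n)$ and the retained block mass satisfies $W_b\ge1/2$, a direct count gives $\mu_b\le2n\eta\le1/8$, hence $\|v_b\|^2\ge7/8$. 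The marginal must be replaced by a pure vector plus one orthogonal rank-one term, rather than kept mixed. To justify this, I would use the disjoint-branch structure: after tracing out the other blocks and the core, coherences between retained and discarded branches vanish by the orthogonality of their occupation patterns.

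\emph{Step 3: the trichotomy from the computed cuts.} For each learned block, the measured band labels from Lemma~\ref{lem:n7-covariance-bands} decide the case. A block lying in the central band, with all $u_{bl}\ge1/8$, gives the central form. Otherwise its rare-band minority complex structure $J_a$ decides whether the occupied pattern is the minority (dominant form) or not (nondominant form). The deciding quantity is the measured scale $v_a$ compared with the retained mass, using $u_{bl}\le v_a\le8u_{bl}$. Since the decision uses only computed cuts and the aligned comparison partition, it needs neither an original gap nor a particle label.

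The main obstacle I expect is Step 2's claim that the actual marginal is close to a form with only one extra rank-one term. Discarded light branches across many modes might produce several orthogonal residual vectors. My first attempt would lump all discarded components into a single normalized $h_b$ via the block's parity and particle sector. If that fails, I would enlarge the comparison to a finite-rank form and bound the extra rank by $r$.
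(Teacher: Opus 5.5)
Your Step~1 matches the paper's transfer of the global frame error to local purifications. There are, however, two genuine gaps.

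\textbf{The rank-two form.} The first gap is the one you flag yourself. You never show that the comparison marginal has a single incoherent rank-one term. Your fallback of enlarging to a finite-rank form would prove a different statement, and the number of discarded branches is not bounded by $r$ anyway.

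The missing idea is an exact partial-trace identity, not an estimate. Distinct branches differ in at least two occupied modes, so the source one-particle density is diagonal with value $w_{bl}$ on every mode of $F_{bl}$. Hence every mode of a branch carries the same covariance eigenvalue $u_{bl}$, and the measured cut $t_0\in[\eta,2\eta]$ removes whole branches. If $1-w_{bl}<t_0$, that branch is unique and all others are light, so the entire block goes into the Gaussian complement. Otherwise, placing the removed modes $E_b$ first,
\[
 \ket{\phi_b}=\ket{0}_{E_b}\otimes v_b+\sum_{l\notin\mathcal R_b}c_{bl}\ket{F_{bl}}_{E_b}\otimes\ket{0},\qquad v_b=\sum_{l\in\mathcal R_b}c_{bl}\ket{F_{bl}} .
\]
Every discarded branch leaves the retained modes empty. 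The removed strings, together with the empty string, are orthonormal. Therefore $\Tr_{E_b}\ketbra{\phi_b}{\phi_b}=\ketbra{v_b}{v_b}+\mu_b\ketbra{0}{0}$ exactly, with $\mu_b=\sum_{l\notin\mathcal R_b}w_{bl}$, however many branches are removed.

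So $h_b$ is the retained-mode vacuum. It becomes $\ket{g,0}$ or $\ket{F_p}$ only after the central or dominant change of convention, and no parity-flipped component ever occurs. The defect bound is then $\sum_b\ell_b\le\sum_{j\,\mathrm{removed}}u_j\le 2n\eta$, from at most $n$ modes below $t_0\le2\eta$.

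\textbf{The trichotomy and the local alignment.} The second gap is Step~3, together with the budget in Step~1.

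Your criterion is misstated in two ways. The central form does not require all $u_{bl}\ge1/8$: a block may mix near-balanced branches with light or heavy ones, which is why the central vector is $\alpha\otimes\ket{0}+g\otimes\beta_p$. Dominance is decided by the sign of $1-2w_{bl}$ for a non-central branch, not by comparing $v_a$ with the retained mass.

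More importantly, the forms must hold relative to the \emph{computed} packets. Nothing in your argument makes the measured local split coincide with the comparison's. Step~1 only bounds the marginal of the aligned exact state, whose packets are the unknown exact ones.

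The paper closes this with a new local cut, in four steps.
\begin{itemize}
\item Only at most $8r$ measured singular values of $|i\widehat\Gamma_b|$ can lie in $(1/4,1/3)$, since each comes from a branch with $w_l>1/4$. The cut $t_b$ is the midpoint of the largest empty interval there. It lies at distance $2\gamma_r$, with $\gamma_r=1/(48(8r+1))$, from the measured spectrum, and at distance $\gamma_r$ from the comparison spectrum once $g_b\le\gamma_r/16$.
\item Divided differences give $\sum_a\|Q_a-P_a\|_F^2\le3g_b^2/\gamma_r^2$.
\item The real polar factor $T_b$ satisfies $Q_aT_b=T_bP_a$ and $\|\log T_b\|_F\le8\sqrt3\,g_b/\gamma_r$.
\item Its Gaussian lift costs $\Var H_A\le4(P+2)^2\|\log T_b\|_F^2=O_r(g_b^2)$ on the bounded-particle comparison vector, independently of $q_b$.
\end{itemize}
The purification bound is then $C_r(a_C^2+E_C+n^2\nu^2L+g_b^2)\le d_0$. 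This uses the compressed covariance estimate $\sum_bg_b^2\le C_rn^2\nu^2L$, which your argument never supplies.

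If you intend to use the global calibration bands instead, you must prove that the band alignment and the in-algebra structural rotation make each learned block's band split coincide exactly with the comparison's. You would also need to rederive the three forms at that threshold. As written, this is only asserted.
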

\begin{proof}
\noindent\emph{Whole branches and the Gaussian complement.}
In the unknown source frame, set $w_{bl}=|c_{bl}|^2$.
Since different branches differ in at least two occupied modes,
\[
 \langle a_i^\dagger a_j\rangle
   =\delta_{ij}w_{bl}\quad(i\in F_{bl}),
 \qquad\langle a_i a_j\rangle=0,
 \qquad u_{bl}=\min(w_{bl},1-w_{bl}).
\]
The covariance eigenvalues $u_{bl},1-u_{bl}$ have multiplicity
$p_b$; core eigenvalues are zero and one. Thus the exact comparison
cut at a measured-selected $t_0\in[\eta,2\eta]$ removes whole
branches. A branch with $1-w_{bl}<t_0$ is unique in its block; every
other branch then has weight below $t_0$, and the entire block is
replaced by that determinant on the Gaussian complement.
Otherwise let $\mathcal R_b$ be its retained branches. Partial trace,
with removed modes placed first, gives the exact identity
\[
 \begin{gathered}
 \ket{\phi_b}=\ket{0}_{E_b}\otimes\sum_{l\in\mathcal R_b}c_{bl}\ket{F_{bl}}
     +\sum_{l\notin\mathcal R_b}c_{bl}\ket{F_{bl}}_{E_b}\otimes\ket0,\\
 \operatorname{Tr}_{E_b}\ketbra{\phi_b}{\phi_b}=\ketbra{v_b}{v_b}+\mu_b\ketbra00,
 \qquad
 \mu_b=\sum_{l\notin\mathcal R_b}w_{bl},
 \quad v_b=\sum_{l\in\mathcal R_b}c_{bl}\ket{F_{bl}}.
 \end{gathered}
\]
The removed occupation strings are orthogonal; the first summand has
the empty removed string. If $\ell_b=1-w_{b,\mathrm{dom}}$ on a
wholly replaced block and $\ell_b=\mu_b$ otherwise, then
\[
 \sum_b\ell_b\le\sum_{j\ {\rm removed}}u_j\le2n\eta,
 \qquad 1-\mu_b\ge1-2n\eta\ge\frac78.
\]
This bounds both omitted branch probability and the sum of preferred
occupation defects. It is not conditioning the measurement records on
their preferred values; removed modes remain available as purifying
modes and are marginalized classically.

\noindent\emph{Covariance error in the learned blocks.}
Let $R_C$ align the exact covariance bands with the measured bands.
For their projectors $P_a,Q_a$ put $X_C=\sum_aQ_aP_a$.
Since $[\Gamma,P_a]=0$, orthogonal right supports give
\[
 [\Gamma,X_C]=\sum_a[\Gamma,Q_a]P_a,
 \qquad
 \|[\Gamma,X_C]\|_F^2
     \le\sum_a\|[\Gamma,Q_a]\|_F^2
     \le2\|\widehat\Gamma-\Gamma\|_F^2.
\]
The paid unweighted error makes $X_C$ invertible and close to $I$.
For $R_C=X_C(X_C^\dagger X_C)^{-1/2}$, the divided differences of
$x^{-1/2}$ on a fixed neighborhood of one and the product rule give
\[
 \|[\Gamma,R_C]\|_F\le C\|[\Gamma,X_C]\|_F
                         \le C\sqrt{E_C}.
\]
The structural alignment $R_S$ lies in the learned band algebra and
obeys the weighted norm and covariance-commutator estimates in the
proof of Lemma~\ref{lem:n7-global-frame}. Thus their product
$R=R_SR_C$ satisfies
\[
 \|[R,\Gamma]\|_F^2
 \le C_r(E_C+n^2\nu^2L),
 \qquad
 d_2(U_R\Psi,\Psi)^2\le C_r(a_C^2+E_C+n^2\nu^2L).
\]
In this display the second $R_S$ commutator is evaluated against
$R_C\Gamma R_C^{\mathsf T}$; unitary invariance gives the same bound.
This $R$ is a comparison alignment, not the unknown preparation matrix.
For the orthogonal learned block projectors $Q_b$, define
\[
 g_b=\|Q_b(\widehat\Gamma-R\Gamma R^{\mathsf T})Q_b\|_F.
\]
Orthogonal compression and the squared triangle inequality give
\begin{equation}
 \begin{aligned}
 \sum_b g_b^2
 &\le2\|\widehat\Gamma-\Gamma\|_F^2+2\|[R,\Gamma]\|_F^2\\
 &\le C_r(E_C+n^2\nu^2L)\le C_rn^2\nu^2L.
 \end{aligned}
 \label{eq:n7-local-covariance-errors}
\end{equation}
The final inequality uses $N_C=C_rn^7L^{10}/f_0$ and the displayed
allocation of $\nu$. No entrywise estimate or local dimension factor
is inserted in this compression.

\noindent\emph{A measured gap with fixed-cap size.}
On a comparison local block, covariance singular values are
$|1-2w_l|$, each repeated $2p$ times, and unused-mode values are one.
Set
\[
 \gamma_r=\frac1{48(8r+1)},\qquad g_b\le\gamma_r/16.
\]
The latter is ensured by choosing the algorithmic constants sufficiently
small in \eqref{eq:n7-local-covariance-errors}.
A measured singular value in $(1/4,1/3)$ has exact counterpart below
$1/2$, hence its branch has $w_l>1/4$. There are at most four such
branches, hence at most $8r$ measured singular values in this interval,
counting multiplicity. Insert the endpoints and take the midpoint $t_b$
of the largest empty interval. Weyl's inequality yields
\[
 \operatorname{dist}(t_b,\operatorname{spec}|i\widehat\Gamma_b|)
       \ge2\gamma_r,
 \qquad
 \operatorname{dist}(t_b,\operatorname{spec}|i\Gamma_b^*|)
       \ge2\gamma_r-g_b\ge\gamma_r,
\]
where $\Gamma_b^*$ is the comparison covariance in the learned block.
Values outside $(1/4,1/3)$ are controlled by its inserted endpoints.
This gap is created by the measured cut; it is not a target promise.

Let $P_-,P_0,P_+$ and $Q_-,Q_0,Q_+$ be the exact and measured spectral
projectors of $i\Gamma_b^*$ and $i\widehat\Gamma_b$ at $-t_b,t_b$.
For corresponding eigenvectors and interval indicators $\chi_a$,
\[
 \langle u_i|(Q_a-P_a)|v_j\rangle
 =\frac{\chi_a(\widehat\lambda_j)-\chi_a(\lambda_i)}
        {\widehat\lambda_j-\lambda_i}
   \langle u_i|i(\widehat\Gamma_b-\Gamma_b^*)|v_j\rangle.
\]
The quotient is zero when its numerator is zero, and otherwise is
bounded by $\gamma_r^{-1}$. Consequently
\[
 \sum_{a\in\{-,0,+\}}\|Q_a-P_a\|_F^2\le3g_b^2/\gamma_r^2,
 \quad X_b=\sum_aQ_aP_a,
 \quad T_b=X_b(X_b^\dagger X_b)^{-1/2},
 \quad\|\log T_b\|_F\le8\sqrt3\,g_b/\gamma_r.
\]
Conjugation interchanges $+$ and $-$ and preserves the central packet,
so $X_b,T_b$ are real. The near-identity polar factor has determinant
$+1$ and satisfies $Q_aT_b=T_bP_a$. No internal packet eigenvalue
gap is required.

\noindent\emph{The dimension-free Gaussian lift.}
For $A=\log T_b$, group odd before even Majoranas and write
\[
 A=\begin{pmatrix}K&L_0\\-L_0^{\mathsf T}&M\end{pmatrix},
 \quad K^{\mathsf T}=-K,\quad M^{\mathsf T}=-M.
\]
Its Hamiltonian has normal and pairing matrices
\[
 h=\tfrac12[L_0+L_0^{\mathsf T}+i(K+M)],
 \quad\Delta=\tfrac12[-L_0+L_0^{\mathsf T}+i(K-M)],
 \quad\|h\|_F^2+\|\Delta\|_F^2=\tfrac12\|A\|_F^2.
\]
On a normalized vector supported through particle sector $P$, the
normal term is bounded by $P\|h\|_F$. For
$\beta=\sum_{i<j}\Delta_{ij}\ket{ij}$, each $(k+2)$-particle
coefficient of $\beta\wedge z_k$ has at most $\binom{k+2}{2}$ terms.
Cauchy--Schwarz and its adjoint therefore give
\[
 \begin{aligned}
 \|(H_A-cI)\zeta\|
 &\le P\|h\|_F+
 \tfrac12[\sqrt{(P+1)(P+2)}+\sqrt{P(P-1)}]\|\Delta\|_F\\
 &\le2(P+2)\|A\|_F,
 \qquad\operatorname{Var}_\zeta H_A\le4(P+2)^2\|A\|_F^2.
 \end{aligned}
\]
The annihilation term is zero when $P<2$.
For $\overline H=H_A-\langle H_A\rangle I$,
\[
 (e^{-i\overline H}-I)\ket\zeta
   =-i\int_0^1e^{-it\overline H}\overline H\ket\zeta\,dt,
 \qquad
 d_2(e^{-iH_A}\zeta,\zeta)^2\le\operatorname{Var}_\zeta H_A.
\]
The relevant local components below have $P\le3r$ in their canonical
particle convention. The integral uses their initial vector only,
so the Gaussian evolution need not preserve that cutoff.
The local comparison rotation costs at most $C_rg_b^2$ in squared
purification distance, independently of $q_b$.

\noindent\emph{Exact local forms.}
A central branch satisfies $(1-t_b)/2<w_l<(1+t_b)/2$, hence $w_l>1/3$.
There are at most two and at most $2r$ central modes.
All remaining covariance signs choose the outside particle convention.
The three resulting marginals are
\[
 \begin{gathered}
 \text{central:}\quad\tau=\ketbra vv+\mu\ketbra{g,0}{g,0},\qquad v=\alpha\otimes\ket0+g\otimes\beta_p,
 \quad\braket g\alpha=0,\quad\|g\|=1,\\
 \text{nondominant:}\quad\tau=\ketbra{v_p}{v_p}+\mu\ketbra00,\qquad \mathcal N v_p=pv_p,\\
 \text{dominant:}\quad\tau=\ketbra vv+\mu\ketbra{F_p}{F_p},\qquad v=\alpha\ket0+\beta_{2p},
 \quad |\alpha|^2>(1+t_b)/2>5/8.
 \end{gathered}
\]
Here $\mathcal N$ is the local number operator,
$\|v\|^2=1-\mu\ge7/8$ and the incoherent vector is orthogonal
to $v$. A central frame maps vacuum to $g$ and the disjoint central
$p$-particle branches to $\alpha$. In the dominant case, reversing
the $p$ dominant occupations sends that branch to vacuum, every
other branch to a $2p$-particle determinant containing that core,
and the omitted-branch vacuum to $F_p$; their CAR phases are retained.
A central and an outside dominant branch cannot coexist, because
their weights would sum to more than one.

\noindent\emph{Actual marginal and purification error.}
Retain all discarded and other blocks in the purifying space.
The full comparison alignment gives a purification difference at most
$C_r(a_C^2+E_C+n^2\nu^2L)$ squared before any partial trace.
Apply the same measured local coordinate change to both purifications,
then use the just-proved local comparison lift. Purification triangle
inequality, with auxiliary phases chosen separately on the orthogonal
pure and incoherent components, gives
\[
 \|X_b-Y_b\|^2
 \le C_r(a_C^2+E_C+n^2\nu^2L+g_b^2)
 \le C_rn^2\nu^2L=d_0.
\]
This proves \eqref{eq:n7-local-purification}. Projection to a particle
sector, removal of a checked occupied core, and one-particle reduction
are contractions of these unnormalized purifications; no normalization
by a small branch or sector mass is made in this comparison.

\noindent\emph{Output orientation and passive trial.}
Retain at most one normalized fit for each sign of its complete local
Pin map, including the known local coordinate changes. Each fit must
pass the local residual bound proved below. For the common measured
unit vector $u_b$, retained signs $\mathcal A_b\subseteq\{-1,+1\}$
and fixed calibration/core sign $\sigma_{\rm fix}$, set the recursion below.
If no local factors remain, choose the equivalent core reflection and
occupation-bit convention with $\sigma_{\rm fix}=+1$ before initializing.
\[
 \begin{gathered}
 \mathfrak r_{b,\sigma}=d_2(u_b,\widehat\chi_{b,\sigma})^2,\qquad
 \mathcal D_0(\sigma)=
 \begin{cases}0,&\sigma=\sigma_{\rm fix},\\+\infty,&\text{otherwise},\end{cases}\\
 \mathcal D_b(\sigma)=\min_{\tau\in\mathcal A_b}
       \{\mathcal D_{b-1}(\sigma\tau)+\mathfrak r_{b,\tau}\}.
 \end{gathered}
\]
Backtrack the final $+1$ entry, returning failure if it is infinite.
On the local good event the source-compatible signs remain available;
every selected fit passes its local bound and the assembled circuit
has determinant $+1$. Only two partial descriptions are stored.
A sufficiently accurate local fit has the correct physical parity,
since opposite pure parities are orthogonal. Record a core reflection
together with its occupation-bit change; this changes coordinates,
not the prepared state. In the passive trial, covariance
calibration is twirled into the known number-preserving algebra,
all block operations remain in it, and the local form is fixed-particle
without any particle--hole conversion. Thus every returned local fit
and the assembled preparation are passive on that successful trial.

All local covariance matrices are compressions of the already paid
calibration estimate. Cuts, polar factors and the two-state assembly
cost polynomial classical work and no further quantum copies.
\end{proof}
\subsection{Positive occupation records and relative acquisition}
\begin{alem}[Relative acquisition of the local symmetry form]
\label{lem:n7-local-relative-acquisition}
Let $3\le p\le q$, $0\preceq\sigma$ on $\bigwedge^p\mathbb C^q$,
and $\Tr\sigma\le1$. On the real Hilbert space $\operatorname{Herm}(q)$,
with inner product $\Tr(AB)$, define
\begin{equation}
 (A|\mathscr L_\sigma|A)
 =\frac{2}{p(p-1)}\Tr\sigma
       \bigl[p\,d\Gamma(A^2)-d\Gamma(A)^2\bigr],
 \qquad d\Gamma(A)=\sum_{i,j=1}^{q}A_{ij}a_i^\dagger a_j.
 \label{eq:n7-positive-local-form}
\end{equation}
Put $\kappa=2/(p-1)$. For $0<v<\kappa$, $0<\gamma\le1/8$ and
$0<\alpha<1$, batch-adaptive Gaussian occupation measurements return
$\widehat{\mathscr L}$ and $v/2<\lambda\le v$ such that, with
probability at least $1-\alpha$,
\[
 -\gamma(\mathscr L_\sigma+\lambda I)
 \preceq\widehat{\mathscr L}-\mathscr L_\sigma
 \preceq\gamma(\mathscr L_\sigma+\lambda I).
\]
With $J=\lceil\log_2(\kappa/v)\rceil$, it suffices to use
\begin{equation}
 N_G\le C\left[q^4J+
 \frac{q^5}{\gamma^2v}
 \log(8q^3)\log\frac{4q^2J}{\alpha}\right].
 \label{eq:n7-local-relative-count}
\end{equation}
All copies, including failed sector and previously fixed core tests,
are counted; neither a branch weight nor a spectral gap is supplied.
\end{alem}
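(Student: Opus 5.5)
The plan is to mirror the relative-pilot construction in the proof of Lemma~\ref{lem:n7-covariance-bands}, now applied to the quadratic form $\mathscr L_\sigma$ on $\operatorname{Herm}(q)$ instead of the covariance matrix. First we record the structure of the form. Positivity follows from the fixed-particle Cauchy--Schwarz inequality $d\Gamma(A)^2\preceq p\,d\Gamma(A^2)$ on $\bigwedge^p$, and $\mathscr L_\sigma\preceq\kappa$-scale bounds follow since $p\,d\Gamma(A^2)\le p\|A\|^2\cdots$. More usefully, $\mathscr L_\sigma$ is an explicit linear function of the two-particle reduced density matrix of $\sigma$. Expanding $d\Gamma(A)^2=\sum A_{ij}A_{kl}a_i^\dagger a_ja_k^\dagger a_l$ and normal ordering gives
\[
 (A|\mathscr L_\sigma|A)=\frac{2}{p(p-1)}\bigl[(p-1)\Tr(\sigma_1A^2)-\Tr(\sigma_2\,A\otimes A)\bigr],
\]
where $\sigma_1,\sigma_2$ are the one- and two-body marginals.

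Consequently, each matrix entry of $\mathscr L_\sigma$ in a product basis of $\operatorname{Herm}(q)$ is a quartic or quadratic number-conserving moment. These moments are measurable by single-copy Gaussian rotations followed by occupation readout, using a matching scheme as in Lemma~\ref{lem:batch-quartic-records}. The key variance input is a relative one. For a fixed test matrix $M$ on $\operatorname{Herm}(q)$, the single-record estimator of $\Tr(M^\dagger\mathscr L_\sigma)$ should have variance bounded by $C q^{c}\,\Tr[W(MM^\dagger+M^\dagger M)]$, with $W$ the current pilot plus $hI$. We will derive this by applying the insertion estimate of Lemma~\ref{lem:batch-variance} (its second inequality) with particle number $p\le r$. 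This is the analogue of \eqref{eq:n7-covariance-functional}: records touching directions $A$ with small $(A|\mathscr L_\sigma|A)$ have small second moments, because $\|[d\Gamma(A)-\text{local mean}]\psi\|^2$ is controlled by the form itself.

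With that variance bound, we run the dyadic refinement. We start from $W_0=\kappa I$, $h_0=\kappa$, and at scale $h$ use $C\xi^{-2}\ell$ records per setting. Scalar Bernstein gives entrywise errors $|E_{ij}|\le 8\xi(\sqrt{W_{ii}}+\sqrt{W_{jj}})$ in the diagonalized pilot basis. The same algebra as in \eqref{eq:n7-relative-pilot} then shows
\[
 \|(\mathscr L_\sigma+hI)^{-1/2}E(\mathscr L_\sigma+hI)^{-1/2}\|\le C q^2\xi/\sqrt h .
\]
The induction $\tfrac12(\mathscr L+2h)\preceq W\preceq\tfrac32(\mathscr L+2h)$ carries over. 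Next we halve $h$ for $J=\lceil\log_2(\kappa/v)\rceil$ rounds, taking $\beta=1/4$ in the intermediate rounds and $\beta=\gamma$ in the last, which terminates with $\lambda\in(v/2,v]$. The round failures are allocated as $\alpha/(2J)$ plus union bounds over the $O(q^4)$ entries, which produces the $\log(4q^2J/\alpha)$ factor.

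Copy counting then gives $q^4$ settings per round for the fixed cost $q^4J$. The final round costs $q^4\cdot q\,\gamma^{-2}v^{-1}$ times logarithms. Geometric summation, exactly as in the displayed $\sum_j2^j[\ell+(M-j)\log2]$ bound, yields \eqref{eq:n7-local-relative-count}. The extra $\log(8q^3)$ comes from the net/union over matching settings. Failed sector tests and fixed core tests are simply counted within the per-round budget.

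The main obstacle is the relative variance bound: showing that a single Gaussian-occupation record for a quartic entry has variance controlled by $\mathscr L_\sigma+hI$ in both row and column directions, rather than by the identity. I would try to derive it by writing the measured product of two number-conserving quadratures $X_AX_B$ and bounding its second moment. For this I would use the centered insertion bound in Lemma~\ref{lem:batch-variance} and the identity $\|(d\Gamma(A)-\langle\cdot\rangle)\zeta\|^2\lesssim(A|\mathscr L|A)$ on fixed-$p$ vectors, with the $p\,d\Gamma(A^2)$ term supplying the $W_{ii}$ weights. If a clean relative bound fails for cross terms, the fallback is to absorb them into the $hI$ floor at the cost of an extra factor $q$, which the $q^5$ in the count appears to allow.
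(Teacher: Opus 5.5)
Your stage structure (halving the regularizer from $\kappa$ to $\lambda_J\in(v/2,v]$ and allocating failure $\alpha/(2J)$ per stage) matches the paper. The proposal breaks, however, at the step you flag as the main obstacle, and the route you suggest for it does not apply. Lemma~\ref{lem:batch-variance} is stated for the promised block-product state, and its insertion bound is additive in $\sum_{i<j}|y_{ij}|^2$. Here, by contrast, $\sigma$ is an arbitrary positive operator on $\bigwedge^p\mathbb C^q$. The matching records of Lemma~\ref{lem:batch-quartic-records} likewise carry only an additive covariance bound. Neither gives a variance proportional to $\mathscr L_\sigma$.

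The missing idea is a record that is itself a nonnegative estimator of the form. Rotate to an eigenbasis $A=\sum_ja_j\ketbra{u_j}{u_j}$ and read occupations. On a $p$-sector outcome $F$, record $Z_A(F)=\frac{2}{p-1}\sum_{j\in F}(a_j-\bar a_F)^2$; record zero otherwise, which also absorbs failed sector and core tests. Since $d\Gamma(A)$ and $d\Gamma(A^2)$ are diagonal in this basis, $\mathbb EZ_A=(A|\mathscr L_\sigma|A)$. The bounds $0\le Z_A\le K_A=\min\{\kappa\|A\|_F^2,3\|A\|_{\rm op}^2\}$ then give $\Var Z_A\le K_A\,\mathbb EZ_A$ for every $\sigma$. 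Off-diagonal entries follow by polarization with $(A\pm A')/\sqrt2$.

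Even with that bound, your copy count does not follow from the scheme you describe. Your own display needs $\xi\lesssim\gamma\sqrt h/q^2$, because converting entrywise errors to an operator error on the $D=q^2$-dimensional space loses a factor $D$. That means at least $q^4/(\gamma^2h)$ records per setting, so at least order $q^8/(\gamma^2v)$ over $q^4$ polarization settings, not $q^4\cdot q/(\gamma^2v)$.

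The paper reaches $q^5$ with two further ingredients.
\begin{itemize}
\item \emph{Whitening with matrix Bernstein.} At stage $j$ it queries $A_i=\sum_a(S_{j-1}^{-1/2}O)_{ai}E_a$. The queried matrix $C=O^{\mathsf T}S_{j-1}^{-1/2}\mathscr L_\sigma S_{j-1}^{-1/2}O$ then obeys $0\preceq C\preceq2I$. Matrix Bernstein applied to the whole $D\times D$ estimate shows that $m_j\sim DK_h\beta^{-2}\log(4DJ/\alpha)$ copies per query give operator error $\beta=\gamma/4$. Conjugating back by $S_{j-1}^{1/2}$ yields the relative bound.
\item \emph{A random signed cosine frame $O$.} A matrix Rademacher tail and a union bound over the $q^2$ columns give $\max_i\|A_i\|_{\rm op}^2\le\tau_h^2=\frac{8}{qh}\log(8q^3)$. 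This is roughly a factor $q$ below the Frobenius size $2/h$ of the whitened queries, and it is what makes $K_h=6\tau_h^2$ small. It is also the source of the $\log(8q^3)$ in \eqref{eq:n7-local-relative-count}, not a net over matching settings.
\end{itemize}
Together these give $D^2m_j\sim q^5\gamma^{-2}h^{-1}\log(8q^3)\log(4q^2J/\alpha)$ per stage, and the geometric sum over $h=\lambda_{j-1}$ gives the stated bound. Without the random frame, the same whitened argument yields only $q^6$.
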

\begin{proof}
\noindent\emph{Actual records.}
Diagonalize $A=\sum_{j=1}^{q}a_j\ketbra{u_j}{u_j}$. On an occupation
outcome $F$ with $|F|=p$, put
\[
 \bar a_F=\frac1p\sum_{j\in F}a_j,
 \qquad Z_A(F)=\frac2{p-1}\sum_{j\in F}(a_j-\bar a_F)^2.
\]
Set $Z_A=0$ outside the selected sector or after a failed core test.
The occupation eigenvalue calculation gives
\begin{equation}
 \begin{gathered}
 d\Gamma(A)\ket{F;u}=\left(\sum_{j\in F}a_j\right)\ket{F;u},\\
 \mathbb EZ_A=\sum_{|F|=p}\bra{F;u}\sigma\ket{F;u}Z_A(F)
     =(A|\mathscr L_\sigma|A),\\
 0\le Z_A\le K_A:=\min\{\kappa\|A\|_F^2,3\|A\|_{\rm op}^2\},\qquad \Var Z_A\le K_A\mathbb EZ_A.
 \end{gathered}
 \label{eq:n7-local-record-moments}
\end{equation}
Indeed $\sum_{j\in F}(a_j-\bar a_F)^2\le\sum_{j\in F}a_j^2$;
with $c=(a_{\max}+a_{\min})/2$,
\[
 \sum_{j\in F}(a_j-\bar a_F)^2
 =\sum_{j\in F}(a_j-c)^2-p(\bar a_F-c)^2
 \le\frac p4(a_{\max}-a_{\min})^2.
\]
Thus $0\preceq\mathscr L_\sigma\preceq\kappa I$.

\noindent\emph{Randomized query frame.}
Let $D=q^2$, and let $E_1,\ldots,E_D$ be the Hermitian Frobenius basis
\[
 \ketbra jj,\quad
 \frac{\ketbra ij+\ketbra ji}{\sqrt2},\quad
 \frac{i\ketbra ij-i\ketbra ji}{\sqrt2}\quad(1\le i<j\le q).
\]
Then $\sum_{a=1}^{D}E_a^2=qI$. If a real symmetric $P$ satisfies
$P^2\preceq2I/h$, set $F_c=\sum_{a=1}^{D}P_{ac}E_a$.
For $G_u=(\operatorname{Re}\langle E_au,E_bu\rangle)_{a,b}$,
\[
 G_u\succeq0,\qquad
 \left\langle u\left|\sum_{c=1}^{D}F_c^2\right|u\right\rangle
 =\Tr(P^2G_u)\le\frac{2q}{h}\|u\|^2.
\]
Choose an orthogonal cosine matrix $W$ with
$|W_{ci}|\le\sqrt{2/D}$ and independent signs $\xi_c$; put
$O=\operatorname{diag}(\xi)W$ and
$A_i=\sum_{c=1}^{D}\xi_cW_{ci}F_c$.
The matrix Rademacher tail~\cite[Theorem 4.1]{tropp2012} and a union
bound over $D$ columns give
\[
 \left\|\sum_{c=1}^{D}W_{ci}^2F_c^2\right\|\le\frac4{qh},\qquad \Prob\!\left\{\max_{1\le i\le D}\|A_i\|_{\rm op}>\tau_h\right\}\le\frac14,\qquad \tau_h^2=\frac8{qh}\log(8q^3).
\]
The norms are checked classically. At most
$\lceil\log_4(2J/\alpha)\rceil$ independent trials per stage
suffice; these trials consume no copies.

\noindent\emph{Whitened batches.}
Set $S_0=\kappa I$, $\lambda_j=\kappa2^{-j}$, and, at stage $j$,
$P=S_{j-1}^{-1/2}$, $h=\lambda_{j-1}$, $\beta=\gamma/4$.
Acquire all $D^2$ quadratic-form queries
\[
 A_i\ (1\le i\le D),\qquad
 (A_i+A_l)/\sqrt2,\ (A_i-A_l)/\sqrt2\ (1\le l<i\le D)
\]
on separate fresh banks, each of size
\[
 m_j=\left\lceil\frac{8DK_h}{\beta^2}
                   \log\frac{4DJ}{\alpha}\right\rceil,
 \qquad K_h=6\tau_h^2.
\]
Their diagonal means and half-differences give a symmetric matrix
$\widehat C$ with $\mathbb E\widehat C=C=O^{\mathsf T}P\mathscr L_\sigma PO$.
Conditional on preceding successful stages,
\[
 \tfrac12(\mathscr L_\sigma+hI)\preceq S_{j-1}
 \preceq2(\mathscr L_\sigma+hI),\qquad0\preceq C\preceq2I.
\]
By~\eqref{eq:n7-local-record-moments}, each centered matrix summand
has norm at most $K_h/m_j$ and total matrix variance at most
$2DK_h/m_j$. The off-diagonal summands are multiples of
$(\ketbra il+\ketbra li)/(2m_j)$; their squares contribute only
to $\ketbra ii+\ketbra ll$.
Matrix Bernstein~\cite[Theorem 6.1]{tropp2012} therefore gives
\[
 \Prob\{\|\widehat C-C\|>\beta\}
 \le2D\exp\!\left(-\frac{m_j\beta^2}{8DK_h}\right)
 \le\frac\alpha{2J}.
\]
Define
\[
 H_j=S_{j-1}^{1/2}O\widehat CO^{\mathsf T}S_{j-1}^{1/2},\qquad
 S_j=\operatorname{clip}_{[\lambda_j/2,\,4\kappa]}(H_j+\lambda_jI).
\]
On the stated event,
\[
 \begin{gathered}
 -\beta S_{j-1}\preceq H_j-\mathscr L_\sigma\preceq\beta S_{j-1},\\
 S_{j-1}\preceq2(\mathscr L_\sigma+2\lambda_jI)
                       \preceq4(\mathscr L_\sigma+\lambda_jI),\\
 (1-\gamma)(\mathscr L_\sigma+\lambda_jI)\preceq H_j+\lambda_jI
 \preceq(1+\gamma)(\mathscr L_\sigma+\lambda_jI).
 \end{gathered}
\]
Clipping is inactive and proves the induction. Output $H_J$ and
$\lambda=\lambda_J$. The first-failure union bound includes both
query-frame failures and record errors. Finally,
\[
 D^2m_j\le q^4+
 \frac{6144q^5}{\gamma^2\lambda_{j-1}}
 \log(8q^3)\log\frac{4q^2J}{\alpha},\qquad
 \sum_{j=1}^{J}\lambda_{j-1}^{-1}<\frac2v,
\]
which proves~\eqref{eq:n7-local-relative-count} with an absolute constant.
\noindent\emph{Arithmetic precision.}
The positive cutoffs satisfy $\lambda_j\ge v/2$; no unknown spectral
gap enters an inverse. If a query $A$ is replaced by $\widetilde A$ with
$\|\widetilde A-A\|\le\xi\le1$, expansion of
\eqref{eq:n7-positive-local-form}, with
$E_A=2[p\,d\Gamma(A^2)-d\Gamma(A)^2]/[p(p-1)]$, gives
\[
 \|E_{\widetilde A}-E_A\|\le C_p(1+\|A\|)\xi.
\]
Choosing $\xi\le c\gamma/[q^2(1+\tau_h)]$ bounds the induced
whitened matrix bias by a reserved constant fraction of $\beta$.
For a positive matrix bounded below by $vI/4$, the resolvent integral
for $S^{-1/2}$ bounds its perturbation by $C v^{-3/2}\|\Delta S\|$.
Thus polynomially small arithmetic tolerances suffice for every stage.
Use a wider accepted query-norm cap and smaller internal statistical
tolerances; these change only the absolute constant in
\eqref{eq:n7-local-relative-count}. This is polynomial classical
processing, not a claim of optimal bit or gate complexity.
\end{proof}

\noindent\emph{Separating model error from statistical error.}
\begin{alem}[Weighted transfer from a relatively estimated Gram operator]
\label{lem:n7-relative-spectral-transfer}
Let $X,Y$ be real Hilbert-space maps on a common $D$-dimensional domain.
Put $A=X^\dagger X$, $B=Y^\dagger Y$, and suppose
\[
 -\gamma(A+\lambda I)\preceq H-A\preceq\gamma(A+\lambda I),
 \qquad0<\gamma\le1/8,\quad0<\lambda\le\theta.
\]
For any $t\in[\theta,2\theta]$, set
$F=\mathbf1_{(-\infty,t]}(H)$ and
$P=\mathbf1_{(64\theta,\infty)}(B)$. Then
\begin{equation}
 \|YPF\|_{\rm HS}^2
 \le32\|X-Y\|_{\rm HS}^2+32\gamma^2D\theta.
 \label{eq:n7-relative-spectral-transfer}
\end{equation}
\end{alem}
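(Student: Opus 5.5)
The plan is to bound $\|YPF\|_{\rm HS}$ as a weighted sin-$\Theta$ statement. The projector $P$ selects directions where $B$ is large ($>64\theta$). The projector $F$ selects directions where $H$ is small ($\le t\le2\theta$). Every overlap between these ranges must come from either the model discrepancy $X-Y$ or the relative statistical error $H-A$. I keep these two sources separate throughout so that only $\|X-Y\|_{\rm HS}^2$ and a term proportional to $\gamma^2$ appear, and no unweighted dimension factor times $\theta$ appears without $\gamma^2$.

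\emph{Step 1 (what $F$ and $P$ know).} From $H\succeq A-\gamma(A+\lambda I)$ I get $(1-\gamma)A\preceq H+\gamma\lambda I$. Hence $FAF\preceq\frac{t+\gamma\lambda}{1-\gamma}F\preceq\frac{16}{7}(2+\tfrac18)\theta F$, so $\|XF\|_{\rm op}^2\le 5\theta$. On the other side, $BP\succeq64\theta P$, and $P$ commutes with $B$. Together these give the pointwise inequality $BP\preceq B^2P/(64\theta)$, which I use to trade one factor of $B$ for a factor $(64\theta)^{-1}$:
\[
\|YPF\|_{\rm HS}^2=\Tr(FBPF)\le\frac{\|PBF\|_{\rm HS}^2}{64\theta}.
\]

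\emph{Step 2 (Sylvester identity).} I write $B=A+(B-A)$ and $A=H-(H-A)$. I then use that $H$ commutes with $F$ and $B$ commutes with $P$. Applying these to $PBF$, I obtain
\[
PBF-PFH\,F=P(B-A)F+P(A-H)F.
\]
Since $\|HF\|_{\rm op}\le t\le2\theta$, while the left-hand side is dominated by $PB F$ with $B\ge64\theta$ on $\operatorname{ran}P$, the term $PFHF$ is absorbed. The absorption constant comes from the ratio $2\theta/64\theta$. It remains to bound the two right-hand terms, each measured against the $(64\theta)^{-1/2}$ weight from Step~1.

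\emph{Step 3 (the two error sources).} For the model term I factor $B-A=Y^\dagger(Y-X)+(Y-X)^\dagger X$. Weighted by $P$, the first piece is controlled by $\|YP\|$ against $\|(Y-X)F\|_{\rm HS}$. The second piece is controlled by $\|(Y-X)P\|_{\rm HS}\,\|XF\|_{\rm op}\le\sqrt{5\theta}\,\|X-Y\|_{\rm HS}$. After dividing by $64\theta$, this gives $C\|X-Y\|_{\rm HS}^2$. For the statistical term I use the relative bound in the form
\[
\|(A+\lambda I)^{-1/2}(H-A)(A+\lambda I)^{-1/2}\|_{\rm op}\le\gamma .
\]
I split $P(A-H)F=P(A+\lambda)^{1/2}\,M\,(A+\lambda)^{1/2}F$ with $\|M\|\le\gamma$. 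The right factor satisfies $\|(A+\lambda)^{1/2}F\|_{\rm HS}^2\le C\theta\operatorname{rank}F\le C\theta D$. The left factor $P(A+\lambda)^{1/2}$ is compared to $PB^{1/2}$, up to a further $X-Y$ correction. It is then absorbed into the left-hand side through the weight $64\theta$. This yields the $\gamma^2D\theta$ term, and collecting constants gives the factor $32$.

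I expect the main obstacle to be this last absorption. The statistical error is relative to $A$, while $P$ is defined spectrally through $B$. Converting $P(A+\lambda)^{1/2}$ into something bounded by $PB^{1/2}$ without an operator-norm $\|X-Y\|$ (only the Hilbert--Schmidt norm is available) needs care. I would first try an AM--GM splitting, $2|\langle u,v\rangle|\le\epsilon\|u\|^2+\epsilon^{-1}\|v\|^2$, with $\epsilon$ tied to $64\theta$. This lets the $\|YP\cdot\|$ pieces be absorbed into the left side, and all remaining mismatch between $A$ and $B$ is charged to $\|X-Y\|_{\rm HS}^2$.
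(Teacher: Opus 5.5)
There is a genuine gap, and it is created in Step~1. The inequality $\|YPF\|_{\rm HS}^2\le\|PBF\|_{\rm HS}^2/(64\theta)$ is true, but it multiplies the weight of every eigendirection of $B$ in $\operatorname{ran}P$ by $b/(64\theta)$. That factor is unbounded, so the reduced target is false. Take $D=2$, $X=\operatorname{diag}(0,\sqrt a)$, $Y=X+\epsilon\,e_2e_1^{\mathsf T}$ and $H=A$, which is admissible for every $\gamma$. For $a>64\theta$ one has $F=e_1e_1^{\mathsf T}$, $B=vv^{\mathsf T}$ with $v=(\epsilon,\sqrt a)^{\mathsf T}$, and $P=vv^{\mathsf T}/\|v\|^2$, so
\[
 \|YPF\|_{\rm HS}^2=\epsilon^2=\|X-Y\|_{\rm HS}^2,\qquad
 \frac{\|PBF\|_{\rm HS}^2}{64\theta}=\frac{\epsilon^2(\epsilon^2+a)}{64\theta}\longrightarrow\infty\quad(a\to\infty),
\]
while the right side of the lemma stays $32\epsilon^2+64\gamma^2\theta$. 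This is why Step~3 cannot be closed. The first model piece carries $\|YP\|_{\rm op}$ (here $\sqrt{\epsilon^2+a}$), and the statistical piece carries $\|P(A+\lambda I)^{1/2}\|_{\rm op}$; the hypotheses control neither. A similar $2\times2$ example with $Y=X$ and an off-diagonal relative perturbation $H-A$ breaks the statistical piece on its own. Write $Yv_i=s_iw_i$ on $\operatorname{ran}P$ and $Hu_j=t_ju_j$ on $\operatorname{ran}F$. In the Sylvester relation $(s_i^2-t_j)\langle v_i,u_j\rangle=\langle v_i,(B-H)u_j\rangle$ you must multiply by $s_i$, which is what $\|YPF\|_{\rm HS}$ measures, not by $s_i^2$. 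Only then do the multipliers $s_i^2/(s_i^2-t_j)\le32/31$ and $s_i/(s_i^2-t_j)\le4/(31\sqrt\theta)$ stay bounded.

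The paper's proof avoids the obstacle you flagged by decoupling the two perturbations with a spectral cut of $A$ itself: $R=\mathbf 1_{[0,8\theta]}(A)$, $Q=I-R$, and $YPF=YPRF+YPQF$.

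The statistical leakage compares only $H$ with $A$. From $(A_Q-t_j)Qu_j=-Q(H-A)u_j$, $A_Q>8\theta\ge4t_j$ and the relative bound, one gets $\|XQF\|_{\rm HS}^2\le8\gamma^2\theta\operatorname{rank}F$.

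The model leakage compares only $A$ with $B$. With $Xr_a=\tau_al_a$ on $\operatorname{ran}R$,
\[
 (s_i^2-\tau_a^2)\langle v_i,r_a\rangle=s_i\langle w_i,(Y-X)r_a\rangle+\tau_a\langle(Y-X)v_i,l_a\rangle .
\]
After multiplying by $s_i$ the multipliers are at most $8/7$ and $\sqrt8/7$, so $\|YPR\|_{\rm HS}<\tfrac{8}{5}\|X-Y\|_{\rm HS}$.

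Since $YP$ is an output projection of $Y$, $\|YPQF\|_{\rm HS}\le\|YQF\|_{\rm HS}\le\|XQF\|_{\rm HS}+\|X-Y\|_{\rm HS}$. The $X$--$Y$ mismatch therefore enters only additively in Hilbert--Schmidt norm, and $P(A+\lambda I)^{1/2}$ is never compared with $PB^{1/2}$.

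Your direct $B$--$H$ route can also be completed once it is weighted correctly. Put $V=[v_i]$, $\Sigma=\operatorname{diag}(s_i)$ and $W_Y=[w_i]$. The statistical term is then at most $\tfrac{32}{31}\|\Sigma^{-1}V^{\mathsf T}(H-A)F\|_{\rm HS}$. Write $(A+\lambda I)^{1/2}=\tilde X^{\dagger}O$, with $\tilde Xz=(Xz,\sqrt\lambda\,z)$ and $O$ an isometry, and use $XV\Sigma^{-1}=W_Y-(Y-X)V\Sigma^{-1}$. This splits $\tilde XV\Sigma^{-1}$ into a part of operator norm at most $\sqrt{65/64}$ and a part of Hilbert--Schmidt norm at most $\|X-Y\|_{\rm HS}/(8\sqrt\theta)$. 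Pairing the first with $\gamma\|(A+\lambda I)^{1/2}F\|_{\rm HS}\le\gamma\sqrt{4\theta D}$ and the second with $\gamma\|(A+\lambda I)^{1/2}F\|_{\rm op}\le2\gamma\sqrt\theta$ gives admissible terms. This Hilbert--Schmidt/operator split, rather than an AM--GM absorption, is what resolves the conversion you were worried about.
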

\begin{proof}
\noindent\emph{Statistical leakage.}
Put $R=\mathbf1_{[0,8\theta]}(A)$, $Q=I-R$, $E=H-A$.
For an orthonormal eigenbasis $Hu_j=t_ju_j$ of $\operatorname{ran}F$,
\[
 \begin{gathered}
 \langle u_j,Au_j\rangle\le\frac{2+\gamma}{1-\gamma}\theta\le3\theta,\\
 (A_Q-t_jI)Qu_j=-QEu_j,\\
 E=(A+\lambda I)^{1/2}W(A+\lambda I)^{1/2},\qquad \|W\|\le\gamma,\\
 \|A_Q^{1/2}Qu_j\|\le\frac43\|A_Q^{-1/2}QEu_j\|
 \le\frac43\sqrt{\frac98}\,
       \gamma\|(A+\lambda I)^{1/2}u_j\|.
 \end{gathered}
\]
The multiplier $a/(a-t_j)$ is at most $4/3$ for $a>8\theta$.
Summing the squared inequalities gives
\begin{equation}
 \|XQF\|_{\rm HS}^2\le8\gamma^2\theta\operatorname{rank}F.
 \label{eq:n7-statistical-leakage}
\end{equation}

\noindent\emph{Model leakage.}
Put $D_0=Y-X$ and choose singular systems
$Yv_i=s_iw_i$ on $\operatorname{ran}P$ and
$Xr_a=t_al_a$ on $\operatorname{ran}R$; then
$s_i>8\sqrt\theta$ and $0\le t_a\le\sqrt{8\theta}$.
Expand $B-A=Y^\dagger D_0+D_0^\dagger X$:
\[
 (s_i^2-t_a^2)\langle v_i,r_a\rangle
 =s_i\langle w_i,D_0r_a\rangle
   +t_a\langle D_0v_i,l_a\rangle.
\]
Both rectangular coefficient arrays have Frobenius norm at most
$\|D_0\|_{\rm HS}$; the two scalar multipliers are bounded by
$8/7$ and $\sqrt8/7$. Therefore
\[
 \|YPR\|_{\rm HS}
 \le\frac{8+\sqrt8}{7}\|D_0\|_{\rm HS}
 <\frac85\|D_0\|_{\rm HS}.
\]
Since $YP$ is an orthogonal projection of the output of $Y$,
\[
 \begin{aligned}
 \|YPF\|_{\rm HS}
 &\le\|YPRF\|_{\rm HS}+\|YPQF\|_{\rm HS}\\
 &\le\frac{13}{5}\|D_0\|_{\rm HS}+\|XQF\|_{\rm HS}.
 \end{aligned}
\]
Square, apply $(a+b)^2\le2a^2+2b^2$, and use
\eqref{eq:n7-statistical-leakage}; the constants in
\eqref{eq:n7-relative-spectral-transfer} are larger than the resulting
$338/25$ and $16$.
\end{proof}

\noindent\emph{The local tensor to which the lemma applies.}
Linear symmetry spaces reveal tensor decompositions in
\cite{brooksbank2026null,vannieuwenhoven2026grassmann}; the slot-difference
map used here is specified below.
Let $x$ purify the actual subnormalized $p$-sector and let
$y=\sqrt m\,\omega\otimes e$, with
$\|x-y\|^2\le d$. For the normalized antisymmetric embedding $J_p$, set
\[
 X(C)=(C_1-C_2)(J_p\otimes I)x,
 \qquad Y(C)=(C_1-C_2)(J_p\otimes I)y.
\]
The slot identity
\[
 p\sum_{j=1}^{p}C_j^2-\left(\sum_{j=1}^{p}C_j\right)^2
 =\frac12\sum_{j\ne l}(C_j-C_l)^2
\]
identifies $X^\dagger X$ with~\eqref{eq:n7-positive-local-form}.
With $\kappa=2/(p-1)$, the operator and summed Frobenius bounds are
\[
 \|X-Y\|_{\rm op}^2\le\kappa d,\qquad
 \|X-Y\|_{\rm HS}^2\le2(q+1)d.
\]
For the second inequality, on the Hermitian basis used above,
$\sum_a E_a^2=qI$ and
$\sum_a E_a\otimes E_a$ is the slot swap. Its eigenvalue is $-1$
on antisymmetric tensors, so
$\sum_a\|(E_{a,1}-E_{a,2})z\|^2=2(q+1)\|z\|^2$.
For $\gamma=1/(512\sqrt{q+1})$ and
$\theta=\lambda+1024\kappa d$, every $k\in\ker Y$ satisfies
\begin{equation}
 \|(I-F)k\|^2
 \le\frac{(1+\gamma)\kappa d+2\gamma\lambda}{\theta}\|k\|^2
 <\frac1{64}\|k\|^2.
 \label{eq:n7-local-kernel-coverage}
\end{equation}
Indeed $H+\gamma\lambda I\succeq0$ and its expectation on $k$
is at most the numerator. On $\operatorname{ran}(I-F)$ its
eigenvalues are larger than $\theta$.

\noindent\emph{Branch-weighted residual.}
Write $y=\sum_{i=1}^{s}\sqrt{w_i}\ket{F_i}\otimes e$ with mutually
orthogonal rank-$p$ projectors $P_i$. Because $p\ge3$, a two-body
operator has zero matrix elements between distinct disjoint branch
determinants. For $\mu_i(C)=\Tr(P_iC)/p$,
\begin{equation}
 \begin{gathered}
 \|Y(C)\|^2=\sum_{i=1}^{s}w_i\left[
 \frac2{p-1}\|P_iCP_i-\mu_i(C)P_i\|_F^2
 +\frac2p\|(I-P_i)CP_i\|_F^2\right],\\
 \operatorname{spec}(Y^\dagger Y):\quad \frac{2w_i}{p-1}\ \text{on internal traceless matrices},\quad
 \frac{w_i+w_j}p\ \text{on the $(i,j)$ block},\quad
 \frac{w_i}p\ \text{on the $(i,0)$ block}.
 \end{gathered}
 \label{eq:n7-branch-gram-spectrum}
\end{equation}
Here $0$ denotes the unused mode space, and the branch scalars and
unused-space matrices belong to the kernel. The identity follows from
\[
 \bra{F_i}d\Gamma(C^2)\ket{F_i}=\|P_iCP_i\|_F^2+\|(I-P_i)CP_i\|_F^2,\qquad \bra{F_i}d\Gamma(C)^2\ket{F_i}=(\Tr P_iC)^2+\|(I-P_i)CP_i\|_F^2
\]
inserted into~\eqref{eq:n7-positive-local-form}. Let
$\mathcal S=\{i:w_i>64p\theta\}$ and let $G$ be a standard real
Gaussian on $\operatorname{Herm}(q)$. For $C=FG$,
\eqref{eq:n7-branch-gram-spectrum} puts every nonscalar direction
incident to $i\in\mathcal S$ into $\operatorname{ran}P$. Consequently,
\begin{equation}
 \begin{gathered}
 \begin{aligned}
 \mathbb E\sum_{i\in\mathcal S}w_i
       \|(C-\mu_i(C)I)P_i\|_F^2
 &\le\frac p2\mathbb E\|YPC\|^2
 =\frac p2\|YPF\|_{\rm HS}^2\\
 &\le32p(q+1)d+16p\gamma^2q^2\theta,
 \end{aligned}\\
 \sum_{i\notin\mathcal S}w_i\le64q\theta,\\
 \mathbb E\sum_{i\in\mathcal S}w_i
       \|(C-\mu_i(C)I)P_i\|_F^2
 +\sum_{i\notin\mathcal S}w_i\le C_p(q+1)(d+\lambda).
 \end{gathered}
 \label{eq:n7-strong-branch-residual}
\end{equation}
The algorithm uses $F$, not $\mathcal S$ or $P_i$. Removing weak
comparison components before normalized filtering charges their
total squared mass once; it does not assume that their individual
residuals satisfy the strong-branch estimate.

\subsection{Local support, coherences and proper preparations}

\begin{alem}[Probability-weighted local branch support]
\label{lem:n7-local-support}
Let $3\le p\le r_\star$, and let the actual subnormalized $p$-particle
state have a purification $\ket X$ satisfying
\[
 \|X-Y\|^2\le d,\qquad
 \ket Y=\sqrt m\sum_{i=1}^{s}c_i\ket{F_i}\otimes\ket e,
 \qquad \sum_{i=1}^{s}|c_i|^2=1,
 \quad |F_i|=p,\quad F_i\cap F_j=\varnothing\ (i\ne j).
\]
Use the relative Gram estimator and spectral filter of
Lemma~\ref{lem:n7-relative-spectral-transfer}, with regularization
$v>0$, and estimate the normalized one-particle density to squared
purification error $e_1$. A randomized classical procedure returns
mutually orthogonal rank-$p$ orbital projectors $Q_1,\ldots,Q_{s'}$,
$ps'\le q$, whose determinant code $\mathcal C$ obeys
\begin{equation}
 m\bigl(1-\bra\omega P_{\mathcal C}\ket\omega\bigr)
 \le C_p\ell^2(q+1)(d+v)+C_pm e_1,
 \qquad
 \ell=1+\log\frac{q+1}{\min\{d+v,1/8\}},
 \label{eq:n7-local-support-loss}
\end{equation}
with constant success probability. This procedure is polynomial in
$q$, $(d+v)^{-1}$ and the input precision; neither the branch count nor
any positive weight floor is supplied. Large-error cases use an arbitrary
rank-$p$ determinant and the corresponding loose bound.
\end{alem}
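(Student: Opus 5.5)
The plan is to combine the branch-weighted residual \eqref{eq:n7-strong-branch-residual} with a spectral clustering of one random probe matrix, and then to pay for the clustering by a dyadic decomposition of the branch weights. First apply Lemma~\ref{lem:n7-local-relative-acquisition} and Lemma~\ref{lem:n7-relative-spectral-transfer} to the local tensor $X$, with $\gamma=1/(512\sqrt{q+1})$ and $\theta=\lambda+1024\kappa d$ as in \eqref{eq:n7-local-kernel-coverage}. This produces the measured filter $F$. Draw a standard Gaussian $G$ on $\operatorname{Herm}(q)$ and form $C=FG$. By \eqref{eq:n7-strong-branch-residual}, in expectation every strong branch $i\in\mathcal S$ sees $C$ as nearly the scalar $\mu_i(C)$ on $P_i$. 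The error is weighted by $w_i$ and totals $C_p(q+1)(d+\lambda)$. The weak branches carry total mass at most $64q\theta$, which is already of the form allowed in \eqref{eq:n7-local-support-loss} and is simply charged.

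Next, the coverage bound \eqref{eq:n7-local-kernel-coverage} shows that $F$ keeps the exact branch scalars $P_i$ up to a $1/64$ loss. Consequently the scalars $\mu_i(C)$ are, up to a small perturbation, jointly Gaussian with an essentially nondegenerate covariance on the strong branches, and distinct branches receive separated values. For each scale $2^{-k}$ of the weights, with $k\le\ell$, I would cluster the eigenvalues of the Hermitian part of $C$ (compressed by the learned one-particle density) into windows of width comparable to the typical gap. A cluster is kept only when its eigenspace has rank exactly $p$ and is orthogonal to previously accepted projectors; its projector $Q_j$ is then recorded. Davis--Kahan, applied with the weighted residual, bounds $w_i\|(I-Q_{j(i)})P_i\|_F^2$ by the residual divided by the squared separation. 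Anticoncentration of Gaussian differences $\mu_i-\mu_{i'}$ costs at most a factor $\ell$ per side over the dyadic scales, which explains the $\ell^2$ in \eqref{eq:n7-local-support-loss}. A constant number of independent probes, combined with Markov's inequality, gives constant success probability.

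Finally, I would convert projector errors into determinant-code loss. For a single determinant, $1-\bra{F_i}P_{\mathcal C}\ket{F_i}\le\|(I-Q_j)P_i\|_F^2$, since $F_i$ is a wedge of $p$ orbitals. Summing with weights $m w_i$, and handling cross terms via orthogonality of distinct branch determinants and of the code words, gives \eqref{eq:n7-local-support-loss}. The term $m e_1$ accounts for replacing the exact density by its estimate when whitening and when choosing cluster windows; the bound on it is again a purification contraction. When $d+v\ge1/8$ the right side exceeds $m$ after adjusting $C_p$, so an arbitrary determinant suffices.

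The main obstacle is the clustering step. Strong branches can have weights spanning many scales, and the unweighted eigen-perturbation of $C$ need not be small on a low-weight strong branch. Weighted residuals therefore have to be turned into rank-$p$ eigenspaces without dividing by $w_i$. My first attempt would be to whiten $C$ by the estimated one-particle density, so that each branch's perturbation is measured relative to its own weight. I would then do the cluster search scale by scale, from heavy to light weights, removing accepted subspaces before descending; this is the source of the logarithmic factor $\ell$.
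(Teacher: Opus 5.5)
Your proposal has a genuine gap, and it is in the step you yourself flag as the main obstacle: eigenvalue clustering of a single probe $C=FG$ cannot produce \eqref{eq:n7-local-support-loss}.

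\textbf{Spectral contamination.} $F$ nearly contains the kernel of $Y$ by \eqref{eq:n7-local-kernel-coverage}. That kernel includes every Hermitian matrix supported on the unused mode space, and $F$ may also keep directions attached to weak branches. On those directions $C$ is an unstructured Gaussian Hermitian matrix. Its spectrum has width of order the square root of the unused dimension and densely overlaps each $\mu_i(C)$. So windows around $\mu_i(C)$ generally do not have rank $p$. Moreover, for a low-weight strong branch, \eqref{eq:n7-strong-branch-residual} controls the coupling $(I-P_i)CP_i$ only after multiplication by $w_i$. An unweighted Davis--Kahan bound on $C$ therefore gives nothing there. Compressing or whitening by the estimated density does not rescue this. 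It either divides by $w_i$, which the lemma forbids, or turns the labels into products $w_i\mu_i/p$, mixing weights with labels.

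\textbf{Single-probe separation.} Even without contamination, one Gaussian probe separates $s$ branch scalars only at spacing of order $1/s$. The weighted mass of branches lying within $\epsilon$ of a neighbour is only \emph{linearly} small in $\epsilon$, roughly $ms\epsilon$. Balancing this against a Davis--Kahan cost $R/\epsilon^2$, with $R=C_p(q+1)(d+v)$, gives at best $(ms)^{2/3}R^{1/3}$. That is not linear in $d+v$ with polylogarithmic overhead. A constant number of repetitions plus Markov does not change this, and your dyadic-weight explanation of $\ell^2$ has no actual mechanism behind it.

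\textbf{What the paper does instead.} The missing idea is to use $k=\lceil128\log(16(q+1)/\zeta)\rceil=O(\ell)$ independent probes $C_j=FG_j$ jointly, and never to diagonalize them.
\begin{itemize}
\item With constant probability the probes satisfy $\sum_j(\mu_{ji}-\mu_{jl})^2\ge k/(2p)$ for every pair of branches.
\item The paper applies the normalized Gaussian-smoothed spectral instruments $K_{j,y}=k_h(y-C_j)$, with $h^2=(64p)^{-1}$, sequentially to the purification $\widehat Z$ of the estimated one-particle density.
\item Because $\widehat Z\approx\sum_iZ_i$ with $Z_iZ_i^\dagger=(w_i/p)P_i$, the unused space carries no weight. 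The telescoping error \eqref{eq:n7-filter-telescope} is then controlled by the branch-weighted residual, giving $E_0=C_p[me_1+k^2(q+1)(d+v)]$. This $k^2$ is the true source of $\ell^2$.
\item Each ideal label vector is Gaussian about its own branch centre, so misassignment costs at most $qe^{-4k}\le\zeta$.
\item One then samples label paths and takes the top-$p$ eigenspace of each filtered density. The bound $a_y\|Q-P_i\|_F^2\le20pe_y^2$ avoids any division by branch mass.
\item The sampled projectors are grouped by a greedy weighted-ball rule and orthogonalized by one weighted polar factor.
\end{itemize}

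\textbf{Cross terms in the final step.} Your last step needs more than ``orthogonality''. The vectors $(I-P_{\mathcal C})\ket{F_i}$ are not mutually orthogonal, and the triangle inequality costs a branch-count factor, which the lemma excludes. The paper instead uses a single passive unitary $U=e^{iA}$ with $UT_i=R_i$ for all retained branches. It then bounds $\|d\Gamma(A)\psi_R\|^2\le p\Tr(\gamma_RA^2)\le\frac{p\pi^2}{4}\sum_iw_i\|R_i-T_i\|_F^2$. This is where the disjointness of the branches is used.
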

\begin{proof}
\noindent\emph{Weighted comparison.}
Write $w_i=m|c_i|^2$ and
\[
 \rho_{\star,1}=\frac1p\sum_{i=1}^{s}w_iP_i=ZZ^\dagger,
 \quad Z=\sum_{i=1}^{s}Z_i,
 \quad Z_iZ_i^\dagger=\frac{w_i}{p}P_i,
 \quad Z_i=Z_iE_i,
 \quad E_iE_j=0\quad(i\ne j).
\]
The $E_i$ are orthogonal auxiliary projections. For actual sector mass $a$,
$|\sqrt a-\sqrt m|^2\le d$ and $a\le2m+2d$.
Aligning purifications of the estimated one-particle density and scaling
to $m$ for analysis gives
\[
 \|\widehat Z-Z\|_F^2\le C(d+me_1).
\]
This scaling leaves every normalized classical filtering probability
unchanged; $m,Z$ and $P_i$ are not algorithm inputs.

Set $\theta=v+1024\kappa_pd$, with $\kappa_p=2/(p-1)$, and
$\mathcal S=\{i:w_i>64p\theta\}$. Equations
\eqref{eq:n7-strong-branch-residual} and
\eqref{eq:n7-local-kernel-coverage} give, for independent standard
real Gaussian matrices $G_j$ in $\operatorname{Herm}(q)$ and
$C_j=FG_j$,
\[
 \begin{gathered}
 \mu_{ji}=p^{-1}\operatorname{Tr}(P_iC_j),\qquad \operatorname{Var}(\mu_{ji}-\mu_{jl})\ge\frac1p\quad(i\ne l),\\
 \mathbb E\sum_{j=1}^{k}\sum_{i\in\mathcal S}
       w_i\|(C_j-\mu_{ji}I)P_i\|_F^2
       \le C_pk(q+1)(d+v),\qquad \mu_{\rm weak}:=\sum_{i\notin\mathcal S}w_i
       \le C_pq(d+v).
 \end{gathered}
\]
Take $h^2=(64p)^{-1}$ and
$k=\lceil128\log(16(q+1)/\zeta)\rceil$, where
$\zeta=\min\{1/8,(q+1)(d+v)\}$. Gaussian lower tails and Markov's
inequality give, with constant joint probability,
\[
 \min_{i\ne l}\sum_{j=1}^{k}(\mu_{ji}-\mu_{jl})^2\ge\frac{k}{2p},
 \quad \sum_{j=1}^{k}\sum_{i\in\mathcal S}
      w_i\|(C_j-\mu_{ji}I)P_i\|_F^2
      \le C_pk(q+1)(d+v),
 \quad\max_j\|C_j\|\le Cq\sqrt k.
\]
The empty or single strong-branch case needs no separation event.

\noindent\emph{Normalized matrix filtering.}
Put
\[
 k_h(x)=(2\pi h^2)^{-1/4}e^{-x^2/(4h^2)},
 \quad K_{j,y}=k_h(y-C_j),
 \quad \int_{\mathbb R}K_{j,y}^\dagger K_{j,y}\,dy=I_q.
\]
From the current classical density $\widehat\rho$, draw $y$ with density
$\operatorname{Tr}(K_{j,y}\widehat\rho K_{j,y}^\dagger)$ and replace
$\widehat\rho$ by the normalized matrix
$K_{j,y}\widehat\rho K_{j,y}^\dagger$.
The scalar identity
\[
 \int_{\mathbb R}|k_h(y-x)-k_h(y-z)|^2dy
 =2\bigl(1-e^{-(x-z)^2/(8h^2)}\bigr)
 \le\frac{(x-z)^2}{4h^2}
\]
and spectral calculus give
\[
 \int_{\mathbb R}
 \|[K_{j,y}-k_h(y-\mu_{ji})I]Z_i\|_F^2dy
 \le\frac{w_i}{4ph^2}\|(C_j-\mu_{ji}I)P_i\|_F^2.
\]
Remove weak terms only from the comparison purification:
$Z_{\mathcal S}=\sum_{i\in\mathcal S}Z_i$. Then
\[
 \|\widehat Z-Z_{\mathcal S}\|_F^2
 \le C(d+me_1+\mu_{\rm weak}).
\]
Telescoping the $k$ normalized instruments, using their isometries and
the orthogonal auxiliary supports, yields
\begin{equation}
 \begin{aligned}
 &\int\Bigl\|K_{k,y_k}\cdots K_{1,y_1}\widehat Z
       -\sum_{i\in\mathcal S}
           \Bigl[\prod_{j=1}^{k}k_h(y_j-\mu_{ji})\Bigr]Z_i\Bigr\|_F^2
          d\mathbf y\\
 &\qquad\le
 2\|\widehat Z-Z_{\mathcal S}\|_F^2+
 \frac{k}{2ph^2}\sum_{j=1}^{k}\sum_{i\in\mathcal S}
     w_i\|(C_j-\mu_{ji}I)P_i\|_F^2\\
 &\qquad\le C_p\bigl[me_1+k^2(q+1)(d+v)\bigr]=:E_0.
 \end{aligned}
 \label{eq:n7-filter-telescope}
\end{equation}
The actual instrument still acts on the entire estimated density.
The nearest-center cells are used only to analyze its labels.
Each ideal label law is Gaussian with variance $h^2I_k$; its mass
outside its own cell is at most
$q\exp[-k/(16ph^2)]\le\zeta$. Add this mass to $E_0$.

\noindent\emph{Fitting and grouping.}
In the analysis cell of branch $i$, let $Z'$ be the unnormalized
filtered purification and let $Z'_*$ be its ideal component, so
$Z'_*Z'^\dagger_*=(t/p)P_i$.
The algorithm takes the top-$p$ eigenspace $Q$ of the filtered density.
Writing $e_y^2=\|Z'-Z'_*\|_F^2$ and $a_y=\|Z'\|_F^2$,
\[
 \|(I-Q)Z'\|_F^2\le e_y^2,\qquad \|(I-Q)Z'_*\|_F^2\le4e_y^2,\qquad t\|Q-P_i\|_F^2\le8pe_y^2,\qquad a_y\|Q-P_i\|_F^2\le20pe_y^2.
\]
The final inequality uses $a_y\le2t+2e_y^2$ and
$\|Q-P_i\|_F^2\le2p$; no division by $t$ occurs.
For subnormalized cell masses $a_i,w_i^*$ and cell error $E_i$, the reverse
$L^2$ triangle inequality gives
\[
 |\sqrt{a_i}-\sqrt{w_i^*}|\le\sqrt{E_i},
 \quad
 \sum_{i:\,w_i-w_i^*>w_i/2\ \mathrm{or}\ E_i>w_i^*/16}w_i
       \le2\zeta+32E_0,
 \quad a_i\ge\frac9{32}w_i
\]
on the remaining branches. Draw
$M\ge C(q/\zeta)\log(8q)$ independent complete classical label paths.
For analysis assign every sampled projector weight $m/M$; multiplication
of all weights by $m$ changes neither grouping nor the polar optimizer.
Chernoff and Markov give empirical cell weights $\lambda_i\ge a_i/2$
above the cutoff $a_i\ge\zeta/q$, and empirical mean projector error
$\mathcal E\le C_pE_0$, with constant success probability.

For completeness, the grouping uses only the fitted projectors.
Put $b_p=\sqrt{2p}/40$. Repeatedly choose a remaining projector whose
$2b_p$-ball has largest total weight, remove its $6b_p$-ball, and retain
at most $\lfloor q/p\rfloor$ groups of largest weight. Fit each retained
group by the top-$p$ projector of its weighted matrix sum. A point with
distance greater than $b_p$ from its analysis center has total mass at
most $\mathcal E/b_p^2$. A selected $2b_p$-ball meeting a good branch
produces one group containing all its remaining good points and no
other good branch, since $10b_p<\sqrt{2p}$. Otherwise maximality
charges removed good mass to exceptional mass. Discarding excess
groups charges no more than the retained unmatched mass.
Consequently, for group weights $m_G$ and fitted frames $X_G$, there
are comparison frames $T_G$ with orthogonal columns and
\[
 \sum_Gm_G\|X_G-T_G\|_F^2\le C_p\mathcal E,
 \qquad\sum_{i\ {\rm omitted}}\lambda_i\le C_p\mathcal E.
\]
For $X=[X_G]$ and $W=\operatorname{diag}(m_GI_p)$, an isometric polar
factor $R$ of $XW$ minimizes $\sum_Gm_G\|R_G-X_G\|_F^2$.
Its variational inequality and the squared triangle inequality give
\[
 \sum_{i\ {\rm retained}}w_i
       \|R_{G(i)}-T_{G(i)}\|_F^2
       +\sum_{i\ {\rm omitted}}w_i
 \le C_p(E_0+\zeta).
\]
Complete any rank-deficient polar factor isometrically.
For matched retained frames write $T_i=T_{G(i)}=[t_{i1}\ \cdots\ t_{ip}]$
and $R_i=R_{G(i)}$.
Extend $UT_i=R_i$ to a unitary $U=e^{iA}$, with
$A=A^\dagger$ and $\operatorname{spec}A\subset[-\pi,\pi]$.
Absorb the frames' determinant phases into $\varphi_i$ and put
\[
 \begin{gathered}
 \ket{\psi_R}=\sum_{i\ {\rm retained}}\sqrt{w_i}e^{i\varphi_i}
       \ket{t_{i1}\wedge\cdots\wedge t_{ip}},\qquad \gamma_R=\sum_{i\ {\rm retained}}w_iT_iT_i^\dagger,\\
 \sqrt m\ket\omega=\ket{\psi_R}+\ket{\psi_{\rm omit}},\qquad \|\psi_{\rm omit}\|^2=\sum_{i\ {\rm omitted}}w_i.
 \end{gathered}
\]
Disjoint branches with $p\ge2$ give the displayed one-particle density.
Let $A_j$ act on tensor factor $j$. Spectral calculus and
Cauchy--Schwarz on $\bigwedge^p\mathbb C^q$ give
\[
 \begin{gathered}
 A^2\preceq\frac{\pi^2}{4}(U-I)^\dagger(U-I),\qquad d\Gamma(A)=\sum_{j=1}^{p}A_j,\\
 \begin{aligned}
 \|d\Gamma(A)\psi_R\|^2
 &\le p\sum_{j=1}^{p}\|A_j\psi_R\|^2
   =p\Tr(\gamma_RA^2)\\
 &\le\frac{p\pi^2}{4}\sum_{i\ {\rm retained}}
                   w_i\|R_i-T_i\|_F^2.
 \end{aligned}
 \end{gathered}
\]
Since $(\wedge^pU)\psi_R\in\mathcal C$,
\[
 \begin{gathered}
 \begin{aligned}
 \|(I-P_{\mathcal C})\psi_R\|^2
 &\le\|[(\wedge^pU)-I]\psi_R\|^2\\
 &=\left\|i\int_0^1e^{itd\Gamma(A)}d\Gamma(A)\psi_R\,dt\right\|^2
   \le\|d\Gamma(A)\psi_R\|^2,
 \end{aligned}\\
 m\|(I-P_{\mathcal C})\omega\|^2\le\frac{p\pi^2}{2}\sum_{i\ {\rm retained}}
                w_i\|R_i-T_i\|_F^2
          +2\sum_{i\ {\rm omitted}}w_i
   \le C_p(E_0+\zeta).
 \end{gathered}
\]
This proves~\eqref{eq:n7-local-support-loss} without a branch-count
factor. All comparison labels, frames and $U$ are proof objects.

\noindent\emph{Finite labels and precision.}
For $\|C_j\|\le\Lambda$, use midpoint bins of spacing
$\Delta=h\vartheta$ on $[-M_0,M_0]$,
$M_0\ge\Lambda+2h\sqrt{\log(4/\vartheta)}$, and
\[
 s(x)=\Delta\sum_{l=1}^{J}k_h(y_l-x)^2,
 \quad g_l(x)=\frac{\sqrt\Delta\,k_h(y_l-x)}{\sqrt{s(x)}},
 \quad\sum_{l=1}^{J}g_l(C_j)^2=I_q.
\]
The midpoint derivative estimate and Gaussian tail give
$|\sqrt{s(x)}-1|\le\vartheta$ and filter-isometry error
$\le2\vartheta$. Taking $\vartheta=\sqrt\zeta/(16k)$ adds
$O(\zeta)$ to \eqref{eq:n7-filter-telescope}; the full matrix update
retains all coherences. Across $kM$ updates, discard labels of
conditional probability below $2\xi$, where
$\xi=\rho/(16kMJ)$, at total probability cost at most $\rho/8$.
With purification arithmetic error $u$, retained normalizations obey
\[
 e_{j+1}\le C\xi^{-1/2}(e_j+u),\qquad
 e_k\le ku(C/\sqrt\xi)^k.
\]
Choosing $u\le\xi_0k^{-1}(\sqrt\xi/C)^{k+1}$ controls the final
error by any prescribed $\xi_0$ using
$O(\log(1/\xi_0)+k\log(C/\xi)+\log k)$ bits. Rounded Gaussian
draws, distance comparisons and polar fits use the same error budget.
Thus no uncharged inverse weight or spectral gap enters the implementation.
\end{proof}

\noindent\emph{Physical one-particle records and the occupied core.}
For an unnormalized $k$-particle sector of mass $a$, let
$\gamma_{ij}=\operatorname{Tr}(\sigma_k a_j^\dagger a_i)$.
In any chosen passive orbital basis $\{u_j\}_{j=1}^{q}$, selecting
one of the $k$ occupied orbitals uniformly after full readout gives
\[
 \Pr\{j,\ |S|=k\}
   =\frac1k\sum_{S:j\in S}\Pr(S,|S|=k)
   =\frac1k\bra{u_j}\gamma\ket{u_j}.
\]
Selecting an unoccupied orbital instead gives
$\bra{u_j}(aI-\gamma)\ket{u_j}/(q-k)$.
Thus orthonormal-basis tomography of the normalized particle or hole
density is implemented by these classical records. Every subspace
restriction is a complete basis refining that subspace, with
complementary outcomes discarded classically. No surviving projected
quantum state is required. The full-rank specialization of
Flammia--O'Donnell, Corollary~3.21, Eq.~(83), with Proposition~3.25
\cite{flammia2023chi} uses
\[
 Cq^3a_1^{-1}[1+\log(1/a_1)]^2[1+\log\log(e/a_1)]
\]
such records for squared purification error $a_1$, at constant
confidence. Independent reconstructions and metric-majority selection
increase confidence logarithmically. First estimate the sector mass
with $O(v^{-1}\log(1/\alpha))$ Bernoulli records; omit only masses
$O(v)$ and otherwise obtain a constant-factor estimate $\widehat a$.
Precision $a_1=cv/\widehat a$ and capped collection therefore cost
$\widetilde O(q^3/v)$ original copies with $aa_1=O(v)$.
For holes use $a_1=cv/[\widehat a(q-k)]$, at cost
$\widetilde O(q^4/v)$ with $a(q-k)a_1=O(v)$.

In the dominant $2p$-particle form, let the normalized comparison be
$\ket\beta=\ket C\wedge\ket\zeta$, where $\dim C=p$ and every core
orbital is occupied. If the actual normalized sector is at squared
purification distance $d/a$, put $h=q-2p>0$ and
$\rho_h=(I-\gamma/a)/h$. Choose $Q$ as a bottom-$p$ eigenspace of
the estimated hole density. With $H_P=pI-N_P$ and hole error $a_1$,
purification contraction and Ky Fan optimality give
\[
 \begin{gathered}
 \operatorname{Tr}(\rho H_C)\le p d/a,\\
 \operatorname{Tr}(\rho H_Q)
    =h\operatorname{Tr}(Q\rho_h)
    \le2h\operatorname{Tr}(Q\widehat\rho_h)+2ha_1
    \le4p d/a+6ha_1,\\
 \bra\beta H_Q\ket\beta\le10p d/a+12ha_1.
 \end{gathered}
\]
The estimate does not imply a small angle between $C$ and $Q$.
For their principal-plane angles $\theta_j$ and outside occupations
$z_j$, the passive rotation generator obeys
\[
 H=i\sum_{j=1}^{p}\theta_j
      (a_{v_j}^\dagger a_{u_j}-a_{u_j}^\dagger a_{v_j}),
 \quad
 \|H\beta\|^2=\sum_{j=1}^{p}\theta_j^2(1-z_j)
 \le\frac{\pi^2}{4}\sum_{j=1}^{p}\sin^2\theta_j(1-z_j)
 =\frac{\pi^2}{4}\bra\beta H_Q\ket\beta.
\]
The reverse terms annihilate $\beta$ and the forward terms have
orthogonal core holes. Integrating $e^{-itH}\beta$ and multiplying
by $a$ gives squared comparison cost $C_p(d+v)$.
Projection onto $Q$ fully occupied and removal of its determinant
contract purification distance, so the residual fits
Lemma~\ref{lem:n7-local-support}. If $q=2p$, the full sector is
one-dimensional and no hole tomography is needed.

\noindent\emph{Measured code selection.}
For a recovered $p$-particle code, $E=\Pi_p-P_{\mathcal C}$ is
a projector measured by the occupation indicator in its learned frame.
For actual and comparison purifications $X,Y$,
\[
 \bra X E\ket X\le2\bra Y E\ket Y+2\|X-Y\|^2,
 \qquad
 \bra Y E\ket Y\le2\bra X E\ket X+2\|X-Y\|^2.
\]
If the useful budget is $b\ge\|X-Y\|^2$, a comparison loss at most
$b$ has actual rejection at most $4b$; actual rejection at most $8b$
implies comparison loss at most $18b$. With $M_0$ capped trials,
$O(b^{-1}\log(M_0/\alpha))$ fresh records per trial and threshold
$6b$ distinguish these cases by Chernoff. Use $I_C\otimes E$ in
the central form and $\Pi_{2p}-P_{\rm excited}$ in the dominant form.
This counts every original copy, including zero out-of-sector records.

\begin{alem}[Two-particle local restoration without amplification]
\label{lem:n7-two-particle-local}
For any of the canonical local forms with $p=2$, squared comparison
purification error $d$ and prescribed $e>0$, fresh Gaussian
measurements and classical computation return a proper local preparation
with infidelity at most $C(d+e^2)$ using
\[
 N_2\le C(q+1)^5e^{-2}\log\frac{C(q+1)}\alpha
\]
copies, with failure probability at most $\alpha$.
\end{alem}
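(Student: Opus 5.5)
The plan is to use the fact that for $p=2$ no symmetry-form acquisition is needed. Lemma~\ref{lem:n7-local-relative-acquisition} requires $p\ge3$ precisely because, for $p=2$, pair-hopping terms connect distinct branch determinants. At the same time, \emph{every} two-particle vector already has the required block form. Write the unnormalized comparison sector vector as $\sqrt m\,\ket\omega=\sum_{i<j}\Omega_{ij}a_i^\dagger a_j^\dagger\ket0$ with $\Omega$ antisymmetric. The Youla normal form $\Omega=U\bigl(\bigoplus_l c_l\,\bigl(\begin{smallmatrix}0&1\\-1&0\end{smallmatrix}\bigr)\bigr)U^{\mathsf T}$, with $U$ unitary, expresses the vector as a passive rotation of a superposition of disjoint occupied pairs. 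So restoration reduces to estimating $\Omega$ up to a global phase, in an absolute (unnormalized) norm, and then decomposing it classically. The decomposition is an eigendecomposition of $\Omega^\dagger\Omega$ plus the $2\times2$ block phases. It yields disjoint branches, normalized coefficients and a passive $U$, i.e.\ a proper local preparation, with no weight floor.

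\emph{Acquisition.} I would estimate the unnormalized two-body reduced operator
\[
\rho^{(2)}_{(ij),(kl)}=\Tr\bigl(\sigma_2\,a_i^\dagger a_j^\dagger a_l a_k\bigr)
\]
of the actual two-particle sector $\sigma_2$. In the comparison it equals the rank-one matrix $\overline\Omega_{ij}\Omega_{kl}$ plus the incoherent part $\mu\ketbra{h}{h}$. For the central form $h$ is outside the pair sector after projection; in the dominant form $h$ is removed by the core step preceding the lemma.
\begin{enumerate}
\item Group the index quadruples by a random perfect matching of the $q$ modes, as in Lemma~\ref{lem:batch-quartic-records}.
\item On each matched pair, apply a two-mode Gaussian rotation among the permitted quadratures, then read out occupations. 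Each real and imaginary part of an entry becomes a product of two bounded quadrature records restricted to the two-particle outcome sector.
\item Each record has variance $O(1)$. There are $O(q)$ settings per matching level, and every entry is hit with probability $\Omega(q^{-2})$ per copy.
\item Summing variances over the $O(q^4)$ entries gives $\E\|\widehat\rho^{(2)}-\rho^{(2)}\|_F^2\le C(q+1)^5/N_2$.
\end{enumerate}
Hence $N_2=C(q+1)^5e^{-2}$ gives Frobenius error $e$. Independent repetitions with a metric-majority (median-of-means) selection raise the confidence to $1-\alpha$ at a cost of $\log(C(q+1)/\alpha)$. The purification bound~\eqref{eq:n7-local-purification} transfers to $\|\rho^{(2)}-\rho^{(2)}_{\rm comp}\|_1\le C\sqrt d$ in the unnormalized scale, because both sides are compressions of purifications at distance $\sqrt d$. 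No division by the sector mass $m$ occurs; this is the sense of ``without amplification''.

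\emph{Extraction.} I would take the top eigenvector $\widehat\Omega$ of $\widehat\rho^{(2)}$, scaled by the square root of its eigenvalue. A rank-one fit needs no gap between the top eigenvalue and the rest: the best rank-one approximation $\widehat\Omega\widehat\Omega^\dagger$ satisfies $\|\widehat\Omega\widehat\Omega^\dagger-\Omega\Omega^\dagger\|_F\le2\|\widehat\rho^{(2)}-\Omega\Omega^\dagger\|_F$. A rank-one comparison then gives $\min_\phi\|\widehat\Omega-e^{i\phi}\Omega\|_F^2\le C(d+e^2)$, with the incoherent mass $\mu$ charged to $d$. After antisymmetrizing $\widehat\Omega$ and computing its Youla form, the infidelity between the normalized outputs is bounded by the unnormalized error.

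\emph{Assembly with the local form.} In the central form the coherence between $\alpha\otimes\ket0$ and $g\otimes\beta_2$ must also be recovered. For the dominant form it is the coherence $\alpha\ket0+\beta_{2p}$ after core removal. I would handle this by measuring the anomalous products $\Tr(\sigma\,a_ia_j)$ with the central pairing quadratures. These are permitted only in the central band and are already available in the passive case as number-conserving cross-sector records. They estimate $\overline\alpha\,\Omega$ directly and fix the relative phase and amplitude at the same $O(q^3/e^2)$-type cost.

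I expect the main obstacle to be this cross-sector coherence step. The point is to keep every estimate in unnormalized form, so that a small $|\alpha|$ or small pair mass never enters a denominator. The plan is to fit the joint vector $(\alpha,\Omega)$ as a single rank-one Hermitian block matrix assembled from the diagonal-sector and anomalous records, so the same rank-one fit argument as in the extraction step covers it.
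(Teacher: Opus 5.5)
Your proposal has genuine gaps. They sit in the dominant and central forms, which is where the lemma's ``without amplification'' content lies; the nondominant form works.

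\textbf{The extraction step.} A Frobenius bound $\|\hat\Omega\hat\Omega^\dagger-\Omega\Omega^\dagger\|_F\le\epsilon$ gives only $\min_\phi\|\hat\Omega-e^{i\phi}\Omega\|^2\le\sqrt2\,\epsilon$, or $2\epsilon^2/\|\Omega\|^2$. The pair $\hat\Omega=0$, $\|\Omega\|^2=\epsilon$ shows that $C(d+e^2)$ cannot follow without a lower bound on $\|\Omega\|$. A component read off from its own unnormalized density matrix therefore loses a square root, so keeping estimates unnormalized is not what prevents amplification.

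Your bound is valid in the nondominant form only because $\|v_p\|^2=1-\mu\ge7/8$. In other words, you are in fact dividing by a sector mass of constant size. Nor can $\mu$ be charged to $d$: it is part of the exact comparison marginal, bounded only by $2n\eta$, which may exceed $d$ by a factor of order $n$. Where the incoherent vector lies inside the fitted space (the central $\ket{g,0}$), it drops out only through orthogonality and the gap $W\ge7/8>1/8\ge\mu$. That is the leading-eigenvector equation, which your gap-free triangle inequality discards.

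\textbf{The dominant form.} Here $\|\beta_4\|$ may be arbitrarily small, and a different idea is needed. The paper learns the whole vector linearly from the vacuum column, $(\Pi_0+\Pi_4)\tau\ket0=\bar\alpha v$ with $\|\bar\alpha v\|>5/8$. Normalization then has constant condition number however small $\beta_4$ is.

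The required data are vacuum--$4$ coherences, quartic in annihilators, with two legs on the still unknown core. Your proposal does not obtain them:
\begin{itemize}
\item They are not the quadratic means $\Tr(\sigma a_ia_j)$, and there is no central band in this form.
\item Passive matched-pair records post-selected on one particle sector are blind to cross-sector coherences.
\item The core projection onto $Q$ fully occupied, which you invoke from before the lemma, annihilates $\alpha\ket0$ if applied first.
\end{itemize}
The paper estimates these $C(q+1)^4$ entries with active Gaussian projectors through the Fourier identity, with shadow variance $C(q+1)$. That is the source of $(q+1)^5e^{-2}$. Your own count, $O(q^4)$ entries each with importance-weighted variance $O(q^2)$, gives $q^6$ as written. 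The paper finds the core only afterwards, as a top-two eigenspace of the one-particle matrix of $z=\Pi_4\hat v$, using $\bra zH_Q\ket z\le\bra\xi H_P\ket\xi\le2a^2$, without dividing by $w$.

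\textbf{The central form.} The bounded central Gaussian frame is unknown and possibly active. So $g$ need not be the vacuum, $v$ is not confined to one particle sector of the learned coordinates, and $\alpha$ is a central vector rather than a scalar. Writing $\bar\alpha\Omega$ and applying Youla's form to a pair vector therefore does not produce a proper central-plus-outside preparation. The paper closes this with a constant-dimensional fit over the central frame and branch coefficients.
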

\begin{proof}
\noindent\emph{Acquire the actual marginal.}
For $q\ge4$, Low's passive-sector inverse has a conjugated diagonal
seed with eigenvalues
\[
 e_s=(-1)^{2+s}\frac{\binom{q-2+s}{s}}{\binom2s},\quad0\le s\le2,
 \qquad A=\binom q2,\quad B=\max(0,-\min_s e_s)\le q.
\]
Thus its Hermitian record satisfies $-BI\preceq X\preceq AI$. The scalar inequality
$(x+B)(A-x)\ge0$ gives
\[
 X^2\preceq(A-B)X+ABI,
 \quad\mathbb EX=\sigma_2,
 \quad\mathbb E(X-\sigma_2)^2\preceq(A+AB)I\preceq Cq^3I.
\]
Set the record to zero outside that sector. The identities and
variance bound persist for the actual mixed marginal, with no
division by sector mass. Matrix Bernstein then gives operator error
$e$ using $Cq^3e^{-2}\log(Cq/\alpha)$ copies. This is Low's
particle-sector inverse, specialized to two particles~\cite{low2024}.
For $q<4$ the space has bounded dimension and Gaussian tomography
has constant cost at fixed precision. Tensoring the bounded central
space changes only constants.

The Fourier projector identity in the proof of
Lemma~\ref{lem:n7-code-compression} estimates each vacuum--$k$
coherence, $k\le4$, with a state-uniform full-local Gaussian-shadow
variance $C(q+1)$~\cite[Theorem~2, Section~6.1 and Appendices~D,F]{wan2023}.
If $\widehat F_a$ are the projector estimates and $d_a$ the Fourier
coefficients, their common-record covariance is bounded by
\[
 \mathbb E\left|\sum_a d_a(\widehat F_a-\mathbb E\widehat F_a)\right|^2
 \le\left(\sum_a|d_a|\sqrt{\Var\widehat F_a}\right)^2
 \le C(q+1).
\]
There are at most $C(q+1)^4$ entries. Estimate each to
$e/\sqrt{C(q+1)^4}$; median-of-means and a union bound give
aggregate vector error $e$ at cost
$C(q+1)^5e^{-2}\log(Cq/\alpha)$; no independence between entries
on one copy is used.

For a central or nondominant form, assemble the diagonal sectors and
their vacuum coherences. Pinching and the off-diagonal $2\times2$
block bound show that the assembled map amplifies a marginal
trace-norm mismatch by at most two. The pure component has weight
$W\ge7/8$ and its orthogonal incoherent component weight at most $1/8$.
The leading-eigenvector equation used in
Lemma~\ref{lem:n7-code-compression} therefore gives vector error
$C(e+\sqrt d)$.
For a dominant form $\tau=\ketbra vv+\mu\ketbra{F_2}{F_2}$,
$v=\alpha\ket0+\beta_4$ and $|\alpha|^2>5/8$, use only the vacuum
column:
\[
 (\Pi_0+\Pi_4)\tau\ket0
   =|\alpha|^2\ket0+\overline\alpha\beta_4
   =\overline\alpha v,
 \qquad\|\overline\alpha v\|>5/8.
\]
Normalization thus has constant condition number, even if
$\|\beta_4\|$ is arbitrarily small. A constant-accuracy vacuum test
distinguishes this case from a nondominant vacuum weight at most $1/8$.

\noindent\emph{Restore a proper vector.}
Let $\widehat v$ be the resulting unit vector and
$d_2(\widehat v,\chi)\le a=C(e+\sqrt d)$.
Every two-particle vector has skew coefficient matrix~\cite[Lemmas~1--2]{schliemann2001slater}
\[
 M^{\mathsf T}=-M,
 \quad
 M=\sum_{j=1}^{\lfloor q/2\rfloor}
      \lambda_j(u_jw_j^{\mathsf T}-w_ju_j^{\mathsf T}),
 \quad
 \ket z=\sum_{j=1}^{\lfloor q/2\rfloor}
      \lambda_j\ket{u_j\wedge w_j}.
\]
For completeness, if $MM^\dagger u=\lambda^2u$ and
$w=-M\overline u/\lambda$, then
$\braket uw=0$, $\|w\|=1$, $M\overline u=-\lambda w$ and
$M\overline w=\lambda u$; skew symmetry preserves the orthogonal
complement. Iterating proves the decomposition, including degeneracy.
Zero singular values need no division. Projecting $\widehat v$ to
sector two and normalizing therefore gives a proper branch vector
within $(1+\sqrt2)a$ of the comparison vector.

The bounded-central fit minimizes
$1+\|\alpha\|^2-2\operatorname{Re}\braket{v_0}\alpha-\bra gR_2\ket g$;
the correct parameters give residual at most $a^2$, and the
constant-dimensional polynomial fit gives residual at most $4a^2$.
The outside two-particle vector is represented exactly by the preceding
skew decomposition; final normalization gives error at most $5a$.

For a dominant vector write
$z=\Pi_4\widehat v=\sqrt w\,\beta+\xi$, $\|\xi\|\le a$, where
$\beta$ has a filled rank-two core $P$. Choose a rank-two top
eigenspace $Q$ of the unnormalized one-particle matrix of $z$.
With $H_Q=2I-N_Q$ and $\Pi_Q$ the full-occupation projector,
\[
 \begin{gathered}
 0\preceq I-\Pi_Q\preceq H_Q\preceq2I,\\
 \bra zH_Q\ket z\le\bra zH_P\ket z
     =\bra\xi H_P\ket\xi\le2a^2,\\
 y=\braket0{\widehat v}\ket0+\Pi_Qz,
 \quad\|\widehat v-y\|^2\le3a^2,
 \quad d_2(y/\|y\|,\chi)\le(1+\sqrt6)a.
 \end{gathered}
\]
All three terms involving $\beta$ vanish because $H_P\beta=0$.
Remove the filled determinant of $Q$, apply the two-particle skew
decomposition to its residual, and reverse the two core occupations
with their CAR signs. This returns disjoint two-particle branches.
There is no division by $w$, no covariance-gap assumption and no
$q^{5/2}$ restoration loss.
\end{proof}

\begin{alem}[Exact local compression and proper output]
\label{lem:n7-code-compression}
Let $V:\mathbb C^D\to\mathcal F_q$ have orthonormal columns which are
known Slater determinants of at most $4r_\star$ particles, and let
$D\le C_{r_\star}(q+1)$. For any parity-preserving local density $\sigma$,
single-copy Gaussian occupation measurements give
\begin{equation}
 \|\widehat M-V^\dagger\sigma V\|\le e,
 \qquad
 N_{\rm code}\le C_{r_\star}D^4e^{-2}
                         \log\frac{4D^2}{\alpha}
 \label{eq:n7-code-compression-count}
\end{equation}
with probability at least $1-\alpha$ and polynomial classical processing.
For the structural codes constructed above---disjoint branch determinants
in each trial particle sector, with the indicated bounded central factor
or common occupied core---comparison-code leakage $\beta$, squared
purification error $d$, and
$\beta,d,e^2$ sufficiently small, this yields a proper local preparation
with infidelity at most $C_{r_\star}(\beta+d+e^2)$.
\end{alem}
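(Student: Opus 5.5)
The proof splits into two parts: an acquisition bound for the compressed matrix $V^\dagger\sigma V$, and a restoration step that turns this matrix into a proper preparation.

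\emph{Acquisition.} Each entry $\bra{S_a}\sigma\ket{S_b}$, for columns $V e_a=\ket{S_a}$, $V e_b=\ket{S_b}$, is a coherence between two Slater determinants of at most $4r_\star$ particles. Let $W$ be the passive orbital frame in which both determinants are occupation strings. For $a=b$ the entry is an occupation-string probability. For $a\ne b$ the two strings differ on at most $8r_\star$ modes, and parity preservation forces an even difference. I would express $\ketbra{S_a}{S_b}$ as a Fourier (phase-twirl) combination of a bounded number of projectors onto occupation strings in Gaussian-rotated frames. Concretely, use the number-phase rotations $e^{i\theta(2a_j^\dagger a_j-1)}$ on the differing modes, followed by two-mode Gaussian beam-splitter or pairing rotations that map $\ket{S_a}\pm e^{i\phi}\ket{S_b}$ to occupation strings. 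This is the Fourier projector identity already invoked in Lemma~\ref{lem:n7-two-particle-local}. Each projector estimate is a Bernoulli record with variance at most $1/4$, and the number of Fourier terms depends only on $r_\star$. Hence each real and imaginary part of every entry has an unbiased single-copy record with variance $C_{r_\star}$.

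Estimate each of the $D^2$ entries to accuracy $e/D$, using $C_{r_\star}D^2e^{-2}\log(4D^2/\alpha)$ copies per entry with Hoeffding's inequality and a union bound. The Frobenius error is then at most $e$, so the operator error is also at most $e$. The total copy count is $C_{r_\star}D^4e^{-2}\log(4D^2/\alpha)$, which is \eqref{eq:n7-code-compression-count}. Finally Hermitize $\widehat M$; this costs no extra error.

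\emph{Restoration.} Write the comparison marginal in the code as $V^\dagger\tau V=\ketbra{c}{c}+\mu\ketbra{h}{h}$, using the canonical forms of Lemma~\ref{lem:n7-canonical-marginals}. There $\|c\|^2\ge7/8-\beta$, $\mu\le1/8$, and the incoherent vector is orthogonal to $c$. Compression is a contraction of purifications, so $\|V^\dagger(\sigma-\tau)V\|_1\le C\sqrt d$ together with the leakage $\beta$. The top eigenvalue of $\widehat M$ is at least $3/4$ and the rest are at most $1/4$, so the gap is constant. Davis--Kahan then gives a leading unit eigenvector $\hat c$ with $d_2(\hat c,c/\|c\|)^2\le C(\beta+d+e^2)$, with no division by branch weight. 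I would then read $\hat c$ as coefficients on the disjoint branch determinants, including the central factor or common core, via the known $V$. Because the code consists exactly of disjoint branches in each sector with the fixed structural factor, $V\hat c$ is already of form~\eqref{eq:active-family} up to that factor. The bounded central factor is handled by the constant-dimensional fit used in Lemma~\ref{lem:n7-two-particle-local}. The remaining code leakage $\beta$ enters additively through $\|(I-P_{\mathcal C})\omega\|^2$.

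\emph{Main obstacle.} The hard step is the variance of the coherence records. A naive tomography basis for a coherence between determinants in different particle sectors could require exponentially many Fourier terms, or could give variance growing with $q$. I would control this by noting that only modes in the symmetric difference, of size at most $8r_\star$, need rotating. The number of terms is then bounded by a constant $C_{r_\star}$ independent of $q$, and all other modes are read out as a spectator occupation string.
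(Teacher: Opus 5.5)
Both halves of your proposal have a genuine gap.

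\emph{Acquisition.} You assume that any two columns are occupation strings in one common passive frame. Two orthonormal Slater determinants from different trial codes need not be. Examples are a $p=2$ branch and a $p=4$ branch learned in different frames, or a core determinant $\ket{Q\cup F_i}$ paired with another code. Their occupied subspaces have projectors that need not commute, so no such frame exists and the ``symmetric difference'' you rotate is undefined. The paper instead chooses an orthonormal basis of the union of the two occupied spaces, of dimension $m\le8r_\star$. It then expands $\ketbra yx=\sum_{I,J}b_J\overline{a_I}\ketbra JI$ by minors, with $\ell^1$ coefficient mass at most $\sqrt{\binom m{p_x}\binom m{p_y}}\le C_{r_\star}$.

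More seriously, your mechanism cannot work whenever the strings differ on $2l\ge4$ modes, which is the generic case for coherences between disjoint $p$-particle branches. You propose Gaussian rotations taking $\ket{S_a}\pm e^{i\phi}\ket{S_b}$ to an occupation string. Let $C_1,\ldots,C_l$ be ordered creators of $l$ disjoint pairs of the differing modes. After a particle--hole change the vector is $\ket0+e^{i\phi}C_1\cdots C_l\ket0$, whose Majorana covariance vanishes. For $l=2$ it is the paired block $(\ket{1100}+\ket{0011})/\sqrt2$ itself. So it is not Gaussian, and no Gaussian rotation maps it to an occupation string.

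The identity you borrow from Lemma~\ref{lem:n7-two-particle-local} is, in the paper, proved precisely in the proof of this lemma, so you must supply it here. The product pair states $\ket{g_\theta}=2^{-l/2}\prod_{j=1}^{l}(I+e^{i\theta}C_j)\ket0$ are Gaussian. For $\theta_a=2\pi a/(2l+1)$ one has $\frac{2^l}{2l+1}\sum_{a=0}^{2l}e^{-il\theta_a}\ketbra{g_{\theta_a}}{g_{\theta_a}}=\ketbra{[l]}0$, where $\ket{[l]}=C_1\cdots C_l\ket0$. This holds because among the terms $e^{i(|S|-|T|)\theta}\ketbra ST$ only $S=[l]$, $T=\varnothing$ has $|S|-|T|-l$ divisible by $2l+1$. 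With these two ingredients, your Bernoulli-variance argument and the $D^4e^{-2}\log(4D^2/\alpha)$ count go through as written.

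\emph{Restoration.} Your claim that $V\hat c$ is already of form~\eqref{eq:active-family} is false. The code is a union over trial particle numbers $2\le p\le r_\star$, together with the vacuum and the central space. The accepted trial codes also come from different learned frames. A leading eigenvector is therefore generally a superposition across particle sectors on overlapping orbitals, which is not a disjoint-branch input with a single $p_b$. What is missing is a projection step that does not amplify the error.

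In the nondominant form, the paper selects $\hat p\in\arg\max_p m_p$ with $m_p=\|\Pi_pV\hat c\|^2$. Let $\chi_0$ be the normalized correct component, in sector $p_0$. If $z=1-|\braket{\chi_0}{V\hat c}|^2<1/2$, then $m_{p_0}\ge1-z>1/2>\sum_{p\ne p_0}m_p$, so $\hat p=p_0$. The renormalized sector vector then has infidelity $1-|\braket{\chi_0}{V\hat c}|^2/m_{p_0}\le z$.

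In the dominant form, the correct component lies in sectors $0$ and $2p$, while the incoherent core term can survive in the union code. One keeps the vacuum-plus-$2p$ subcode of largest projection norm, at squared-distance cost $C_{r_\star}(\beta+d+e^2)$. Only then does a particle--hole change on the learned core $Q$ turn $\ket0$ and the $\ket{Q\cup F_i}$ into disjoint $p$-particle branches.

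The paper also performs the central and dominant fits separately in each sign class of the local Pin map. It accepts a fit only when a computable residual bound passes. Without these selection arguments the output is not yet a proper preparation.
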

\begin{proof}
\noindent\emph{All coherences from Gaussian projectors.}
For same-parity occupation determinants $x,y$, a known particle--hole
change sends them, with their CAR phases, to $0$ and a set of $2l$ modes.
Let $C_j$ be the ordered creators of $l$ disjoint pairs and put
\[
 \ket{g_\theta}=2^{-l/2}\prod_{j=1}^{l}(I+e^{i\theta}C_j)\ket0,
 \qquad\theta_a=\frac{2\pi a}{2l+1},\quad 0\le a\le2l.
\]
Then
\[
 \ketbra{g_\theta}{g_\theta}=2^{-l}\sum_{S,T\subseteq[l]}
       e^{i(|S|-|T|)\theta}\ketbra ST,\qquad \frac{2^l}{2l+1}\sum_{a=0}^{2l}
       e^{-il\theta_a}\ketbra{g_{\theta_a}}{g_{\theta_a}}=\ketbra{[l]}0.
\]
The only frequency $|S|-|T|-l$ divisible by $2l+1$ is zero, with
$S=[l]$, $T=\varnothing$.
For two Slater kets $x,y$ in different orbital frames, choose an
orthonormal basis of the union of their occupied spaces, of dimension
$m\le8r_\star$. Their minors give
\[
 \ketbra yx
  =\sum_{\substack{I,J\subseteq[m]\\|I|=p_x,\ |J|=p_y}}
       b_J\overline{a_I}\ketbra JI
  =\sum_{a=1}^{M_{r_\star}}d_a\ketbra{g_a}{g_a},
 \qquad
 \sum_{a=1}^{M_{r_\star}}|d_a|
  \le2^{m/2}\sqrt{\binom m{p_x}\binom m{p_y}}
  \le C_{r_\star}.
\]
Every $g_a$ is Gaussian. Its projector is measured by a Gaussian
basis change and the indicator of its specified occupation string.
An odd coordinate convention is implemented by an output-bit relabeling
of an even Gaussian basis measurement. Opposite-parity entries are zero.
Estimating each of the $D^2$ matrix entries to accuracy $e/(2D)$ with
its own records, Hermitian completion and a union bound proves
\eqref{eq:n7-code-compression-count}; the estimator is unbiased for
the actual mixed marginal before entrywise confidence processing.

For stable finite arithmetic, if $F=[X_x,X_y]$ concatenates the two
occupied frames and $S$ retains singular values squared at least $t$,
\[
 \|(I-S)F\|_F^2\le8r_\star t,\qquad
 d_2\left(\frac{\wedge^{p_x}S\ket x}
                  {\|\wedge^{p_x}S\ket x\|},\ket x\right)^2
       \le2\|(I-S)X_x\|_F^2.
\]
Choose $t\le c_{r_\star}(e/D)^2$ and reserve half the entry error
for this truncation. The denominator is at least $1/2$; no smallest
nonzero orbital angle is supplied.

\noindent\emph{The leading component and its gap.}
In a bounded central space of dimension $c\le2r_\star$, the exact
comparison marginal has the form
\[
 \tau=\ketbra\chi\chi+\mu\ketbra{g,0}{g,0},\quad
 \ket\chi=\ket\alpha\otimes\ket0+\ket g\otimes\ket z,
 \quad \braket g\alpha=0,\quad\|g\|=1,\quad
 W:=\|\chi\|^2=1-\mu\ge\frac78.
\]
The code contains the full central space, the outside vacuum, and
one accepted disjoint branch code for each $2\le p\le r_\star$.
Different particle sectors are orthogonal; common vacuum columns occur
only once. Only the correct $p$ contributes to the comparison pure state.
Put $\beta=1-\|VV^\dagger\chi\|^2/W$ and
$u=V^\dagger\chi/\sqrt{W(1-\beta)}$. Then
\[
 V^\dagger\tau V=W(1-\beta)\ketbra uu+\mu\ketbra hh,
 \quad\braket hu=0,\quad\|h\|\le1,
 \quad g_*:=W(1-\beta)-\mu\ge\frac{41}{64}
\]
when $\beta\le1/8$. Squared purification error $d$ gives
$\|\sigma-\tau\|_1\le2\sqrt d$. For a leading eigenvector
$\widehat u$ of $\widehat M$ and $a=2\sqrt d+e<g_*$,
\[
 \begin{gathered}
 \widehat\lambda\ge W(1-\beta)-a,\qquad Q=I-\ketbra uu,\\
 (\widehat\lambda I-QV^\dagger\tau VQ)Q\widehat u
       =Q(\widehat M-V^\dagger\tau V)\widehat u,\\
 \|Q\widehat u\|\le\frac a{g_*-a},\qquad
 1-\left|\left\langle\frac\chi{\sqrt W},V\widehat u\right\rangle\right|^2
       \le\beta+(1-\beta)\frac{a^2}{(g_*-a)^2}.
 \end{gathered}
\]
The gap is derived from retained mass, not an original-state promise.

\noindent\emph{Proper local forms.}
Write $\widehat u=(v_0,v_1,\ldots,v_s)$ in central columns. For a
trial central Gaussian frame $U$, disjoint central $p$-branch pattern,
$g=U\ket0$, $\alpha=U\sum_l d_l\ket{F_l}$, and
$z_i=\braket g{v_i}$, the trial vector is
$y=(\alpha,gz_1,\ldots,gz_s)$. With
$R_p=\sum_{i\ {\rm in\ trial}\ p}\ketbra{v_i}{v_i}$,
\[
 \|\widehat u-y\|^2
 =1+\|\alpha\|^2-2\operatorname{Re}\braket{v_0}\alpha
                           -\bra gR_p\ket g.
\]
The correct normalized compressed comparison vector belongs to this
family. The variables are the $O_{r_\star}(1)$ central frame and
coefficients; all growing outside coefficients have been eliminated.
Fixed-dimensional real-algebraic sampling~\cite{basu2014} therefore
finds residual at most four times the prescribed squared vector
error. Normalizing $y$ adds at most its distance from the unit vector
$\widehat u$, so the final squared error remains
$C_{r_\star}(\beta+d+e^2)$.
Perform this fit separately for each sign of the complete local Pin
map. The correct sign class contains the compressed comparison vector;
the sign constraint introduces no growing variable into the fit.

For the nondominant form the union code annihilates the incoherent vacuum
term. Let $\Pi_p$ denote its particle-sector projections and
$\chi_0=\chi/\sqrt W$ the correct pure component, in sector $p_0$.
For $u=V\widehat u$, select
\[
 m_p=\|\Pi_pu\|^2,\qquad
 \widehat p\in\arg\max_p m_p,\qquad
 \widehat\chi=\Pi_{\widehat p}u/\sqrt{m_{\widehat p}},
 \qquad z=1-|\braket{\chi_0}u|^2.
\]
In the stated small-error regime $z<1/2$, and
\[
 m_{p_0}\ge|\braket{\chi_0}u|^2=1-z>1/2,\qquad \sum_{p\ne p_0}m_p\le z<1/2,\qquad \widehat p=p_0,\qquad 1-|\braket{\chi_0}{\widehat\chi}|^2=1-\frac{|\braket{\chi_0}u|^2}{m_{p_0}}\le z.
\]
The selected sector consists of disjoint $p_0$-branch determinants,
so its normalized vector is proper. This selection is classical and
does not require the unknown particle label. For a dominant trial with learned
occupied core $Q$ use
\[
 P=\ketbra00+\sum_i\ketbra{Q\cup F_i}{Q\cup F_i}.
\]
The correct pure component has sectors $0,2p$; the incoherent original
core has sector $p$. In the union over trial particle numbers it may
survive, but remains orthogonal to the correct pure component and has
weight at most $\mu$. The same gap bound applies. Select the
vacuum-plus-$2p$ subcode with largest projection norm \emph{within each
Pin-sign class}; skip zero projections. Write $\mathcal C_0$ for the
correct subcode, $\chi_C$ for its normalized comparison projection,
and $u=V\widehat u$. The preceding eigenvector estimate gives
\[
 d_2(u,\chi_C)\le a_0,\qquad
 a_0^2\le C_{r_\star}(d+e^2),\qquad
 d_2(\chi_C,\chi/\sqrt W)^2\le2\beta.
\]
In the correct sign class, maximality and normalization imply
\[
 \begin{gathered}
 \|(I-P_{\mathcal C})u\|
 \le\|(I-P_{\mathcal C_0})u\|\le a_0,\qquad
 y=\frac{P_{\mathcal C}u}{\|P_{\mathcal C}u\|},\\
 \|u-y\|^2=2(1-\|P_{\mathcal C}u\|)\le2a_0^2,\\
 d_2(y,\chi/\sqrt W)^2
 \le2(1+\sqrt2)^2a_0^2+4\beta\le12a_0^2+4\beta.
 \end{gathered}
\]
A known particle--hole change on its learned core converts its
determinants into disjoint $p$-particle branches, retaining their CAR
phases and the recorded Pin sign. For both central and dominant forms,
retain a normalized sign-class fit only if its computed squared residual
from $u$ passes the prescribed computable upper allowance for
$C_{r_\star}(\beta+d+e^2)$. The correct sign passes by the bounds above.
All fits use the same measured union-code matrix. The two-state
assembly is given in Lemma~\ref{lem:n7-canonical-marginals}.
In the passive attempt the fixed-particle form alone is
used and every local fit remains passive.
\end{proof}
\subsection{A Gaussian acceptance test and the confidence budget}
\begin{alem}[Preparation-dependent rejection effect]
\label{lem:n7-gaussian-validation}
Given a proper preparation $\ket\Phi=U\bigotimes_b\ket{\phi_b}$ in
the prescribed family, include every fixed core mode as a one-mode
occupation factor with $s_b=1$; the non-Gaussian part may be empty. Let
$K_\Phi=U[\sum_b(I-\ketbra{\phi_b}{\phi_b})]U^\dagger$ and
$t=\max\{1,2s_b-1:b\}\le n$. One Gaussian rotation and full
occupation readout on each copy implement a randomized rejection effect
\begin{equation}
 \frac{I-\ketbra\Phi\Phi}{t}\preceq G_\Phi
 \preceq\frac{2K_\Phi}{t},\qquad G_\Phi\ket\Phi=0.
 \label{eq:n7-verification-effect}
\end{equation}
For $0<f_0<1$, a test of
$M=\lceil32t f_0^{-1}\log(2/\alpha)\rceil$ fresh copies accepts
when the rejection frequency is at most $f_0/(2t)$.
Its false rejection probability is at most $\alpha$ if
$\Tr(\rho K_\Phi)\le f_0/8$, and its false acceptance probability
is at most $\alpha$ if $1-\bra\Phi\rho\ket\Phi>f_0$.
\end{alem}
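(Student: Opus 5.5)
The plan is to conjugate by the known Gaussian circuit, so that each copy is measured by $U^\dagger$ followed by a product of block-local Gaussian tests. The global rejection event will be ``some block rejects''. Concretely, I would build for each block $b$ a local accept effect $A_b$, implemented by a (randomized) Gaussian rotation on the modes of block $b$ and an indicator on its occupation string, such that
\[
 \ketbra{\phi_b}{\phi_b}\preceq A_b\preceq \ketbra{\phi_b}{\phi_b}
   +\Bigl(1-\tfrac1t\Bigr)\bigl(I-\ketbra{\phi_b}{\phi_b}\bigr),\qquad
 A_b\succeq I-\tfrac2t\bigl(I-\ketbra{\phi_b}{\phi_b}\bigr).
\]
The local rejection effect is $G_b=I-A_b$, and I would set $G_\Phi=U\bigl[I-\bigotimes_bA_b\bigr]U^\dagger$. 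The blocks act on disjoint modes, and every effect is even, being a sum of parity-preserving Gaussian projectors and occupation indicators, so the tensor product in the fixed occupation ordering is well defined and commuting.

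Given the local bounds, both global inequalities follow from product-basis bookkeeping. On the upper side I would use $I-\bigotimes_b(I-G_b)\preceq\sum_bG_b$ for commuting $0\preceq G_b\preceq I$, together with $G_b\preceq 2(I-\ketbra{\phi_b}{\phi_b})/t$, which gives $G_\Phi\preceq 2K_\Phi/t$. On the lower side, $\bigotimes_bA_b\preceq\bigotimes_b\bigl[\ketbra{\phi_b}{\phi_b}+(1-1/t)(I-\ketbra{\phi_b}{\phi_b})\bigr]$, and this operator has eigenvalue $1$ on $\ket\Phi$ and at most $1-1/t$ elsewhere, which gives $(I-\ketbra\Phi\Phi)/t\preceq G_\Phi$. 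The kernel identity $G_\Phi\ket\Phi=0$ is immediate from $A_b\ket{\phi_b}=\ket{\phi_b}$. Core modes with $s_b=1$ are handled by the occupation indicator alone, with $A_b=\ketbra{\phi_b}{\phi_b}$ exactly.

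For the local construction I would split the block space into the branch code $\mathcal C_b=\spn\{\ket{F_{bl}}\}$ and its complement. With some probability the test applies plain occupation readout and rejects any string outside $\{F_{bl}\}$, which rejects the complement with constant weight. Otherwise it tests coherences inside $\mathcal C_b$. For that part I would reuse the Fourier device from the proof of Lemma~\ref{lem:n7-code-compression}. After a known particle--hole change, a pair of branches $F_{bl},F_{bl'}$ becomes the vacuum and $l=p$ disjoint mode pairs. Phase-averaged Gaussian states $\ket{g_\theta}$ over $2p+1$ frequencies then realize projectors onto $(\ket{F_{bl}}+e^{i\varphi}\ket{F_{bl'}})/\sqrt2$ with the target relative phase, combined with the readout indicator. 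Mixing over branch pairs with weights proportional to $|c_{bl}||c_{bl'}|$ gives an accept operator on $\mathcal C_b$ whose deficit, on the orthogonal complement of $\phi_b$ inside the $s_b$-dimensional code, I expect to be bounded below by $1/(2s_b-1)$. The constant $2s_b-1$ should come out of the eigenvalues of this pairwise-averaged Gram matrix, in the same way such constants arise for graph-state stabilizer tests.

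This local spectral gap is the step I expect to be the main obstacle. Two things have to be checked: that the mixture stays below $I$ with deficit exactly of order $1/t$, and that no branch weight enters. Small $|c_{bl}|$ only rescales mixing probabilities and never appears in a denominator. If the pairwise construction is not tight, I would instead use a single random Fourier phase on all $s_b$ branches simultaneously, chosen so that the phase-averaged accept operator is $\ketbra{\phi_b}{\phi_b}$ plus $(I-\ketbra{\phi_b}{\phi_b})$ times a factor at most $1-1/(2s_b-1)$.

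The statistical part is routine. Each copy rejects independently with probability $r=\Tr(\rho G_\Phi)$. If $\Tr(\rho K_\Phi)\le f_0/8$, then $r\le f_0/(4t)$. If $1-\bra\Phi\rho\ket\Phi>f_0$, then $r>f_0/t$. A multiplicative Chernoff bound at threshold $f_0/(2t)$ with $M=\lceil 32tf_0^{-1}\log(2/\alpha)\rceil$ copies makes both error probabilities at most $\alpha$.
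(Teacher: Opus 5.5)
Your reduction to block-local tests is correct and is the paper's own argument. You conjugate by $U$, require $\frac1t(I-\ketbra{\phi_b}{\phi_b})\preceq G_b\preceq\frac2t(I-\ketbra{\phi_b}{\phi_b})$, combine the factors via $\max_b a_b\le1-\prod_b(1-a_b)\le\sum_b a_b$ in a joint eigenbasis, and finish with Chernoff at threshold $f_0/(2t)$.

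There is one slip. For core modes, $A_b=\ketbra{\phi_b}{\phi_b}$ violates your own requirement $A_b\succeq I-\frac2t(I-\ketbra{\phi_b}{\phi_b})$ once $t>2$. Each factor's test must be activated only with probability $t_b/t$.

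The genuine gap is the local test itself. It carries the content of the lemma, and you leave it as an expectation; the mechanism you sketch would fail for three reasons.
\begin{enumerate}
 \item The Fourier identity of Lemma~\ref{lem:n7-code-compression} writes $\ketbra{[l]}0$ as a \emph{complex} combination of Gaussian projectors. That is an estimator identity, not an effect. No mixture of Gaussian-basis effects equals a rank-one projector onto the non-Gaussian state $(\ket{F_l}+e^{i\varphi}\ket{F_{l'}})/\sqrt2$.
 \item Even read as a statement about compressions to $\spn\{\ket{F_l},\ket{F_{l'}}\}$, an equal-weight bad vector is not orthogonal to $\ket{\phi_b}$ when $|c_l|\neq|c_{l'}|$. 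Then $G_\Phi\ket\Phi=0$ fails. The bad vector must be $\ket{\beta_{ll'}}\propto\overline{c_{l'}}\ket{F_l}-\overline{c_l}\ket{F_{l'}}$.
 \item You never control what the coherence test does off the code. Both the norm bound behind $G_b\preceq\frac2t(\cdot)$ and the lower bound depend on it. Your fallback of one random phase on all branches is not specified concretely enough to check.
\end{enumerate}

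The missing construction is as follows.
\begin{itemize}
 \item For a branch pair $(l,l')$, pair the $p$ modes of $F_l$ with those of $F_{l'}$. View the one-particle-per-pair sector as $p$ logical qubits, with $F_l,F_{l'}$ the strings $0^p,1^p$.
 \item Measure $X$ on the first $p-1$ pairs. Measure the last pair in a basis containing $\ket{b_\gamma}\propto\gamma\overline{c_{l'}}\ket0-\overline{c_l}\ket1$, with $\gamma=\pm1$ uniform and fixed before readout.
 \item Reject if and only if the outside modes are empty, each pair holds one particle, the $X$-parity equals $\gamma$, and the last outcome is $b_\gamma$. These are passive two-mode rotations, so the whole test is one Gaussian basis.
 \item Write $\ketbra{b_\gamma}{b_\gamma}=D+\gamma O$ with $D$ diagonal and $O$ off-diagonal. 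Averaging over $\gamma$ cancels the terms that would carry code strings out of the code. The remaining effect is $\frac12\bigl(I\otimes D+X^{\otimes(p-1)}\otimes O\bigr)$. It commutes with the code projector $P$, has norm $1/2$, and compresses to $\frac12\ketbra{\beta_{ll'}}{\beta_{ll'}}$.
 \item Mix the pairs with probabilities $(w_l+w_{l'})/(s-1)$, not $|c_l||c_{l'}|$. The Lagrange identity $\sum_{l<l'}(w_l+w_{l'})\ketbra{\beta_{ll'}}{\beta_{ll'}}=P-\ketbra{\phi_b}{\phi_b}$ then gives the flat value $1/(2(s-1))$ on the code complement of $\phi_b$.
 \item Off the code the mixture is at most $1/(2(s-1))$, because different pairs have orthogonal off-code supports.
 \item Add the out-of-code occupation indicator with probability $1/(2s-1)$. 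This yields $I-\ketbra{\phi_b}{\phi_b}\preceq(2s-1)G_b\preceq2(I-\ketbra{\phi_b}{\phi_b})$, which is exactly where $t=2s_b-1$ comes from.
\end{itemize}
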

\begin{proof}
\noindent\emph{One block.}
Write $\ket\omega=\sum_{i=1}^{s}a_i\ket{F_i}$,
$w_i=|a_i|^2$, and $P=\sum_i\ketbra{F_i}{F_i}$.
For $i<j$, pair the $p$ ordered modes of $F_i,F_j$ and identify
the one-particle-per-pair sector with $p$ logical two-level systems.
In the first $p-1$ pairs measure $X$. Independently choose
$\gamma\in\{-1,1\}$ uniformly and measure the last pair in the
basis whose bad vector is
\[
 \ket{b_\gamma}
 =\frac{\gamma\overline{a_j}\ket0-\overline{a_i}\ket1}
        {\sqrt{w_i+w_j}}.
\]
Reject only if all outside modes are empty, each pair contains one
particle, the first outcome parity is $\gamma$, and the last outcome
is bad. This is one basis chosen before readout. If $J_{ij}$ denotes
the occupation-ordered isometry and $\chi_x=(-1)^{\sum x_l}$, then
\[
 \begin{gathered}
 E_{ij}=\frac12\sum_{\gamma=\pm1}\sum_{x:\chi_x=\gamma}
 J_{ij}\ketbra{x_X,b_\gamma}{x_X,b_\gamma}J_{ij}^\dagger,\\
 \bra{x_X,b_\gamma}J_{ij}^\dagger
       \sum_{l=1}^{s}c_l\ket{F_l}=\frac{\gamma(a_jc_i-a_ic_j)}
          {\sqrt{2^{p-1}(w_i+w_j)}},\\
 PE_{ij}P=\tfrac12\ketbra{\beta_{ij}}{\beta_{ij}},\qquad \ket{\beta_{ij}}=\frac{\overline{a_j}\ket{F_i}-\overline{a_i}\ket{F_j}}
          {\sqrt{w_i+w_j}}.
 \end{gathered}
\]
Skip pairs with zero total weight. The mixture
$B=\sum_{i<j}(w_i+w_j)E_{ij}/(s-1)$ obeys
\[
 PBP=\frac{P-\ketbra\omega\omega}{2(s-1)},\qquad B\ket\omega=0.
\]
To control its full-space action, write
\[
 D_{ij}=\frac{\operatorname{diag}(w_j,w_i)}{w_i+w_j},\qquad
 O_{ij}=-\frac1{w_i+w_j}
 \begin{pmatrix}0&\overline{a_j}a_i\\\overline{a_i}a_j&0\end{pmatrix}.
\]
Summing the two preselected signs cancels the odd terms:
\[
 J_{ij}^\dagger E_{ij}J_{ij}
 =\tfrac12\bigl(I\otimes D_{ij}+X^{\otimes(p-1)}\otimes O_{ij}\bigr).
\]
This preserves the two code strings; the complementary supports for
different unordered branch pairs are orthogonal. Hence
$[B,P]=0$ and
$0\preceq B\preceq[2(s-1)]^{-1}(I-\ketbra\omega\omega)$.
The mixture
\[
 G_b=\frac1{2s-1}(I-P)+\frac{2s-2}{2s-1}B
\]
therefore satisfies
\[
 I-\ketbra\omega\omega\preceq(2s-1)G_b
 \preceq2(I-\ketbra\omega\omega).
\]
For $s=1$ use the occupation projector test with $t_b=1$.

\noindent\emph{Parallel factors.}
Undo the candidate $U$ and independently activate factor $b$ with
probability $t_b/t$. Each setting remains block-diagonal Gaussian.
Reject if any activated test rejects. Thus
\[
 G_\Phi=U\left[I-\bigotimes_b\left(I-\frac{t_b}{t}G_b\right)\right]U^\dagger.
\]
Write $P_b=\ketbra{\phi_b}{\phi_b}$ and
$A_b=(t_b/t)G_b$. The one-block inequalities give
\[
 A_bP_b=0,\qquad
 \frac{I-P_b}{t}\preceq A_b\preceq\frac{2(I-P_b)}t,
 \qquad 0\preceq A_b\preceq I.
\]
In a joint eigenbasis of these separate-factor operators, let $a_b$
be the eigenvalue of $A_b$. Off the joint candidate line at least one
$a_b\ge1/t$, while
\[
 \begin{gathered}
 \max_b a_b\le1-\prod_b(1-a_b)
   =\sum_b a_b\prod_{c<b}(1-a_c)
   \le\sum_b a_b,\\
 \frac{I-\bigotimes_bP_b}{t}\preceq I-\bigotimes_b(I-A_b)
   \preceq\sum_b A_b
   \preceq\frac2t\sum_b(I-P_b).
 \end{gathered}
\]
Conjugating by $U$ proves~\eqref{eq:n7-verification-effect} for
entangled tested states as well; independence concerns the chosen settings.

\noindent\emph{Decision threshold.}
For $\mu=\Tr(\rho G_\Phi)$, the two cases give
$\mu\le f_0/(4t)$ and $\mu>f_0/t$, respectively.
Binomial Chernoff bounds at the threshold $f_0/(2t)$ give errors
at most $\exp[-Mf_0/(12t)]$ and $\exp[-Mf_0/(8t)]$.
The displayed choice of $M$ bounds both by $\alpha$.
\end{proof}

\begin{alem}[Confidence amplification from capped attempts]
\label{lem:n7-confidence-composition}
Suppose a capped, physically implemented general attempt costs at most
$C_r n^7 L^{10}/f_0$ copies on every path and, with conditional
probability at least $3/4$, returns a proper preparation $\Phi$ with
$\Tr(\rho K_\Phi)\le f_0/8$. Suppose a passive attempt has the same
properties whenever the target admits a passive preparation, and
returns only passive candidates. Here
\[
 f_0=\min(f,1/16),\qquad
 L=1+\log\frac{C_r(n+1)}{f_0}.
\]
Then validation gives the required proper output and passive-output
promise with failure at most $\delta$, using
\begin{equation}
 N\le C_r\frac{n^7}{f_0}L^{10}D_\delta
       +C\frac n{f_0}D_\delta
                     \log\frac{C D_\delta}{\delta_0},
 \qquad\delta_0=\min(\delta,1/4),\quad
 D_\delta=1+\log\frac6{\delta_0}.
 \label{eq:n7-full-confidence-count}
\end{equation}
\end{alem}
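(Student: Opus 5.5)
The plan is a standard repeat-and-validate amplification built on Lemma~\ref{lem:n7-gaussian-validation}. Set $D_\delta=1+\log(6/\delta_0)$ and allow up to $\lceil D_\delta\rceil$ rounds. In each round we first run a passive attempt and then a general attempt. Each returned candidate $\Phi$ is tested with $M=\lceil 32t f_0^{-1}\log(2/\alpha)\rceil$ fresh copies, where $t\le n$ and $\alpha=\delta_0/(6\lceil D_\delta\rceil)$ or a comparable per-test budget. Acceptance follows a fixed priority: the procedure returns the first accepted passive candidate. Only if every passive candidate across all rounds is rejected does it return the first accepted general candidate. If nothing is accepted, it falls back to a fixed valid preparation. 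Because the attempts are capped on every path, the copy count is deterministic. Summing $2\lceil D_\delta\rceil$ attempts at $C_rn^7L^{10}/f_0$ each, plus $2\lceil D_\delta\rceil$ tests at $O((n/f_0)\log(CD_\delta/\delta_0))$ each, gives exactly the two terms of~\eqref{eq:n7-full-confidence-count}.

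For correctness we use two error events. The first is that some tested candidate has infidelity greater than $f_0$ yet is accepted. By the false-acceptance clause of Lemma~\ref{lem:n7-gaussian-validation} and a union bound over at most $2\lceil D_\delta\rceil$ tests, this has probability at most $2\lceil D_\delta\rceil\alpha\le\delta_0/3$. Outside this event, any accepted output has infidelity at most $f_0\le f$. The second event is that no candidate is accepted at all. Since each attempt uses fresh copies and independent randomness, its conditional success probability given the past is at least $3/4$. A successful candidate has $\Tr(\rho K_\Phi)\le f_0/8$, so it is falsely rejected with probability at most $\alpha$. Thus each round produces an accepted general candidate with conditional probability at least $3/4-\alpha\ge 1/2$. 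Over $\lceil D_\delta\rceil$ rounds the failure probability is at most $2^{-D_\delta}$. Using $D_\delta\ge\log(6/\delta_0)$, and if necessary multiplying the round count by an absolute constant, this is at most $\delta_0/3$. The total failure is then below $\delta_0\le\delta$.

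For the passive-output promise, suppose the target admits a passive preparation. Then the passive attempts succeed with conditional probability at least $3/4$ per round, and the same geometric argument shows that some passive candidate is accepted except with probability $\delta_0/3$. The priority rule then returns a passive candidate, since passive attempts output only passive preparations. Its infidelity is at most $f$ off the false-acceptance event. When the target admits no passive preparation, accepted passive candidates are still harmless: they are passive, proper, and certified by the test.

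The main obstacle is conditional independence. Validation copies and attempt copies must be disjoint and fresh. The false-acceptance bound must hold for a candidate $\Phi$ that is itself data-dependent. I would handle this by conditioning on all records preceding the test: $\Phi$ is then fixed, the test copies are independent of it, and Lemma~\ref{lem:n7-gaussian-validation} applies conditionally. The union over tests then follows from the tower property. The remaining work is to track the constants so that $\alpha$ contributes only the $\log(CD_\delta/\delta_0)$ factor.
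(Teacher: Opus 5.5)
Your proposal is correct and follows the paper's argument. The paper likewise gives capped passive attempts priority over general ones, validates every proper candidate by Lemma~\ref{lem:n7-gaussian-validation} at per-test failure $\alpha=\delta_0/(6K)$, and bounds the two failure modes (no accepted good candidate, or an accepted bad one) by $2^{-K}+2K\alpha\le\delta_0/2$.

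Your interleaved rounds with a passive-first priority rule are equivalent to the paper's two sequential loops. Taking $K=\lceil\log_2(6/\delta_0)\rceil$ rounds, as the paper does, removes your need to multiply the round count by a constant. One small point to add: reject improper outputs or $\mathsf{FAIL}$ without testing them, since the validation lemma applies only to proper preparations.
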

\begin{proof}
Set $K=\lceil\log_2(6/\delta_0)\rceil$ and
$\alpha=\delta_0/(6K)$. Try at most $K$ independent passive
attempts, followed if necessary by at most $K$ general attempts.
Validate every proper candidate by
Lemma~\ref{lem:n7-gaussian-validation}; reject improper outputs.
Conditional on every prior record, an appropriate attempt fails to
produce an accepted accurate candidate with probability at most
$1/4+\alpha\le1/2$. Hence
\[
 \Prob\{\hbox{no accepted good candidate in the appropriate loop}\}
 \le2^{-K},\qquad
 \Prob\{\hbox{any inaccurate acceptance}\}\le2K\alpha.
\]
The confidence allocation gives explicitly
\[
 2^{-K}+2K\alpha
 \le\frac{\delta_0}{6}+\frac{\delta_0}{3}
 =\frac{\delta_0}{2}\le\delta_0\le\delta.
\]
On a passive target the appropriate
loop is the first one; otherwise it is the general loop. Every path
uses at most
\[
 2K\frac{C_r n^7L^{10}}{f_0}
 +2K\left(1+\frac{32n}{f_0}\log\frac{12K}{\delta_0}\right)
\]
copies, which gives~\eqref{eq:n7-full-confidence-count}.
\end{proof}
\subsection{Complete attempt and proof of Theorem~\ref{thm:active-main}}
\begin{proof}[Proof of Theorem~\ref{thm:active-main}]
Write $r=r_\star$ and choose constants $c_1,c_2,c_3,c_4>0$ sufficiently
small, depending only on $r$. Use
\begin{equation}
 \begin{gathered}
 f_0=\min\{f,1/16\},\qquad L=1+\log\frac{C_r(n+1)}{f_0},\\
 \eta=c_1f_0/n,\qquad \nu^2=c_2f_0/(n^3L^3),\\
 v=c_3f_0/(nL^2),\qquad e^2=c_4f_0/n,
 \qquad d_0=C_rn^2\nu^2L.
 \end{gathered}
 \label{eq:n7-final-parameters}
\end{equation}
Here and below a bound with $C_r$ is uniform over the input family;
enlarging measurement constants and reducing the four tolerance
constants reserves a fixed fraction of every error budget.

\noindent\emph{Acquisition and proper candidates.}
Lemmas~\ref{lem:n7-covariance-bands} and~\ref{lem:n7-global-frame}
use an initial covariance bank and two independent occupation-record
banks. They produce the core and mode groups with squared global
comparison error at most $d_0$ and covariance error at most $C_rd_0$.
Lemma~\ref{lem:n7-canonical-marginals} then supplies actual local
purifications $X_b$ and canonical comparison purifications $Y_b$ with
$\|X_b-Y_b\|^2\le C_rd_0$. Spectator modes remain in these
purifications; no local purity is assumed. Projecting a tested sector
or tracing auxiliary modes is a contraction before normalization.

In each group of $q_b$ modes, try the compatible particle numbers
$2\le p\le\min(r,q_b)$; a dominant $2p$-sector is tried only when
$2p\le q_b$. For $p\ge3$, the relative acquisition and transfer
lemmas give the branch support in Lemma~\ref{lem:n7-local-support}.
Take
\[
 \theta=v+1024\kappa_pC_rd_0,\qquad
 \gamma=\frac1{512\sqrt{q_b+1}}.
\]
Thus $\gamma^{-2}\le C_r(q_b+1)$; the normalized Gram form never
requires a supplied positive branch weight. The particle/hole records
and measured leakage tests select a useful determinant code with
\[
 \beta_b\le C_rL^2(q_b+1)(d_0+v).
\]
Take $\beta_b$ to be the displayed computable upper allowance.
For $p=2$, use Lemma~\ref{lem:n7-two-particle-local}, including its
single-column recovery in the dominant form, and take the disjoint
determinant support of its proper output as that trial's code.
Include the prescribed bounded central factor or occupied core.
Retain one accepted code per particle number, and measure their union
by Lemma~\ref{lem:n7-code-compression} in every group, including the
two-particle case. Different trial sectors are orthogonal and common
vacuum columns are included once, so $D_b\le C_r(q_b+1)$.
Use the single leading unit vector $u_b$ of this compression for all
local sign-class fits. The existing $N_{p=2}+N_{\rm code}$ count below
pays for both acquisitions.

In the same learned coordinates let $\chi_b$ be the normalized pure
component of the exact comparison marginal. If $p=2$ is correct,
its proper precursor $\widehat\chi_{b,2}$ belongs to its code, hence
\[
 1-\|P_{\mathcal C}\chi_b\|^2
 \le1-|\braket{\widehat\chi_{b,2}}{\chi_b}|^2
 \le C_r(d_0+e^2).
\]
Thus the correct-code leakage is at most $\beta_b+C_r(d_0+e^2)$,
and $d_2(u_b,\chi_b)^2\le C_r(\beta_b+d_0+e^2)$ in every case.
Keep the best produced normalized fit in each Pin-sign class only
when its computed squared residual from $u_b$ is at most the prescribed
$C_r(\beta_b+d_0+e^2)$. The correct sign passes. Every retained fit obeys
\[
 \begin{aligned}
 d_2(\chi_b,\widehat\chi_{b,\sigma})^2
 &\le2d_2(\chi_b,u_b)^2+2d_2(u_b,\widehat\chi_{b,\sigma})^2\\
 &\le C_r(\beta_b+d_0+e^2).
 \end{aligned}
\]
The residual-minimizing two-state assembly therefore preserves each
local bound. Undo the learned local changes of basis; the
canonicalization error is included in $C_rd_0$. This gives
\[
 z_b:=1-|\langle\chi_b|\widehat\chi_b\rangle|^2
 \le C_r(\beta_b+d_0+e^2),\qquad
 \sum_b\beta_b\le C_rnL^2(d_0+v).
\]
The correct particle trial is present. On the allocated test-confidence
event, every accepted code has the prescribed sector leakage. Taking
the first accepted code for each particle number retains the correct
code even when other particle trials also pass.
The nondominant sector selection in Lemma~\ref{lem:n7-code-compression}
returns a fixed-particle input without increasing its infidelity.
The number of trials is capped. Include the
core occupations and the orientation construction of
Lemma~\ref{lem:n7-canonical-marginals}. The resulting input is normalized,
has disjoint branches with particle cap $r$, and is accompanied by an
even Gaussian preparation. Compute residual scores as certified upper
bounds, reserving slack in the acceptance constant for rounding.
Numerical rounding uses reserved halves of the local and frame tolerances.

\noindent\emph{Parent energy, not only fidelity.}
Let $\Phi_*$ be the full comparison source in the learned coordinates,
retaining every discarded mode. It remains a product over the original
source blocks; no normalization after discarding is made. In a retained
group its exact marginal and local parent term are
\[
 \begin{gathered}
 \tau_b=W_b|\chi_b\rangle\langle\chi_b|
                    +\mu_b|g_b\rangle\langle g_b|,\qquad W_b+\mu_b=1,\\
 K_b=I-|\widehat\chi_b\rangle\langle\widehat\chi_b|,\qquad \Tr(\tau_bK_b)=W_bz_b+\mu_b\langle g_b|K_b|g_b\rangle
                         \le z_b+\mu_b.
 \end{gathered}
\]
The discarded/core occupation test costs at most $C_rn\eta$, and
$\sum_b\mu_b\le C_rn\eta$. Hence, for the candidate's factor parent,
\[
 0\preceq K_{\widehat\Phi}\preceq nI,\qquad
 \langle\Phi_*|K_{\widehat\Phi}|\Phi_*\rangle
 \le C_rn\eta+\sum_bz_b.
\]
For the actual target $\Psi_*$ in the same full learned frame,
Lemma~\ref{lem:n7-global-frame} gives
$\|\Psi_* -e^{i\vartheta}\Phi_*\|^2\le C_rd_0$.
Apply $\|K^{1/2}(u+v)\|^2\le2\|K^{1/2}u\|^2+2\|K\|\|v\|^2$:
\begin{equation}
 \begin{aligned}
 \langle\Psi_*|K_{\widehat\Phi}|\Psi_*\rangle
 &\le2\langle\Phi_*|K_{\widehat\Phi}|\Phi_*\rangle
      +2n\|\Psi_* -e^{i\vartheta}\Phi_*\|^2\\
 &\le C_r\{n\eta+nL^2(d_0+v)+ne^2\}\\
 &\le f_0/8.
 \end{aligned}
 \label{eq:n7-parent-completeness}
\end{equation}
The weak-branch mass is charged in the parent expectation once, not
inserted into the global squared frame error and multiplied by $n$.
Equation~\eqref{eq:n7-parent-completeness} is the completeness input
for the physical validation test.

\noindent\emph{Every copy in one capped attempt.}
Choose each local statistical failure at most $c_r/n^2$ and use at
most $C_r\log(n+1)$ independent classical support trials per local
call. There are at most $C_rn$ local calls and
$\sum_b(q_b+1)\le C_rn$. Allowing fresh acquisition for every trial
is sufficient. The preceding measurement counts give
\[
 \begin{gathered}
 N_{\rm cov}+N_{\rm quartic}\le C_rn^7L^{10}/f_0,\\
 N_{\rm Gram}\le C_rn^6L^3/v,\\
 N_{\rm particle/hole}+N_{\rm selection}\le C_rn^4L^5/v,\\
 N_{p=2}+N_{\rm code}\le C_r(n^6+n^5)L^3/f_0,\\
 N_0\le C_rn^7L^{10}/f_0.
 \end{gathered}
\]
The third line includes capped collection of every sector record;
the fourth uses $e^{-2}=n/(c_4f_0)$, the two-particle column route,
and $D_b\le C_r(q_b+1)$ in~\eqref{eq:n7-code-compression-count}.
All unused and rejected records are counted. Statistical failures
sum to at most $1/8$, global capped-search failure to at most $1/16$,
and local capped-search failures to at most $1/16$. Conditional on
prior data each fresh batch has the proved law. A finite attempt
therefore succeeds with probability at least $3/4$ and has the
deterministic pathwise copy cap $N_0$.

\noindent\emph{Passive membership and confidence.}
A passive attempt retains the physical particle convention and
restricts covariance, commutant and rounding to the known
number-preserving algebra. On a passively preparable target, the
true covariance and hidden block projectors belong to this algebra.
Its averaging map is positive and fixes those comparison objects:
\[
 -\beta A\preceq E\preceq\beta A
 \ \Longrightarrow\
 -\beta A\preceq\mathcal T(E)\preceq\beta A,
 \qquad \|\mathcal T(E)\|_F\le\|E\|_F.
\]
The same proof applies with passive local fits; no target-membership
information is provided. On other targets this attempt is simply
capped and every candidate is still physically tested.
Set
\begin{equation}
 \delta_0=\min\{\delta,1/4\},\qquad
 K=\left\lceil\log_2\frac6{\delta_0}\right\rceil,
 \qquad \alpha=\frac{\delta_0}{6K}.
 \label{eq:n7-confidence-parameters}
\end{equation}
The finite-attempt hypotheses of
Lemma~\ref{lem:n7-confidence-composition} have now been established
by~\eqref{eq:n7-parent-completeness} and the copy count above.
Its two capped loops give the accuracy and passive-output guarantees,
and~\eqref{eq:n7-full-confidence-count} gives
\[
 N\le \frac{C_rn^7}{f_0}
     \left[1+\log\frac{C_r(n+1)}{f_0\delta_0}\right]^{11}.
\]
For $f>1/16$ running at $f_0=1/16$ still satisfies the requested
accuracy; no small-error promise is added. All matrix dimensions, search
caps and inverse spectral cutoffs are polynomial for fixed $r$.
The filtering arithmetic uses the finite bit length derived in
Lemma~\ref{lem:n7-local-support}; its inverse precision need not be
polynomial. The number of classical arithmetic operations is polynomial,
without a matched optimal runtime exponent or physical implementation bound.
\end{proof}

\newpage
\section{Worked examples and learning algorithms}
\label{app:worked-examples}
\subsection{Dimer setting and logical geometry}
\label{app:example-geometry}
For each dimer the occupation order is
$(L\uparrow,L\downarrow,R\uparrow,R\downarrow)$.
The state in~\eqref{eq:example-target} at $\phi=0$ has one electron on each
site and antisymmetric spin amplitudes. Its phase-varied family remains on
the same paired orbit. For $B\in\{1,2\}$ and $n=4B$, set
\begin{equation}
 \begin{aligned}
 H_q={}&-\sum_{j=1}^{n-1}J_j(a_j^\dagger a_{j+1}+a_{j+1}^\dagger a_j)
       -\sum_{j=1}^{n}\mu_j(n_j-\tfrac12)
       +\sum_{j=1}^{n-1}\Delta_j(a_j^\dagger a_{j+1}^\dagger+a_{j+1}a_j),\\
 J_j={}&0.8+0.13\sin j,\qquad
 \Delta_j=0.55+0.07\cos(2j-1),\qquad
 \mu_j=0.35+0.11\cos(j-1),\qquad \tau=0.85.
 \end{aligned}
 \label{eq:example-quench}
\end{equation}
Here $n_j=a_j^\dagger a_j$, $1\le j<n$ for $J_j,\Delta_j$, and
$1\le j\le n$ for $\mu_j$. Hopping moves particles; pairing creates or
removes a pair. This specifies a quadratic-chain benchmark, not a material
or a ground-state preparation; the spin labels of the input do not impose
spin conservation on the quench. With $\widehat N=\sum_{j=1}^n n_j$,
\[
 [\widehat N,H_q]=2\sum_{j=1}^{n-1}\Delta_j
 (a_j^\dagger a_{j+1}^\dagger-a_{j+1}a_j)\ne0,
 \qquad [(-1)^{\widehat N},H_q]=0.
\]
For $\phi=0$, direct evaluation gives
$\Var(\widehat N)=0.6064815140$ and $1.5144615602$ for $n=4,8$.

\noindent\emph{Membership and transport.}
Let $V_\phi$ be the passive unitary, fixing the vacuum, whose creation-mode
map is $(a_1^\dagger,a_2^\dagger,a_3^\dagger,a_4^\dagger)
\mapsto(e^{i\phi}a_1^\dagger,a_4^\dagger,-a_2^\dagger,a_3^\dagger)$.
Then
\[
 V_\phi\ket\omega
 =\frac{e^{i\phi}a_1^\dagger a_4^\dagger-a_2^\dagger a_3^\dagger}{\sqrt2}\ket0
 =\ket{s_\phi},\qquad
 \ket{\Psi_\phi^{(B)}}=U_qV_\phi^{\otimes B}\ket{\Omega_B},
 \qquad U_q=e^{-i\tau H_q}.
\]
In the Majorana convention of Appendix~\ref{app:fermions}, the only
independent nonzero entries of the real antisymmetric matrix $A$ are
\[
 A_{2j-1,2j}=-\mu_j,\qquad
 A_{2j-1,2j+2}=-(J_j+\Delta_j),\qquad
 A_{2j,2j+1}=J_j-\Delta_j.
\]
The CAR give the complete physical-to-observable transport:
\begin{equation}
 \begin{gathered}
 H_q=\frac i4\sum_{p,q=1}^{2n}A_{pq}\gamma_p\gamma_q,
 \qquad i[H_q,\gamma_p]=\sum_{q=1}^{2n}A_{pq}\gamma_q,
 \qquad U_q^\dagger\gamma U_q=e^{\tau A}\gamma,\\
 R=e^{\tau A}R_{V_\phi^{\otimes B}}\in SO(2n),\qquad
 x(\rho_\phi^{(B)})=\bigwedge^4R\,x(P_0),\qquad
 \rho_\phi^{(B)}=\ketbra{\Psi_\phi^{(B)}}{\Psi_\phi^{(B)}}.
 \end{gathered}
 \label{eq:example-transport}
\end{equation}
Equation~\eqref{eq:example-transport} proves membership in the promised
orbit, including the inter-dimer coupling at $j=4$ when $n=8$.

\noindent\emph{Logical sphere.}
Put $\ket u=\ket{1001}$, $\ket v=\ket{0110}$ and
$\ket\chi=c\ket u+e^{i\alpha}s\ket v$, with
$c=\cos(\theta/2)$ and $s=\sin(\theta/2)$. Its projective metric follows from
\[
 \begin{aligned}
 d\ket\chi&=\tfrac12(-s\ket u+e^{i\alpha}c\ket v)d\theta
                +i e^{i\alpha}s\ket v\,d\alpha,\\
 \langle d\chi\mid d\chi\rangle&=\tfrac14d\theta^2+s^2d\alpha^2,
 \qquad \langle\chi\mid d\chi\rangle=i s^2d\alpha,\\
 ds_{\rm FS}^2&=\langle d\chi\mid d\chi\rangle
                  -|\langle\chi\mid d\chi\rangle|^2
 =\tfrac14(d\theta^2+\sin^2\theta\,d\alpha^2).
 \end{aligned}
\]
The unit Bloch sphere has metric $4g_{\rm FS}$; the targets follow its
great circle $\theta=\pi/2$, $\alpha=\pi-\phi$.
For a reconstructed four-mode state, undo the quench only for evaluation:
\begin{equation}
 \begin{gathered}
 \tilde\rho=U_q^\dagger\hat\rho U_q,\qquad
 p_u=\bra u\tilde\rho\ket u,\quad p_v=\bra v\tilde\rho\ket v,
 \quad z=\bra v\tilde\rho\ket u,\quad w=p_u+p_v,\\
 b=(2\Re z,2\Im z,p_u-p_v),\qquad
 \|b\|^2=(p_u-p_v)^2+4|z|^2\le w^2\le1,\qquad
 b_\phi=(-\cos\phi,\sin\phi,0).
 \end{gathered}
 \label{eq:example-logical}
\end{equation}
Positivity gives $|z|^2\le p_up_v$. Consequently leakage is retained in
the plotted vector, rather than removed by conditioning or normalization.
Figure~\ref{fig:paired-logical} rotates this display by
$\mathsf D=R_x(-\pi/3)$; its three displayed axes are
\[
 e_x=(1,0,0),\qquad e_y=(0,1/2,-\sqrt3/2),\qquad
 e_z=(0,\sqrt3/2,1/2),\qquad
 \mathsf D b=b_xe_x+b_ye_y+b_ze_z.
\]
These are observable-coordinate directions, not state kets.
For one block, occupation-basis conjugation $\mathsf K$ gives
\[
 \Theta=-\gamma_2\gamma_4\gamma_6\gamma_8\mathsf K,\quad
 \Theta^2=I,\quad \Theta\ket\omega=\ket\omega,\quad
 \Theta U_R=U_R\Theta,\quad \Theta\ket u=\ket v,\quad \Theta\ket v=\ket u.
\]
Here $U_R$ is its spin lift; an overall phase does not affect the density
operator. Equation~\eqref{eq:rec-real} therefore implies
\[
 \Theta\tilde\rho\Theta^{-1}=\tilde\rho
 \quad\Longrightarrow\quad p_u=p_v\quad\Longrightarrow\quad b_z=0.
\]
Even inaccurate orbit outputs lie in this plane. Their angle, length and
full-state error still distinguish them from their targets; this logical
projection does not parametrize the full orbit.

\noindent\emph{Tangential and radial errors.}
For each trial use its target phase $\phi$ and the unnormalized vector $b$
in~\eqref{eq:example-logical}. With $\Delta b=b-b_\phi$,
\[
 \begin{gathered}
 t_\phi=\partial_\phi b_\phi=(\sin\phi,\cos\phi,0),\qquad
 t_\phi\cdot b_\phi=0,\qquad \|t_\phi\|=\|b_\phi\|=1,\\
 e_{\rm t}=\Delta b\cdot t_\phi=b_x\sin\phi+b_y\cos\phi,\qquad
 e_{\rm r}=\Delta b\cdot b_\phi=-b_x\cos\phi+b_y\sin\phi-1
 \le\|b\|\|b_\phi\|-1\le0,\\
 \Delta b=e_{\rm t}t_\phi+e_{\rm r}b_\phi+(0,0,b_z),\qquad
 \|\Delta b\|^2=e_{\rm t}^2+e_{\rm r}^2+b_z^2
 =e_{\rm t}^2+e_{\rm r}^2.
 \end{gathered}
\]
The last equality uses $b_z=0$ above. Orthogonality of $\mathsf D$ preserves
both components after the display rotation. These are signed projections,
not an angular error or full-state infidelity. The lower panels retain all
trials and phase colours, with common axis limits and no clipping.

\subsection{Quartic records, reconstruction and error checks}
\label{app:example-records}
Let $r=\binom{2n}{4}$. The orbit identities in Appendix~\ref{app:reconstruction}
give, with multiplicities in brackets,
\[
 \|x\|^2=14B,\qquad
 \operatorname{spec}Q(x)=\{-3^{[7B]},1^{[21B]},0^{[\binom{8B}{2}-28B]}\},
 \qquad
 (n,r,\dim\cO)=\begin{cases}(4,70,7),&B=1,\\(8,1820,78),&B=2.\end{cases}
\]
Each of the four banks $b\in\{A,P,W,R\}$ uses separate copies for every
quartet $I\in\mathcal I_4(2n)$. Exact binomial sampling generates the
sufficient statistics of the Born outcomes:
\begin{equation}
 \begin{gathered}
 M_{bI}\sim\operatorname{Bin}\!\left(K_b,\frac{1+x_I}{2}\right),\qquad
 y_{bI}=2M_{bI}/K_b-1,\qquad
 \E y_{bI}=x_I,\qquad \Var(y_{bI})=(1-x_I^2)/K_b,\\
 (K_A,K_P,K_W,K_R)=(K,K,K,\lceil K/\epsilon^2\rceil),\qquad
 N=r(3K+\lceil K/\epsilon^2\rceil).
 \end{gathered}
 \label{eq:example-records}
\end{equation}
All banks are acquired before reconstruction. Set $\epsilon^2=\delta=1/8$;
then~\eqref{eq:example-records} gives~\eqref{eq:example-copies-main}.
The three initial banks are counted even in the one-block specialization.
The theorem chooses $K$ through a sufficiently large constant in
\eqref{eq:opt-bank-budgets}; the experiment sweeps $K$ without asserting
that each choice reaches that threshold.

\noindent\emph{Specialization of Algorithm~\ref{alg:paired-quartic}.}
For $B=1$ the candidate space is all eight Majorana directions. For $B=2$,
the code estimates and independently refreshes the quartic range, forms
the measured commutant, and applies the prescribed candidate filters.
Identity lies in its kernel. Thus its two-dimensional low space has an
orthonormal basis $I/4,S$, where $\Tr S=0$ and $\|S\|_F=1$, and
\[
 Z=g_0I/4+g_1S,\qquad g_0,g_1\overset{\rm iid}{\sim}N(0,1),\qquad
 h=\lambda_9(S)-\lambda_8(S).
\]
The eigenvalues of $S$ are in ascending order. Only its top and bottom
eight-dimensional eigenspaces can be proposed. If both residual tests
\eqref{eq:opt-pencil-tests} pass and $h>0$, each sign passes the gap test
with probability $p=\Phi(-1/(32h))$. The full capped scan can therefore
be sampled exactly in real arithmetic by
\[
 \begin{gathered}
 W_1\sim\operatorname{Geom}(2p),\qquad
 W_2\sim\operatorname{Geom}(p),\qquad W_1\perp W_2,\\
 \Prob\{W_1+W_2\le T\}
 =1-2(1-p)^T+(1-2p)^T.
 \end{gathered}
\]
The cap $T$ is exactly~\eqref{eq:opt-probe-count}.
The first accepted sign is uniform. The displayed probability is
inclusion--exclusion for observing both signs in $T$ independent draws;
if either residual test fails, two distinct candidates cannot be retained.
This compression preserves the ideal candidate-list law and the full cap.
It does not use true blocks or truncate unsuccessful scans early.

The retained list undergoes the same three corrections after each insertion,
using bank $W$, and the independent bank-$R$ refinement in
Lemma~\ref{lem:opt-refinement}. For the parameters above this is 49 projected
gradient iterations, with step $1/18$ and radius $1/256$. All guards and
fallback outputs are retained. Output plane-rotation angles are rounded
on a grid of spacing $\min\{10^{-12},\epsilon/(8n^2)\}$ and evaluated
using the resulting circuit. Floating-point eigensolvers and ideal-law
pseudorandom sampling implement the real-arithmetic procedure numerically,
not an interval-certified realization of every finite-bit guard.

\noindent\emph{Spectral panels.}
For $B=2$, let $U_A$ contain the orthonormal pilot eigenvectors and form
the refreshed range using the independent bank $P$:
\[
 \begin{gathered}
 U_A=\bigl[\ket{u_j}:Q(y_A)\ket{u_j}=\lambda_j\ket{u_j},\ |\lambda_j|>1/2\bigr],
 \qquad W=Q(y_P)U_A,\qquad \hat U_P=\operatorname{orth}(W),\\
 \mathscr C'S=-\sum_{k=1}^{56}[A'_k,[A'_k,S]],\qquad S=S^{\mathsf T},
 \qquad
 \operatorname{spec}Q(x)=\{-3^{[14]},0^{[64]},1^{[42]}\},\qquad
 \operatorname{spec}\mathscr C=\{0^{[2]},7^{[64]},8^{[70]}\}.
 \end{gathered}
\]
Here $A'_k$ are the Frobenius-orthonormal skew matrices corresponding to
the columns of $\hat U_P$, as in~\eqref{eq:opt-commutant}.
The exact commutant spectrum follows from~\eqref{eq:opt-exact-commutant}:
the two block-scalar directions have eigenvalue zero, the $8\cdot8$
cross-block directions have eigenvalue seven, and the
$2(8\cdot9/2-1)$ traceless within-block directions have eigenvalue eight.
With eigenvalues ordered increasingly, the measured guards are
\[
 \operatorname{rank}U_A=56,\qquad \sigma_{\min}(W)>1/2,\qquad
 \lambda_2(\mathscr C')<1,\qquad \lambda_3(\mathscr C')>6.
\]
Figure~\ref{fig:paired-accuracy}(a,b) uses
$K\in\{1024,4096,16384,65536\}$; all 24 trials reach these spectral
calculations at each displayed budget, including trials rejected later.
For each panel, if $\lambda_j^{(\ell)}(K)$ is the $j$th eigenvalue in
trial $\ell$ and $\lambda_j^0$ its exact counterpart, then
\[
 \bar\lambda_j(K)=\operatorname{median}_{1\le\ell\le24}\lambda_j^{(\ell)}(K),
 \qquad E_j(K)=|\bar\lambda_j(K)-\lambda_j^0|.
\]
Surface heights use $\bar\lambda_j(K)$; floor colours use $E_j(K)$.
Connections between eigenvalue indices and linear interpolation in
$\log_2K$ are display conventions, not additional measurements.
Exact spectra and errors are diagnostics only and are not learner inputs.

\noindent\emph{Certificate and numerical protocol.}
For orbit outputs $\sigma=\hat\rho$, Corollary~\ref{cor:global-parent} gives
\[
 \begin{aligned}
 K_\rho&=BI-\frac1{14}\sum_{I\in\mathcal I_4(2n)}x_I(\rho)B_I\succeq I-\rho,\\
 \trd(\rho,\sigma)^2&=1-\Tr(\rho\sigma)
 \le\Tr(K_\rho\sigma)
 =B-\frac{\langle x(\rho),x(\sigma)\rangle}{14}
 =\frac{\|x(\rho)-x(\sigma)\|^2}{28}.
 \end{aligned}
\]
For $B=1$, \eqref{eq:block-expansion} and quartic trace orthogonality sharpen
the diagnostic to the exact identity
\[
 \|x(\rho)-x(\sigma)\|^2
 =16\Tr[(\rho-\sigma)^2]=32\trd(\rho,\sigma)^2.
\]
Hence four-mode infidelity equals $7/8$ of the certificate, up to numerical
error. This remains an independent numerical check, rather than a plotted
axis; both sides use simulator truth, not an additional measurement or an
empirical confidence bound supplied to the learner.

The accuracy experiment uses $\phi=0$, 24 trials at each
$K\in\{4^j:0\le j\le10\}$ and each $B\in\{1,2\}$.
The sphere uses $\phi=2\pi j/24$ ($0\le j<24$), three trials per phase,
and $K\in\{1,2,4\}$. Seed 2026101501 fixes the 744 trials; independent
count banks and algorithm seeds are saved. All outputs, including fallbacks,
obeyed the certificate; 14 noiseless dimer and Haar-rotated checks recovered
their targets to numerical precision. Figure~\ref{fig:paired-accuracy}(c)
uses all 24 trials per budget for its median and 10th--90th percentile band;
stars mark actual fallback errors, with coincident symbols overlapping.
Eight modes require recovering two hidden blocks: all 24 trials fall back
at the sampled $K\le4096$, whereas all pass at the sampled $K\ge16384$.
At $K=16384$, median infidelities are $1.58\cdot10^{-6}$ and
$6.77\cdot10^{-5}$ for four and eight modes, respectively.
The dense Fock-space simulator is
used only to generate outcomes and evaluate error, not as a learning
subroutine. Aggregating counts changes neither their Born law nor the
reported number of copies; simulation time is not experimental shot time.
\newpage
\subsection{Learning algorithms}
\label{app:learning-algorithms}
The algorithms below assemble the constructions proved in
Appendices~\ref{app:reconstruction}--\ref{app:active-reconstruction}.
The input is independent copies of one promised state, not its amplitudes,
hidden frame, Hamiltonian or block decomposition. A preparation description
consists of a normalized input and an even Gaussian circuit.
All randomized subroutines have the stated time caps; invalid trials are
rejected, and complete failure returns a fixed valid preparation.
Tolerance constants depend only on the fixed particle cap and are chosen
as in the cited proofs. Numerical rounding uses their reserved error
allowances. The general procedure has the stated sufficient copy bound; the paired
procedure retains its optimal bound in the restricted quartic model.

\noindent\emph{Copy-optimal nonadaptive paired reconstruction.}
The notation and all constants are defined in
Appendix~\ref{app:reconstruction}; in particular,
$s=\log(1024/\delta)$, $\rho_0=1/256$, $q=28B$,
$t_*=2.4\cdot10^{-7}$ and $\sigma_0=5863t_*$.
Each bank has one independent batch per quartet.
All fallbacks below return the known paired reference.

\begin{algorithm}[H]
\caption{Copy-optimal nonadaptive fixed-basis quartic learner
(Theorem~\ref{thm:reconstruction})}
\label{alg:paired-quartic}
\begin{algorithmic}[1]
\Require Independent copies of one $\rho\in\mathcal O$; $n=4B$;
 $0<\epsilon\le \epsilon_{\max}<1/\sqrt2$, $0<\delta\le1/8$.
\Ensure A Gaussian preparation with trace distance at most $\epsilon$,
 except with probability $\delta$.
\State Acquire banks $A,P,W,R$ at the predetermined budgets
 \eqref{eq:opt-bank-budgets}; each copy yields one quartic sign.
\If{$B=1$}
 \State Use the full eight-dimensional Majorana space as the candidate list.
\Else
 \State Let $U_A$ span the eigenvectors of $Q(y_A)$ with $|\lambda|>1/2$.
    If its rank is not $q$, return the fallback.
 \State Orthonormalize $Q(y_P)U_A$ to obtain $U'$; form
    $\mathscr C'$ by \eqref{eq:opt-commutant} and its $B$-dimensional
    low-space projector $\hat\Pi$. Return the fallback if a rank or
    separation guard fails.
 \For{$j=1,\ldots,T$ with $T$ from \eqref{eq:opt-probe-count}}
  \State Draw a rounded Gaussian symmetric matrix $\widetilde G$ using the
    finite sampling rule in Appendix~\ref{app:reconstruction}; form
    $Z=\widetilde\Pi\widetilde G$ with the certified approximation of $\hat\Pi$.
    Discard overflow or norm failures, advancing the counter.
  \State Take the top-eight eigenspace $V$ of $Z$.
    Keep it only if the gap and kernel-residual tests
    \eqref{eq:opt-pencil-tests} pass; discard an ambiguous pencil.
 \EndFor
 \State In list order retain a kept projector only if its Frobenius distance from
    every previously retained projector exceeds $1/2$.
    If the resulting list does not contain exactly $B$ blocks, return the fallback.
\EndIf
\State Start with an empty found frame. Process the retained list in order.
\For{each retained candidate basis $U$}
 \If{$\|Q_{\rm found}^{\mathsf T}U\|_{\op}\le1/2$}
  \State Append an orthonormal basis of $(I-P_{\rm found})U$.
  \State Apply three corrections \eqref{eq:opt-block-update}, each followed
    by polar orthogonalization, using bank $W$.
  \State If an insertion or correction guard fails, return the fallback.
 \EndIf
 \State Stop the list scan once $B$ blocks have been inserted.
\EndFor
\State If fewer than $B$ blocks were found, or the completed frame is
 not in $SO(2n)$, return the fallback. Denote it by $R_{\rm c}$.
\State On $\{X\in\mathfrak m:\|X\|_F\le\rho_0\}$, maximize
 $\langle y_R,\wedge^4(R_{\rm c})F(X)\rangle$ by capped projected
 gradient ascent, starting at zero with step $1/18$ for
 $\lceil10\log(16/\epsilon^2)\rceil$ iterations.
\State Compile $R_{\rm c}e^{\hat X}$ into Gaussian plane rotations.
 Reserve total trace error $\epsilon/2$ for optimization and angle rounding;
 return the circuit, or the fallback if its numerical guard fails.
\end{algorithmic}
\end{algorithm}

\noindent\emph{Correctness and resources.}
Lemmas~\ref{lem:opt-range-refresh}--\ref{lem:opt-commutant-stability}
control the refreshed range and measured commutant.
Lemma~\ref{lem:opt-candidates} produces one accurate candidate per block;
Lemma~\ref{lem:opt-joint-noise} permits all corrections to reuse bank $W$.
Lemma~\ref{lem:opt-initialization} supplies the local chart required by
Lemma~\ref{lem:opt-refinement}, whose bank is independent.
The certificate and finite-precision budget give
\eqref{eq:reconstruction-guarantee}; all copies are counted in
\eqref{eq:opt-total-copies}. No physical measurement is chosen from earlier
outcomes. Lemma~\ref{lem:opt-lower} allows such adaptive choices and
still proves the matching lower bound.

\clearpage
\noindent\emph{Batch-adaptive general reconstruction.}
Use $r=r_\star$, the parameters in~\eqref{eq:n7-final-parameters},
and the trial and validation caps in~\eqref{eq:n7-confidence-parameters}.
The passive attempt retains the physical particle convention and restricts
its computed transformations to number-preserving ones. All settings
within a batch are fixed before that batch is read.

\begin{algorithm}[H]
\caption{Batch-adaptive learning of the general family
(Theorem~\ref{thm:active-main})}
\label{alg:general-adaptive}
\begin{algorithmic}[1]
\Require Copies of~\eqref{eq:active-target}; $n,r,f,\delta$.
\Ensure A preparation satisfying~\eqref{eq:active-guarantee}, passive whenever promised membership holds.
\Function{Attempt}{$\mathsf{mode}$}
  \State Enforce all trial, copy and precision caps; return $\mathsf{FAIL}$
    if a computable positivity, rank or norm check fails.
  \State Acquire a relative covariance pilot. Select measured spectral cuts;
    calibrate the retained bands and record the preferred core occupations.
  \State Acquire two independent quartic-record banks in the computed frame;
    form their normalized commutator maps $A_1,A_2$ on the known band algebra.
  \State Diagonalize $(A_1^\dagger A_2+A_2^\dagger A_1)/2$ below the prescribed cutoff.
    Use capped random probes, complete spectral filters and simultaneous
    polar rounding to obtain candidate mode groups.
  \State In each $q_b$-mode group, try compatible $2\le p\le\min(r,q_b)$ with fresh local
    records; use the two-particle decoder to supply its trial code for $p=2$.
    Retain one accepted code per particle number and measure their union-code
    compression to obtain a common unit local estimate.
  \State In the nondominant case select and normalize the largest particle-sector
    component. Restore disjoint-branch factors, retaining the best produced
    normalized fit per Pin-sign class only if its local residual bound passes.
  \State Attach the core and select total determinant $+1$ by the two-state
    residual recursion, initialized with the fixed calibration/core sign.
    Compile and round the preparation. \Return the valid candidate or $\mathsf{FAIL}$.
\EndFunction
\For{$\mathsf{mode}\in(\mathsf{passive},\mathsf{general})$ in that order}
  \For{$j=1,\ldots,K$}
    \State $C\gets\Call{Attempt}{\mathsf{mode}}$ using fresh copies.
    \If{$C$ is a valid preparation in the requested mode}
      \State Compute $t_C=\max\{1,\max_b(2s_b-1)\}$, including core factors, and use
      $\lceil32t_C f_0^{-1}\log(2/\alpha)\rceil$ fresh copies
      for the rejection test of Lemma~\ref{lem:n7-gaussian-validation}.
      \If{rejection frequency $\le f_0/(2t_C)$}
        \State \Return $C$.
      \EndIf
    \EndIf
  \EndFor
\EndFor
\State \Return the fixed valid fallback.
\end{algorithmic}
\end{algorithm}

\noindent\emph{Correctness and resources.}
The proof of Theorem~\ref{thm:active-main} supplies a successful
finite attempt with candidate parent energy at most $f_0/8$.
Lemma~\ref{lem:n7-gaussian-validation} gives both soundness and
completeness of each physical validation test.
Lemma~\ref{lem:n7-confidence-composition} counts every attempted
reconstruction and every test on all paths, and gives total failure at
most $\delta$. Its passive loop establishes the passive-output guarantee;
the general loop requires no membership information.

\noindent\emph{Checks on compatible instances.}
Separate four-, eight- and twelve-mode tests check the range refresh,
commutant identities, fixed candidate filter, frame corrections and
independent local refinement on active Gaussian targets.
The multi-block tests use reduced pencil caps; they support the algebra
and do not verify the worst-case confidence or copy bounds numerically.
For the general procedure, component tests cover mixed particle-sector
records after active quadratic evolution, relative symmetry estimation,
independent-bank subspace recovery, weighted resolvents, small branch
weights and candidate validation. They are not an end-to-end implementation
of Algorithm~\ref{alg:general-adaptive}, nor a numerical verification of its
asymptotic copy bound. Correctness and resource accounting are established
by Appendix~\ref{app:active-reconstruction}; the finite tests check its
algebraic identities and interfaces.
\end{document}